\numberwithin{figure}{section}
\newcommand*{\Cdot}{{\centerdot}}
\newcommand{\OPT}{\mbox{\sc OPT}}
\newcommand{\SOL}{\mbox{\sc SOL}}
\def\mcL{\mathcal{L}}
\def\mcN{\mathcal{N}}
\def\mcP{\mathcal{P}}
\def\mcS{\mathcal{S}}
\title{A local search 2.917-approximation algorithm for duo-preservation string mapping%
\footnote{This work was partially supported by NSERC Canada and NSF China.}}
\titlerunning{Approximate duo-preservation through local search} 
\author[1]{Yao Xu}
\author[1,2]{Yong Chen}
\author[1,3]{Taibo Luo}
\author[1]{Guohui Lin}
\affil[1]{Department of Computing Science, University of Alberta.  Edmonton, Alberta T6G 2E8, Canada.\\
  \texttt{\{xu2,yong5,taibo,guohui\}@ualberta.ca}}
\affil[2]{Department of Mathematics, Hangzhou Dianzi University.  Hangzhou, Zhejiang 310018, China.}
\affil[3]{Business School, Sichuan University.  Chengdu, Sichuan 610065, China.}
\authorrunning{Y. Xu {\it et al.} (\today)} 
\subjclass{Dummy classification -- please refer to \url{http://www.acm.org/about/class/ccs98-html}}
\keywords{Approximation algorithm, duo-preservation string mapping, string partition, local search, amortized analysis}
\begin{document}

\maketitle

\begin{abstract}
We study the {\em maximum duo-preservation string mapping} ({\sc Max-Duo}) problem,
which is the complement of the well studied {\em minimum common string partition} ({\sc MCSP}) problem.
Both problems have applications in many fields including text compression and bioinformatics.
Motivated by an earlier local search algorithm,
we present an improved approximation and show that its performance ratio is no greater than ${35}/{12} < 2.917$.
This beats the current best $3.25$-approximation for {\sc Max-Duo}.
The performance analysis of our algorithm is done through a complex yet interesting amortization.
Two lower bounds on the locality gap of our algorithm are also provided.
\end{abstract}

\section{Introduction}
The {\em minimum common string partition} ({\sc MCSP}) problem is a well-studied problem in computer science,
with applications in the fields such as text compression and bioinformatics.
{\sc MCSP} was first introduced by Goldstein {\it et al.} \cite{GKZ04} as follows:
Consider two length-$n$ strings $A = (a_1, a_2, \ldots, a_n)$ and $B = (b_1, b_2, \ldots, b_n)$ over some alphabet $\Sigma$,
such that $B$ is a permutation of $A$.
A {\em partition} of $A$, denoted as $\mcP_A$, is a multi-set of substrings whose concatenation in a certain order becomes $A$.
The number of substrings in $\mcP_A$ is the {\em cardinality} of $\mcP_A$.
The {\sc MCSP} problem asks for a minimum cardinality partition $\mcP_A$ of $A$ that is also a partition of $B$.
When every letter of the alphabet $\Sigma$ occurs at most $k$ times in each of the two strings, the restricted version of {\sc MCSP} is denoted as $k$-{\sc MCSP}.

The {\sc MCSP} problem is NP-hard and APX-hard even when $k = 2$ \cite{GKZ04}.
Several approximation algorithms \cite{CZF05,CKS04,CM07,GKZ04,KW06,KW07} have been presented since 2004.
The current best result is an $O(\log n \log^* n)$-approximation for the general {\sc MCSP} and an $O(k)$-approximation for $k$-{\sc MCSP}.
On the other hand, {\sc MCSP} is proved to be {\em fixed parameter tractable} (FPT),
with respect to $k$ and/or to the cardinality of the optimal partition \cite{Dam08,JZZ12,BFK13,BK14}.

Given a string, an ordered pair of consecutive letters is called a {\em duo} \cite{GKZ04};
a length-$\ell$ substring in a partition {\em preserves} $\ell - 1$ duos of the given string.
The complementary objective to that of {\sc MCSP} is to maximize the number of duos preserved in the common partition,
which is referred to as the {\em maximum duo-preservation string mapping} ({\sc Max-Duo}) problem by Chen {\it et al.} \cite{CCS14}
and is our target problem in this paper.
Analogously, $k$-{\sc Max-Duo} is the restricted version of {\sc Max-Duo} when
every letter of the alphabet $\Sigma$ occurs at most $k$ times in each of the two given strings.

We next give a graphical view on a common partition of the two given strings $A = (a_1, a_2, \ldots, a_n)$ and $B = (b_1, b_2, \ldots, b_n)$.
Construct a bipartite graph $G = (A, B, E)$, where the vertices of $A$ ($B$, respectively) are $a_1, a_2, \ldots, a_n$ in order
($b_1, b_2, \ldots, b_n$ in order, respectively)
and there is an edge between $a_i$ and $b_j$ if they are the same letter.
A common partition $\mcP$ of the strings $A$ and $B$ one-to-one corresponds to a perfect matching $M$ in the graph $G$ (see Fig.~\ref{fig101} for an example),
and the number of duos preserved by the partition is exactly the number of pairs of {\em parallel} edges in the matching;
if both $(a_i, b_j), (a_{i+1}, b_{j+1}) \in E$, then they form a pair of parallel edges.

\begin{figure}[htb]
\centering
\includegraphics[width=0.5\linewidth]{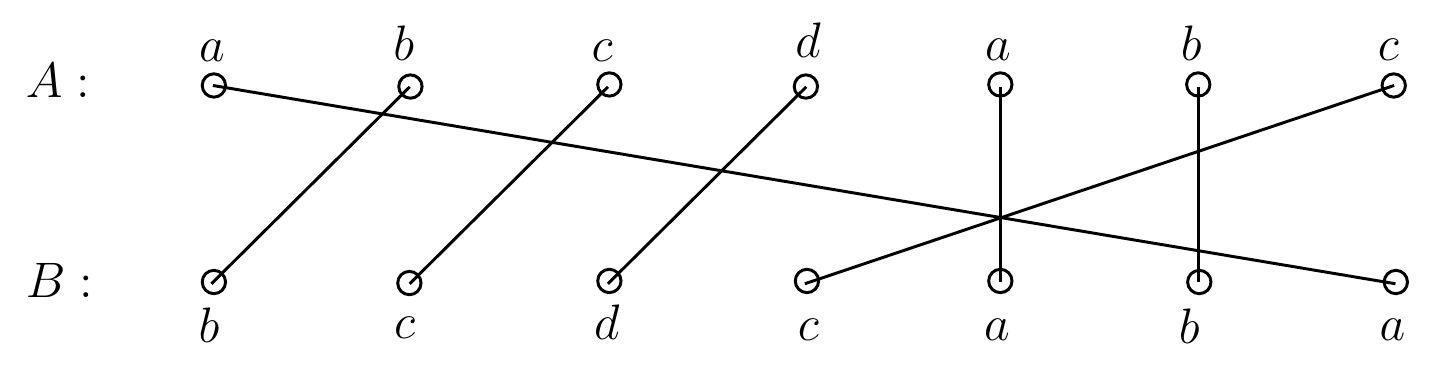} 
\caption{An instance of the {\sc Max-Duo} problem with two strings $A = (a, b, c, d, a, b, c)$ and $B = (b, c, d, c, a, b, a)$,
	and a common partition $\{a, bcd, ab, c\}$ that preserves three duos $(b, c)$, $(c, d)$ and $(a, b)$,
	corresponding to the perfect matching shown in the figure.\label{fig101}}
\end{figure}

Along with {\sc Max-Duo}, Chen {\it et al.} \cite{CCS14} introduced the {\em constrained maximum induced subgraph} ({\sc CMIS}) problem,
in which one is given an $m$-partite graph $G = (V_1, V_2, \ldots, V_m, E)$,
with each $V_i$ having $n_i^2$ vertices arranged in an $n_i \times n_i$ matrix,
and the goal is to select $n_i$ vertices of each $V_i$ in different rows and different columns such that the induced subgraph contains the maximum number of edges.
The restricted version of {\sc CMIS} when $n_i \le k$ for all $i$ is denoted as $k$-{\sc CMIS}.

For an instance of the {\sc Max-Duo} problem, one can first set $m$ to be the number of distinct letters in the string $A$,
set $n_i$ to be the number of occurrences of the $i$-th distinct letter,
and the $(s,t)$-vertex in the $n_i \times n_i$ matrix ``means''
mapping the $s$-th occurrence of the $i$-th distinct letter in the string $A$ to its $t$-th occurrence in the string $B$;
and then set an edge connecting a vertex of $V_i$ and a vertex of $V_j$ if the two vertices together preserve a duo.
This way, the {\sc Max-Duo} problem becomes a special case of the {\sc CMIS} problem, and furthermore the $k$-{\sc Max-Duo} is a special case of the $k$-{\sc CMIS}.

Chen {\it et al.} \cite{CCS14} presented a $k^2$-approximation for $k$-{\sc CMIS} and a $2$-approximation for $2$-{\sc CMIS},
based on linear programming and a randomized rounding.
These imply that $k$-{\sc Max-Duo} can also be approximated within a ratio of $k^2$ and $2$-{\sc Max-Duo} can be approximated within a ratio of $2$.

Continuing on the graphical view as shown in Fig.~\ref{fig101} on a common partition of the two given strings $A$ and $B$,
we can construct another graph $H = (V, F)$ in which every vertex of $V$ corresponds to a pair of parallel edges in the bipartite graph $G = (A, B, E)$,
and two vertices of $V$ are adjacent if the two pairs of parallel edges of $E$ cannot co-exist in any perfect matching of $G$
(called {\em conflicting}, which can be determined in constant time, see Section~\ref{sec2}).
This way, a set of duos that can be preserved by some perfect matching of $G$ (called {\em compatible}, see Section~\ref{sec2})
one-to-one corresponds to an independent set of $H$ \cite{GKZ04,BKL14}.
Therefore, the {\sc Max-Duo} problem can be cast as a special case of the well-known {\em maximum independent set} ({\sc MIS}) problem~\cite{GJ79};
in particular, Boria {\it et al.} \cite{BKL14} showed that an instance of $k$-{\sc Max-Duo} translates to a graph with the maximum degree $\Delta \le 6(k-1)$.
Since {\sc MIS} can be approximated arbitrarily close to ${(\Delta+3)}/{5}$~\cite{BF99},
$k$-{\sc Max-Duo} can now be better approximated within a ratio of ${(6k-3)}/{5} + \epsilon$, for any $\epsilon > 0$, using the same algorithm.
Especially, $2$-{\sc Max-Duo} and $3$-{\sc Max-Duo} can be approximated within a ratio of $1.8 + \epsilon$ and $3 + \epsilon$, respectively.
Boria {\it et al.} \cite{BKL14} also proved that {\sc Max-Duo} is APX-hard, even when $k = 2$.

In Section~\ref{sec2}, we will construct another bipartite graph for an instance of the {\sc Max-Duo} problem,
and thus cast {\sc Max-Duo} as a special case of the {\em maximum compatible bipartite matching} ({\sc MCBM}) problem.
Such a reduction was first shown by Boria {\it et al.} \cite{BKL14},
who presented a $4$-approximation for the {\sc MCBM} problem, implying that {\sc Max-Duo} can also be approximated within a ratio of $4$.
Boria {\it et al.} \cite{BCC16} also used this reduction, with the word {\em consecutive} in place of {\em compatible},
to present a local search $3.5$-approximation for the {\sc MCBM} problem.

Most recently, Brubach \cite{Bru16} presented a $3.25$-approximation for the {\sc Max-Duo} based on a novel {\em combinatorial triplet matching}.
This $3.25$-approximation is the current best for the general {\sc Max-Duo} problem.

The basic idea in the local search $3.5$-approximation for the {\sc MCBM} problem by Boria {\it et al.} \cite{BCC16} is
to swap one edge of the current matching out for two compatible edges,
thus to increase the size of the matching till a local optimum is reached.
The performance ratio $3.5$ is shown to be tight.
We extend this idea to allow swapping five edges of the current matching out for six compatible edges,
and we also allow a new operation of swapping five edges of the current matching out for five compatible edges
if the number of {\em singleton edges} (to be defined in Section~\ref{sec2}) is strictly decreased.
Through a complex yet interesting amortized analysis,
we prove that our local search heuristics has an approximation ratio of at most ${35}/{12} < 2.917$,
which improves the current best $3.25$-approximation algorithm and breaks the barrier of $3$.
In a companion paper \cite{XCLL17b}, we propose a $(1.4 + \epsilon)$-approximation for the $2$-{\sc Max-Duo};
thus together we improve all the current best approximability results.

The rest of the paper is organized as follows:
We provide some preliminaries in Section~\ref{sec2}, including the formal description of the {\sc MCBM} problem
and the terminologies and notations to be used throughout the paper.
Our local search heuristics is presented in Section~\ref{sec3}.
In Section~\ref{sec4}, we analyze the approximation ratio of our heuristics through amortization.
In Section~\ref{sec5}, we show a lower bound of ${13}/{6} > 2.166$ on the locality gap of our algorithm for the {\sc MCBM} problem,
and a lower bound of $5/3 > 1.666$ on the locality gap of our algorithm for the {\sc Max-Duo} problem.
We conclude the paper in Section~\ref{sec6}.

\section{Preliminaries}
\label{sec2}
Recall that in an instance of the {\sc Max-Duo} problem,
we have two length-$n$ strings $A = (a_1, a_2, \ldots, a_n)$ and $B = (b_1, b_2, \ldots, b_n)$ such that $B$ is a permutation of $A$.
We use $d^A_i = (a_i, a_{i+1})$ and $d^B_i = (b_i, b_{i+1})$ to denote the $i$-th duo of $A$ and $B$, respectively, for $i = 1, 2, \ldots, n-1$;
and $D^A = \{d^A_1, d^A_2, \ldots, d^A_{n-1}\}$ and $D^B = \{d^B_1, d^B_2, \ldots, d^B_{n-1}\}$.
We construct a bipartite graph $G = (D^A, D^B, E)$, where there is an edge $e_{i,j}$ connecting $d^A_i$ and $d^B_j$ if $a_i = b_j$ and $a_{i+1} = b_{j+1}$,
suggesting that the duo $d^A_i$ is preserved if the edge $e_{i,j} = (d^A_i, d^B_j)$ is selected into the solution matching.
(See Fig.~\ref{fig203a} for the bipartite graph constructed from the two strings shown in Fig.~\ref{fig101}.)
Note that selecting the edge $e_{i,j}$ rules out all the other edges incident at $d^A_i$ and all the other edges incident at $d^B_j$,
and some more edges described in the next paragraph.

Formally, the two edges $e_{i,j}$ and $e_{i,j'}$ with $j \ne j'$ are called {\em adjacent},
and they are {\em conflicting} since they cannot be both selected into a feasible solution matching.
Similarly, two adjacent edges $e_{i,j}$ and $e_{i',j}$ with $i \ne i'$ are conflicting.
The two edges $e_{i,j}$ and $e_{i+1,j+1}$ are called {\em parallel};
while the two edges $e_{i,j}$ and $e_{i+1,j'}$ with $j' \ne j, j+1$ are called {\em neighboring}.
Two neighboring edges are conflicting too since they cannot be both selected.
Similarly, the two edges $e_{i,j}$ and $e_{i',j+1}$ with $i' \ne i, i+1$ are neighboring and conflicting.
Any two unconflicting edges are said {\em compatible} to each other,
and a {\em compatible} set of edges contains edges that are pairwise compatible,
which is consequently a feasible solution matching (called a {\em compatible matching}).
(See Fig.~\ref{fig203b} for a compatible matching found in the bipartite graph in Fig.~\ref{fig203a}.)
The goal of the {\em maximum compatible bipartite matching} ({\sc MCBM}) problem is
to find a maximum cardinality compatible matching in the bipartite graph $G = (D^A, D^B, E)$.

\begin{figure}[htb]
\begin{subfigure}{0.48\textwidth}
\includegraphics[width=\linewidth]{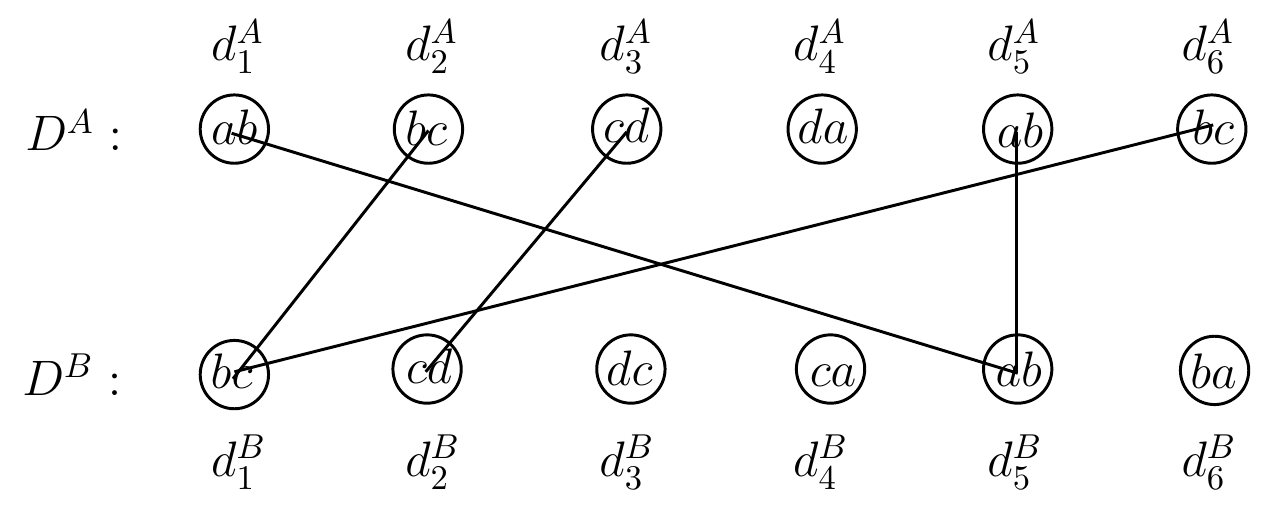} 
\caption{The constructed bipartite graph.}\label{fig203a}
\end{subfigure}
\hspace*{\fill}
\begin{subfigure}{0.48\textwidth}
\includegraphics[width=\linewidth]{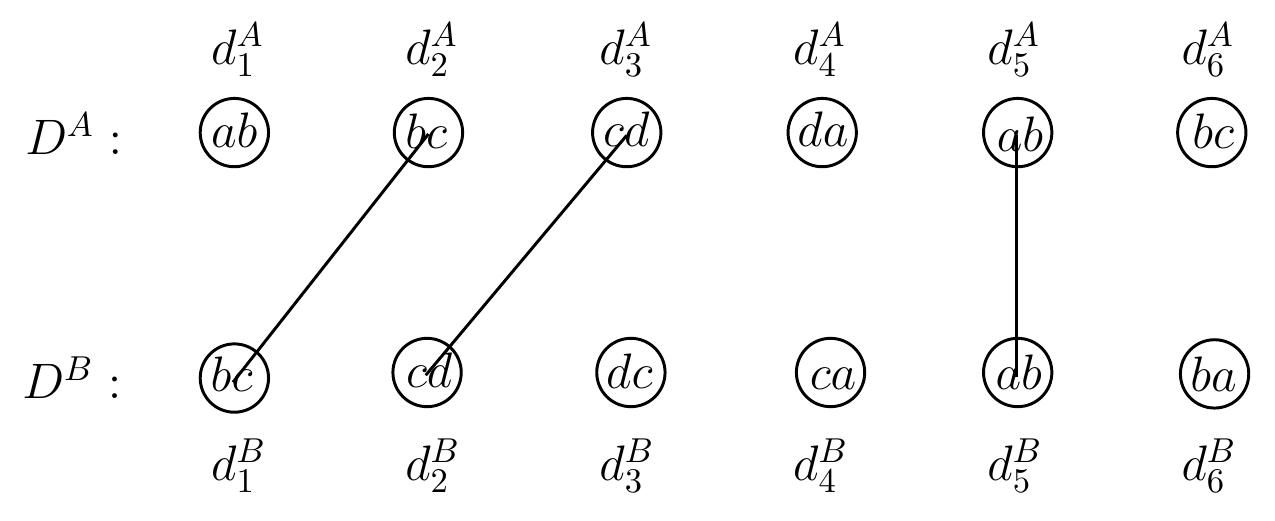} 
\caption{A compatible matching in the graph.} \label{fig203b}
\end{subfigure}
\caption{A bipartite graph $G = (D^A, D^B, E)$ constructed from the two strings $A = (a, b, c, d, a, b, c)$ and $B = (b, c, d, c, a, b, a)$,
	and a compatible matching in $G$ containing three edges $e_{2,1}, e_{3,2}, e_{5,5}$.\label{fig203}}
\end{figure}

Clearly, the bipartite graph $G = (D^A, D^B, E)$ in the {\sc MCBM} problem does not have to be constructed out of two given strings in the {\sc Max-Duo} problem,
and therefore {\sc Max-Duo} is a special case of {\sc MCBM}.
Nevertheless, when restricted to {\sc Max-Duo}, the cardinality of a compatible matching is exactly the number of duos preserved by the matching.
An edge in a compatible matching $M$ is called {\em singleton} if it is not parallel to any other edge in the matching.
This way, the matching $M$ is partitioned into two parts: $s(M)$ containing all the singleton edges and $p(M)$ containing all the parallel edges.
A series of pairs of parallel edges $e_{i,j}, e_{i+1,j+1}, \ldots, e_{i+p,j+p}$, for some $p \ge 2$,
is referred to as {\em consecutive parallel edges}.

Except towards the end we show a lower bound on the locality gap of our local search heuristics for the {\sc Max-Duo} problem,
all discussion in the sequel is on the {\sc MCBM} problem.
The obtained approximability results on the {\sc MCBM} problem also apply to the {\sc Max-Duo} problem.

\begin{obs}
\label{obs2.1}
Any edge $e_{i,j} \in E$ can be conflicting with at most $6$ edges that are pairwise compatible,
which are $e_{i,j'}$, $e_{i-1,j''-1}$, $e_{i+1,j'''+1}$, $e_{i',j}$, $e_{i''-1,j-1}$, $e_{i'''+1,j+1}$
incident at $d^A_{i-1}, d^A_i, d^A_{i+1}, d^B_{j-1}, d^B_j, d^B_{j+1}$, respectively,
where none of $i', i'', i'''$ can be $i$ and none of $j', j'', j'''$ can be $j$.
\end{obs}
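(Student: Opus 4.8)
The plan is to prove the bound in two stages. First I would show that every edge conflicting with $e_{i,j}$ must be incident at one of the six duos $d^A_{i-1}, d^A_i, d^A_{i+1}, d^B_{j-1}, d^B_j, d^B_{j+1}$; then I would argue that within a pairwise compatible set at most one edge can be drawn from each of these six duos, which immediately yields the bound of six.

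For the first stage I would appeal directly to the definitions, since by construction the edges conflicting with $e_{i,j}$ are exactly those that are \emph{adjacent} or \emph{neighboring} to it (every other edge is either parallel to $e_{i,j}$ or lies far away, and in both cases is compatible). Any edge adjacent to $e_{i,j}$ shares either the vertex $d^A_i$, giving the edges $e_{i,j'}$ with $j' \ne j$, or the vertex $d^B_j$, giving the edges $e_{i',j}$ with $i' \ne i$. Any edge neighboring $e_{i,j}$ is, by definition, incident at $d^A_{i+1}$, $d^A_{i-1}$, $d^B_{j+1}$, or $d^B_{j-1}$, namely of the form $e_{i+1,j'''+1}$, $e_{i-1,j''-1}$, $e_{i'''+1,j+1}$, or $e_{i''-1,j-1}$, where the index shifts are arranged precisely so that the excluded parallel partners $e_{i+1,j+1}$ and $e_{i-1,j-1}$ (which are compatible with $e_{i,j}$, not conflicting) are omitted. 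The key point is a short case analysis confirming there is no further type of conflict: an edge $e_{i',j'}$ with $|i-i'| \ge 2$ and $|j-j'| \ge 2$ is neither adjacent, parallel, nor neighboring, hence compatible with $e_{i,j}$ and irrelevant to the count.

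For the second stage I would observe that any two edges incident at a common duo, say $e_{k,j_1}$ and $e_{k,j_2}$ at $d^A_k$ with $j_1 \ne j_2$, are adjacent and therefore conflicting; consequently a pairwise compatible family contains at most one edge incident at each fixed duo. Since all edges conflicting with $e_{i,j}$ are distributed among only the six duos above, a pairwise compatible set of such edges has cardinality at most six, matching exactly the six forms in the statement, with the index conditions $i', i'', i''' \ne i$ and $j', j'', j''' \ne j$ guaranteeing that each listed edge is genuinely conflicting (and in particular is not one of the two parallel edges).

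The step I expect to demand the most care is the first-stage case analysis, in particular the bookkeeping of boundary edges that are simultaneously incident at two of the six duos (for instance $e_{i-1,j+1}$, incident at both $d^A_{i-1}$ and $d^B_{j+1}$). Such an edge occupies two of the six available slots while contributing only a single edge, which can only lower the count; hence it poses no problem for an \emph{upper} bound, and the counting argument goes through unchanged.
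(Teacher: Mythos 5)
Your argument is correct and is exactly the justification the paper leaves implicit: every edge conflicting with $e_{i,j}$ is adjacent or neighboring to it and hence incident at one of the six duos $d^A_{i-1}, d^A_i, d^A_{i+1}, d^B_{j-1}, d^B_j, d^B_{j+1}$, and a pairwise compatible family contains at most one edge per duo. Your closing remark about edges occupying two of the six slots (e.g.\ $e_{i-1,j+1}$) only strengthening the upper bound matches the paper's own caveat that some of the six listed edges may be void.
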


We remark that in Observation~\ref{obs2.1} by ``at most'', some of the six edges could be void, that is, non-existent in $E$;
also, when $e_{i,j'}$ and $e_{i-1,j''-1}$ both present, then they have to be parallel suggesting that $j' = j''$
(the same applies to $e_{i,j'}$ and $e_{i+1,j'''+1}$, $e_{i',j}$ and $e_{i''-1,j-1}$, $e_{i',j}$ and $e_{i'''+1,j+1}$).

In the sequel, in general, the subscript of a vertex of $D^A$ has an $i$ or $h$, and the subscript of a vertex of $D^B$ has a $j$ or $\ell$.

\section{A local search heuristics $\mcL\mcS$}
\label{sec3}
Given a bipartite graph $G = (D^A, D^B, E)$,
the $3.5$-approximation algorithm presented by Boria {\it et al.} \cite{BCC16} starts with an arbitrary maximal compatible matching,
iteratively seeks swapping one edge in the current matching out for two compatible edges,
and terminates when the expansion by such swapping is impossible.

Our local search heuristics is an extension of the above algorithm,
to iteratively apply two different swapping operations to increase the size of the matching and to decrease the number of singleton edges in the matching,
respectively.
We present the heuristics in details in the following.
Note that we also start with an arbitrary maximal compatible matching, which by Observation~\ref{obs2.1} can be obtained in $O(n^2)$-time,
where $n$ is the number of vertices in one side of the bipartite graph (or more precisely, $|D^A| = |D^B| = n-1$).

Let $M$ denote the current compatible matching in hand.
For any edge $e_{i,j} \in M$, let ${C}(e_{i, j})$ be the set of all the edges of $E$ conflicting with $e_{i,j}$;
then ($q = -1, 0, +1$ in the following set unions)
\begin{equation}
\label{eq1}
C(e_{i, j}) = \bigcup_{q=-1}^{+1}\left\{e_{i+q, j'+q} \in E \mid j' \ne j\right\} \cup
			  \bigcup_{q=-1}^{+1}\left\{e_{i'+q, j+q} \in E \mid i' \ne i\right\}.
\end{equation}

Clearly, $|{C}(e_{i, j})| \le 6(n-1)$.
Recall that $|E| \in O(n^2)$.
We have the following observation, which essentially narrows down the candidate edges for swapping with the edge $e_{i,j}$.

\begin{obs}
\label{obs3.1}
For a maximal compatible matching $M$ and an edge $e_{i,j} \in M$,
the edges compatible with all the edges of $M - \{e_{i, j}\}$ must be in ${C}(e_{i, j}) \cup \{e_{i, j}\}$.
\end{obs}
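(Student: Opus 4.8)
The plan is to prove Observation~\ref{obs3.1} by a straightforward contrapositive/containment argument that exploits the explicit description of the conflict set $C(e_{i,j})$ given in Equation~\eqref{eq1}. The claim asserts that any edge $e$ which is compatible with every edge of $M - \{e_{i,j}\}$ must lie in $C(e_{i,j}) \cup \{e_{i,j}\}$. So I would fix such an edge $e$ and argue that if $e \notin C(e_{i,j}) \cup \{e_{i,j}\}$, then $e$ is compatible with all of $M$, contradicting the maximality of $M$.

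First I would unpack the two cases. If $e = e_{i,j}$, there is nothing to prove, so assume $e \ne e_{i,j}$. By hypothesis $e$ is compatible with every edge in $M - \{e_{i,j}\}$. Now consider the relationship between $e$ and the single remaining edge $e_{i,j} \in M$. Either $e$ is compatible with $e_{i,j}$, or $e$ conflicts with $e_{i,j}$. In the former case, $e$ is compatible with every edge of $M$; since $e \ne e_{i,j}$ and $e$ is distinct from the edges of $M$ (it is compatible with, hence distinct from, each of them), the set $M \cup \{e\}$ is a strictly larger compatible matching, contradicting the maximality of $M$. In the latter case, $e$ conflicts with $e_{i,j}$, which by definition means $e \in C(e_{i,j})$, and we are done.

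Thus the entire argument reduces to the dichotomy "either $e$ conflicts with $e_{i,j}$, or it does not," combined with maximality. The role of Equation~\eqref{eq1} is simply to make precise what "conflicts with $e_{i,j}$" means, namely membership in $C(e_{i,j})$; I would cite it to justify that every edge conflicting with $e_{i,j}$ is indeed captured in $C(e_{i,j})$, and conversely. No case analysis over the six conflict directions from Observation~\ref{obs2.1} is actually needed for the containment itself, since $C(e_{i,j})$ is defined to be exactly the set of all conflicting edges.

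I do not expect any genuine obstacle here; the statement is essentially a definitional consequence of maximality. The only point requiring minor care is ensuring that the edge $e$ under consideration is genuinely \emph{not} already in $M$ when we form $M \cup \{e\}$ — but this is immediate, because an edge compatible with all of $M$ and distinct from $e_{i,j}$ cannot coincide with any edge of $M - \{e_{i,j}\}$ (an edge is never compatible with itself in the relevant sense, or trivially it is already counted), so adjoining it strictly enlarges the matching. The value of the observation is practical rather than deep: it guarantees that the search for a swap partner for $e_{i,j}$ need only scan the $O(n)$ edges of $C(e_{i,j})$ rather than all of $E$, which is what drives the polynomial running time of the heuristic $\mcL\mcS$.
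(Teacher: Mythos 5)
Your argument is correct and is exactly the reasoning the paper intends: the paper states Observation~\ref{obs3.1} without proof, treating it as an immediate consequence of the definition of $C(e_{i,j})$ in Equation~(\ref{eq1}) together with the maximality of $M$, which is precisely the dichotomy ("either $e$ conflicts with $e_{i,j}$, hence lies in $C(e_{i,j})$, or it is compatible with all of $M$ and could be added, contradicting maximality") that you spell out. Your handling of the degenerate case $e \in M - \{e_{i,j}\}$ is a reasonable reading of an edge case the paper leaves implicit.
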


We next describe the two different swapping operations.
Both of them apply to a maximal compatible match $M$.
One operation is to replace five edges of $M$ by six edges, denoted as {\sc Replace-5-by-6}, thus to increase the size of the matching;
and the other operation is to replace five edges of $M$ by five edges with the resulted matching having strictly less singleton edges, denoted as {\sc Reduce-5-by-5}.
Note that in each iteration, the operation {\sc Reduce-5-by-5} applies only when the operation {\sc Replace-5-by-6} fails to expand the current matching $M$.

\subsection{Operation {\sc Replace-5-by-6}}
The operation {\sc Replace-5-by-6} seeks to expand the current maximal compatible matching $M$ by swapping five edges of $M$ out for six compatible edges.
It does so by scanning all size-$5$ subsets of $M$ and terminates at a successful expansion.
If no such expansion is possible, it also terminates but without making any change to the matching $M$.

Let $X = \{e_1, e_2, \ldots, e_5\}$ be a subset of $M$
(in the special case where $|M| \le 5$, we seek for a compatible matching of size $|M| + 1$ directly by an exhaustive search).
The operation composes a set $E' = X \cup C(X)$,
where $C(X)$ contains all the edges each conflicting with an edge of $X$ but compatible with (all the edges of) $M - X$;
it then checks every size-$6$ subset $X'$ of $E'$ for compatibility and, if affirmative, swaps $X$ out for $X'$ to expand $M$.

Recall that $|M| < n$.
The number of size-$5$ subsets of $M$ is $O(n^5)$.
For each size-$5$ subset $X$, composing the set $E'$ takes $O(n^2)$ time and $|E'| < 30n$.
It follows that the number of size-$6$ subsets of $E'$ is $O(n^6)$.
Lastly, checking the compatibility of each size-$6$ subset $X'$ takes $O(1)$ time.
Therefore, the time complexity of the operation {\sc Replace-5-By-6} is $O(n^{11})$.

\subsection{Operation {\sc Reduce-5-by-5}}
From Equation~\ref{eq1}, one sees that given a maximal compatible matching $M$,
a pair of parallel edges of $M$ are expected to conflict much less edges outside of $M$ than two singleton edges of $M$ do.
This hints that for two compatible matchings of the same cardinality, the one with more parallel edges more likely can be expanded,
and motivates the new operation {\sc Reduce-5-by-5}.

When the operation {\sc Replace-5-By-6} fails to expand the current maximal compatible matching $M$,
the operation {\sc Reduce-5-by-5} seeks to decrease the number of singleton edges in $M$, by swapping five edges of $M$ out for five compatible edges.
Similarly, it does so by scanning all size-$5$ subsets of $M$, and terminates at a successful reduction.
If no such reduction is possible, it also terminates but without making any change to the matching $M$.

Recall that $M$ is partitioned into $p(M)$ and $s(M)$, containing all the parallel edges and all the singleton edges, respectively.
Let $X = \{e_1, e_2, \ldots, e_5\}$ be a subset of $M$ 
(in the special case where $|M| \le 5$,
we seek for a compatible matching of the same size but containing strictly less singleton edges directly by an exhaustive search).
The operation composes a set $E' = X \cup C(X)$, where $C(X)$ contains all the edges each conflicting with an edge of $X$ but compatible with $M - X$;
it then checks every size-$5$ subset $X'$ of $E'$ for compatibility and subsequently checks whether $|s(M - X \cup X')| < |s(M)|$,
if both affirmative, swaps $X$ out for $X'$ to reduce the number of singleton edges in $M$.

For the time complexity of the operation {\sc Reduce-5-by-5}, similarly we recall that $|M| < n$.
Partitioning $M$ into $p(M)$ and $s(M)$ takes at most $O(n^2)$ time.
There are $O(n^5)$ size-$5$ subsets of $M$. 
For each such size-$5$ subset $X$, composing the set $E'$ takes $O(n^2)$ time and $|E'| < 30n$.
It follows that the number of size-$5$ subsets of $E'$ is $O(n^5)$.
Lastly, checking the compatibility of each size-$5$ subset $X'$ takes $O(1)$ time and counting the singleton edges of $M - X \cup X'$ can be done in $O(n)$ time.
Therefore, the time complexity of the operation {\sc Reduce-5-By-5} is $O(n^{11})$ too.

\subsection{The local search heuristics $\mcL\mcS$}
Our local search heuristics is iterative.
The compatible matching $M$ is initialized to $\emptyset$.

At the beginning of each iteration, we greedily expand the current compatible matching $M$ to the maximal, by adding one edge at a time.
Next, with the current maximal compatible matching $M$, the operation {\sc Replace-5-By-6} is applied to expand $M$.
If successful, the iteration ends.
Otherwise, $M$ is not modified by the operation {\sc Replace-5-By-6} and the operation {\sc Reduce-5-By-5} is applied to reduce the number of singleton edges in $M$.
If successful, the iteration ends;
otherwise the entire algorithm terminates and returns the current $M$ as the solution.

Clearly, the step of greedy expansion takes $O(n^2)$ time.
The running time of the rest of the iteration is $O(n^{11})$, which is dominant.

Note that every iteration, except the last,
either increases the cardinality of the compatible matching or decreases the number of the singleton edges in the compatible matching.
We thus conclude that there are $O(n^2)$ iterations in the entire algorithm, which we denote as $\mcL\mcS$.
It follows that the time complexity of the algorithm $\mcL\mcS$ is $O(n^{13})$.
We state this result in the following theorem.

\begin{theorem}
\label{thm301}
The time complexity of the local search heuristics $\mcL\mcS$ for the {\sc MCBM} problem is $O(n^{13})$,
where $n$ is the number of vertices in one side of the bipartite graph.
\end{theorem}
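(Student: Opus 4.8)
The plan is to assemble the time bound from two quantities that the preceding discussion has already established: the cost of a single iteration and the total number of iterations. For the per-iteration cost, I would observe that one iteration performs three kinds of work: the greedy maximalization of $M$, one invocation of {\sc Replace-5-By-6}, and possibly one invocation of {\sc Reduce-5-By-5}. The greedy step costs $O(n^2)$ (it adds one edge at a time, each addition checked in constant time via Observation~\ref{obs2.1}), whereas each of the two swapping operations was shown above to run in $O(n^{11})$ time. Hence the dominant term per iteration is $O(n^{11})$.

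The crux is to bound the number of iterations by $O(n^2)$, and for this I would introduce a single potential that is monotone across both operation types, namely $\Phi(M) = n\,|M| - |s(M)|$. I would argue that each iteration except the last strictly increases $\Phi$ by at least one. An {\sc Replace-5-By-6} step raises $|M|$ by exactly one, contributing $+n$ to $\Phi$; since the singleton count afterwards satisfies $|s(M)| \le |M| \le n-1$, the loss from the $-|s(M)|$ term is strictly less than $n$, so the net change is at least one. A {\sc Reduce-5-By-5} step leaves $|M|$ fixed and, by design, strictly decreases $|s(M)|$, so $\Phi$ again increases by at least one; the greedy maximalization at the start of an iteration only ever adds edges and hence cannot decrease $\Phi$.

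I would then bound the range of $\Phi$. Because $0 \le |s(M)| \le |M| \le n-1$ throughout the run, $\Phi$ is a non-negative integer never exceeding $n(n-1) = O(n^2)$. Starting from the empty matching and increasing by at least one per non-terminal iteration, the algorithm $\mcL\mcS$ therefore performs at most $O(n^2)$ iterations. Multiplying the $O(n^2)$ iteration count by the $O(n^{11})$ dominant per-iteration cost yields the claimed $O(n^{13})$ time complexity.

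I expect the main obstacle to lie in the potential-increase argument for {\sc Replace-5-By-6}: one must verify that the possible increase in the number of singleton edges after a swap cannot cancel the $+n$ gain coming from the larger cardinality. The fact that makes this work is precisely that $|M| \le n-1$, so the singleton count stays strictly below $n$ and a net strictly positive change is guaranteed. Some care is also needed for the boundary cases $|M| \le 5$, where the operations are realized by exhaustive search; there I would check that they still respect the same invariants (increasing cardinality, or decreasing singletons at fixed cardinality), so that the monotonicity of $\Phi$ and hence the iteration bound remain valid.
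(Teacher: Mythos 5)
Your proposal is correct and takes essentially the same route as the paper: the per-iteration cost is dominated by the $O(n^{11})$ running time of the two swap operations, and the iteration count is $O(n^2)$ because every non-terminal iteration either increases $|M|$ or, at fixed $|M|$, strictly decreases $|s(M)|$. Your potential $\Phi(M)=n\,|M|-|s(M)|$ is simply an explicit formalization of the paper's informal counting of iterations, so no substantive difference arises.
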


\section{Approximation ratio analysis for the heuristics $\mcL\mcS$}
\label{sec4}
We analyze the performance ratio of the heuristics $\mcL\mcS$ through {\em amortization}.
The main result is to prove that the heuristics $\mcL\mcS$ is a $35/12$-approximation for the {\sc MCBM} problem,
and thus it is also a $35/12$-approximation for the {\sc Max-Duo} problem.

\subsection{The amortization scheme}
\label{sec4.1}
Let $M^*$ be the optimal compatible matching to the {\sc MCBM} problem and $\OPT = |M^*|$,
and $M$ be the maximal compatible matching returned by the algorithm $\mcL\mcS$ and $\SOL = |M|$.
We partition $M$ into $s(M)$ and $p(M)$.
(In the sequel, notations with a superscript $^*$ are associated with $M^*$;
notations without a superscript are associated with $M$.
In general, the subscript of a vertex of $D^A$ has an $i$ or $h$, and the subscript of a vertex of $D^B$ has a $j$ or $\ell$.)

In the amortization scheme, we assign one token to each edge $e^* \in M^*$, and thus the total amount of tokens is $\OPT$.
The edge $e^*$ will be conflicting to a number of edges of $M$ (including the case where $e^*$ is in $M$, then $e^*$ is conflicting to itself only);
it then splits the token evenly and distributes a fraction to every conflicting edge of $M$.
To the end, the total amount of tokens received by all the edges of $M$ is exactly $\OPT$.
Our main task is to estimate an upper bound (which is expected to be $35/12$) on the amount of tokens received by an edge of $M$,
thereby to give a lower bound on $\SOL$.

Formally, we define the function $\tau(e \gets e^*) \ge 0$ to be the amount of token $e^* \in M^*$ gives to $e \in M$.
For the edge $e^* \in M^*$, let $C(e^*) \subseteq M$ be the subset of edges of $M$ conflicting with $e^*$,
and for the edge $e \in M$, let $C^*(e) \subseteq M^*$ be the subset of edges of $M^*$ conflicting with $e$.
From the maximality, we know that both $|C(e^*)|, |C^*(e)| \ge 1$, for any $e^*, e$.
Then, $\tau(e \gets e^*) = \frac{1}{|C(e^*)|}$, if $e \in C(e^*)$; or otherwise $\tau(e \gets e^*) = 0$.
The total amount of tokens $e \in M$ receives is denoted as
\begin{equation}
\label{eq2}
\omega(e) := \sum_{e^* \in C^*(e)} \frac{1}{|C(e^*)|}, \forall e \in M.
\end{equation}
And we have
\[
\OPT = \sum_{e \in M} \omega(e) \le \max_{e \in M} \omega(e) \cdot \SOL.
\]

Therefore, the quantity $\max_{e \in M} \omega(e)$ is an upper bound on the performance ratio of the algorithm $\mcL\mcS$.
We thus aim to estimate $\max_{e \in M} \omega(e)$.
In the following, we will see that $\max_{e \in M} \omega(e) = {10}/3$, which is larger than our target ratio ${35}/{12}$.
We then switch to enumerate all possible cases where an edge $e$ has $\omega(e) \ge 3$ and
amortize some fraction of its token to certain provably existing edges $e'$ with $\omega(e') < 3$.
In other words, we will estimate the average value of $\omega(\cdot)$ for all the edges of $M$, denoted as $\overline{\omega(e)}$,
and prove an upper bound (which is shown to be $35/12$) on $\overline{\omega(e)}$ that is also an upper bound on the performance ratio of the algorithm $\mcL\mcS$.

To this purpose, we may assume without loss of generality that $M \cap M^* = \emptyset$ since their $\omega(\cdot)$'s are all $1$.
According to Observation~\ref{obs2.1} in Section~\ref{sec2}, we have $|C^*(e)| \le 6$ and $|C(e^*)| \le 6$ for any $e \in M$ and $e^* \in M^*$.
Consider an arbitrary edge $e_{i,j} \in M$, we have
\begin{equation*}
C^*(e_{i, j}) = \{e^*_{i-1, j''-1}, e^*_{i, j'}, e^*_{i+1, j'''+1}, e^*_{i''-1, j-1}, e^*_{i', j}, e^*_{i'''+1, j+1}\},
\end{equation*}
where $e^*_{i,j'}$ ($e^*_{i-1, j''-1}, e^*_{i+1, j'''+1}$, respectively) denotes the edge of $M^*$ incident at $d^A_i$ ($d^A_{i-1}, d^A_{i+1}$, respectively),
if it exists, or otherwise it is a void edge;
$e^*_{i',j}$ ($e^*_{i''-1, j-1}, e^*_{i'''+1, j+1}$, respectively) denotes the edge of $M^*$ incident at $d^B_j$ ($d^B_{j-1}, d^B_{j+1}$, respectively),
if it exists, or otherwise it is a void edge;
and none of $i', i'', i'''$ can be $i$ and none of $j', j'', j'''$ can be $j$.
(It is important to point out that $C^*(e_{i, j})$ does not necessarily contain $6$ edges, due to the possible void edges.)
We partition $C^*(e_{i, j})$ into two parts $C^*(e_{i, \Cdot})$ and $C^*(e_{\Cdot, j})$:
\begin{align*}
C^*(e_{i, \Cdot}) &
= \{e^*_{i-1, j''-1}, e^*_{i, j'}, e^*_{i+1, j'''+1}\}, \\
C^*(e_{\Cdot, j}) &
= \{e^*_{i''-1, j-1}, e^*_{i', j}, e^*_{i'''+1, j+1}\}.
\end{align*}
(Again, each of $C^*(e_{i, \Cdot})$ and $C^*(e_{\Cdot, j})$ does not necessarily contain $3$ edges, due to the possible void edges.)
We extend the function notation to let $\tau(e_{i, j} \gets C^*(e_{i, j}))$ be the multi-set of the $\tau(e_{i, j} \gets e^*)$ values,
where $e^* \in C^*(e_{i, j})$, that is,
\begin{align}
\tau(e_{i, j} \gets C^*(e_{i, \Cdot})) &
= \left\{\frac{1}{|C(e^*)|} \ \bigg| \ e^* \in C^*(e_{i, \Cdot}) \right\},\\
\tau(e_{i, j} \gets C^*(e_{\Cdot, j})) &
= \left\{\frac{1}{|C(e^*)|} \ \bigg| \ e^* \in C^*(e_{\Cdot, j}) \right\},\\
\tau(e_{i, j} \gets C^*(e_{i, j})) &
= \tau(e_{i, j} \gets C^*(e_{i, \Cdot})) \cup \tau(e_{i, j} \gets C^*(e_{\Cdot, j})).
\end{align}
Then $\omega(e_{i, j})$ is the sum of all the (at most six) values in the set $\tau(e_{i, j} \gets C^*(e_{i, j}))$;
each of these values can be any of $1, \frac 12, \frac 13, \frac 14, \frac 15, \frac 16$, since $1 \le |C(e^*)| \le 6$ for any $e^* \in C^*(e_{i, j})$.
We also use the following vectors to represent the {\em ordered values} of 
$\tau(e_{i, j} \gets C^*(e_{i, \Cdot}))$ and $\tau(e_{i, j} \gets C^*(e_{\Cdot, j}))$, respectively:
\begin{equation*}
\big(\tau(e_{i, j} \gets e^*_{i-1, j''-1}), \tau(e_{i, j} \gets e^*_{i, j'}), \tau(e_{i, j} \gets e^*_{i+1, j'''+1})\big),
\end{equation*}
\begin{equation*}
\big(\tau(e_{i, j} \gets e^*_{i''-1, j-1}), \tau(e_{i, j} \gets e^*_{i', j}), \tau(e_{i, j} \gets e^*_{i'''+1, j+1})\big).
\end{equation*}
We need the following three more subsets of $M$, all of which are associated with $e_{i,j} \in M$.
\begin{align*}
C(C^*(e_{i, \Cdot})) &
= \bigcup_{e^* \in C^*(e_{i, \Cdot})} C(e^*),\\
C(C^*(e_{\Cdot, j})) &
= \bigcup_{e^* \in C^*(e_{\Cdot, j})} C(e^*), \\
C(C^*(e_{i, j})) &
= C(C^*(e_{i, \Cdot})) \cup C(C^*(e_{\Cdot, j})).
\end{align*}

\subsection{Value combinations of $\tau(e_{i, j} \gets C^*(e_{i, j}))$ with $\omega(e_{i, j}) \ge 3$}
\label{sec4.2}
Note that the operation {\sc Replace-5-by-6} actually executes swapping $p$ edges of the current compatible matching out
for $p+1$ compatible edges to expand the matching, for $p = 1, 2, 3, 4, 5$.
Therefore, for any edge $e_{i,j} \in M$,
we can never have two edges $e^*_{i_1, j_1}, e^*_{i_2, j_2} \in C^*(e_{i, j})$ such that $|C(e^*_{i_1, j_1})| = |C(e^*_{i_2, j_2})| = 1$,
that is, both of them conflict with only the edge $e_{i,j}$ in $M$.
Thus we immediately have the following lemma, which has also been observed in \cite{BCC16}.

\begin{lemma}
\label{lemma401}
{\rm \cite{BCC16}}
For any edge $e_{i,j} \in M$, there is at most one edge $e^*_{i_1, j_1} \in C^*(e_{i, j})$ such that $|C(e^*_{i_1, j_1})| = 1$.
\end{lemma}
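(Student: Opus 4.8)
<br>

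The plan is to prove Lemma~\ref{lemma401} by contradiction, relying on the key structural property of the operation {\sc Replace-5-by-6} described just before the statement. Suppose, for contradiction, that there exist two distinct edges $e^*_{i_1,j_1}, e^*_{i_2,j_2} \in C^*(e_{i,j})$ with $|C(e^*_{i_1,j_1})| = |C(e^*_{i_2,j_2})| = 1$. Since both conflict with some edge of $M$ and their conflict sets in $M$ have size exactly one, the unique edge each conflicts with must be $e_{i,j}$ itself; that is, $C(e^*_{i_1,j_1}) = C(e^*_{i_2,j_2}) = \{e_{i,j}\}$. This means both $e^*_{i_1,j_1}$ and $e^*_{i_2,j_2}$ are compatible with every edge of $M - \{e_{i,j}\}$.

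The heart of the argument is then to exhibit a successful {\sc Replace-1-by-2} expansion, contradicting the fact that $M$ is a local optimum under {\sc Replace-5-by-6} (which subsumes the $p=1$ case). First I would check that $e^*_{i_1,j_1}$ and $e^*_{i_2,j_2}$ are compatible \emph{with each other}: they are both edges of the optimal matching $M^*$, and any two edges of a single compatible matching are by definition pairwise compatible. Combined with the previous paragraph, this shows that $\big(M - \{e_{i,j}\}\big) \cup \{e^*_{i_1,j_1}, e^*_{i_2,j_2}\}$ is a compatible matching of cardinality $|M| + 1$. Hence swapping the single edge $e_{i,j}$ out for the two compatible edges $e^*_{i_1,j_1}, e^*_{i_2,j_2}$ is exactly a valid {\sc Replace-1-by-2} move, which the operation {\sc Replace-5-by-6} would have detected and applied. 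This contradicts the assumption that $M$ is the matching returned by $\mcL\mcS$ (upon which no further expansion is possible), completing the proof.

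The step I expect to require the most care is verifying that the two candidate edges together with $M - \{e_{i,j}\}$ genuinely form a compatible set of the right size, rather than merely being pairwise compatible in isolation. The subtle point is the cardinality increase: removing one edge and adding two must net a gain of one, which requires that $e^*_{i_1,j_1} \ne e^*_{i_2,j_2}$ (guaranteed by distinctness) and that neither already lies in $M - \{e_{i,j}\}$ (guaranteed since each is compatible with, and therefore distinct from conflicting with, all of $M - \{e_{i,j}\}$, and in particular under the working assumption $M \cap M^* = \emptyset$). I would make explicit that the $|C(e^*)| = 1$ condition pins the unique conflicting edge to be $e_{i,j}$, since $e^* \in C^*(e_{i,j})$ forces $e_{i,j} \in C(e^*)$, and $|C(e^*)| = 1$ leaves no room for any other conflicting edge of $M$. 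Everything else is a direct appeal to the definition of the swapping operation and to the maximality/local-optimality of $M$.
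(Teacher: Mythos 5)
Your proof is correct and follows essentially the same route as the paper, which justifies Lemma~\ref{lemma401} in the sentence immediately preceding it: two such edges would each conflict only with $e_{i,j}$, so swapping $e_{i,j}$ out for them is a valid {\sc Replace}-$1$-by-$2$ move subsumed by {\sc Replace-5-by-6}, contradicting local optimality. Your extra care about mutual compatibility (both edges lie in the compatible matching $M^*$) and the net cardinality gain only makes explicit what the paper leaves implicit.
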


\begin{lemma}
\label{lemma402}
For any edge $e_{i,j} \in M$, and for any pair of parallel edges $e^*_{i_1, j_1}, e^*_{i_1+1, j_1+1} \in C^*(e_{i, j})$,
$| |C(e^*_{i_1, j_1})| - |C(e^*_{i_1+1, j_1+1})| | \le 2$.
\end{lemma}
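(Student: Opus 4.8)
The plan is to compare the two conflict sets $C(e^*_{i_1,j_1})$ and $C(e^*_{i_1+1,j_1+1})$ vertex by vertex, exploiting the fact that two parallel edges of $M^*$ have almost the same local conflict neighborhood. By Observation~\ref{obs2.1}, every edge of $M$ conflicting with $e^*_{i_1,j_1}$ is incident at one of the six vertices $N_1 = \{d^A_{i_1-1}, d^A_{i_1}, d^A_{i_1+1}, d^B_{j_1-1}, d^B_{j_1}, d^B_{j_1+1}\}$, while every edge of $M$ conflicting with $e^*_{i_1+1,j_1+1}$ is incident at one of $N_2 = \{d^A_{i_1}, d^A_{i_1+1}, d^A_{i_1+2}, d^B_{j_1}, d^B_{j_1+1}, d^B_{j_1+2}\}$. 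The four vertices $d^A_{i_1}, d^A_{i_1+1}, d^B_{j_1}, d^B_{j_1+1}$ are common to $N_1$ and $N_2$; I would call these the \emph{shared} vertices, whereas $d^A_{i_1-1}, d^B_{j_1-1}$ are private to $N_1$ and $d^A_{i_1+2}, d^B_{j_1+2}$ are private to $N_2$. Since $M$ is a matching, each vertex carries at most one edge of $M$; write $m(v)$ for it when it exists.

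The key claim is that for every shared vertex $v$, if $m(v)$ exists then it conflicts with \emph{both} $e^*_{i_1,j_1}$ and $e^*_{i_1+1,j_1+1}$, and hence lies in $C(e^*_{i_1,j_1}) \cap C(e^*_{i_1+1,j_1+1})$. I would verify this by a short case check over the four shared vertices. For instance, $m(d^A_{i_1}) = e_{i_1,j'}$ is adjacent to $e^*_{i_1,j_1}$ at $d^A_{i_1}$, and its only way to be compatible with $e^*_{i_1+1,j_1+1}$ would be to be parallel to it, which forces $m(d^A_{i_1}) = e^*_{i_1,j_1}$; but this is ruled out by the standing assumption $M \cap M^* = \emptyset$ from Section~\ref{sec4.1}. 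The remaining three shared vertices are handled symmetrically, the lone compatible exception always being exactly one of the two parallel edges $e^*_{i_1,j_1}, e^*_{i_1+1,j_1+1}$ and thus excluded by $M \cap M^* = \emptyset$.

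Granting the claim, any edge in the symmetric difference $C(e^*_{i_1,j_1}) \setminus C(e^*_{i_1+1,j_1+1})$ cannot be incident at a shared vertex, since such an edge would equal some $m(v)$ and hence lie in both sets; therefore it must be incident at one of the two private vertices $d^A_{i_1-1}, d^B_{j_1-1}$, of which there are at most two carrying edges of $M$, giving $|C(e^*_{i_1,j_1}) \setminus C(e^*_{i_1+1,j_1+1})| \le 2$. By the symmetric argument on the private vertices $d^A_{i_1+2}, d^B_{j_1+2}$, also $|C(e^*_{i_1+1,j_1+1}) \setminus C(e^*_{i_1,j_1})| \le 2$. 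Since $\big| \, |C(e^*_{i_1,j_1})| - |C(e^*_{i_1+1,j_1+1})| \, \big| = \big| \, |C(e^*_{i_1,j_1}) \setminus C(e^*_{i_1+1,j_1+1})| - |C(e^*_{i_1+1,j_1+1}) \setminus C(e^*_{i_1,j_1})| \, \big|$ with both terms nonnegative and at most $2$, the difference is at most $2$, as required.

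The main obstacle I expect is the case analysis behind the key claim: one must confirm, for each of the four shared vertices and each of the two optimal edges, that $m(v)$ genuinely conflicts, dispatching both the ``adjacent, hence conflicting'' subcases and the single ``parallel, hence compatible'' escape. The delicate point is that this escape is available only when $m(v)$ coincides with the partner parallel edge, so the entire argument rests on the reduction $M \cap M^* = \emptyset$; losing that assumption would reintroduce the escape and break the bound.
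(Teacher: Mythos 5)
Your proof is correct and follows essentially the same route as the paper: the paper likewise bounds the two set differences $C(e^*_{i_1,j_1}) - C(e^*_{i_1+1,j_1+1}) \subseteq \{e_{i_1-1,\diamond}, e_{\diamond,j_1-1}\}$ and $C(e^*_{i_1+1,j_1+1}) - C(e^*_{i_1,j_1}) \subseteq \{e_{i_1+2,\diamond}, e_{\diamond,j_1+2}\}$ by two each and concludes. Your write-up merely makes explicit the shared-versus-private vertex case check (and its reliance on $M \cap M^* = \emptyset$) that the paper leaves implicit.
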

\begin{proof}
Since the edges of $C(e^*_{i_1, j_1}) \cup C(e^*_{i_1+1, j_1+1}) \subseteq M$ are pairwise compatible,
we have
\begin{align*}
C(e^*_{i_1, j_1}) - C(e^*_{i_1 + 1, j_1 + 1}) & 
\subseteq \{e_{i_1-1, \diamond}, e_{\diamond, j_1-1}\}, \\
C(e^*_{i_1+1, j_1+1}) - C(e^*_{i_1, j_1}) & 
\subseteq \{e_{i_1+2, \diamond}, e_{\diamond, j_1+2}\},
\end{align*}
where $e_{i_1-1, \diamond}$ ($e_{\diamond, j_1-1}, e_{i_1+2, \diamond}, e_{\diamond, j_1+2}$, respectively) denotes the edge of $M$ incident at
$d^A_{i_1-1}$ ($d^B_{j_1-1}, d^A_{i_1+2}, d^B_{j_1+2}$, respectively), if it exists, or otherwise it is a void edge.
Thus, $|C(e^*_{i_1, j_1}) - C(e^*_{i_1+1, j_1+1})| \le 2$ and $|C(e^*_{i_1+1, j_1+1}) - C(e^*_{i_1, j_1})| \le 2$,
which together imply $| |C(e^*_{i_1, j_1})| - |C(e^*_{i_1+1, j_1+1})| | \le 2$.
\end{proof}

\begin{lemma}
\label{lemma403}
Suppose $|C^*(e_{i, \Cdot})| = 3$, then $C^*(e_{i, \Cdot}) = \{e^*_{i-1, j'-1}, e^*_{i, j'}, e^*_{i+1, j'+1}\}$ for some $j' \ne j$.
In this case we can never have $|C(e^*_{i-1, j'-1})| = |C(e^*_{i, j'})| = |C(e^*_{i+1, j'+1})| = 2$,
if one of the following three conditions holds:
\begin{enumerate}
\item there is an edge $e^*_{i_1, j_1} \in C^*(e_{\Cdot, j})$ such that $|C(e^*_{i_1, j_1})| = 1$;
\item $|C(C^*(e_{\Cdot, j}))| \le |C^*(e_{\Cdot, j})|$;
\item there is at least one singleton edge of $M$ in $C(C^*(e_{i, \Cdot}))$.
\end{enumerate}
\end{lemma}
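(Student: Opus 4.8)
The plan is to argue by contradiction with the local optimality of $M$: assuming $|C^*(e_{i,\Cdot})|=3$ and, against the claim, that $|C(e^*_{i-1,j'-1})|=|C(e^*_{i,j'})|=|C(e^*_{i+1,j'+1})|=2$, I would exhibit in each of the three cases either a {\sc Replace-5-by-6} step that strictly enlarges $M$ or a {\sc Reduce-5-by-5} step that strictly decreases $|s(M)|$, each contradicting the fact that $\mcL\mcS$ terminated with $M$. First I set up the common structure. Write $F=C^*(e_{i,\Cdot})=\{f_{-1},f_0,f_{+1}\}$ for the three consecutive parallel edges $e^*_{i-1,j'-1},e^*_{i,j'},e^*_{i+1,j'+1}$ (this form was justified in the lemma statement), and $R=C(C^*(e_{i,\Cdot}))=C(f_{-1})\cup C(f_0)\cup C(f_{+1})$. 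Since each $f$ conflicts with $e_{i,j}$ and has $|C(f)|=2$, we may write $C(f_q)=\{e_{i,j},X_q\}$ with a single extra edge $X_q$. The first key step is an overlap observation: whichever of the vertices $d^A_{i-1},d^A_{i+1},d^B_{j'-1},d^B_{j'},d^B_{j'+1}$ the edge $X_0$ is incident to, that vertex is shared with $f_{-1}$ or with $f_{+1}$ (consecutive parallel edges share their boundary conflict-vertices), forcing $X_0\in\{X_{-1},X_{+1}\}$; the case of $d^B_{j'}$ even forces $X_{-1}=X_0=X_{+1}$. Hence $|\{X_{-1},X_0,X_{+1}\}|\le 2$, so $|R|\le 3$, while $F\subseteq M^*$ is compatible and, by definition of $R$, compatible with $M\setminus R$. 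If $|R|=2$, swapping $R$ out for $F$ already enlarges $M$, a contradiction; so I may assume $|R|=3$, i.e. $R=\{e_{i,j},Y,Z\}$ where (up to the evident symmetry) $Y=X_0=X_{-1}$ sits at $d^A_{i-1}$ or $d^B_{j'-1}$ and $Z=X_{+1}$ sits at the far vertex $d^A_{i+2}$ or $d^B_{j'+2}$.

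For Case~1, let $g=e^*_{i_1,j_1}\in C^*(e_{\Cdot,j})$ with $|C(g)|=1$; by maximality $C(g)=\{e_{i,j}\}$. Then $F\cup\{g\}\subseteq M^*$ is compatible and is compatible with $M\setminus R$, since $F$ is and $g$ conflicts only with $e_{i,j}\in R$. Swapping the three edges of $R$ out for the four edges of $F\cup\{g\}$ is a valid {\sc Replace} step enlarging $M$, a contradiction. For Case~2, I would swap $R\cup C(C^*(e_{\Cdot,j}))$ out for $F\cup C^*(e_{\Cdot,j})$. The incoming set lies in $M^*$, hence is compatible, and every $M$-edge conflicting with it lies in the outgoing set by definition of the two conflict sets; the outgoing set has size at most $2+|C(C^*(e_{\Cdot,j}))|\le 5$ (the two parts share $e_{i,j}$), while the incoming set has size $3+|C^*(e_{\Cdot,j})|$. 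The net gain is at least $1+|C^*(e_{\Cdot,j})|-|C(C^*(e_{\Cdot,j}))|\ge 1$ by the hypothesis of Case~2, again contradicting optimality through a {\sc Replace} step.

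Case~3 is where I expect the real work, and it must use {\sc Reduce-5-by-5} rather than {\sc Replace-5-by-6}. The idea is to swap $R$ out for $F$: the sizes match ($|R|=3=|F|$), the three edges of $F$ are mutually parallel and so contribute no singleton to the new matching $M'=(M\setminus R)\cup F$, whereas $R$ contains at least one singleton by hypothesis. To conclude $|s(M')|<|s(M)|$ I would show that the number of \emph{newly created} singletons (edges of $M\setminus R$ whose every parallel partner lay in $R$ and which acquire no partner in $F$) does not exceed the number of singletons \emph{saved} (edges of $M\setminus R$ rendered parallel to $F$), so the lone guaranteed singleton inside $R$ produces a strict net decrease. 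The hard part is precisely this boundary bookkeeping: from the forced positions of $Y$ and $Z$ I must determine exactly which parallel partners the edges of $R$ can have in $M\setminus R$ and verify each is either retained or compensated. Here I would lean on the geometric facts already extracted — for instance that $e_{i+1,j+1}\notin M$ (else it would lie in $C(f_{+1})$ yet occupy the wrong vertex), that the only possible parallel partner of $e_{i,j}$ is $Y$, and the analogous statements for $Y$ and $Z$ — together with the standing assumption $M\cap M^*=\emptyset$, to keep the configurations finite. If a pure $R\to F$ swap falls short in some boundary configuration, the remaining budget allows enlarging it to a genuine $5$-for-$5$ exchange that also removes the offending partner. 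Ensuring, across every position of the guaranteed singleton and of $Z$, that these boundary effects never swamp the one singleton we gain is the crux of the argument.
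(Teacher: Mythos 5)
Your setup and your treatment of conditions~1 and~2 are correct and essentially identical to the paper's: you derive $2\le|R|\le 3$ from the overlap of the three conflict sets, dispose of $|R|=2$ by a $2$-for-$3$ replacement, handle condition~1 by swapping $R$ for $F\cup\{g\}$, and handle condition~2 by the same counting argument the paper uses ($|C(C^*(e_{i,j}))|\le 2+|C^*(e_{\Cdot,j})| < 3+|C^*(e_{\Cdot,j})| = |C^*(e_{i,j})|$).

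Condition~3, however, is a genuine gap: you describe the intended $R\to F$ reduction and then explicitly defer the singleton bookkeeping (``the crux of the argument''), and the naive count you sketch does not close on its own. If the one guaranteed singleton in $R$ is $e_{i,j}$ itself while $Z$ has a parallel partner outside $R$, your swap removes one singleton but creates one, giving no strict decrease. The missing ingredient is the dichotomy the paper runs on: (i) if $e_{i,j}$ is parallel, its partner lies in $R$ (it conflicts with one of the three $f_q$), and the hypothesis $|C(f_q)|=2$ for all $q$ forces this pair to be \emph{isolated} in $M$ (any further parallel extension would be a third conflictor of some $f_q$), so the guaranteed singleton is the third edge of $R$ and the swap creates no new singletons; (ii) if $e_{i,j}$ is a singleton, then the same cardinality constraint forces the other edge of $C(e^*_{i,j'})$ (your $Y$) to be a singleton as well — any parallel partner of $Y$ would conflict with $f_{-1}$ or $f_{+1}$ and violate $|C(f_{\pm 1})|=2$ — so the swap removes two singletons and creates at most one (from $Z$'s partner), a strict decrease either way. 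Without these two forced structural facts your ``remaining budget'' escape hatch is not justified, and the case is not proved.
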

\begin{proof}
Recall that $C^*(e_{i, \Cdot}) = \{e^*_{i-1, j''-1}, e^*_{i, j'}, e^*_{i+1, j'''+1}\}$ for some $j', j'', j''' (\ne j)$.
When all these three edges of $M^*$ exist, they are consecutive parallel edges, that is,
$j' = j'' = j'''$ and thus $C^*(e_{i, \Cdot}) = \{e^*_{i-1, j'-1}, e^*_{i, j'}, e^*_{i+1, j'+1}\}$ for some $j' \ne j$.
This proves the first half of the lemma.

Next, assume $|C(e^*_{i-1, j'-1})| = |C(e^*_{i, j'})| = |C(e^*_{i+1, j'+1})| = 2$, and we will show none of the three conditions holds.

Since $e_{i, j} \in C(e^*_{i-1, j'-1}) \cap C(e^*_{i, j'}) \cap C(e^*_{i+1, j'+1})$,
each of $C(e^*_{i-1, j'-1}), C(e^*_{i, j'}), C(e^*_{i+1, j'+1})$ contains exactly one edge other than $e_{i,j}$.
Observe that any edge of $M$ conflicting with $e^*_{i, j'}$ must be conflicting with either $e^*_{i-1, j'-1}$ or $e^*_{i+1, j'+1}$.
We conclude that either $C(e^*_{i-1, j'-1}) = C(e^*_{i, j'})$ or $C(e^*_{i+1, j'+1}) = C(e^*_{i, j'})$,
implying that $2 \le |C(C^*(e_{i, \Cdot}))| \le 3$.

If $|C(C^*(e_{i, \Cdot}))| = 2$, then the algorithm $\mcL\mcS$ would have replaced these two edges of $C(C^*(e_{i, \Cdot}))$ by the three edges of $C^*(e_{i, \Cdot})$,
contradicting to the fact that $M$ is the solution by $\mcL\mcS$.
Therefore, $|C(C^*(e_{i, \Cdot}))| = 3$.

If the first condition holds, the algorithm $\mcL\mcS$ would have replaced the three edges of $C(C^*(e_{i, \Cdot}))$ by
the edge $e^*_{i_1, j_1}$ and the three edges of $C^*(e_{i, \Cdot})$ to expand $M$, again a contradiction.

If the second condition holds, 
we have $|C(C^*(e_{i, j}))| \le |C(C^*(e_{i, \Cdot}))| + |C(C^*(e_{\Cdot, j}))| - 1 \le 2 + |C^*(e_{\Cdot, j})| < |C^*(e_{i, j})| \le 6$.
Then, the algorithm $\mcL\mcS$ would have replaced all the edges of $C(C^*(e_{i, j}))$ by all the edges of $C^*(e_{i, j})$ to expand $M$, also a contradiction.

When there is at least one singleton edge of $M$ in $C(C^*(e_{i, \Cdot}))$, we distinguish two cases where $e_{i,j}$ is singleton or not.
If $e_{i,j}$ is not a singleton, then we may assume the edge $e_{i+1,j+1} \in M$ and thus $e_{i+1,j+1} \in C(C^*(e_{i, \Cdot}))$ too;
it follows from $|C(e^*_{i-1, j'-1})| = |C(e^*_{i, j'})| = |C(e^*_{i+1, j'+1})| = 2$ that these two edges form an isolated pair of parallel edges in $M$.
In this case, the algorithm $\mcL\mcS$ would have replaced the three edges in $C(C^*(e_{i, \Cdot}))$ by the three parallel edges of $C^*(e_{i, \Cdot})$
to decrease the number of singleton edges by at least one, a contradiction.
If $e_{i,j}$ is a singleton, then the other edge conflicting with $e^*_{i,j'}$ must also be a singleton.
The algorithm $\mcL\mcS$ would still have replaced the three edges in $C(C^*(e_{i, \Cdot}))$ by the three parallel edges of $C^*(e_{i, \Cdot})$
to decrease the number of singleton edges by at least one, again a contradiction.

In summary, we conclude that none of the three conditions would hold.
This proves the second half of the lemma.
\end{proof}

For an edge $e_{i,j} \in M$ with $\omega(e_{i, j}) \ge 3$,
we can now characterize the multi-set $\tau(e_{i, j} \gets C^*(e_{i, j}))$ of six values, in which an entry of $0$ represents a void edge in $C^*(e_{i, j})$.
We arrange these six values in a non-increasing order.
Using the above three Lemmas \ref{lemma401}--\ref{lemma403}, we have the following conclusion:

\begin{lemma}
\label{lemma404}
For an edge $e_{i,j} \in M$ with $\omega(e_{i, j}) \ge 3$, there are $8$ possible value combinations of $\tau(e_{i, j} \gets C^*(e_{i, j}))$,
which are
$\big\{1, \frac 12, \frac 12, \frac 12, \frac 12, \frac 13\big\}$, 
$\big\{1, \frac 12, \frac 12, \frac 12, \frac 12, \frac 14\big\}$,
$\big\{1, \frac 12, \frac 12, \frac 12, \frac 13, \frac 13\big\}$,
$\big\{1, \frac 12, \frac 12, \frac 12, \frac 13, \frac 14\big\}$,
$\big\{1, \frac 12, \frac 12, \frac 12, \frac 13, \frac 15\big\}$, 
$\big\{1, \frac 12, \frac 12, \frac 12, \frac 14, \frac 14\big\}$,
$\big\{1, \frac 12, \frac 12, \frac 13, \frac 13, \frac 13\big\}$, and
$\big\{1, \frac 12, \frac 12, \frac 12, \frac 12, 0\big\}$.
These combinations give rise to $\omega(e_{i, j}) = \frac{10}{3}$, $\frac{13}{4}$, $\frac{19}{6}$, $\frac{37}{12}$, $\frac{91}{30}$, $3$, $3$ and $3$ respectively.
\end{lemma}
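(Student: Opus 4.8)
The plan is to reduce the lemma to a finite arithmetic enumeration and then prune the arithmetically feasible multi-sets using the three preceding lemmas. Recall that $\omega(e_{i,j})$ is the sum of the six values in $\tau(e_{i,j} \gets C^*(e_{i,j}))$, each of which lies in $\{0, 1, \frac12, \frac13, \frac14, \frac15, \frac16\}$, where a $0$ marks a void edge. First I would record the two coarse facts that drive everything: by Lemma~\ref{lemma401} at most one of the six values equals $1$, and any value other than a possible single $1$ is at most $\frac12$ (its edge has $|C(e^*)| \ge 2$). Hence if no value equals $1$, the total is at most $6 \cdot \frac12 = 3$, with equality only for $\{\frac12,\frac12,\frac12,\frac12,\frac12,\frac12\}$; and if exactly one value equals $1$, the remaining five values, each $\le \frac12$, must sum to at least $2$.

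Next I would enumerate the one-$1$ case by conditioning on the number $t$ of the remaining five values equal to $\frac12$. A short calculation forces $t \ge 2$, since for $t \le 1$ these five values sum to at most $\frac12 + 4 \cdot \frac13 = \frac{11}{6} < 2$; and for each $t \in \{2,3,4,5\}$ the requirement ``the remaining values sum to $\ge 2$ with each $\le \frac12$'' leaves only a handful of multi-sets, which I would list with their $\omega$-values. This produces a finite candidate list that already contains the eight target combinations but also a few spurious ones that are arithmetically $\ge 3$: the multi-set $\{\frac12,\frac12,\frac12,\frac12,\frac12,\frac12\}$ from the no-$1$ case, the multi-set $\{1,\frac12,\frac12,\frac12,\frac12,\frac12\}$ from $t=5$, and the multi-sets $\{1,\frac12,\frac12,\frac12,\frac12,\frac15\}$, $\{1,\frac12,\frac12,\frac12,\frac12,\frac16\}$ and $\{1,\frac12,\frac12,\frac12,\frac13,\frac16\}$ carrying a lone $\frac15$ or $\frac16$.

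The heart of the proof, and the step I expect to be the main obstacle, is discarding these spurious multi-sets via the structural Lemmas~\ref{lemma402} and~\ref{lemma403}. The key enabling observation is that a multi-set with no $0$-entry means all six edges of $C^*(e_{i,j})$ are present, so by the first half of Lemma~\ref{lemma403} both $C^*(e_{i,\Cdot})$ and $C^*(e_{\Cdot,j})$ are consecutive parallel triples; consequently every one of the six edges is parallel to a neighbor. For the combinations carrying a lone $\frac15$ or $\frac16$ I would invoke Lemma~\ref{lemma402}: the edge with $|C(e^*)| \in \{5,6\}$ must then have a parallel partner whose $|C(\cdot)|$ differs by at most $2$, but in these multi-sets no other present value is close enough to supply one (for instance a $\frac15$-edge can pair with a $\frac13$-edge since $|5-3| = 2$, which is exactly why $\{1,\frac12,\frac12,\frac12,\frac13,\frac15\}$ survives, whereas a $\frac16$-edge would need a partner with $|C(\cdot)| \ge 4$ and none is present in $\{1,\frac12,\frac12,\frac12,\frac13,\frac16\}$, forcing its deletion). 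For the two all-$\frac12$ heavy multi-sets I would instead apply Lemma~\ref{lemma403}, using its evident symmetry under exchanging the roles of $D^A$ and $D^B$: in $\{\frac12,\frac12,\frac12,\frac12,\frac12,\frac12\}$ both triples have all $|C(e^*)| = 2$, and the union-size argument from the proof of Lemma~\ref{lemma403} yields $|C(C^*(e_{\Cdot,j}))| \le 3 = |C^*(e_{\Cdot,j})|$, so condition~(2) applies; in $\{1,\frac12,\frac12,\frac12,\frac12,\frac12\}$ the side carrying the $1$ has an edge with $|C(e^*)| = 1$ while the opposite side is an all-$\frac12$ triple, so condition~(1) applies. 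After these deletions exactly the eight listed multi-sets remain, and evaluating the sum of each gives the stated values $\frac{10}{3}, \frac{13}{4}, \frac{19}{6}, \frac{37}{12}, \frac{91}{30}, 3, 3, 3$, completing the proof.
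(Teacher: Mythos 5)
Your proposal is correct and follows exactly the route the paper intends: the paper gives no explicit proof of this lemma, merely asserting that it follows from Lemmas~\ref{lemma401}--\ref{lemma403}, and your enumeration by the number of $\frac12$-entries followed by pruning the five arithmetically feasible but structurally impossible multi-sets (the lone-$\frac15$/$\frac16$ cases via Lemma~\ref{lemma402} applied to the forced parallel triples, and the two all-$\frac12$-heavy cases via conditions~(1) and~(2) of Lemma~\ref{lemma403}) is precisely the omitted argument. No gaps.
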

%

We remark that in Lemma~\ref{lemma404}, $|C^*(e_{i, j})| = 6$ except for the last combination where $|C^*(e_{i, j})| = 5$.
Also, we see that $\max_{e \in M} \omega(e) \le 10/3$, implying that the algorithm $\mcL\mcS$ is a $10/3$-approximation.
(This is worse than the current best $3.25$-approximation though.)

\subsection{Ordered value combinations of $\tau(e_{i, j} \gets C^*(e_{i, \Cdot}))$ with $\omega(e_{i, j}) \ge 3$}
\label{sec4.2-3}
We discuss the possible ordered value combinations of $\tau(e_{i, j} \gets C^*(e_{i, \Cdot}))$ in this section.

Using the first condition of Lemma \ref{lemma403},
we can rule out $\big\{\frac 12, \frac 12, \frac 12\big\}$ for $\tau(e_{i, j} \gets C^*(e_{i, \Cdot}))$.
From the $8$ possible value combinations of $\tau(e_{i, j} \gets C^*(e_{i, j}))$ stated in Lemma~\ref{lemma404},
by Lemma~\ref{lemma402} we can identify in total $12$ possible value combinations of $\tau(e_{i, j} \gets C^*(e_{i, \Cdot}))$ with $\omega(e_{i, j}) \ge 3$,
stated in the following lemma.

\begin{lemma}
\label{lemma405}
For an edge $e_{i,j} \in M$ with $\omega(e_{i, j}) \ge 3$, there are $12$ possible value combinations of $\tau(e_{i, j} \gets C^*(e_{i, \Cdot}))$,
which are
$\big\{1, \frac 12, \frac 12\big\}$,
$\big\{1, \frac 12, \frac 13\big\}$,
$\big\{1, \frac 12, \frac 14\big\}$,
$\big\{1, \frac 13, \frac 13\big\}$,
$\big\{\frac 12, \frac 12, \frac 13\big\}$,
$\big\{\frac 12, \frac 12, \frac 14\big\}$,
$\big\{\frac 12, \frac 13, \frac 13\big\}$,
$\big\{\frac 12, \frac 13, \frac 14\big\}$,
$\big\{\frac 12, \frac 13, \frac 15\big\}$,
$\big\{\frac 12, \frac 14, \frac 14\big\}$,
$\big\{\frac 13, \frac 13, \frac 13\big\}$, and
$\big\{\frac 12, \frac 12, 0\big\}$.
\end{lemma}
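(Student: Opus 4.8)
The plan is to obtain the twelve half-combinations by a direct enumeration from the eight full combinations of Lemma~\ref{lemma404}. Since $C^*(e_{i,j})=C^*(e_{i,\Cdot})\cup C^*(e_{\Cdot,j})$, every admissible six-value multiset $\tau(e_{i,j}\gets C^*(e_{i,j}))$ of Lemma~\ref{lemma404} must split into a triple for $C^*(e_{i,\Cdot})$ and a complementary triple for $C^*(e_{\Cdot,j})$, where an entry $0$ records a void edge. First I would, for each of the eight multisets, list every partition into two size-three sub-multisets; the candidate values of $\tau(e_{i,j}\gets C^*(e_{i,\Cdot}))$ are exactly the triples occurring in some such partition. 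I would then discard the infeasible triples by two filters and verify that the survivors are precisely the twelve listed.

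The first filter is the exclusion of $\big\{\frac12,\frac12,\frac12\big\}$ already announced before the lemma. Each of the eight combinations contains exactly one entry equal to $1$: by Lemma~\ref{lemma401} at most one edge of $C^*(e_{i,j})$ has $|C(\cdot)|=1$, and each of the eight realizes exactly one. Hence if $\tau(e_{i,j}\gets C^*(e_{i,\Cdot}))=\big\{\frac12,\frac12,\frac12\big\}$, the entry $1$ must sit in $C^*(e_{\Cdot,j})$, producing an edge $e^*_{i_1,j_1}\in C^*(e_{\Cdot,j})$ with $|C(e^*_{i_1,j_1})|=1$ and forcing $|C(e^*)|=2$ for all three edges of $C^*(e_{i,\Cdot})$, in direct contradiction with condition~1 of Lemma~\ref{lemma403}. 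The second filter is Lemma~\ref{lemma402}: whenever $|C^*(e_{i,\Cdot})|=3$ the three edges are consecutive parallel (Lemma~\ref{lemma403}), so the middle edge $e^*_{i,j'}$ is parallel to each of the other two and its $|C(\cdot)|$ differs from both by at most $2$; equivalently, the triple of $|C(\cdot)|$-values must contain an element lying within $2$ of each of the other two. This at once eliminates triples such as $\big\{\frac12,\frac12,\frac15\big\}$, $\big\{1,\frac12,\frac15\big\}$ and $\big\{1,\frac14,\frac14\big\}$, whose value sets $\{2,2,5\}$, $\{1,2,5\}$ and $\{1,4,4\}$ admit no such central element.

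With the two filters in place the enumeration is routine. Running through the eight combinations, I would drop every partition whose $C^*(e_{\Cdot,j})$-half equals $\big\{\frac12,\frac12,\frac12\big\}$ and every half failing the centrality condition; each surviving $C^*(e_{i,\Cdot})$-half then coincides with one of the twelve, and conversely each of the twelve is produced by at least one partition (for instance $\big\{\frac12,\frac13,\frac15\big\}$ arises only from $\big\{1,\frac12,\frac12,\frac12,\frac13,\frac15\big\}$, and $\big\{\frac13,\frac13,\frac13\big\}$ only from $\big\{1,\frac12,\frac12,\frac13,\frac13,\frac13\big\}$). Taking the union over the eight combinations therefore gives exactly the twelve triples asserted.

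The one point demanding care is the last combination $\big\{1,\frac12,\frac12,\frac12,\frac12,0\big\}$ of Lemma~\ref{lemma404}, where $|C^*(e_{i,j})|=5$: the entry $0$ marks a void edge, so the half containing it has only two genuine edges and the centrality condition of Lemma~\ref{lemma402} does not apply to it; this is precisely what admits the twelfth combination $\big\{\frac12,\frac12,0\big\}$. Accordingly, the main obstacle I anticipate is bookkeeping rather than conceptual, namely guaranteeing that the enumeration is exhaustive so that no thirteenth triple slips through, while correctly treating the void-edge case in which Lemmas~\ref{lemma402} and~\ref{lemma403} hold only in weakened form.
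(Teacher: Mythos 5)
Your proposal is correct and follows essentially the same route as the paper, which (in the two sentences preceding the lemma) also rules out $\big\{\frac 12, \frac 12, \frac 12\big\}$ via the first condition of Lemma~\ref{lemma403} and then filters the splits of the eight combinations of Lemma~\ref{lemma404} using the parallel-edge constraint of Lemma~\ref{lemma402}. Your write-up merely makes the enumeration explicit (including the symmetric exclusion of $\big\{\frac 12, \frac 12, \frac 12\big\}$ on the $C^*(e_{\Cdot, j})$ side and the special handling of the void entry), which the paper leaves implicit.
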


\begin{lemma}
\label{lemma406}
Suppose $|C^*(e_{i, \Cdot})| = 3$ and $C^*(e_{i, \Cdot}) = \{e^*_{i-1, j'-1}$, $e^*_{i, j'}$, $e^*_{i+1, j'+1}\}$ for some $j' \ne j$.
We have
\begin{align}
C(e^*_{i, j'}) \subseteq & \
C(e^*_{i-1, j'-1}) \cup C(e^*_{i+1, j'+1}), \label{eq6} \\
|C(C^*(e_{i, \Cdot}))| \le & \
|C(e^*_{i-1, j'-1})| + |C(e^*_{i+1, j'+1})| - 1, \label{eq7} \\
|C(C^*(e_{i, \Cdot}))| \ge & \
\max
\begin{cases}
3, \\
|C(e^*_{i-1, j'-1})| + |C(e^*_{i+1, j'+1})| - 2, \\
|C(e^*_{i-1, j'-1})| + |C(e^*_{i+1, j'+1})| - |C(e^*_{i, j'})|. \label{eq8}
\end{cases}
\end{align}
\end{lemma}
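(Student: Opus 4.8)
The plan is to prove the three assertions in order, reusing the local structure already exploited in the proof of Lemma~\ref{lemma403}. For \eqref{eq6} I would argue pointwise: take any edge $e \in C(e^*_{i, j'}) \subseteq M$ and invoke Observation~\ref{obs2.1}, so that $e$ is incident at one of $d^A_{i-1}, d^A_i, d^A_{i+1}, d^B_{j'-1}, d^B_{j'}, d^B_{j'+1}$. An edge incident at $d^A_{i-1}$ or $d^B_{j'-1}$ conflicts with $e^*_{i-1, j'-1}$, and one incident at $d^A_{i+1}$ or $d^B_{j'+1}$ conflicts with $e^*_{i+1, j'+1}$; the two remaining vertices $d^A_i$ and $d^B_{j'}$ are each neighbors of the corresponding vertices of both outer edges, and since $M \cap M^* = \emptyset$ forces $e \ne e^*_{i, j'}$, such an $e$ is not parallel to $e^*_{i-1, j'-1}$ and hence conflicts with it. In every case $e \in C(e^*_{i-1, j'-1}) \cup C(e^*_{i+1, j'+1})$, which is \eqref{eq6}. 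This is exactly the observation already used midway through Lemma~\ref{lemma403}.

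Given \eqref{eq6} the middle set is absorbed, so $C(C^*(e_{i, \Cdot})) = C(e^*_{i-1, j'-1}) \cup C(e^*_{i+1, j'+1})$, and I would obtain \eqref{eq7} from the identity $|C(e^*_{i-1, j'-1}) \cup C(e^*_{i+1, j'+1})| = |C(e^*_{i-1, j'-1})| + |C(e^*_{i+1, j'+1})| - |C(e^*_{i-1, j'-1}) \cap C(e^*_{i+1, j'+1})|$ together with the fact that all three edges of $C^*(e_{i, \Cdot})$ lie in $C^*(e_{i, j})$, so $e_{i, j}$ conflicts with each of them and therefore sits in the intersection, forcing a deficiency of at least $1$. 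The first lower bound in \eqref{eq8}, namely $|C(C^*(e_{i, \Cdot}))| \ge 3$, I would obtain from local optimality as in Lemma~\ref{lemma403}: if the union had at most two edges, the algorithm $\mcL\mcS$ could delete them and insert the three pairwise-compatible edges of $C^*(e_{i, \Cdot})$ (which, by the definition of the sets $C(\cdot)$, conflict with nothing else in $M$), a {\sc Replace-5-by-6} move of size $p \le 2$, contradicting termination.

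The two remaining inequalities of \eqref{eq8} both reduce, through the same inclusion--exclusion identity, to controlling the overlap $|C(e^*_{i-1, j'-1}) \cap C(e^*_{i+1, j'+1})|$: the bound with $-|C(e^*_{i, j'})|$ is equivalent to the overlap being at most $|C(e^*_{i, j'})|$, and the bound with $-2$ to its being at most $2$. Here I would lean on Lemma~\ref{lemma402} applied to the two parallel pairs $(e^*_{i-1, j'-1}, e^*_{i, j'})$ and $(e^*_{i, j'}, e^*_{i+1, j'+1})$, which confines $C(e^*_{i-1, j'-1}) \setminus C(e^*_{i, j'})$ to the edges of $M$ meeting $d^A_{i-2}, d^B_{j'-2}$ and $C(e^*_{i+1, j'+1}) \setminus C(e^*_{i, j'})$ to those meeting $d^A_{i+2}, d^B_{j'+2}$; consequently any overlap edge lying outside $C(e^*_{i, j'})$ must be one of the two ``corner'' edges $e_{i-2, j'+2}$ or $e_{i+2, j'-2}$. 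The hard part will be converting this localization into the two sharp counts: I expect to split on whether $e_{i, j}$ is singleton and to combine the matching (vertex-disjointness) constraint with local optimality to rule out the configurations that would push the overlap past the claimed thresholds, in the same spirit as the case analysis of Lemma~\ref{lemma403}. It is this overlap estimate, rather than \eqref{eq6} or \eqref{eq7}, where the real work lies.
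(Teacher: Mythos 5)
Your treatment of \eqref{eq6}, of \eqref{eq7}, and of the bound $|C(C^*(e_{i,\Cdot}))| \ge 3$ matches the paper's proof: the same pointwise observation yields $C(C^*(e_{i,\Cdot})) = C(e^*_{i-1,j'-1}) \cup C(e^*_{i+1,j'+1})$, inclusion--exclusion together with $e_{i,j}$ lying in the intersection yields \eqref{eq7}, and local optimality of $\mcL\mcS$ yields the bound of $3$. The genuine gap is in the last two inequalities of \eqref{eq8}. You correctly reduce them to showing $|C(e^*_{i-1,j'-1}) \cap C(e^*_{i+1,j'+1})| \le 2$ and $\le |C(e^*_{i,j'})|$, and you correctly localize the potential excess (via the set differences from Lemma~\ref{lemma402}) to the two corner edges $e_{i-2,j'+2}$ and $e_{i+2,j'-2}$; but you then stop at a plan --- ``split on whether $e_{i,j}$ is singleton \ldots\ combine \ldots\ with local optimality'' --- without carrying it out. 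As written, the two inequalities that you yourself identify as ``where the real work lies'' are not proved.

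For comparison, the paper closes this step with two short structural assertions and no appeal to local optimality or to a singleton/parallel case split: first, that any edge of $M$ conflicting with both $e^*_{i-1,j'-1}$ and $e^*_{i+1,j'+1}$ must also conflict with $e^*_{i,j'}$, so $C(e^*_{i-1,j'-1}) \cap C(e^*_{i+1,j'+1}) \subseteq C(e^*_{i,j'})$, which is the third bound; and second, that $C(e^*_{i-1,j'-1}) \cap C(e^*_{i,j'}) \cap C(e^*_{i+1,j'+1}) \subseteq \{e_{i,j}, e_{\diamond,j'}\}$, a set of size at most $2$, which gives the $-2$ bound. It is worth noting that your hesitation is not misplaced: the first assertion silently passes over exactly the corner edges you isolated (e.g.\ $e_{i-2,j'+2}$ is neighboring to both $e^*_{i-1,j'-1}$ and $e^*_{i+1,j'+1}$ yet compatible with $e^*_{i,j'}$), so the paper's ``observation'' is precisely the place where an argument of the kind you sketch is needed. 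But flagging the difficulty is not the same as resolving it: to complete your proof you must either show such corner edges cannot belong to $M$ in this configuration or absorb them into the claimed counts, and the proposal does neither.
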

\begin{proof}
Observe that any edge of $M$ conflicting with $e^*_{i, j'}$ must also conflict with either $e^*_{i-1, j'-1}$ or $e^*_{i+1, j'+1}$.
We have $C(e^*_{i, j'}) \subseteq C(e^*_{i-1, j'-1}) \cup C(e^*_{i+1, j'+1})$,
which proves the inequality (\ref{eq6}) and also indicates that
$C(C^*(e_{i, \Cdot})) = C(e^*_{i-1, j'-1}) \cup C(e^*_{i+1, j'+1})$.
Since $e_{i, j} \in C(e^*_{i-1, j'-1}) \cap C(e^*_{i, j'}) \cap C(e^*_{i+1, j'+1}) \subseteq \{e_{i, j}, e_{\diamond, j'}\}$,
where $e_{\diamond, j'}$ is a possible edge of $M$ incident at $d^B_{j'}$,
we have
\begin{equation*}
|C(e^*_{i-1, j'-1})| + |C(e^*_{i+1, j'+1})| - 2
\le |C(C^*(e_{i, \Cdot}))|
\le |C(e^*_{i-1, j'-1})| + |C(e^*_{i+1, j'+1})| - 1.
\end{equation*}
This proves the inequality (\ref{eq7}) and the second inequality in (\ref{eq8}).

Also observe that any edge of $M$ conflicting with both $e^*_{i-1, j'-1}$ and $e^*_{i+1, j'+1}$ must conflict with $e^*_{i, j'}$ too.
We have $C(e^*_{i-1, j'-1}) \cap C(e^*_{i+1, j'+1}) \subseteq C(e^*_{i, j'})$.
Therefore,
\begin{equation*}
|C(C^*(e_{i, \Cdot}))| \ge |C(e^*_{i-1, j'-1})| + |C(e^*_{i+1, j'+1})| - |C(e^*_{i, j'})|,
\end{equation*}
proving the last inequality in (\ref{eq8}).
$|C(C^*(e_{i, \Cdot}))| \ge 3$ can be proven by a simple contradiction,
since otherwise the algorithm $\mcL\mcS$ would replace all the edges of $C(C^*(e_{i, \Cdot}))$ by the three edges of $C^*(e_{i, \Cdot})$ to expand $M$.
\end{proof}

\begin{lemma}
\label{lemma407}
Suppose $|C^*(e_{i, \Cdot})| = 3$ and $C^*(e_{i, \Cdot}) = \{e^*_{i-1, j'-1}$, $e^*_{i, j'}$, $e^*_{i+1, j'+1}\}$ for some $j' \ne j$,
and there is an edge $e^*_{i_3, j_3} \in C^*(e_{\Cdot, j})$ such that $|C(e^*_{i_3, j_3})| = 1$.
For any two edges $e^*_{i_1, j_1}, e^*_{i_2, j_2} \in C^*(e_{i, \Cdot})$, 
if $|C(e^*_{i_1, j_1})| = |C(e^*_{i_2, j_2})| = 2$, then the following two statements hold:
\begin{enumerate}
\item $e^*_{i_1, j_1}$ and $e^*_{i_2, j_2}$ are parallel, that is, either $i_2 = i_1 + 1, j_2 = j_1 + 1$ or $i_2 = i_1 - 1, j_2 = j_1 - 1$.
\item $C(e^*_{i_1, j_1}) \cap C(e^*_{i_2, j_2}) = \{e_{i, j}\}$.
\end{enumerate}
\end{lemma}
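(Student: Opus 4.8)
The plan is to prove both statements by contradiction, exhibiting in each case a forbidden expansion of $M$ that the operation {\sc Replace-5-by-6} (which in fact performs every {\sc Replace}-$p$-by-$(p+1)$ for $1 \le p \le 5$) would have executed, contradicting that $M$ is the output of $\mcL\mcS$. The common engine is the given edge $e^*_{i_3, j_3} \in C^*(e_{\Cdot, j})$ with $|C(e^*_{i_3, j_3})| = 1$: since $e^*_{i_3, j_3}$ conflicts with $e_{i,j}$ (it lies in $C^*(e_{i,j})$) and with only one edge of $M$, we must have $C(e^*_{i_3, j_3}) = \{e_{i, j}\}$. Moreover, because $C^*(e_{i, \Cdot})$ and $C^*(e_{\Cdot, j})$ partition $C^*(e_{i,j})$, the edge $e^*_{i_3, j_3}$ is disjoint from $C^*(e_{i, \Cdot})$; this is precisely what will guarantee that the swaps below add strictly more edges than they remove.

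For statement~1, I would first note that the only non-parallel pair inside the three consecutive parallel edges $C^*(e_{i, \Cdot}) = \{e^*_{i-1, j'-1}, e^*_{i, j'}, e^*_{i+1, j'+1}\}$ is $\{e^*_{i-1, j'-1}, e^*_{i+1, j'+1}\}$, so it suffices to rule out that these two outer edges both have conflict number $2$. Assuming they do, I invoke Lemma~\ref{lemma406}: inequality~(\ref{eq6}) gives $C(C^*(e_{i, \Cdot})) = C(e^*_{i-1, j'-1}) \cup C(e^*_{i+1, j'+1})$, while inequalities~(\ref{eq7}) and~(\ref{eq8}) pin its size to exactly $3$, forcing $C(C^*(e_{i, \Cdot})) = \{e_{i,j}, f_1, f_2\}$ with $e_{i,j}, f_1, f_2$ distinct. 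Then I swap these three edges out for the four edges $C^*(e_{i, \Cdot}) \cup \{e^*_{i_3, j_3}\}$: the four added edges are pairwise compatible (all belonging to $M^*$), and each is compatible with $M \setminus \{e_{i,j}, f_1, f_2\}$ because every edge of $M$ it conflicts with lies in $\{e_{i,j}, f_1, f_2\}$ (for the parallel edges by definition of $C(\cdot)$, and for $e^*_{i_3, j_3}$ because its unique conflict $e_{i,j}$ is removed). This {\sc Replace}-$3$-by-$4$ expands $M$, the desired contradiction.

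For statement~2, given a parallel pair $e^*_{i_1, j_1}, e^*_{i_1+1, j_1+1} \in C^*(e_{i, \Cdot})$ both with conflict number $2$ and both containing $e_{i,j}$, I would suppose toward a contradiction that their conflict sets coincide, say $C(e^*_{i_1, j_1}) = C(e^*_{i_1+1, j_1+1}) = \{e_{i,j}, f\}$. Swapping the two edges $\{e_{i,j}, f\}$ out for the three edges $\{e^*_{i_1, j_1}, e^*_{i_1+1, j_1+1}, e^*_{i_3, j_3}\}$ is then a valid {\sc Replace}-$2$-by-$3$: the three are pairwise compatible as members of $M^*$, and each is compatible with $M \setminus \{e_{i,j}, f\}$ since its conflicts within $M$ are contained in $\{e_{i,j}, f\}$. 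This expansion again contradicts the optimality of $M$, so the two conflict sets must differ; being two distinct $2$-element sets that share $e_{i,j}$, they meet exactly in $\{e_{i,j}\}$.

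The main obstacle I anticipate is not any single swap but the bookkeeping that makes each swap a genuine expansion rather than a size-preserving exchange: I must ensure that $e^*_{i_3, j_3}$ is not already one of the parallel edges being inserted and that the removed set has exactly the claimed cardinality. The first point is settled cleanly by the disjointness of the partition $C^*(e_{i, \Cdot})$ versus $C^*(e_{\Cdot, j})$, so that a genuine fourth (respectively third) edge really is added; the second relies entirely on the tight estimate $|C(C^*(e_{i, \Cdot}))| = 3$ extracted from Lemma~\ref{lemma406}. I also need $e^*_{i_3, j_3} \ne e^*_{i_1, j_1}, e^*_{i_2, j_2}$, but this is immediate from the mismatch of conflict numbers ($1$ versus $2$).
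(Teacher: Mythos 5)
Your proof is correct and follows essentially the same route as the paper's: both statements are established by exhibiting a forbidden {\sc Replace}-$3$-by-$4$ (resp.\ {\sc Replace}-$2$-by-$3$) expansion built from $C^*(e_{i,\Cdot})$ together with the unit-conflict edge $e^*_{i_3,j_3}$. The only cosmetic difference is that you pin $|C(C^*(e_{i,\Cdot}))|=3$ directly from inequalities~(\ref{eq7}) and~(\ref{eq8}) of Lemma~\ref{lemma406}, whereas the paper first invokes Lemma~\ref{lemma403} to force $|C(e^*_{i,j'})|=3$; both yield the same swap and the same contradiction.
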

\begin{proof}
Using $|C(e^*_{i_3, j_3})| = 1$, we know from Lemma \ref{lemma401} that $|C(e^*_{i-1, j'-1})| \ge 2$, $|C(e^*_{i, j'})| \ge 2$ and $|C(e^*_{i+1, j'+1})| \ge 2$.

To prove the first statement, we suppose to the contrary that $i_1 = i-1$ and $i_2 = i+1$,
and thus $|C(e^*_{i-1, j'-1})| = |C(e^*_{i+1, j'+1})| = 2$.
From the inequality (\ref{eq7}) of Lemma \ref{lemma406}, we have $|C(e^*_{i, j'})| \le |C(C^*(e_{i, \Cdot}))| \le 3$.
Since Lemma \ref{lemma403} has ruled out the possibility of $|C(e^*_{i, j'})| = 2$,
we have $|C(e^*_{i, j'})| = |C(C^*(e_{i, \Cdot}))| = 3$.
However, the algorithm $\mcL\mcS$ would replace the three edges of $C(C^*(e_{i, \Cdot}))$ by $e^*_{i_3, j_3}$ and the three edges of $C^*(e_{i, \Cdot})$
to expand $M$, a contradiction.

Based on the first statement, we assume without loss of generality that $|C(e^*_{i, j'})| = |C(e^*_{i-1, j'-1})| = 2$.
Note that $e_{i, j} \in C(e^*_{i, j'}) \cap C(e^*_{i-1, j'-1})$.
If $C(e^*_{i, j'}) = C(e^*_{i-1, j'-1})$, then the algorithm $\mcL\mcS$ would replace the two edges of $C(e^*_{i, j'})$ by the three edges
$e^*_{i, j'}$, $e^*_{i-1, j'-1}$, $e^*_{i_3, j_3}$ to expand $M$, a contradiction.
Therefore, $C(e^*_{i, j'}) \ne C(e^*_{i-1, j'-1})$, which implies the second statement $C(e^*_{i_1, j_1}) \cap C(e^*_{i_2, j_2}) = \{e_{i, j}\}$.
\end{proof}

Note that each value combination $\{\tau_1, \tau_2, \tau_3\}$ of $\tau(e_{i, j} \gets C^*(e_{i, \Cdot}))$ in Lemma~\ref{lemma405}
gives rise to $3! = 6$ different ordered value combinations.
Due to symmetry, we consider only three of them: $(\tau_2, \tau_1, \tau_3)$, $(\tau_1, \tau_2, \tau_3)$, and $(\tau_1, \tau_3, \tau_2)$,
in the following to determine whether or not they can be possible ordered value combinations for $\tau(e_{i, j} \gets C^*(e_{i, \Cdot}))$.

\begin{enumerate}
\item $\tau(e_{i, j} \gets C^*(e_{i, \Cdot})) = \big\{1, \frac 12, \frac 12\big\}$. \\
The case of $\big(1, \frac 12, \frac 12\big)$ can be ruled out by the inequalities (\ref{eq7}) and (\ref{eq8}) of Lemma \ref{lemma406}. \\
Then the only possible case left is $\big(\frac 12, 1, \frac 12\big)$.

\item $\tau(e_{i, j} \gets C^*(e_{i, \Cdot})) = \big\{1, \frac 12, \frac 13\big\}$. \\
The case of $\big(1, \frac 13, \frac 12\big)$ can immediately be ruled out by the inequality (\ref{eq7}) of Lemma \ref{lemma406}. \\
Then the two possible cases left are $\big(\frac 12, 1, \frac 13\big)$ and $\big(1, \frac 12, \frac 13\big)$.

\item $\tau(e_{i, j} \gets C^*(e_{i, \Cdot})) = \big\{1, \frac 12, \frac 14\big\}$. \\
Both cases of $\big(\frac 12, 1, \frac 14\big)$ and $\big(1, \frac 14, \frac 12\big)$ can immediately be ruled out by Lemma \ref{lemma402}. \\
Then the only possible case left is $\big(1, \frac 12, \frac 14\big)$.

\item $\tau(e_{i, j} \gets C^*(e_{i, \Cdot})) = \big\{1, \frac 13, \frac 13\big\}$. \\
Consider the case of $\tau(e_{i, j} \gets C^*(e_{i, \Cdot})) = \big(1, \frac 13, \frac 13\big)$.
In this case, we have $C(C^*(e_{i, \Cdot})) = C(e^*_{i, j'}) = C(e^*_{i+1, j'+1})$ with $|C(C^*(e_{i, \Cdot}))| = 3$,
indicating that one of the three edges in $C(C^*(e_{i, \Cdot}))$ must be a singleton edge of $M$ and
there is no edge in $M - C(C^*(e_{i, \Cdot}))$ parallel with any edge in $C(C^*(e_{i, \Cdot}))$.
However, the algorithm $\mcL\mcS$ would replace the three edges of $C(C^*(e_{i, \Cdot}))$ by the three parallel edges of $C^*(e_{i, \Cdot})$
to reduce the singleton edges in $M$, a contradiction. \\
Thus the only possible case left is $\big(\frac 13, 1, \frac 13\big)$.

\item $\tau(e_{i, j} \gets C^*(e_{i, \Cdot})) = \big\{\frac 12, \frac 12, \frac 13\big\}$. \\
The case of $\big(\frac 12, \frac 13, \frac 12\big)$ can immediately be ruled out by Lemma \ref{lemma407}. \\
Then the only possible case left is $\big(\frac 12, \frac 12, \frac 13\big)$.

\item $\tau (C^*(e_{i, \Cdot})) = \big\{\frac 12, \frac 12, \frac 14\big\}$. \\
The case of $\big(\frac 12, \frac 14, \frac 12\big)$ can immediately be ruled out by the inequality (\ref{eq7}) of Lemma \ref{lemma406}. \\
Then the only possible case left is $\big(\frac 12, \frac 12, \frac 14\big)$.

\item $\tau(e_{i, j} \gets C^*(e_{i, \Cdot})) = \big\{\frac 12, \frac 13, \frac 13\big\}$. \\
Both cases of $\big(\frac 13, \frac 12, \frac 13\big)$ and $\big(\frac 12, \frac 13, \frac 13\big)$ are possible.

\item $\tau(e_{i, j} \gets C^*(e_{i, \Cdot})) = \big\{\frac 12, \frac 13, \frac 14\big\}$. \\
All three cases of $\big(\frac 13, \frac 12, \frac 14\big)$, $\big(\frac 12, \frac 13, \frac 14\big)$, and $\big(\frac 12, \frac 14, \frac 13\big)$ are possible.

\item $\tau(e_{i, j} \gets C^*(e_{i, \Cdot})) = \big\{\frac 12, \frac 13, \frac 15\big\}$. \\
Both cases of $\big(\frac 13, \frac 12, \frac 15\big)$ and $\big(\frac 12, \frac 15, \frac 13\big)$ can immediately be ruled out by Lemma \ref{lemma402}. \\
Then the only possible case left is $\big(\frac 12, \frac 13, \frac 15\big)$.

\item $\tau (C^*(e_{i, \Cdot})) = \big\{\frac 12, \frac 14, \frac 14\big\}$. \\
Both cases of $\big(\frac 12, \frac 14, \frac 14\big)$ and $\big(\frac 14, \frac 12, \frac 14\big)$ are possible.

\item $\tau(e_{i, j} \gets C^*(e_{i, \Cdot})) = \big\{\frac 13, \frac 13, \frac 13\big\}$. \\
The only case $\big(\frac 13, \frac 13, \frac 13\big)$ is possible.

\item $\tau(e_{i, j} \gets C^*(e_{i, \Cdot})) = \big\{\frac 12, \frac 12, 0\big\}$. \\
Both cases of $\big(\frac 12, \frac 12, 0\big)$ and $\big(\frac 12, 0, \frac 12\big)$ are possible.
\end{enumerate}

We summarize the above discussion in the following lemma:

\begin{lemma}
\label{lemma408}
For an edge $e_{i,j} \in M$ with $\omega(e_{i, j}) \ge 3$, there are $18$ possible ordered value combinations of $\tau(e_{i, j} \gets C^*(e_{i, \Cdot}))$,
which are
$\big(\frac 12, 1, \frac 12\big)$, 
$\big(\frac 12, 1, \frac 13\big)$, $\big(1, \frac 12, \frac 13\big)$, 
$\big(1, \frac 12, \frac 14\big)$,
$\big(\frac 13, 1, \frac 13\big)$,
$\big(\frac 12, \frac 12, \frac 13\big)$, 
$\big(\frac 12, \frac 12, \frac 14\big)$, 
$\big(\frac 13, \frac 12, \frac 13\big)$, $\big(\frac 12, \frac 13, \frac 13\big)$, 
$\big(\frac 13, \frac 12, \frac 14\big)$, $\big(\frac 12, \frac 13, \frac 14\big)$, $\big(\frac 12, \frac 14, \frac 13\big)$, 
$\big(\frac 12, \frac 13, \frac 15\big)$,
$\big(\frac 12, \frac 14, \frac 14\big)$, $\big(\frac 14, \frac 12, \frac 14\big)$,
$\big(\frac 13, \frac 13, \frac 13\big)$,
$\big(\frac 12, \frac 12, 0\big)$, and
$\big(\frac 12, 0, \frac 12\big)$.
\end{lemma}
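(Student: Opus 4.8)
The plan is to read off the ordered combinations directly from the twelve unordered combinations of Lemma~\ref{lemma405}, exploiting the reflection symmetry of the triple $C^*(e_{i, \Cdot}) = \{e^*_{i-1, j'-1}, e^*_{i, j'}, e^*_{i+1, j'+1}\}$. Its three positions correspond to the edges of $M^*$ incident at $d^A_{i-1}$, $d^A_i$, $d^A_{i+1}$, and the index reflection $i-1 \leftrightarrow i+1$, which fixes $d^A_i$, swaps the first and third positions while preserving every conflict relation. Hence the $3! = 6$ orderings of any value multiset collapse into at most three representatives — those holding the largest, the middle, or the smallest value in the center slot — so it suffices to test these three per combination (and only two when a value repeats). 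Translating values back into conflict counts via $\tau(e_{i,j}\gets e^*) = 1/|C(e^*)|$, each candidate becomes an assignment of cardinalities $|C(e^*_{i-1,j'-1})|$, $|C(e^*_{i,j'})|$, $|C(e^*_{i+1,j'+1})|$, which I would test against Lemmas~\ref{lemma402}, \ref{lemma403}, \ref{lemma406}, and~\ref{lemma407}.

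The structural hinge is a dichotomy on whether the triple contains the value $1$: combinations 1--4 of Lemma~\ref{lemma405} do, combinations 5--12 do not. Since every full combination in Lemma~\ref{lemma404} contains exactly one entry equal to $1$ (as forced by Lemma~\ref{lemma401}), whenever the triple omits that entry the unique edge with unit conflict count must lie in $C^*(e_{\Cdot, j})$ instead. That is precisely the hypothesis of Lemma~\ref{lemma407}, so in cases 5--12 I automatically obtain its parallelism and disjointness conclusions, whereas in cases 1--4 I fall back on Lemmas~\ref{lemma402}, \ref{lemma403}, and~\ref{lemma406}. Settling this split at the outset is what makes the per-case checks routine.

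The rule-out mechanisms come in three flavors. First, an ordering placing small conflict counts at both end positions while the center is large is killed by combining the upper bound (\ref{eq7}) with the lower bound $|C(C^*(e_{i,\Cdot}))| \ge 3$ from (\ref{eq8}), or with the containment $C(e^*_{i,j'}) \subseteq C(C^*(e_{i,\Cdot}))$ of (\ref{eq6}); this disposes of $(1,\frac12,\frac12)$, $(1,\frac13,\frac12)$, $(\frac12,\frac14,\frac12)$. Second, an ordering placing two positionally adjacent (hence parallel) edges whose conflict counts differ by more than $2$ is eliminated by Lemma~\ref{lemma402}, removing the stray orderings inside $\{1,\frac12,\frac14\}$ and $\{\frac12,\frac13,\frac15\}$. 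Third, the subtler orderings are excluded by the local optimality of $\mcL\mcS$ itself: the center-$\frac13$ ordering of case~4 forces $C(C^*(e_{i,\Cdot}))$ to be three edges carrying a reducible singleton configuration, so the {\sc Reduce-5-by-5} step would have fired (mirroring the proof of Lemma~\ref{lemma403}), while the center-$\frac13$ ordering of case~5 is excluded by Lemma~\ref{lemma407}, since its two unit-difference edges sit in the non-parallel end positions.

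The main obstacle I anticipate is not the arithmetic but keeping the case analysis simultaneously \emph{exhaustive} and \emph{non-redundant}, especially for cases 4, 5 and the larger survivor sets 7, 8, 10. For the survivors I must confirm that no further inequality or swap applies — that the relevant {\sc Replace-5-by-6} and {\sc Reduce-5-by-5} moves are genuinely unavailable — so that none is wrongly excluded; and for the repeated-value combinations I must correctly account for the symmetry-induced collapse from three to two representatives, so the tally is neither over- nor under-counted. Once each of the twelve combinations is resolved, summing the surviving representatives, namely $1+2+1+1+1+1+2+3+1+2+1+2$, yields exactly $18$, which is the claim.
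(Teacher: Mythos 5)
Your proposal is correct and follows essentially the same route as the paper: reduce each unordered combination from Lemma~\ref{lemma405} to at most three ordered representatives via the $i-1 \leftrightarrow i+1$ reflection, then eliminate candidates using the inequalities of Lemma~\ref{lemma406}, the parallel-edge bound of Lemma~\ref{lemma402}, Lemma~\ref{lemma407} (whose hypothesis you correctly obtain from the dichotomy on where the unique value $1$ sits), and the local-optimality/{\sc Reduce-5-by-5} contradiction for the $\big(1,\frac13,\frac13\big)$ case. Your per-combination survivor counts and the final tally of $18$ match the paper's case analysis exactly.
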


\subsection{Edge combinations of $C(C^*(e_{i, j}))$ with $\omega(e_{i, j}) \ge 3$}
\label{sec4.3}
We examine all possible combinations of the edges in $C(C^*(e_{i, \Cdot}))$ with $\omega(e_{i, j}) \ge 3$.
We distinguish two cases where $e_{i, j} \in p(M)$ and $e_{i, j} \in s(M)$, respectively.
In fact, as shown in Section~\ref{sec4.3.1}, the edge $e_{i,j}$ cannot be a parallel edge in $M$.

\subsubsection{$e_{i, j}$ cannot be a parallel edge of $M$}
\label{sec4.3.1}
Recall that the number of singleton edges of the maximal compatible matching $M$ cannot be further reduced
by the algorithm $\mcL\mcS$ using the operation {\sc Reduce-5-By-5}.

We assume to the contrary that $e_{i, j} \in p(M)$, and assume that $e_{i+1, j+1} \in p(M)$ too.

From $|C^*(e_{i, j})| \ge 5$ in Lemma~\ref{lemma404},
we consider $|C^*(e_{i, \Cdot})| = 3$ and suppose that $C^*(e_{i, \Cdot}) = \{e^*_{i-1, j'-1}, e^*_{i, j'}, e^*_{i+1, j'+1}\}$ for some $j' \ne j$.

Clearly, $|C(e^*_{i, j'})| \ge 2$ and $|C(e^*_{i+1, j'+1})| \ge 2$ since both contain the edges $e_{i,j}$ and $e_{i+1,j+1}$.
It follows that the middle value in the ordered value combination of $\tau(e_{i, j} \gets C^*(e_{i, \Cdot}))$ must be $\le \frac 12$.
This rules out three of the $18$ possible ordered value combinations stated in Lemma \ref{lemma408}, each having a $1$ in the middle, which are
$\big(\frac 12, 1, \frac 12\big)$, $\big(\frac 12, 1, \frac 13\big)$, $\big(\frac 13, 1, \frac 13\big)$.
Furthermore, since $\big(\frac 12, 1, \frac 12\big)$ is the only one resulted from the (unordered) value combination $\big\{1, \frac 12, \frac 12\big\}$,
we conclude that it is impossible to have $\tau(e_{i, j} \gets C^*(e_{i, \Cdot})) = \big\{1, \frac 12, \frac 12\big\}$.
For the same reason, it is impossible to have $\tau(e_{i, j} \gets C^*(e_{i, \Cdot})) = \big\{1, \frac 13, \frac 13\big\}$.

When $|C^*(e_{\Cdot, j})| = 3$, the argument in the last paragraph applies to $C^*(e_{\Cdot, j})$ too.

Consider the $8$ possible value combinations of $\tau(e_{i, j} \gets C^*(e_{i, j}))$ such that $\omega(e_{i, j}) \ge 3$, in Lemma~\ref{lemma404}.
We observe that $\tau(e_{i, j} \gets C^*(e_{i, \Cdot})) \in$  
$\big\{\big\{\frac 12, \frac 13, \frac 14\big\}$,
$\big\{\frac 12, \frac 13, \frac 15\big\}$,
$\big\{\frac 12, \frac 14, \frac 14\big\}$,
$\big\{\frac 13, \frac 13, \frac 13\big\}$,
$\big\{\frac 12, \frac 12, 0\big\}\big\}$
only if $\tau(e_{i, j} \gets C^*(e_{\Cdot, j})) = \big\{1, \frac 12, \frac 12\big\}$, which is impossible to happen.
We thus conclude that only $5$ out of the $12$ value combinations of $\tau(e_{i, j} \gets C^*(e_{i, \Cdot}))$ in Lemma~\ref{lemma405} remain possible,
which are
$\big\{1, \frac 12, \frac 13\big\}$,
$\big\{1, \frac 12, \frac 14\big\}$,
$\big\{\frac 12, \frac 12, \frac 13\big\}$,
$\big\{\frac 12, \frac 12, \frac 14\big\}$, and
$\big\{\frac 12, \frac 13, \frac 13\big\}$.
These give $6$ possible ordered value combinations of $\tau(e_{i, j} \gets C^*(e_{i, \Cdot}))$,
which are,
$\big(1, \frac 12, \frac 13\big)$,
$\big(1, \frac 12, \frac 14\big)$,
$\big(\frac 12, \frac 12, \frac 13\big)$,
$\big(\frac 12, \frac 12, \frac 14\big)$,
$\big(\frac 13, \frac 12, \frac 13\big)$, and
$\big(\frac 12, \frac 13, \frac 13\big)$.

In the rest of this section, we have $|C^*(e_{i, j})| = 6$,
$C^*(e_{i, \Cdot}) = \{e^*_{i-1, j'-1}, e^*_{i, j'}, e^*_{i+1, j'+1}\}$ for some $j' \ne j$,
and $C^*(e_{\Cdot, j}) = \{e^*_{i'-1, j-1}, e^*_{i', j}, e^*_{i'+1, j+1}\}$ for some $i' \ne i$.

\begin{lemma}
\label{lemma409}
For the pair of parallel edges $e_{i, j}, e_{i+1, j+1} \in p(M)$,
$C^*(e_{i, j}) \cap C^*(e_{i+1, j+1}) = \{e^*_{i, j'}, e^*_{i+1, j'+1}, e^*_{i', j}, e^*_{i'+1, j+1}\}$.
If $|C(e^*_{i-1, j'-1})| = 1$, then there is at most one edge $e^*_{i_1, j_1} \in C^*(e_{i, j}) \cap C^*(e_{i+1, j+1})$ such that $|C(e^*_{i_1, j_1})| = 2$.
\end{lemma}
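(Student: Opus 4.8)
The plan is to prove the two assertions separately: the set identity by a double inclusion, and the counting bound by exhibiting an improving swap that $\mcL\mcS$ would necessarily have performed, thereby contradicting that $M$ is its output.

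For the identity I would first establish $\supseteq$. The four listed edges are exactly the edges of $M^*$ incident at the four vertices $d^A_i, d^A_{i+1}, d^B_j, d^B_{j+1}$ covered by the parallel pair. Each of them shares an endpoint with one of $e_{i,j}, e_{i+1,j+1}$ (hence conflicts with it) and is adjacent or neighbouring to the other; invoking the conflict rules of Section~\ref{sec2} and using $j' \ne j$ and $i' \ne i$ to preclude parallelism, a direct check shows each conflicts with both $e_{i,j}$ and $e_{i+1,j+1}$, so all four lie in $C^*(e_{i,j}) \cap C^*(e_{i+1,j+1})$. For $\subseteq$, the intersection is contained in $C^*(e_{i,j})$, whose only members besides the four are $e^*_{i-1,j'-1}$ (at $d^A_{i-1}$) and $e^*_{i'-1,j-1}$ (at $d^B_{j-1}$), the two neighbour vertices of $e_{i,j}$ that $e_{i+1,j+1}$ does not see. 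Since $d^A_{i-1}$ is two indices from $d^A_{i+1}$, a conflict of $e^*_{i-1,j'-1}$ with $e_{i+1,j+1}$ can arise only through its $d^B_{j'-1}$ endpoint, forcing $j'-1 \in \{j, j+1, j+2\}$; the first two cases place $e^*_{i-1,j'-1}$ at $d^B_j$ or $d^B_{j+1}$, where it coincides with $e^*_{i',j}$ or $e^*_{i'+1,j+1}$ and is already listed (the remaining case $j'-1 = j+2$ is treated below). The symmetric analysis handles $e^*_{i'-1,j-1}$.

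For the counting bound I would argue by contradiction via local optimality. Under the hypothesis $|C(e^*_{i-1,j'-1})| = 1$, Lemma~\ref{lemma401} forces every other edge of $C^*(e_{i,j})$, in particular each of the four intersection edges, to satisfy $|C(\cdot)| \ge 2$; and since each of the four conflicts with both $e_{i,j}$ and $e_{i+1,j+1}$, any of them with $|C(\cdot)| = 2$ conflicts with \emph{exactly} $\{e_{i,j}, e_{i+1,j+1}\}$. Suppose two of them, say $g_1$ and $g_2$, had $|C(g_1)| = |C(g_2)| = 2$. Then $e^*_{i-1,j'-1}, g_1, g_2 \in M^*$ are pairwise compatible, while their combined conflict set in $M$ is $\{e_{i,j}\} \cup \{e_{i,j}, e_{i+1,j+1}\} \cup \{e_{i,j}, e_{i+1,j+1}\} = \{e_{i,j}, e_{i+1,j+1}\}$. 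Hence replacing the two edges $e_{i,j}, e_{i+1,j+1}$ by the three compatible edges $e^*_{i-1,j'-1}, g_1, g_2$ produces a compatible matching of size $\SOL + 1$, an instance of {\sc Replace-5-by-6} (with $p = 2$) that $\mcL\mcS$ would have carried out, contradicting that $M$ is its output. Thus at most one of the four has $|C(\cdot)| = 2$.

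I expect the delicate point to be the degenerate placement $j' = j+3$ (and, symmetrically, $i' = i+3$) in the $\subseteq$ direction of the identity: there $e^*_{i-1,j'-1}$ sits at $d^B_{j+2}$ and genuinely conflicts with $e_{i+1,j+1}$ without coinciding with any listed edge, so this case must be excluded by a separate local-search argument on the three consecutive parallel edges of $C^*(e_{i,\Cdot})$, in the spirit of Lemmas~\ref{lemma403} and~\ref{lemma406}. Reassuringly, the second part is unaffected: its hypothesis $|C(e^*_{i-1,j'-1})| = 1$ rules this placement out automatically, because $j' = j+3$ would force $e^*_{i-1,j'-1}$ to conflict with both $e_{i,j}$ and $e_{i+1,j+1}$ and hence to have $|C(\cdot)| \ge 2$.
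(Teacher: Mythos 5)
Your proof is correct and, for the substantive second half, is exactly the paper's argument: the paper likewise notes $C(e^*_{i-1,j'-1}) = \{e_{i,j}\}$ and derives the same contradiction by swapping the two edges $e_{i,j}, e_{i+1,j+1}$ out for the three pairwise-compatible edges $e^*_{i-1,j'-1}, g_1, g_2$ whenever two intersection edges have $|C(\cdot)| = 2$. The paper dismisses the set identity as trivial, so your more careful double inclusion --- including flagging the degenerate placement $j' = j+3$, which as you correctly observe cannot occur under the hypothesis $|C(e^*_{i-1,j'-1})| = 1$ of the second half --- goes beyond what the paper records but does not diverge from its approach.
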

\begin{proof}
The first half of the lemma is trivial.
For the second half, we note that $C(e^*_{i-1, j'-1}) = \{e_{i,j}\}$;
if there is another edge $e^*_{i_2, j_2} \in C^*(e_{i, j}) \cap C^*(e_{i+1, j+1})$ such that $|C(e^*_{i_2, j_2})| = 2$, 
that is, $C(e^*_{i_1, j_1}) = C(e^*_{i_2, j_2}) = \{e_{i, j}, e_{i+1, j+1}\}$,
then the algorithm $\mcL\mcS$ would replace the two edges $e_{i, j}$ and $e_{i+1, j+1}$ by the three edges $e^*_{i-1, j'-1}$, $e^*_{i_1, j_1}$, $e^*_{i_2, j_2}$
to expand $M$, a contradiction.
\end{proof}

Lemma \ref{lemma409} states that when $e_{i,j}$ is a parallel edge of $M$,
the value combination of $\tau(e_{i, j} \gets C^*(e_{i, j}))$ contains at most two $\frac 12$'s, besides the value $1$.
Among the $8$ possible value combinations of $\tau(e_{i, j} \gets C^*(e_{i, j}))$ such that $\omega(e_{i, j}) \ge 3$, in Lemma~\ref{lemma404},
the only one with two $\frac 12$'s is $\big\{1, \frac 12, \frac 12, \frac 13, \frac 13, \frac 13\big\}$.
This leaves only two possible ordered value combinations of $\tau(e_{i, j} \gets C^*(e_{i, \Cdot}))$,
which are,
$\big(1, \frac 12, \frac 13\big)$ and
$\big(\frac 12, \frac 13, \frac 13\big)$.

Assume $\tau(e_{i, j} \gets C^*(e_{i, \Cdot})) = \big(1, \frac 12, \frac 13\big)$ and
$\tau(e_{i, j} \gets C^*(e_{\Cdot, j})) = \big(\frac 12, \frac 13, \frac 13\big)$ (or the other way around).
By the inequalities (\ref{eq7}) and (\ref{eq8}) in Lemma \ref{lemma406}, 
we have $|C(C^*(e_{i, \Cdot}))| = 3$ and $3 \le |C(C^*(e_{\Cdot, j}))| \le 4$.
Since $\{e_{i, j}, e_{i+1, j+1}\} \subseteq C(C^*(e_{i, \Cdot})) \cap C(C^*(e_{\Cdot, j}))$,
we have $4 \le |C(C^*(e_{i, j}))| \le 5$.
Thus, the algorithm $\mcL\mcS$ would replace all the edges of $C(C^*(e_{i, j}))$ by the six edges of $C^*(e_{i, j})$ to expand $M$.
This contradiction leaves no ordered value combination of $\tau(e_{i, j} \gets C^*(e_{i, \Cdot}))$.
We thus have proved the following lemma:

\begin{lemma}
\label{lemma410}
When the edge $e_{i, j}$ is a parallel edge of $M$, there is no value combination of $\tau(e_{i, j} \gets C^*(e_{i, j}))$ such that $\omega(e_{i, j}) \ge 3$.
\end{lemma}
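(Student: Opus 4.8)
The plan is to derive a contradiction from assuming $e_{i,j} \in p(M)$ with parallel mate $e_{i+1,j+1} \in p(M)$ and $\omega(e_{i,j}) \ge 3$. First I would use the parallel mate to shrink the admissible token profiles. By Lemma~\ref{lemma404} we have $|C^*(e_{i,j})| \ge 5$, and whenever $|C^*(e_{i,\Cdot})| = 3$ Lemma~\ref{lemma403} makes it three consecutive parallel edges $\{e^*_{i-1,j'-1}, e^*_{i,j'}, e^*_{i+1,j'+1}\}$. The crucial point is that the middle edge $e^*_{i,j'}$ is conflicting with both $e_{i,j}$ and $e_{i+1,j+1}$, so $|C(e^*_{i,j'})| \ge 2$ and the middle entry of the ordered profile $\tau(e_{i,j} \gets C^*(e_{i,\Cdot}))$ is at most $\frac 12$. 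This kills every ordered combination of Lemma~\ref{lemma408} carrying a $1$ in the middle, hence rules out the value combinations $\big\{1, \frac 12, \frac 12\big\}$ and $\big\{1, \frac 13, \frac 13\big\}$ entirely; running the same argument on $C^*(e_{\Cdot,j})$ and cross-checking against the eight profiles of Lemma~\ref{lemma404} leaves the six ordered combinations $\big(1, \frac 12, \frac 13\big)$, $\big(1, \frac 12, \frac 14\big)$, $\big(\frac 12, \frac 12, \frac 13\big)$, $\big(\frac 12, \frac 12, \frac 14\big)$, $\big(\frac 13, \frac 12, \frac 13\big)$, $\big(\frac 12, \frac 13, \frac 13\big)$.

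Next I would invoke Lemma~\ref{lemma409} to collapse this list to a single split. The only edge of $C^*(e_{i,\Cdot})$ that can conflict with $e_{i,j}$ alone is $e^*_{i-1,j'-1}$, so a value $1$ in the profile forces $|C(e^*_{i-1,j'-1})| = 1$; Lemma~\ref{lemma409} then allows at most one further edge of $C^*(e_{i,j}) \cap C^*(e_{i+1,j+1})$ to have $|C(\cdot)| = 2$, i.e. at most two $\frac 12$'s in the whole profile $\tau(e_{i,j} \gets C^*(e_{i,j}))$. Among the eight profiles of Lemma~\ref{lemma404} the unique survivor is $\big\{1, \frac 12, \frac 12, \frac 13, \frac 13, \frac 13\big\}$, and the only way to split it so that both halves stay inside the five admissible value combinations is $\big(1, \frac 12, \frac 13\big)$ on one of $C^*(e_{i,\Cdot}), C^*(e_{\Cdot,j})$ and $\big(\frac 12, \frac 13, \frac 13\big)$ on the other.

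The finish is a counting contradiction via Lemma~\ref{lemma406}. For the $\big(1, \frac 12, \frac 13\big)$ side, inequalities~(\ref{eq7}) and~(\ref{eq8}) give $|C(C^*(e_{i,\Cdot}))| = 3$; for the $\big(\frac 12, \frac 13, \frac 13\big)$ side they give $3 \le |C(C^*(e_{\Cdot,j}))| \le 4$. Both $C(C^*(e_{i,\Cdot}))$ and $C(C^*(e_{\Cdot,j}))$ contain the two parallel edges $e_{i,j}$ and $e_{i+1,j+1}$, so their union obeys $|C(C^*(e_{i,j}))| \le 3 + 4 - 2 = 5 < 6 = |C^*(e_{i,j})|$. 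Since every edge of $C^*(e_{i,j})$ is compatible with $M$ once the at-most-five edges of $C(C^*(e_{i,j}))$ are removed, {\sc Replace-5-by-6} could swap those out for the six edges of $C^*(e_{i,j})$ and enlarge $M$, contradicting that $M$ is the output of $\mcL\mcS$. I expect the main obstacle to be the narrowing bookkeeping in the middle step: one must attach each permitted value to the specific edge it grades, so that Lemma~\ref{lemma409}'s ``at most one extra $\frac 12$'' is applied to the correct edge, and then verify that none of the other splits of $\big\{1, \frac 12, \frac 12, \frac 13, \frac 13, \frac 13\big\}$ keeps both halves admissible; once the profile is isolated the final size inequality is immediate.
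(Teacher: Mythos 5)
Your proposal is correct and follows essentially the same route as the paper's own proof: forcing the middle entries of both ordered triples down to $\le \frac 12$ via the parallel mate $e_{i+1,j+1}$, invoking Lemma~\ref{lemma409} to cap the number of $\frac 12$'s at two so that only $\big\{1,\frac 12,\frac 12,\frac 13,\frac 13,\frac 13\big\}$ with the split $\big(1,\frac 12,\frac 13\big)$/$\big(\frac 12,\frac 13,\frac 13\big)$ survives, and then contradicting local optimality with the count $|C(C^*(e_{i,j}))| \le 5 < 6 = |C^*(e_{i,j})|$ from Lemma~\ref{lemma406}. No gaps worth noting.
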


\subsubsection{$e_{i, j}$ is a singleton edge of $M$}
\label{sec4.3.2}
With $e_{i, j} \in s(M)$, we discuss each of the $18$ possible ordered value combinations of $\tau(e_{i, j} \gets C^*(e_{i, \Cdot}))$ listed in Lemma \ref{lemma408}.

Consider an edge $e_{h, \ell} \in C(C^*(e_{i, \Cdot}))$, $e_{h, \ell} \ne e_{i,j}$.
Note that $e_{h, \ell}$ might be parallel with an edge in $M - C(C^*(e_{i, \Cdot}))$.
We define $\mcN_p(e_{h, \ell})$ to be the subset of the maximal consecutive parallel (to $e_{h,\ell}$) edges in $M - C(C^*(e_{i, \Cdot}))$.
Therefore, $\mcN_p(e_{h, \ell})$ will be either $\{ e_{h+1, \ell+1}, \ldots, e_{h+q, \ell+q} \}$ or $\{ e_{h-1, \ell-1}, \ldots, e_{h-q, \ell-q} \}$,
for some $q \ge 0$ (when $q = 0$, this set is empty).
Let
\[
\mcN_p(C(C^*(e_{i, \Cdot}))) = \bigcup_{e_{h, \ell} \in C(C^*(e_{i, \Cdot}))} \mcN_p(e_{h, \ell}),
\]
and
\[
\mcN_p[C(C^*(e_{i, \Cdot}))] = \mcN_p(C(C^*(e_{i, \Cdot}))) \cup C(C^*(e_{i, \Cdot})).
\]
Recall that, in general, the subscript of a vertex of $D^A$ has an $i$ or $h$, and the subscript of a vertex of $D^B$ has a $j$ or $\ell$.
In the sequel, for simplicity, we use $e_h$ and $e_\ell$ ($e^*_h$ and $e^*_\ell$, respectively) to denote the edges of $M$ ($M^*$, respectively)
incident at the vertices $d^A_i$ and $d^B_j$, respectively, if they exist, or otherwise they are void edges.

We next discuss all possible configurations of the edges of $C^*(e_{i, \Cdot})$ and $C(C^*(e_{i, \Cdot}))$ in figures,
associated with each of the $18$ ordered value combinations of $\tau(e_{i, j} \gets C^*(e_{i, \Cdot}))$ listed in Lemma \ref{lemma408}.
We adopt the following scheme for graphically presenting a configuration:
In each figure (for example, Fig.~\ref{fig401}),
the edge $e_{i, j}$ is in the bold solid line;
the edges in vertical bold dashed lines are in $C^*(e_{i, \Cdot})$ (for example, $e^*_{i, j'}$);
the edges in thin solid lines are edges in $C(C^*(e_{i, \Cdot}))$ (for example, $e_{i+2}$);
and the edges in thin dashed lines are edges in $\mcN_p(C(C^*(e_{i, \Cdot}))$ (for example, $e_{i+3}$);
the vertices in filled circles are surely not incident with any edge of $M$ (for example, $i-2$);
the vertices in hollow circles have uncertain incidence in $M$ (for example, $j'-2$).

We remind the readers that if there is no entry $1$ in a value combination of $\tau(e_{i, j} \gets C^*(e_{i, \Cdot}))$,
then there must be an entry $1$ in the corresponding value combination of $\tau(e_{i, j} \gets C^*(e_{\Cdot, j}))$,
that is, there is an edge $e^*_{i_1, j_1} \in C^*(e_{\Cdot, j}))$ such that $|C(e^*_{i_1, j_1})| = 1$.

\begin{enumerate}
\item $\tau(e_{i, j} \gets C^*(e_{i, \Cdot})) = \big(\frac 12, 1, \frac 12\big)$:
According to the inequalities (\ref{eq7}) and (\ref{eq8}) of Lemma \ref{lemma406}, we have $|C(C^*(e_{i, \Cdot}))| = 3$.
There is exactly one edge of $M$ incident at either of $i+2$ and $j'+2$ but not both.
We assume $e_{i+2} \in M$.
If $e_{i+2}$ is a singleton edge of $M$ or $|\mcN_p(e_{i+2})| \ge 2$,
then the algorithm $\mcL\mcS$ would replace $e_{i, j}$ and $e_{i+2}$ by the two parallel edges $e^*_{i, j'}$ and $e^*_{i+1, j'+1}$ to reduce the singleton edges,
a contradiction.
Therefore, we have $e_{i+3} \in M$ but no edge of $M$ is incident at $i+4$.
The incidence  at $i-2$ and $j'-2$ and further to the left can be symmetrically discussed.
In this sense, there is only one possible edge combination of $C(C^*(e_{i, \Cdot}))$, as shown in Fig.~\ref{fig401} with $e_{i+2}, e_{j'-2} \in M$,
where the corresponding configuration of $\mcN_p[C(C^*(e_{i, \Cdot}))]$ is also shown.

\begin{figure}[H]
\centering
\includegraphics[width=0.5\linewidth]{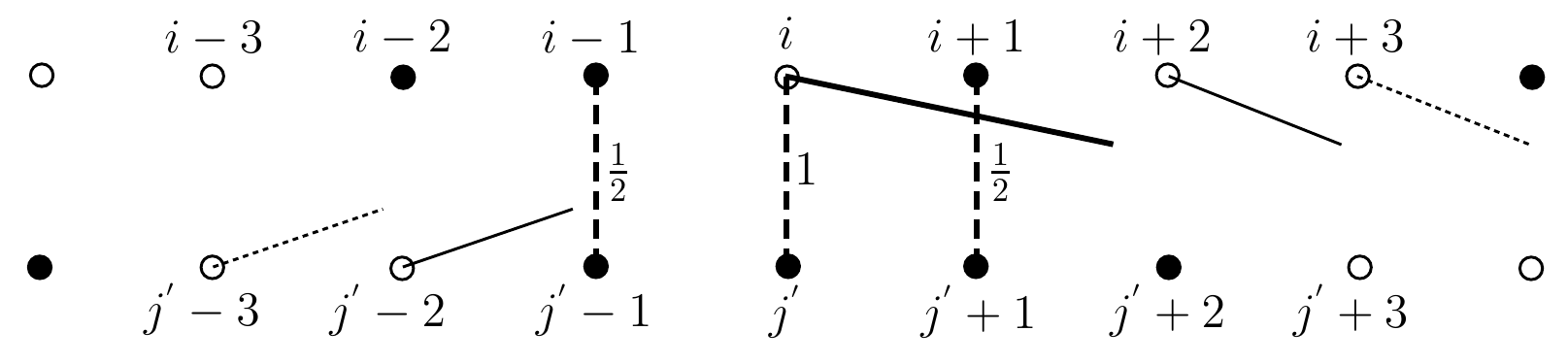} 
\caption{The only possible configuration of $\mcN_p[C(C^*(e_{i, \Cdot}))]$ when $\tau(e_{i, j} \gets C^*(e_{i, \Cdot})) = \big(\frac 12, 1, \frac 12\big)$.
We have $|C(C^*(e_{i, \Cdot}))| = 3$, and $|\mcN_p(e_{i+2})| = |\mcN_p(e_{j'-2})| = 1$ in this configuration.
It also represents the other three symmetric configurations where $|\mcN_p(e_{i+2})| = |\mcN_p(e_{i-2})| = 1$, $|\mcN_p(e_{j'+2})| = |\mcN_p(e_{j'-2})| = 1$,
and $|\mcN_p(e_{j'+2})| = |\mcN_p(e_{i-2})| = 1$, respectively.
(Recall that
the edge $e_{i, j}$ is in bold solid line, 
the edges in vertical bold dashed lines are in $C^*(e_{i, \Cdot})$,
the edges in thin solid lines are in $C(C^*(e_{i, \Cdot}))$, and
the edges in thin dashed lines are in $\mcN_p(C(C^*(e_{i, \Cdot}))$;
the vertices in filled circles are surely incident with no edges of $M$ and
the vertices in hollow circles have uncertain incidence in $M$.)\label{fig401}}
\end{figure}

\item $\tau(e_{i, j} \gets C^*(e_{i, \Cdot})) = \big(\frac 12, 1, \frac 13\big)$:
We have $C(e^*_{i, j'}) \subset C(e^*_{i-1, j'-1})$, and thus $|C(e^*_{i, j'}) \cup C(e^*_{i-1, j'-1})| = 2$ and $|C(C^*(e_{i, \Cdot}))| = 4$.
There is exactly one edge of $M$ incident at either of $i-2$ and $j'-2$ but not both.
We assume $e_{j'-2} \in M$.
If $e_{j'-2}$ is a singleton edge of $M$ or $|\mcN_p(e_{j'-2})| \ge 2$,
then the algorithm $\mcL\mcS$ would replace $e_{i, j}$ and $e_{j'-2}$ by the two parallel edges $e^*_{i, j'}$ and $e^*_{i-1, j'-1}$ to reduce the singleton edges,
a contradiction.
Therefore, we have $e_{j'-3} \in M$ but no edge of $M$ is incident at $j'-4$.
In this sense, there is only one possible edge combination of $C(C^*(e_{i, \Cdot}))$, as shown in Fig.~\ref{fig402} with $e_{j'-2} \in M$,
where the corresponding configuration of $\mcN_p[C(C^*(e_{i, \Cdot}))]$ is also shown.

\begin{figure}[H]
\centering
\includegraphics[width=0.5\linewidth]{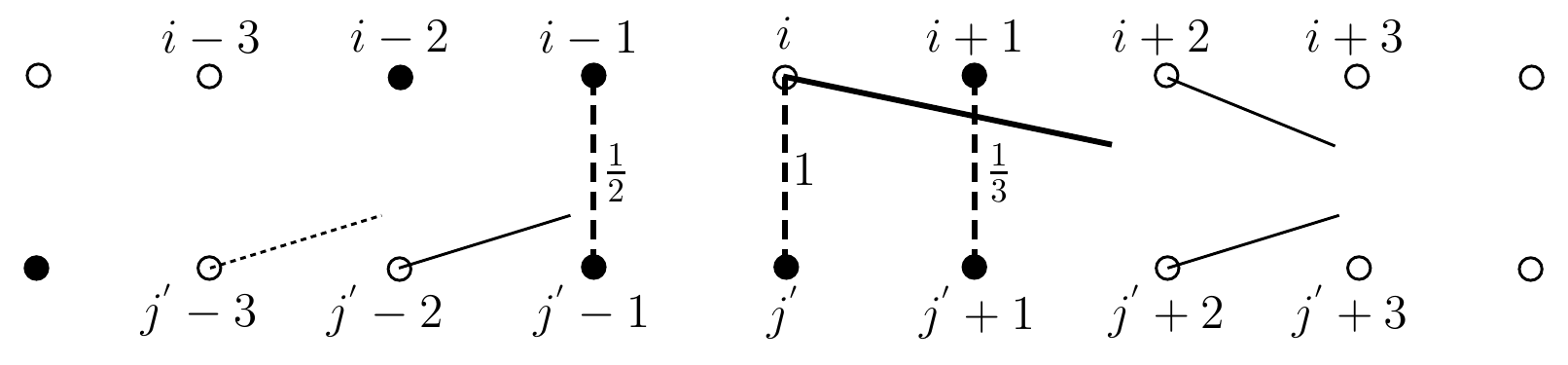} 
\caption{The only possible configuration of $\mcN_p[C(C^*(e_{i, \Cdot}))]$ when $\tau(e_{i, j} \gets C^*(e_{i, \Cdot})) = \big(\frac 12, 1, \frac 13\big)$.
We have $|C(C^*(e_{i, \Cdot}))| = 4$ and $|\mcN_p(e_{j'-2})| = 1$ in this configuration.
It also represents the symmetric configuration where $|\mcN_p(e_{i-2})| = 1$.\label{fig402}}
\end{figure}

\item $\tau(e_{i, j} \gets C^*(e_{i, \Cdot})) = \big(1, \frac 12, \frac 13\big)$:
According to the inequalities (\ref{eq7}) and (\ref{eq8}) of Lemma \ref{lemma406}, we have $|C(C^*(e_{i, \Cdot}))| = 3$.
Since $e_{i,j}$ is a singleton edge of $M$, $e_{j'+1} \in M$;
and either $e_{i+2} \in M$ or $e_{j'+2} \in M$ but no both.
If $e_{i+2} \in M$, then $e_{j'+1}$ is a singleton edge of $M$,
and thus the algorithm $\mcL\mcS$ would replace $e_{i, j}$ and $e_{j'+1}$ by the two parallel edges $e^*_{i, j'}$ and $e^*_{i-1, j'-1}$ to reduce the singleton edges,
a contradiction.
Therefore, $e_{j'+2} \in M$.
Similarly, if $\mcN_p(e_{j'+2}) = \emptyset$ or $|\mcN_p(e_{j'+2})| \ge 2$,
then the algorithm $\mcL\mcS$ would replace the three edges of $C(C^*(e_{i, \Cdot}))$ by the three parallel edges of $C^*(e_{i, \Cdot})$ to reduce the singleton edges,
a contradiction.
This leaves the only possible configuration with $|\mcN_p(e_{j'+2})| = 1$, as shown in Fig.~\ref{fig403},
where the corresponding configuration of $\mcN_p[C(C^*(e_{i, \Cdot}))]$ is also shown.

\begin{figure}[H]
\centering
\includegraphics[width=0.5\linewidth]{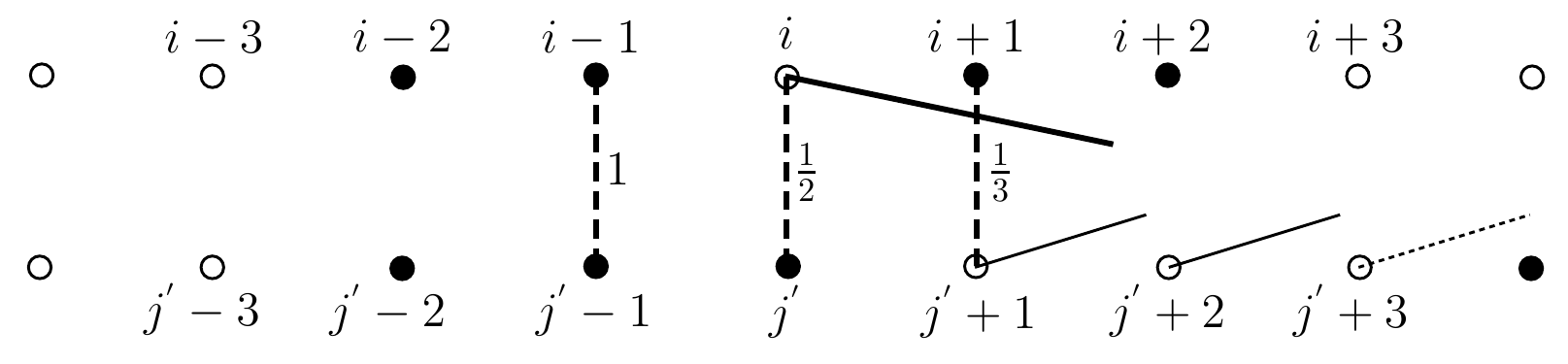} 
\caption{The only possible configuration of $\mcN_p[C(C^*(e_{i, \Cdot}))]$ when $\tau(e_{i, j} \gets C^*(e_{i, \Cdot})) = \big(1, \frac 12, \frac 13\big)$.
We have $|C(C^*(e_{i, \Cdot}))| = 3$ and $|\mcN_p(e_{j'+2})| = 1$ in this configuration.\label{fig403}}
\end{figure}

\item $\tau(e_{i, j} \gets C^*(e_{i, \Cdot})) = \big(1, \frac 12, \frac 14\big)$:
We have $e_{j'} \notin M$ and $|C(C^*(e_{i, \Cdot}))| = 4$.
Therefore, $e_{i+2}, e_{j'+1}, e_{j'+2} \in M$.
If $|\mcN_p(e_{j'+2})| \ge 1$,
then the algorithm $\mcL\mcS$ would replace $e_{i, j}$ and $e_{j'+1}$ by the two parallel edges $e^*_{i, j'}$ and $e^*_{i-1, j'-1}$ to reduce the singleton edges,
a contradiction.
Therefore, $\mcN_p(e_{j'+2}) = \emptyset$, that is, $e_{j'+3} \notin M$.
There is only one possible edge combination of $C(C^*(e_{i, \Cdot}))$, as shown in Fig.~\ref{fig404},
where the corresponding configuration of $\mcN_p[C(C^*(e_{i, \Cdot}))]$ is also shown.

\begin{figure}[H]
\centering
\includegraphics[width=0.5\linewidth]{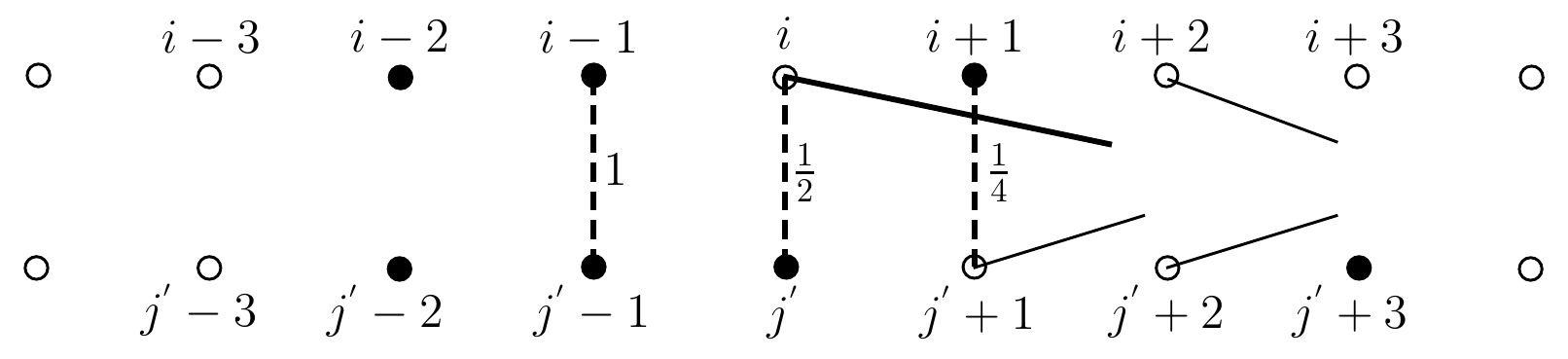} 
\caption{The only possible configuration of $\mcN_p[C(C^*(e_{i, \Cdot}))]$ when $\tau(e_{i, j} \gets C^*(e_{i, \Cdot})) = \big(1, \frac 12, \frac 14\big)$.
We have $|C(C^*(e_{i, \Cdot}))| = 4$ and $\mcN_p(e_{j'+2}) = \emptyset$ in this configuration.\label{fig404}}
\end{figure}

\item $\tau(e_{i, j} \gets C^*(e_{i, \Cdot})) = \big(\frac 13, 1, \frac 13\big)$:
According to the inequalities (\ref{eq7}) and (\ref{eq8}) of Lemma \ref{lemma406}, we have $|C(C^*(e_{i, \Cdot}))| = 5$. 
There is only one possible edge combination of $C(C^*(e_{i, \Cdot}))$, which is shown in Fig.~\ref{fig405},
where any configuration of $\mcN_p[C(C^*(e_{i, \Cdot}))]$ is possible.

\begin{figure}[H]
\centering
\includegraphics[width=0.5\linewidth]{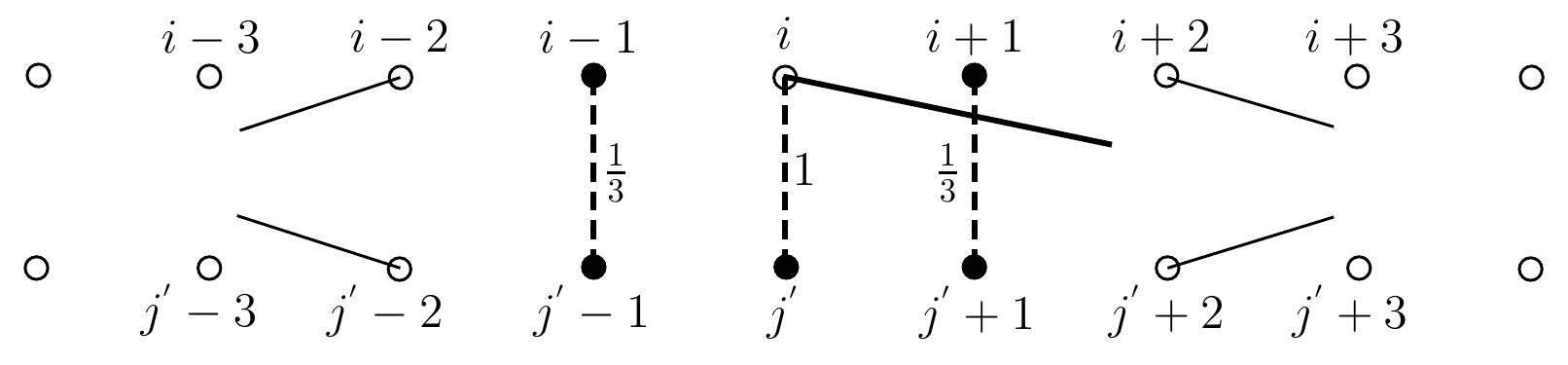} 
\caption{The only possible configuration of $\mcN_p[C(C^*(e_{i, \Cdot}))]$ when $\tau(e_{i, j} \gets C^*(e_{i, \Cdot})) = \big(\frac 13, 1, \frac 13\big)$,
where $|C(C^*(e_{i, \Cdot}))| = 5$.\label{fig405}}
\end{figure}

\item $\tau(e_{i, j} \gets C^*(e_{i, \Cdot})) = \big(\frac 12, \frac 12, \frac 13\big)$:
According to Lemma \ref{lemma407}, we have $C(e^*_{i, j'}) \cap C(e^*_{i-1, j'-1}) = \{e_{i, j}\}$;
thus $e_{j'+1} \in M$, either $e_{i'-2} \in M$ or $e_{j'-2} \in M$ but no both, either $e_{i+2} \in M$ or $e_{j'+2} \in M$ but no both,
and $|C(C^*(e_{i, \Cdot}))| = 4$.
We assume $e_{j'-2} \in M$ ($e_{i'-2} \in M$ is discussed the same).
When $e_{i+2} \in M$, $e_{j'+1}$ is a singleton edge of $M$.
If $e_{j'-2}$ is also a singleton edge of $M$,
then the algorithm $\mcL\mcS$ would replace the four edges in $C(C^*(e_{i, \Cdot}))$ by the three parallel edges in $C^*(e_{i, \Cdot})$ and $e^*_{i_1, j_1}$
to reduce the singleton edges, a contradiction.
Therefore in this case we have $|\mcN_p(e_{j'-2})| \ge 1$, that is, $e_{j'-3} \in M$.
Similarly, if $e_{i+2}$ is a singleton edge of $M$ or $|\mcN_p(e_{i+2})| \ge 2$,
then the algorithm $\mcL\mcS$ would replace the three edges $e_{i,j}, e_{j'+1}, e_{i+2}$ by the two parallel edges $e^*_{i,j'}, e^*_{i+1,j'+1}$ and $e^*_{i_1, j_1}$
to reduce the singleton edges, a contradiction.
That is, $e_{i+3} \in M$ but $e_{i+4} \notin M$.
This edge combination of $C(C^*(e_{i, \Cdot}))$ is shown in Fig.~\ref{fig406b},
where the corresponding configuration of $\mcN_p[C(C^*(e_{i, \Cdot}))]$ is also shown.

When $e_{j'+2} \in M$, for the same reason, if $|\mcN_p(e_{j'+2})| \ne 1$ then $e_{j'-2}$ must not be a singleton edge of $M$.
This edge combination of $C(C^*(e_{i, \Cdot}))$ is shown in Fig.~\ref{fig406a},
where the corresponding configuration of $\mcN_p[C(C^*(e_{i, \Cdot}))]$ is also shown.

\begin{figure}[H]
\captionsetup[subfigure]{justification=centering}
\begin{subfigure}{0.46\textwidth}
\includegraphics[width=\linewidth]{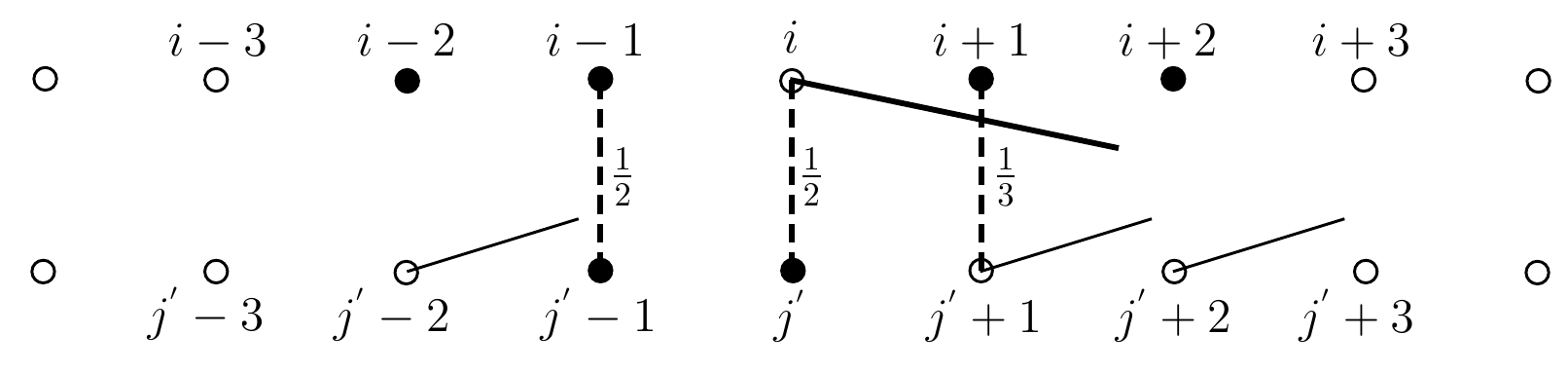} 
\caption{If $|\mcN_p(e_{j'+2})| \ne 1$ then $|\mcN_p(e_{j'-2})| \ge 1$.\label{fig406a}}
\end{subfigure}
\hspace*{\fill}
\begin{subfigure}{0.46\textwidth}
\includegraphics[width=\linewidth]{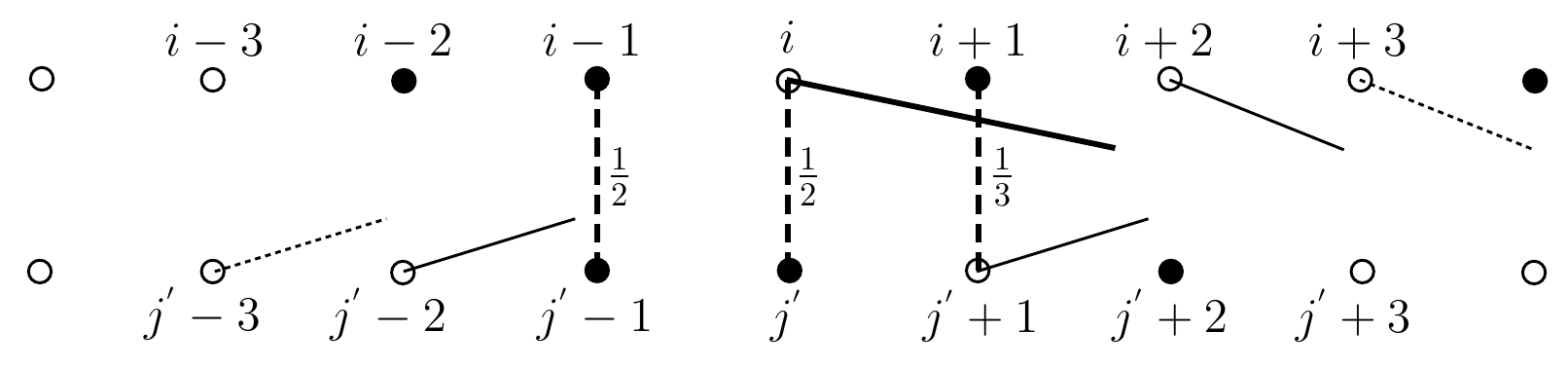} 
\caption{$|\mcN_p(e_{i+2})| = 1$ and $|\mcN_p(e_{j'-2})| \ge 1$.} \label{fig406b}
\end{subfigure}
\caption{The two possible configurations of $\mcN_p[C(C^*(e_{i, \Cdot}))]$ when $\tau(e_{i, j} \gets C^*(e_{i, \Cdot})) = \big(\frac 12, \frac 12, \frac 13\big)$.
We have $|C(C^*(e_{i, \Cdot}))| = 4$.
They also represent the symmetric case where $e_{i'-2} \in M$ instead of $e_{j'-2} \in M$.\label{fig406}}
\end{figure}

\item $\tau(e_{i, j} \gets C^*(e_{i, \Cdot})) = \big(\frac 12, \frac 12, \frac 14\big)$:
According to Lemma \ref{lemma407}, we have $C(e^*_{i, j'}) \cap C(e^*_{i-1, j'-1}) = \{e_{i, j}\}$;
thus $e_{j'+1} \in M$, either $e_{i'-2} \in M$ or $e_{j'-2} \in M$ but no both, $e_{i+2}, e_{j'+2} \in M$,
and $|C(C^*(e_{i, \Cdot}))| = 5$.
We assume $e_{j'-2} \in M$ ($e_{i'-2} \in M$ is discussed the same).
If $e_{j'-2}$ is a singleton edge of $M$ and $|\mcN_p(e_{j'+2})| \ge 1$,
then the algorithm $\mcL\mcS$ would replace the three edges $e_{i, j}$, $e_{j'-2}$, and $e_{j'+1}$ by
$e^*_{i_1, j_1}$ and the two parallel edges $e^*_{i, j'}$ and $e^*_{i-1, j'-1}$ to reduce the singleton edges, a contradiction.
Therefore, $|\mcN_p(e_{j'-2})| \ge 1$ (shown in Fig.~\ref{fig407b}) or $\mcN_p(e_{j'+2}) = \emptyset$ (shown in Fig.~\ref{fig407a}).
These two edge combinations of $C(C^*(e_{i, \Cdot}))$ are shown in Fig.~\ref{fig407a} and Fig.~\ref{fig407b}, respectively,
where the corresponding configurations of $\mcN_p[C(C^*(e_{i, \Cdot}))]$ are also shown.

Between the two configurations shown in Fig.~\ref{fig407a} and Fig.~\ref{fig407b},
we notice that for every edge $e \in C(C^*(e_{i, \Cdot})) - \{e_{i,j}\}$,
the largest possible value for $\omega(e)$ in Fig.~\ref{fig407a} is at least as large as in Fig.~\ref{fig407b}.
Since we are interested in the worst-case analysis,
we say Fig.~\ref{fig407b} is {\em shadowed} by Fig.~\ref{fig407a} and we will consider Fig.~\ref{fig407a} only in the sequel.

\begin{figure}[H]
\captionsetup[subfigure]{justification=centering}
\begin{subfigure}{0.46\textwidth}
\includegraphics[width=\linewidth]{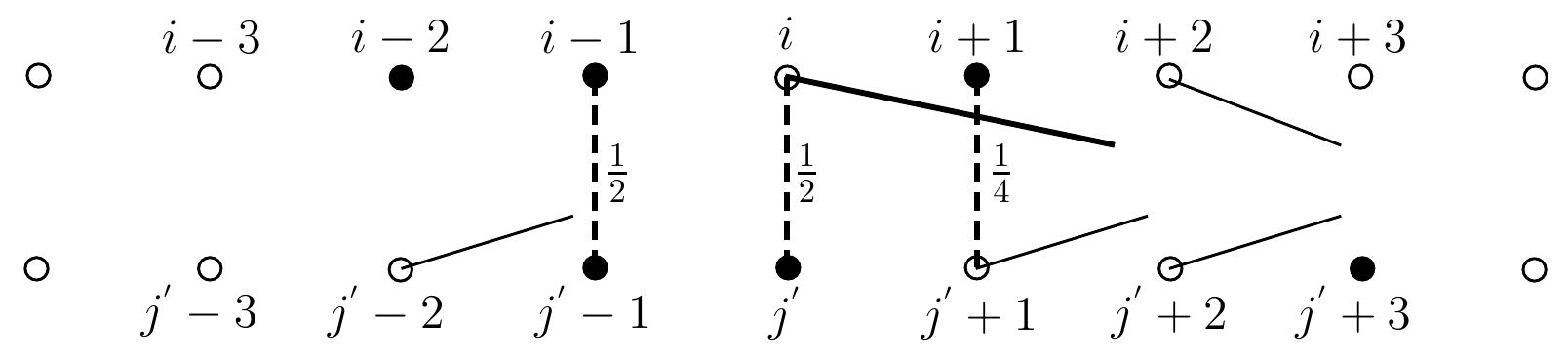} 
\caption{$\mcN_p(e_{j'+2}) = \emptyset$.\label{fig407a}}
\end{subfigure}
\hspace*{\fill}
\begin{subfigure}{0.46\textwidth}
\includegraphics[width=\linewidth]{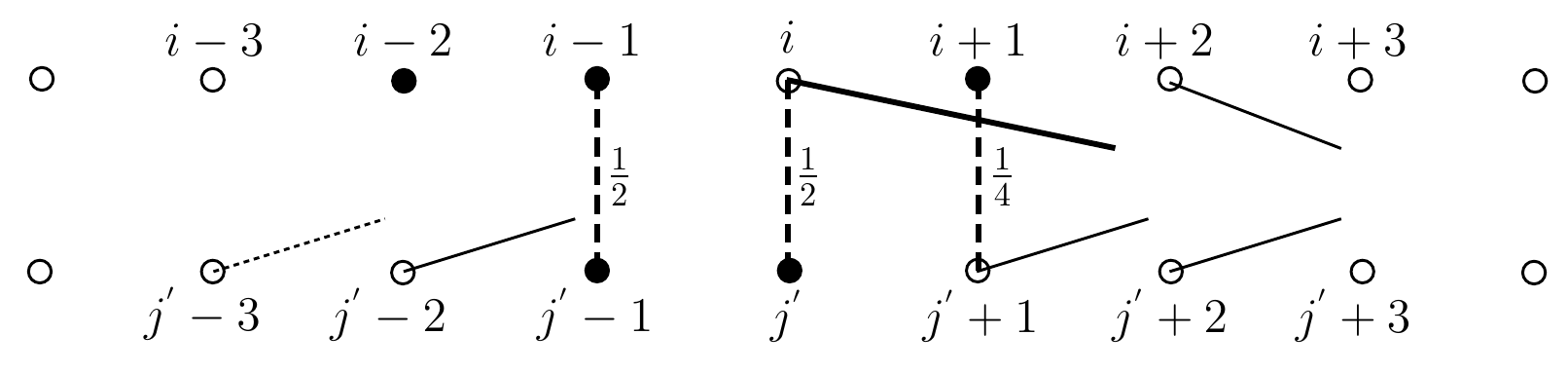} 
\caption{$|\mcN_p(e_{j'-2})| \ge 1$.\label{fig407b}}
\end{subfigure}
\caption{The two possible configurations of $\mcN_p[C(C^*(e_{i, \Cdot}))]$ when $\tau(e_{i, j} \gets C^*(e_{i, \Cdot})) = \big(\frac 12, \frac 12, \frac 14\big)$.
They are associated with the only possible edge combination of $C(C^*(e_{i, \Cdot}))$ with $|C(C^*(e_{i, \Cdot}))| = 5$,
which also represents the symmetric case where $e_{i'-2} \in M$ instead of $e_{j'-2} \in M$.
The first configuration shadows the second one.\label{fig407}}
\end{figure}

\item $\tau(e_{i, j} \gets C^*(e_{i, \Cdot})) = \big(\frac 13, \frac 12, \frac 13\big)$:
According to the inequalities (\ref{eq7}) and (\ref{eq8}) of Lemma \ref{lemma406}, we have $4 \le |C(C^*(e_{i, \Cdot}))| \le 5$.
Since $i-1$ and $i+1$ are symmetric with respect to $i$, we only discuss one of them.
We have either $e_{j'} \in M$ or $e_{j'-1} \in M$, but not both.

When $e_{j'} \in M$, then either $e_{i-2} \in M$ or $e_{j'-2} \in M$, but not both.
We assume $e_{j'-2} \in M$.
Similarly, either $e_{i+2} \in M$ or $e_{j'+2} \in M$, but not both.
We assume $e_{i+2} \in M$.
If $e_{i+2}$ is a singleton edge of $M$ or $|\mcN_p(e_{i+2})| \ge 2$,
then the algorithm $\mcL\mcS$ would replace the three edges $e_{i, j}$, $e_{j'}$, $e_{i+2}$ by
$e^*_{i_1, j_1}$ and the two parallel edges $e^*_{i, j'}$ and $e^*_{i+1, j'+1}$ to reduce the singleton edges, a contradiction.
Therefore, $|\mcN_p(e_{i+2})| = 1$;
for the same reason, $|\mcN_p(e_{j'-2})| = 1$.
This edge combination of $C(C^*(e_{i, \Cdot}))$ is shown in Fig.~\ref{fig408a},
where the corresponding configuration of $\mcN_p[C(C^*(e_{i, \Cdot}))]$ is also shown.

When $e_{j'-1} \in M$, then still either $e_{i-2} \in M$ or $e_{j'-2} \in M$, but not both.
On the other side, $e_{i+2} \in M$ and $e_{j'+2} \in M$.
When $e_{i-2} \in M$, $e_{j'-1}$ is a singleton edge of $M$;
and therefore $|\mcN_p(e_{i'-2})| = 1$.
This edge combination of $C(C^*(e_{i, \Cdot}))$ is shown in Fig.~\ref{fig408b},
where the corresponding configuration of $\mcN_p[C(C^*(e_{i, \Cdot}))]$ is also shown.

When $e_{j'-2} \in M$, the edge combination of $C(C^*(e_{i, \Cdot}))$ is shown in Fig.~\ref{fig408c},
where the corresponding configuration of $\mcN_p[C(C^*(e_{i, \Cdot}))]$ is also shown.

\begin{figure}[H]
\captionsetup[subfigure]{justification=centering}
\begin{subfigure}{0.46\textwidth}
\includegraphics[width=\linewidth]{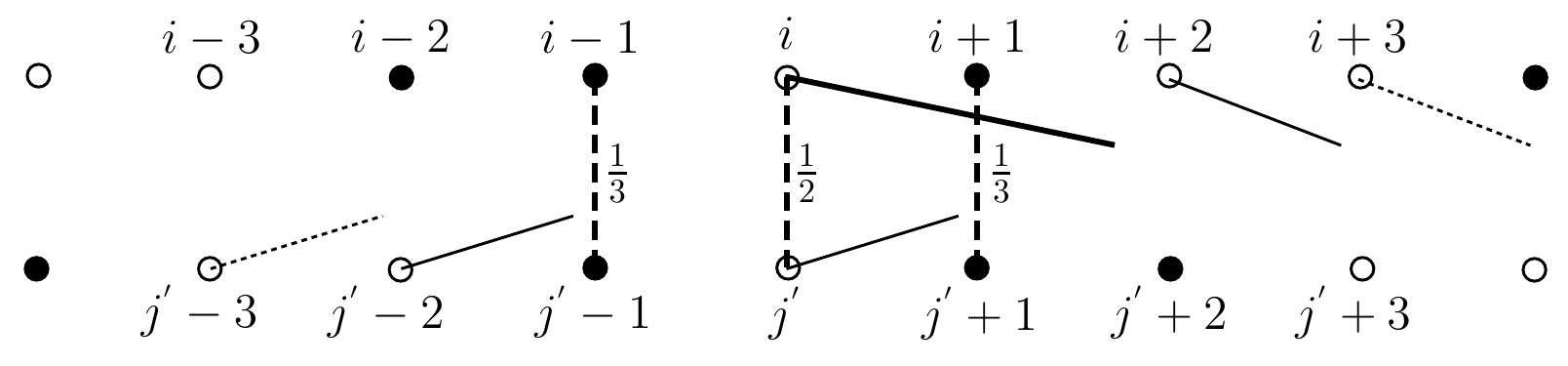} 
\caption{$|C(C^*(e_{i, \Cdot}))| = 4$ and $|\mcN_p(e_{i+2})| = |\mcN_p(e_{j'-2})| = 1$.\label{fig408a}}
\end{subfigure}
\hspace*{\fill}
\begin{subfigure}{0.46\textwidth}
\includegraphics[width=\linewidth]{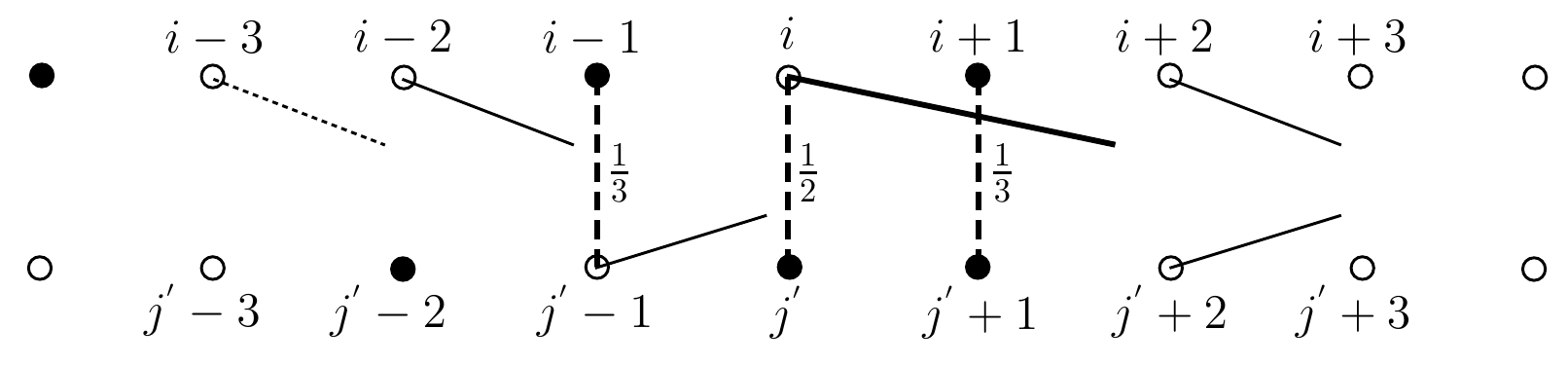} 
\caption{$|C(C^*(e_{i, \Cdot}))| = 5$ and $|\mcN_p(e_{i-2})| = 1$.\label{fig408b}}
\end{subfigure}
\centering
\begin{subfigure}{0.48\textwidth}
\includegraphics[width=\linewidth]{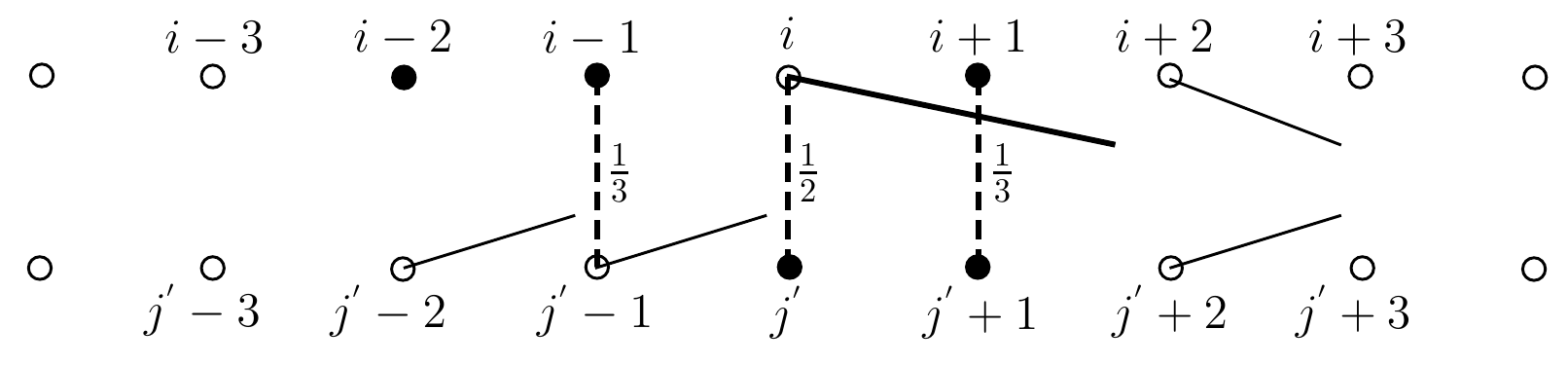} 
\caption{$|C(C^*(e_{i, \Cdot}))| = 5$.\label{fig408c}}
\end{subfigure}
\caption{The three possible configurations of $\mcN_p[C(C^*(e_{i, \Cdot}))]$ when $\tau(e_{i, j} \gets C^*(e_{i, \Cdot})) = \big(\frac 13, \frac 12, \frac 13\big)$,
associated with three possible edge combinations of $C(C^*(e_{i, \Cdot}))$ with $|C(C^*(e_{i, \Cdot}))| = 4, 5, 5$, respectively.
The configuration in Fig.~\ref{fig408a} also represents the symmetric case where $e_{i-2} \in M$ instead of $e_{j'-2} \in M$
and/or $e_{j'+2} \in M$ instead of $e_{i+2} \in M$.\label{fig408}}
\end{figure}

\item $\tau(e_{i, j} \gets C^*(e_{i, \Cdot})) = \big(\frac 12, \frac 13, \frac 13\big)$:
According to the inequalities (\ref{eq7}) and (\ref{eq8}) of Lemma \ref{lemma406}, we have $3 \le |C(C^*(e_{i, \Cdot}))| \le 4$.
If $|C(C^*(e_{i, \Cdot}))| = 3$, then the algorithm $\mcL\mcS$ would replace the three edges of $C(C^*(e_{i, \Cdot}))$ by
$e^*_{i_1, j_1}$ and the three parallel edges in $C^*(e_{i, \Cdot})$ to expand $M$, a contradiction.
Therefore, $|C(C^*(e_{i, \Cdot}))| = 4$.
From $e_{j'-1}, e_{j'+1} \in M$, we know that either $e_{i+2} \in M$ or $e_{j'+2} \in M$ but not both.
If $e_{i+2} \in M$, then all three edges $e_{j'-1}, e_{i,j}, e_{j'+1}$ are singleton edges of $M$,
and the algorithm $\mcL\mcS$ would replace the four edges of $C(C^*(e_{i, \Cdot}))$ by $e^*_{i_1, j_1}$ and the three parallel edges of $C^*(e_{i, \Cdot})$
to reduce the singleton edges, a contradiction.
Therefore, $e_{j'+2} \in M$, which then implies $|\mcN_p(e_{j'+2})| = 1$.
This only edge combination of $C(C^*(e_{i, \Cdot}))$ is shown in Fig.~\ref{fig409},
where the corresponding configuration of $\mcN_p[C(C^*(e_{i, \Cdot}))]$ is also shown.

\begin{figure}[H]
\centering
\includegraphics[width=0.5\linewidth]{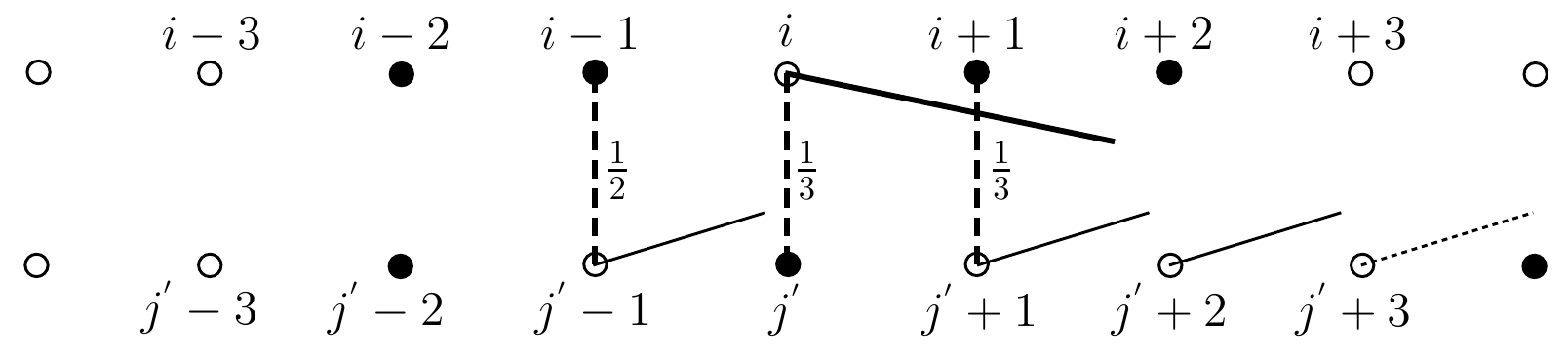} 
\caption{The only possible configuration of $\mcN_p[C(C^*(e_{i, \Cdot}))]$ when $\tau(e_{i, j} \gets C^*(e_{i, \Cdot})) = \big(\frac 12, \frac 13, \frac 13\big)$.
We have $|C(C^*(e_{i, \Cdot}))| = 4$ and $|\mcN_p(e_{j'+2})| = 1$.\label{fig409}}
\end{figure}

\item $\tau(e_{i, j} \gets C^*(e_{i, \Cdot})) = \big(\frac 13, \frac 12, \frac 14\big)$:
According to the inequalities (\ref{eq7}) and (\ref{eq8}) of Lemma \ref{lemma406}, we have $5 \le |C(C^*(e_{i, \Cdot}))| \le 6$.
Note that either $e_{j'} \in M$ or $e_{j'+1} \in M$ but not both.

When $e_{j'} \in M$, we have two symmetric cases where $e_{i-2} \in M$ and $e_{j'-2} \in M$, respectively;
and we assume $e_{j'-2} \in M$.
We conclude that $e_{j'-2}$ must not be a singleton edge of $M$ or $|\mcN_p(e_{j'-2})| \ge 2$.
This edge combination of $C(C^*(e_{i, \Cdot}))$ is shown in Fig.~\ref{fig410a},
where the corresponding configuration of $\mcN_p[C(C^*(e_{i, \Cdot}))]$ is also shown.

When $e_{j'+1} \in M$, both $e_{i-2} \in M$ and $e_{j'-2} \in M$.
This edge combination of $C(C^*(e_{i, \Cdot}))$ is shown in Fig.~\ref{fig410b},
where the corresponding configuration of $\mcN_p[C(C^*(e_{i, \Cdot}))]$ is also shown.

\begin{figure}[H]
\captionsetup[subfigure]{justification=centering}
\begin{subfigure}{0.46\textwidth}
\includegraphics[width=\linewidth]{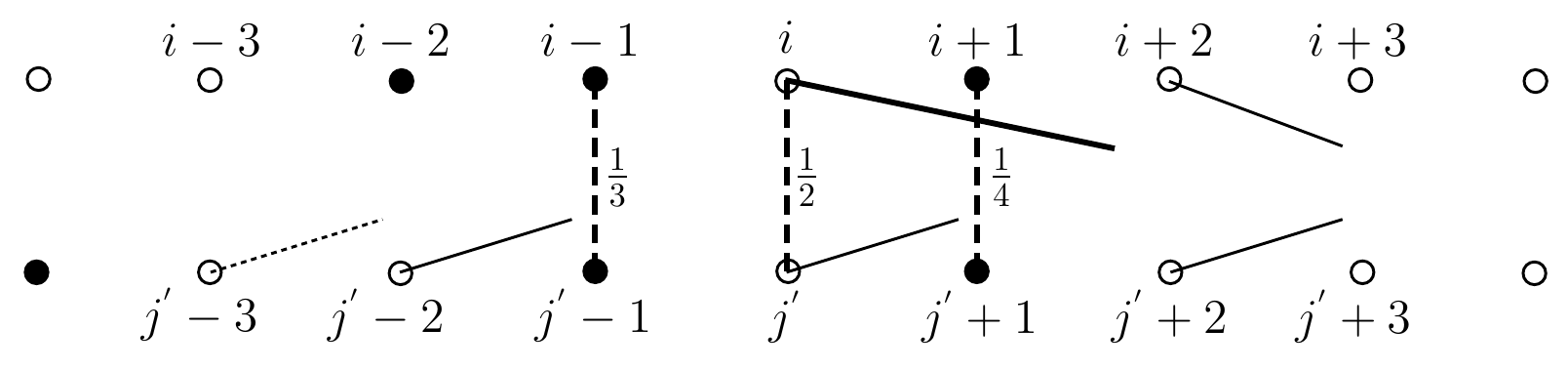} 
\caption{$|C(C^*(e_{i, \Cdot}))| = 5$ and $|\mcN_p(e_{j'-2})| = 1$.\label{fig410a}}
\end{subfigure}
\hspace*{\fill}
\begin{subfigure}{0.46\textwidth}
\includegraphics[width=\linewidth]{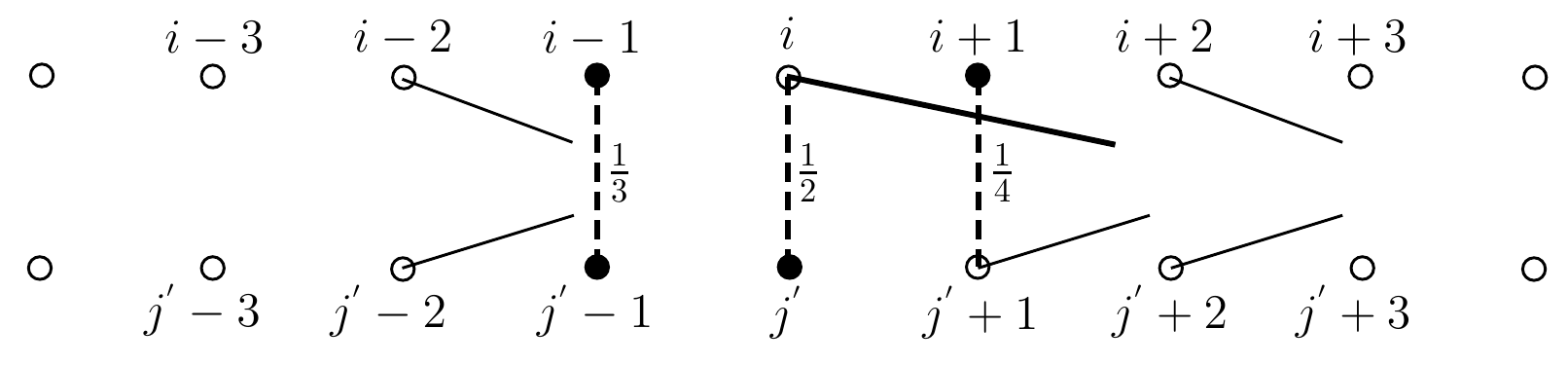} 
\caption{$|C(C^*(e_{i, \Cdot}))| = 6$.\label{fig410b}}
\end{subfigure}
\caption{The two possible configurations of $\mcN_p[C(C^*(e_{i, \Cdot}))]$ when $\tau(e_{i, j} \gets C^*(e_{i, \Cdot})) = \big(\frac 13, \frac 12, \frac 14\big)$,
associated with two possible edge combinations of $C(C^*(e_{i, \Cdot}))$ with $|C(C^*(e_{i, \Cdot}))| = 5, 6$, respectively.
The configuration in Fig.~\ref{fig410a} also represents the symmetric case where $e_{i-2} \in M$ instead of $e_{j'-2} \in M$.\label{fig410}}
\end{figure}

\item $\tau(e_{i, j} \gets C^*(e_{i, \Cdot})) = \big(\frac 12, \frac 13, \frac 14\big)$:
According to the inequalities (\ref{eq7}) and (\ref{eq8}) of Lemma \ref{lemma406}, we have $4 \le |C(C^*(e_{i, \Cdot}))| \le 5$.
Note that either $e_{j'-1} \notin M$ or $e_{j'} \notin M$ but not both, and $e_{j'+1} \in M$.

When $e_{j'-1} \notin M$, then either $e_{i+2} \in M$ or $e_{j'+2} \in M$ but not both.
When $e_{j'+2} \in M$, the edge combination of $C(C^*(e_{i, \Cdot}))$ is shown in Fig.~\ref{fig411a},
where the corresponding configuration of $\mcN_p[C(C^*(e_{i, \Cdot}))]$ is also shown.

When $e_{i+2} \in M$, we conclude that $e_{i+2}$ should not be a singleton edge of $M$;
the edge combination of $C(C^*(e_{i, \Cdot}))$ is shown in Fig.~\ref{fig411b},
where the corresponding configuration of $\mcN_p[C(C^*(e_{i, \Cdot}))]$ is also shown.

When $e_{j'} \notin M$, then both $e_{i+2} \in M$ and $e_{j'+2} \in M$;
we conclude that $\mcN_p(e_{j'+2}) = \emptyset$.
This edge combination of $C(C^*(e_{i, \Cdot}))$ is shown in Fig.~\ref{fig411c},
where the corresponding configuration of $\mcN_p[C(C^*(e_{i, \Cdot}))]$ is also shown.

\begin{figure}[H]
\captionsetup[subfigure]{justification=centering}
\begin{subfigure}{0.46\textwidth}
\includegraphics[width=\linewidth]{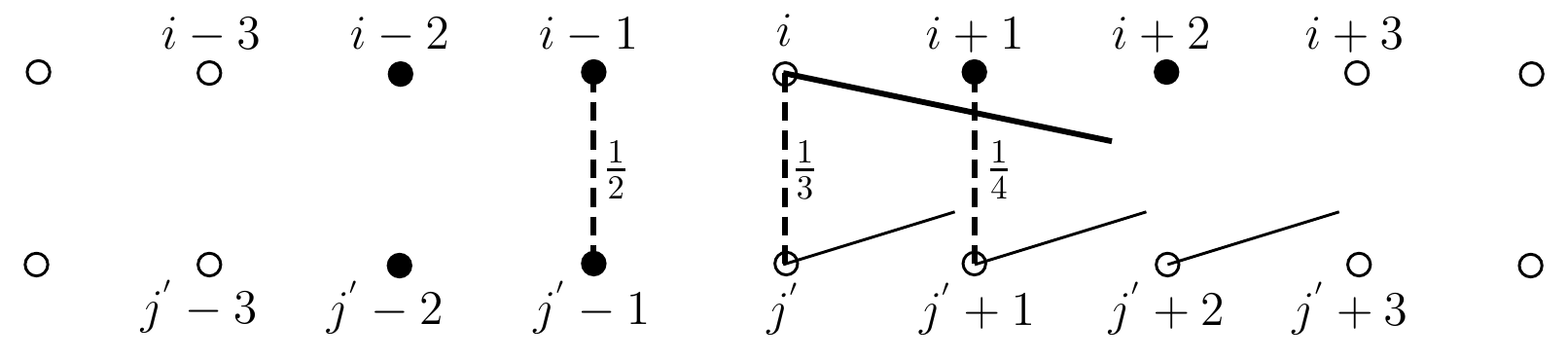} 
\caption{$|C(C^*(e_{i, \Cdot}))| = 4$.\label{fig411a}}
\end{subfigure}
\hspace*{\fill}
\begin{subfigure}{0.46\textwidth}
\includegraphics[width=\linewidth]{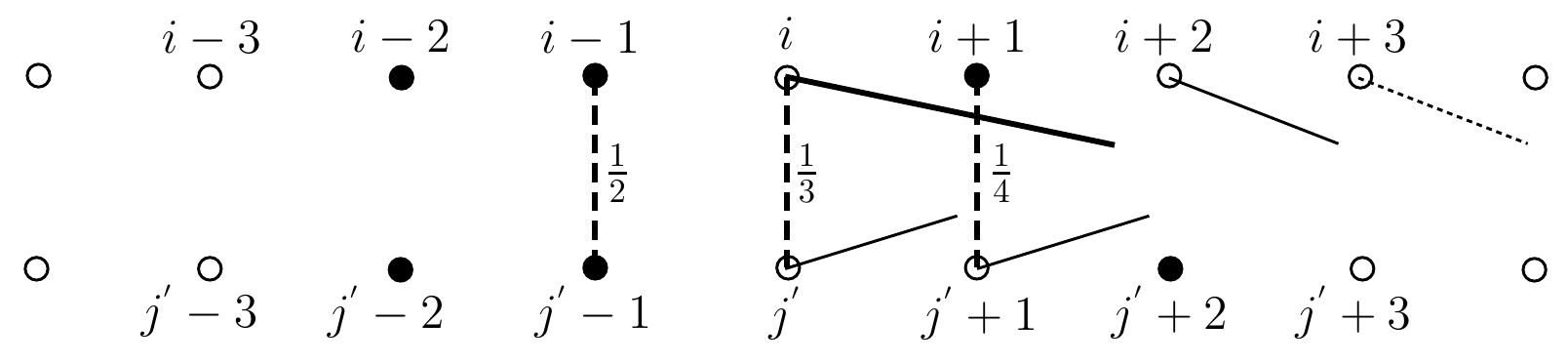} 
\caption{$|C(C^*(e_{i, \Cdot}))| = 4$ and $|\mcN_p(e_{i+2})| \ge 1$.\label{fig411b}}
\end{subfigure}
\centering
\begin{subfigure}{0.48\textwidth}
\includegraphics[width=\linewidth]{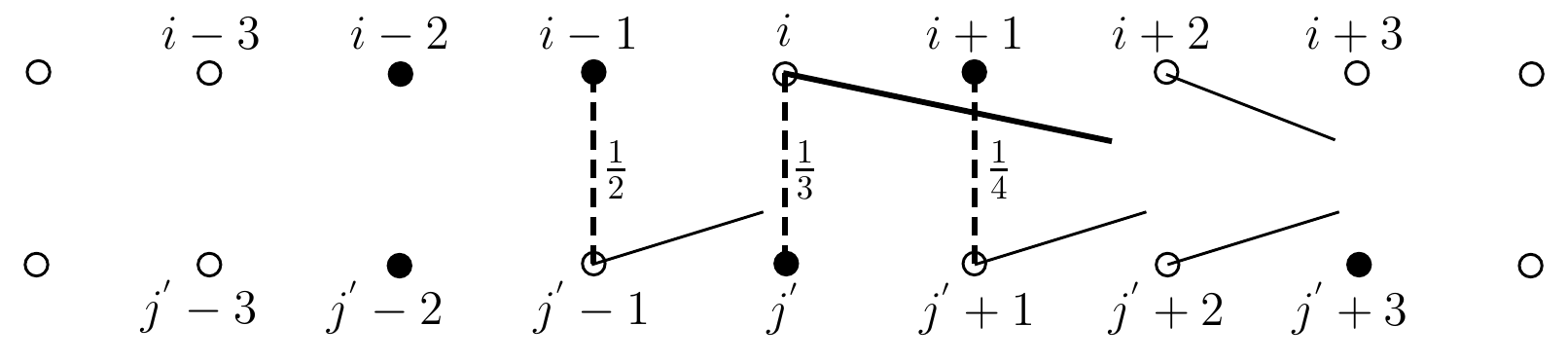} 
\caption{$|C(C^*(e_{i, \Cdot}))| = 5$ and $\mcN_p(e_{j'+2}) = \emptyset$.\label{fig411c}}
\end{subfigure}
\caption{The three possible configurations of $\mcN_p[C(C^*(e_{i, \Cdot}))]$ when $\tau(e_{i, j} \gets C^*(e_{i, \Cdot})) = \big(\frac 12, \frac 13, \frac 14\big)$,
associated with three possible edge combinations of $C(C^*(e_{i, \Cdot}))$ with $|C(C^*(e_{i, \Cdot}))| = 4, 4, 5$, respectively.\label{fig411}}
\end{figure}

\item $\tau(e_{i, j} \gets C^*(e_{i, \Cdot})) = \big(\frac 12, \frac 14, \frac 13\big)$:
This ordered value combination is impossible due to the edge $e_{i, j}$ being a singleton edge of $M$.

\item $\tau(e_{i, j} \gets C^*(e_{i, \Cdot})) = \big(\frac 12, \frac 13, \frac 15\big)$:
We have $e_{j'}, e_{j'+1}, e_{j'+2}, e_{i+2} \in M$, giving rise to $|C(C^*(e_{i, \Cdot}))| = 5$.
This only edge combination of $C(C^*(e_{i, \Cdot}))$ is shown in Fig.~\ref{fig412},
where the corresponding configuration of $\mcN_p[C(C^*(e_{i, \Cdot}))]$ is also shown.

\begin{figure}[H]
\centering
\includegraphics[width=0.5\linewidth]{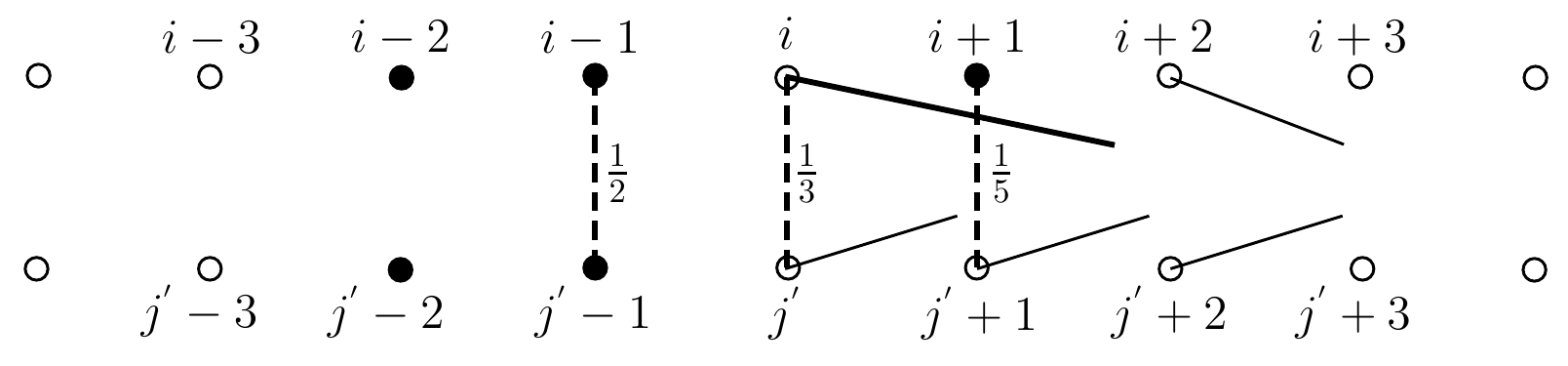} 
\caption{The only possible configuration of $\mcN_p[C(C^*(e_{i, \Cdot}))]$ when $\tau(e_{i, j} \gets C^*(e_{i, \Cdot})) = \big(\frac 12, \frac 13, \frac 15\big)$,
where $|C(C^*(e_{i, \Cdot}))| = 5$.\label{fig412}}
\end{figure}

\item $\tau(e_{i, j} \gets C^*(e_{i, \Cdot})) = \big(\frac 12, \frac 14, \frac 14\big)$:
This ordered value combination is impossible due to the edge $e_{i, j}$ being a singleton edge of $M$.

\item $\tau(e_{i, j} \gets C^*(e_{i, \Cdot})) = \big(\frac 14, \frac 12, \frac 14\big)$:
Since the edge $e_{j'}$ has to be in $M$, we have both $e_{i-2} \in M$ and $e_{j'-2} \in M$,
and both $e_{i+2} \in M$ and $e_{j'+2} \in M$, giving rise to $|C(C^*(e_{i, \Cdot}))| = 6$.
This only edge combination of $C(C^*(e_{i, \Cdot}))$ is shown in Fig.~\ref{fig413},
where the corresponding configuration of $\mcN_p[C(C^*(e_{i, \Cdot}))]$ is also shown.

\begin{figure}[H]
\centering
\includegraphics[width=0.5\linewidth]{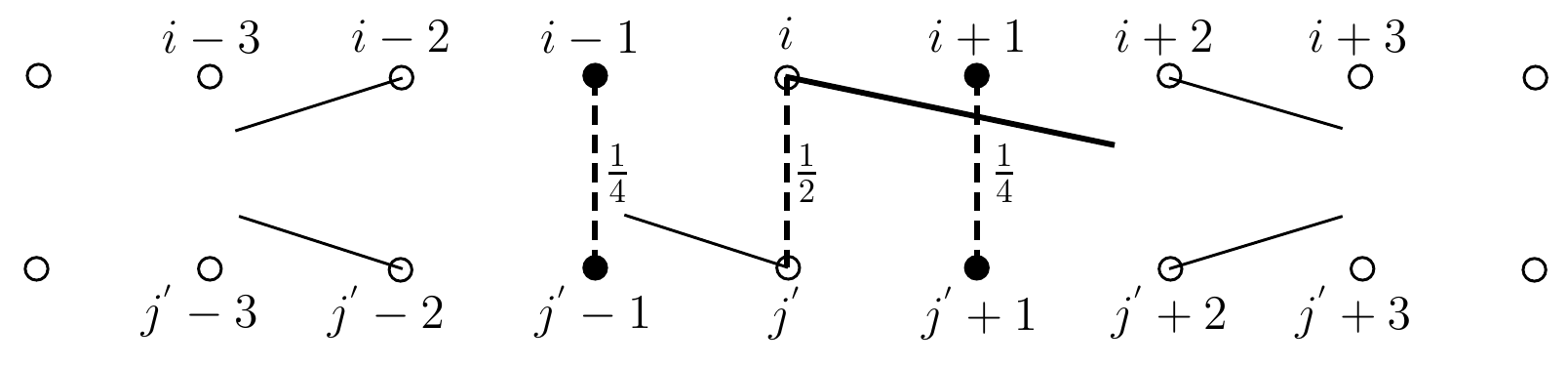} 
\caption{The only possible configuration of $\mcN_p[C(C^*(e_{i, \Cdot}))]$ when $\tau(e_{i, j} \gets C^*(e_{i, \Cdot})) = \big(\frac 14, \frac 12, \frac 14\big)$,
where $|C(C^*(e_{i, \Cdot}))| = 6$.\label{fig413}}
\end{figure}

\item $\tau(e_{i, j} \gets C^*(e_{i, \Cdot})) = \big(\frac 13, \frac 13, \frac 13\big)$:
According to the inequalities (\ref{eq7}) and (\ref{eq8}) of Lemma \ref{lemma406}, we have $4 \le |C(C^*(e_{i, \Cdot}))| \le 5$.
Note that exactly one of the three vertices $j'-1, j, j'+1$ is not incident with any edge of $M$,
we thus consider two cases where $e_{j'} \notin M$ and $e_{j'+1} \notin M$ ($e_{j'-1} \notin M$ is symmetric to $e_{j'+1} \in M$), respectively.

When $e_{j'+1} \notin M$, then either $e_{i+2} \in M$ or $e_{j'+2} \in M$ but not both,
while $e_{i-2} \notin M$ and $e_{j'-2} \notin M$.
We assume $e_{i+2} \in M$, which implies $e_{i+2}$ should not be a singleton edge of $M$.
This edge combination of $C(C^*(e_{i, \Cdot}))$ is shown in Fig.~\ref{fig414a},
where the corresponding configuration of $\mcN_p[C(C^*(e_{i, \Cdot}))]$ is also shown.

When $e_{j'} \notin M$, then either $e_{i+2} \in M$ or $e_{j'+2} \in M$ but not both,
and either $e_{i-2} \in M$ or $e_{j'-2} \in M$ but not both.
Four different combinations of their memberships of $M$ give rise to $0, 1, 2$ singleton edges between $e_{j'-1}$ and $e_{j'+1}$.
These three edge combinations of $C(C^*(e_{i, \Cdot}))$ are shown in Fig.~\ref{fig414b}, Fig.~\ref{fig414c}, Fig.~\ref{fig414d}, respectively,
where the corresponding configuration of $\mcN_p[C(C^*(e_{i, \Cdot}))]$ is also shown.

\begin{figure}[H]
\captionsetup[subfigure]{justification=centering}
\begin{subfigure}{0.46\textwidth}
\includegraphics[width=\linewidth]{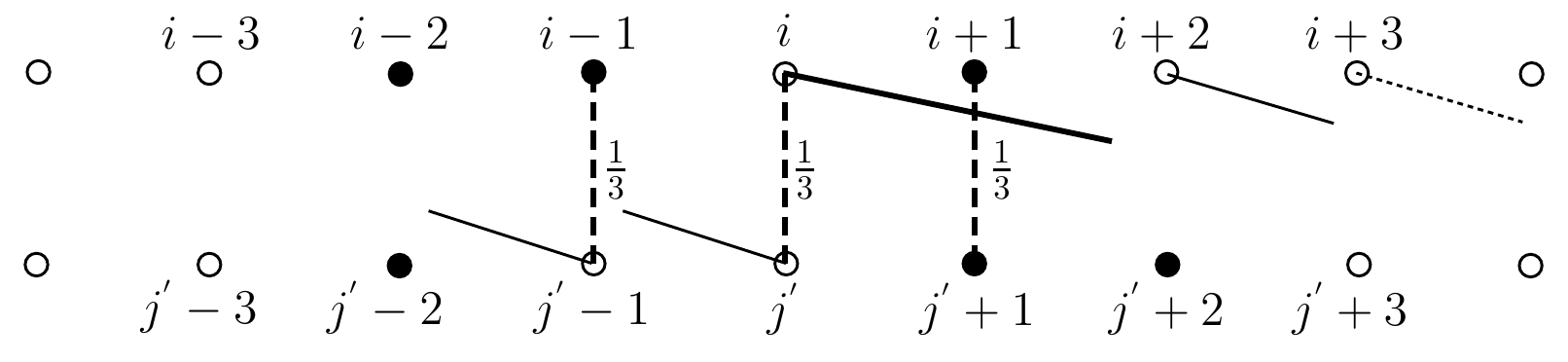} 
\caption{$|C(C^*(e_{i, \Cdot}))| = 4$ and $|\mcN_p(e_{i+2})| \ge 1$.\label{fig414a}}
\end{subfigure}
\hspace*{\fill}
\begin{subfigure}{0.46\textwidth}
\includegraphics[width=\linewidth]{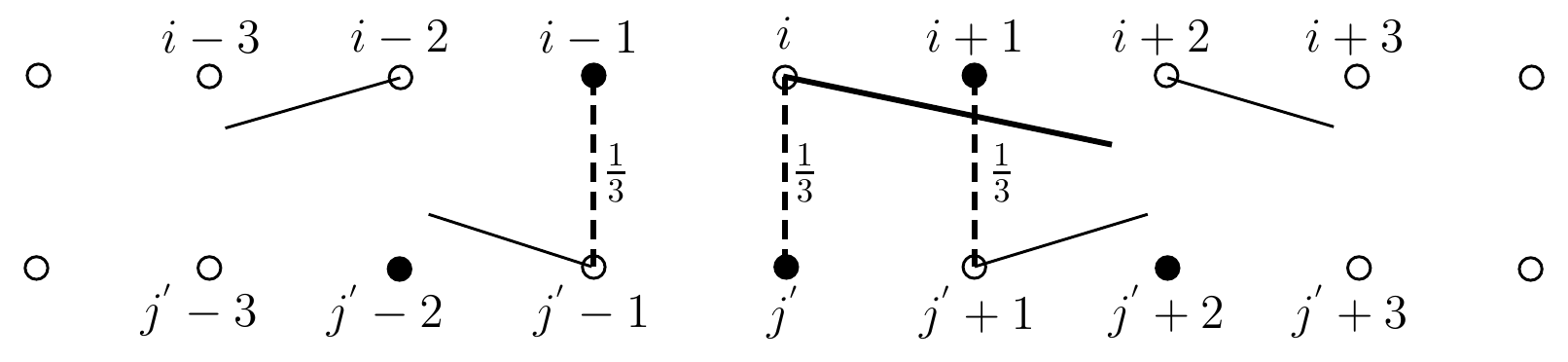} 
\caption{$|C(C^*(e_{i, \Cdot}))| = 5$.\label{fig414b}}
\end{subfigure}
\begin{subfigure}{0.48\textwidth}
\includegraphics[width=\linewidth]{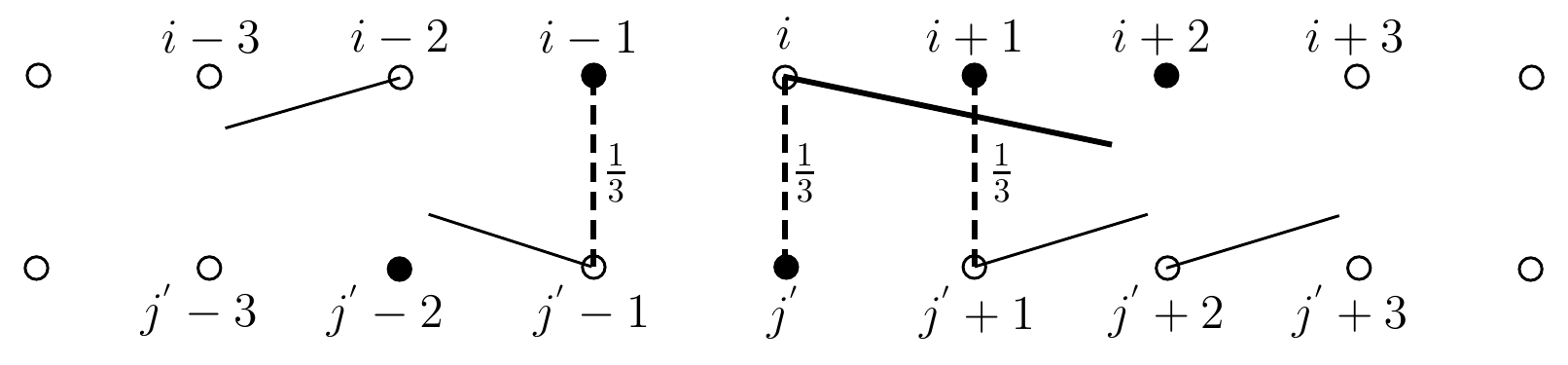} 
\caption{$|C(C^*(e_{i, \Cdot}))| = 5$.\label{fig414c}}
\end{subfigure}
\hspace*{\fill}
\begin{subfigure}{0.48\textwidth}
\includegraphics[width=\linewidth]{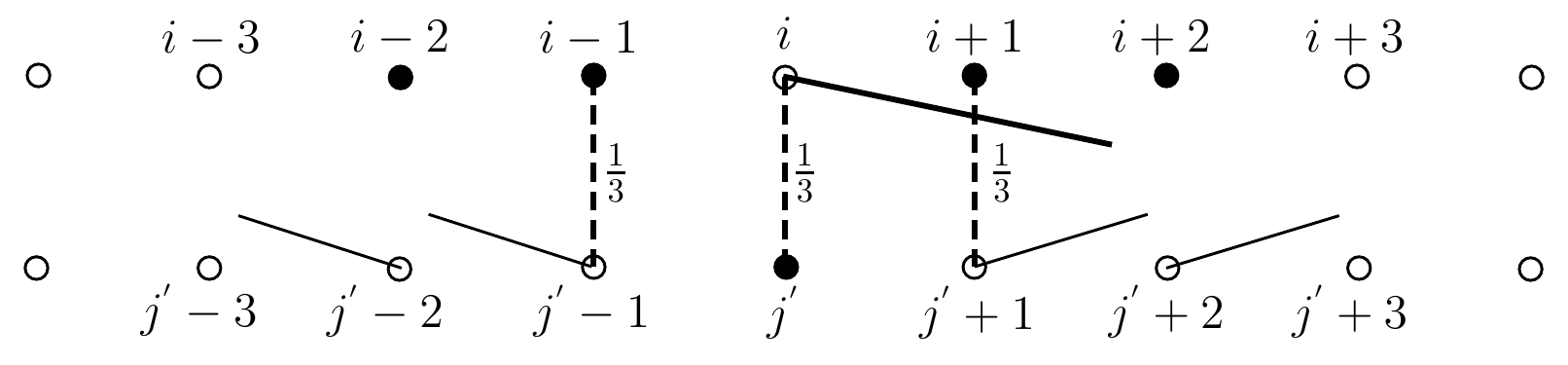} 
\caption{$|C(C^*(e_{i, \Cdot}))| = 5$.\label{fig414d}}
\end{subfigure}
\caption{The four possible configurations of $\mcN_p[C(C^*(e_{i, \Cdot}))]$ when $\tau(e_{i, j} \gets C^*(e_{i, \Cdot})) = \big(\frac 13, \frac 13, \frac 13\big)$,
associated with four possible edge combinations of $C(C^*(e_{i, \Cdot}))$ with $|C(C^*(e_{i, \Cdot}))| = 4, 5, 5, 5$, respectively.
Fig.~\ref{fig414a} also represents the case where $e_{j'+2} \in M$ instead of $e_{i+2} \in M$;
Fig.~\ref{fig414c} also represents the case where $e_{j'-2}, e_{i+2} \in M$ instead of $e_{i-2}, e_{j'+2} \in M$.\label{fig414}}
\end{figure}

\item $\tau(e_{i, j} \gets C^*(e_{i, \Cdot})) = \big(\frac 12, \frac 12, 0\big)$:
We have $2 \le |C(C^*(e_{i, \Cdot}))| \le 3$.
If $|C(C^*(e_{i, \Cdot}))| = 2$, then the algorithm $\mcL\mcS$ would replace the two edges of $C(C^*(e_{i, \Cdot}))$ by
$e^*_{i_1, j_1}$ and the two edges of $C^*(e_{i, \Cdot})$ to expand $M$, a contradiction.
Therefore, $|C(C^*(e_{i, \Cdot}))| = 3$, and furthermore $e_{j'+1} \in M$, and either $e_{i-2} \in M$ or $e_{j'-2} \in M$ but not both.
Due to symmetry we assume $e_{j'-2} \in M$.
We conclude that at most one of $e_{j'+1}$ and $e_{j'-2}$ can be a singleton edge of $M$.
The edge combination of $C(C^*(e_{i, \Cdot}))$ when $e_{j'-2}$ is not a singleton edge is shown in Fig.~\ref{fig415a},
and the edge combination of $C(C^*(e_{i, \Cdot}))$ when $e_{j'+1}$ is not a singleton edge is shown in Fig.~\ref{fig415b},
where the corresponding configuration of $\mcN_p[C(C^*(e_{i, \Cdot}))]$ is also shown, respectively.

\begin{figure}[H]
\captionsetup[subfigure]{justification=centering}
\begin{subfigure}{0.46\textwidth}
\includegraphics[width=\linewidth]{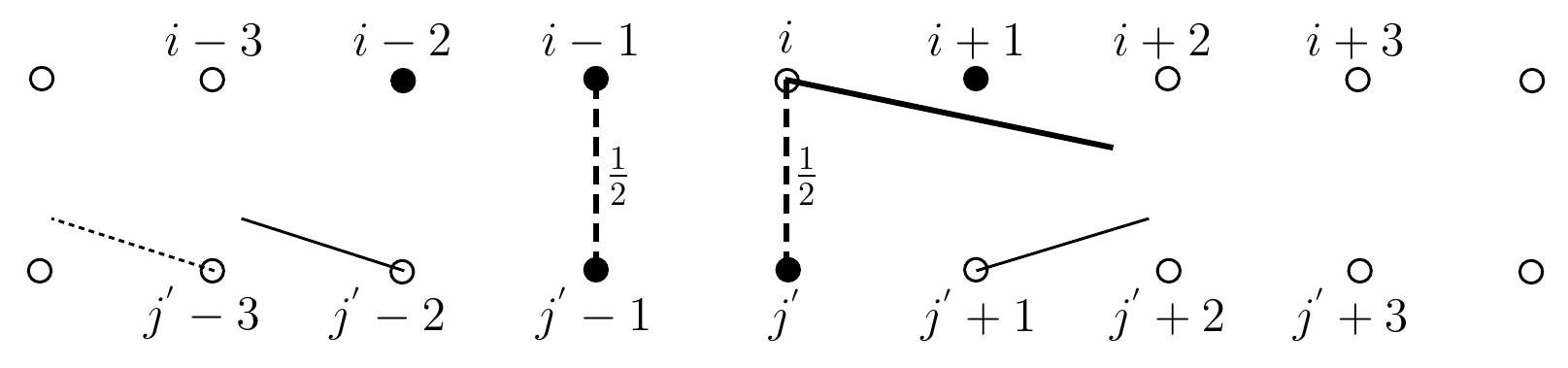} 
\caption{$|C(C^*(e_{i, \Cdot}))| = 3$ and $|\mcN_p(e_{j'-2})| \ge 1$.\label{fig415a}}
\end{subfigure}
\hspace*{\fill}
\begin{subfigure}{0.46\textwidth}
\includegraphics[width=\linewidth]{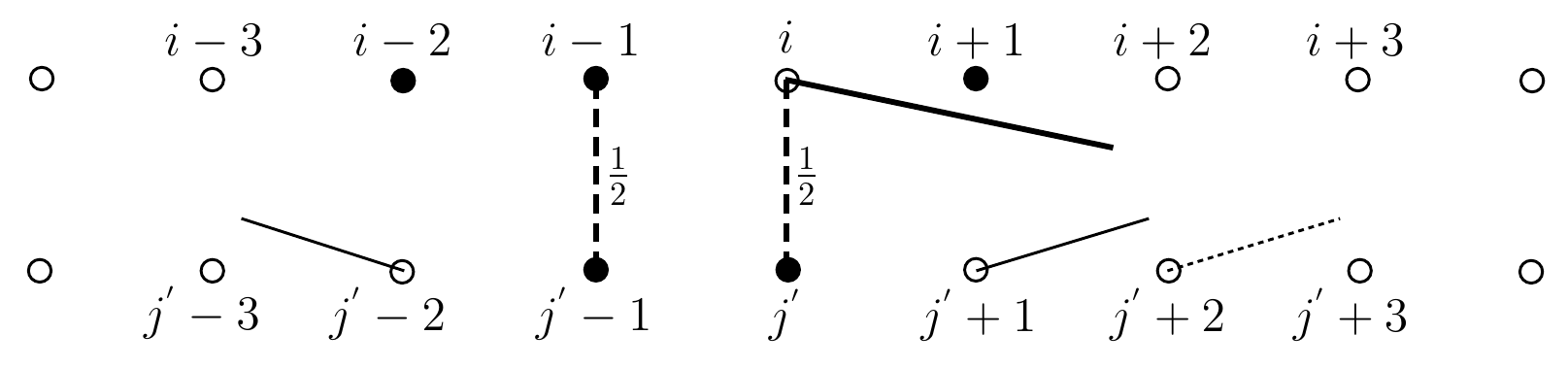} 
\caption{$|C(C^*(e_{i, \Cdot}))| = 3$ and $|\mcN_p(e_{j'+1})| \ge 1$.\label{fig415b}}
\end{subfigure}
\caption{The two possible configurations of $\mcN_p[C(C^*(e_{i, \Cdot}))]$ when $\tau(e_{i, j} \gets C^*(e_{i, \Cdot})) = \big(\frac 12, \frac 12, 0\big)$,
associated with the only possible edge combinations of $C(C^*(e_{i, \Cdot}))$ with $|C(C^*(e_{i, \Cdot}))| = 3$.
Each of them also represents the case where $e_{i-2} \in M$ instead of $e_{j'-2} \in M$.\label{fig415}}
\end{figure}

\item $\tau(e_{i, j} \gets C^*(e_{i, \Cdot})) = \big(\frac 12, 0, \frac 12\big)$:
We denote the two edges of $C^*(e_{i, \Cdot})$ as $e_{i-1, j''-1}$ and $e_{i+1, j'''+1}$, respectively;
clearly, $|(j'' - 1) - (j''' + 1)| \ge 2$.
We have $2 \le |C(C^*(e_{i, \Cdot}))| \le 3$.
The same as in the last case, we have $|C(C^*(e_{i, \Cdot}))| = 3$,
and furthermore exactly one of $i-2, j''-2, j''-1, j''$ is incident with an edge of $M$,
and exactly one of $i+2, j''', j'''+1, j'''+2$ is incident with an edge of $M$.
Among these $16$ edge combinations of $C(C^*(e_{i, \Cdot}))$, in one of them the two edges of $C(C^*(e_{i, \Cdot}))$ could be parallel to each other,
as shown in Fig.~\ref{fig416a} (this happens when $j''' = j'' + 1$, and $e_{j''}, e_{j'''} \in M$),
where the corresponding configuration of $\mcN_p[C(C^*(e_{i, \Cdot}))]$ is also shown;
one of the other $15$ is shown in Fig.~\ref{fig416b} ($j''' > j'' + 1$, $e_{j''-1}, e_{j'''} \in M$),
where the corresponding configuration of $\mcN_p[C(C^*(e_{i, \Cdot}))]$ is also shown.

\begin{figure}[H]
\captionsetup[subfigure]{justification=centering}
\begin{subfigure}{0.46\textwidth}
\includegraphics[width=\linewidth]{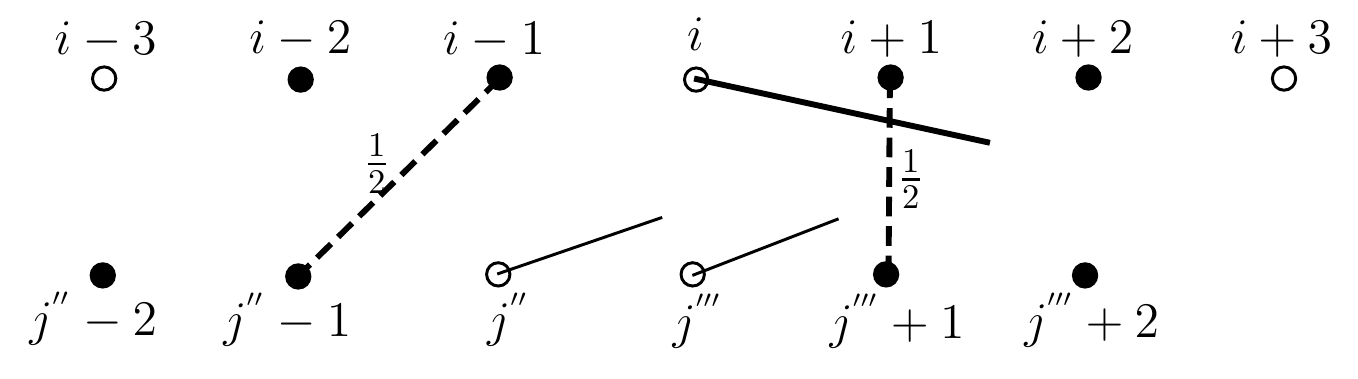} 
\caption{$|C(C^*(e_{i, \Cdot}))| = 3$ and $j''' = j'' + 1$.\label{fig416a}}
\end{subfigure}
\hspace*{\fill}
\begin{subfigure}{0.46\textwidth}
\includegraphics[width=\linewidth]{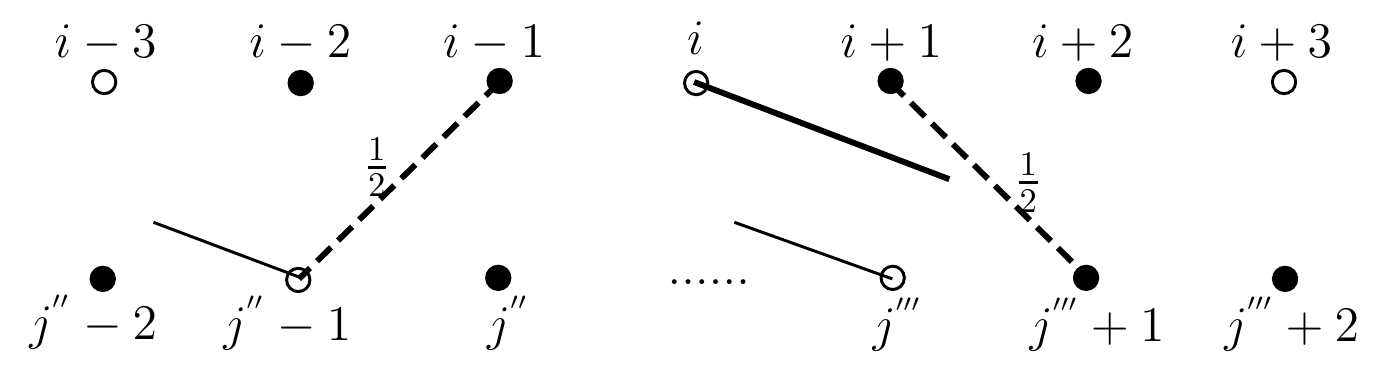} 
\caption{$|C(C^*(e_{i, \Cdot}))| = 3$ and $j''' > j'' + 1$.\label{fig416b}}
\end{subfigure}
\caption{The two possible configurations of $\mcN_p[C(C^*(e_{i, \Cdot}))]$ when $\tau(e_{i, j} \gets C^*(e_{i, \Cdot})) = \big(\frac 12, 0, \frac 12\big)$,
associated with the two possible edge combinations of $C(C^*(e_{i, \Cdot}))$ with $|C(C^*(e_{i, \Cdot}))| = 3$.
The second configuration represents the other $15$ symmetric cases exactly one of $i-2, j''-2, j''-1, j''$ is incident with an edge of $M$
and exactly one of $i+2, j''', j'''+1, j'''+2$ is incident with an edge of $M$, but the two edges are not parallel to each other.\label{fig416}}
\end{figure}
\end{enumerate}

Therefore, we have a total of $27$ configurations of $\mcN_p[C(C^*(e_{i, \Cdot}))]$ associated with all the possible edge combinations of $C(C^*(e_{i, \Cdot}))$,
up to symmetry, for further discussion.

\begin{lemma}
\label{lemma411}
When $e_{i, j}$ is a singleton edge of $M$ with $\omega(e_{i, j}) \ge 3$,
there is at least one parallel edge of $M$ in $C(C^*(e_{i, j}))$ for each of the $8$ possible value combinations of $\tau(e_{i, j} \gets C^*(e_{i, j}))$.
\end{lemma}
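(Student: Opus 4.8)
The plan is to combine the decomposition $C(C^*(e_{i, j})) = C(C^*(e_{i, \Cdot})) \cup C(C^*(e_{\Cdot, j}))$ with the configuration catalogue built in Section~\ref{sec4.3.2}. As an unordered multiset, each of the $8$ value combinations of $\tau(e_{i, j} \gets C^*(e_{i, j}))$ from Lemma~\ref{lemma404} splits into the value triple $\tau(e_{i, j} \gets C^*(e_{i, \Cdot}))$ and the value triple $\tau(e_{i, j} \gets C^*(e_{\Cdot, j}))$, each of which must be one of the $12$ half-combinations of Lemma~\ref{lemma405}. Since $\{\frac 12, \frac 12, \frac 12\}$ is \emph{not} among those $12$, I would first write down, for each of the $8$ combinations, the short list of admissible splits into two such halves. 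Because any edge of $p(M)$ lying in $C(C^*(e_{i, \Cdot}))$ or in $C(C^*(e_{\Cdot, j}))$ automatically lies in $C(C^*(e_{i, j}))$, it then suffices to show that in every admissible split at least one half already carries a parallel edge of $M$.

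The second step is to record which halves are \emph{forcing}. For the three halves $\{1, \frac 12, \frac 12\}$, $\{1, \frac 12, \frac 13\}$ and $\{\frac 12, \frac 12, \frac 13\}$, the configurations established in Section~\ref{sec4.3.2} (Figs.~\ref{fig401}, \ref{fig402}, \ref{fig403} and \ref{fig406}) all have a non-empty $\mcN_p(\cdot)$, so the underlying edge of $C(C^*(e_{i, \Cdot}))$ is parallel to an edge of $M$ and hence lies in $p(M)$. Scanning the admissible-split lists then settles combinations $1$--$5$, $7$ and $8$ at once, because every split of these seven combinations contains at least one of the three forcing halves; I would present this verification as a compact table of splits.

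The one combination left over, and the step I expect to be the crux, is combination~$6$, $\{1, \frac 12, \frac 12, \frac 12, \frac 14, \frac 14\}$. Apart from the split $\{1, \frac 12, \frac 12\} + \{\frac 12, \frac 14, \frac 14\}$, whose first half is forcing, it also admits $\{1, \frac 12, \frac 14\} + \{\frac 12, \frac 12, \frac 14\}$, where neither half has a non-empty $\mcN_p(\cdot)$. Here I would argue that the $\frac 14$-valued half nevertheless contains an \emph{internal} parallel pair. In its configuration (Fig.~\ref{fig404} for $\{1, \frac 12, \frac 14\}$ and Fig.~\ref{fig407a} for $\{\frac 12, \frac 12, \frac 14\}$) the $M^*$-edge of value $\frac 14$ is $e^*_{i+1, j'+1}$ with $|C(e^*_{i+1, j'+1})| = 4$, and among its four conflicting edges of $M$ both $e_{j'+1}$, incident at $d^B_{j'+1}$, and $e_{j'+2}$, incident at $d^B_{j'+2}$, are present. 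By the parallel remark following Observation~\ref{obs2.1}, the conflict of $e^*_{i+1, j'+1}$ incident at $d^B_{j'+1}$ must be parallel to the one incident at $d^B_{j'+2}$; hence $e_{j'+1}$ and $e_{j'+2}$ are parallel and already belong to $C(C^*(e_{i, \Cdot}))$. This furnishes the parallel edge for the last split and completes combination~$6$.

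Assembling the cases, for each of the $8$ value combinations at least one parallel edge of $M$ appears in $C(C^*(e_{i, j}))$, which is exactly the claim. Most of the work is bookkeeping, namely keeping the admissible splits aligned with the forcing halves; the only genuinely non-routine point is the internal-parallel-pair argument, via the remark after Observation~\ref{obs2.1}, that disposes of the split $\{1, \frac 12, \frac 14\} + \{\frac 12, \frac 12, \frac 14\}$ of combination~$6$.
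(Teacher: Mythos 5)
Your proposal is correct and takes essentially the same route as the paper: both reduce to the split of $\tau(e_{i,j} \gets C^*(e_{i,j}))$ into the two halves $\tau(e_{i,j} \gets C^*(e_{i,\Cdot}))$ and $\tau(e_{i,j} \gets C^*(e_{\Cdot,j}))$ and then read a parallel edge off the configuration catalogue of Section~\ref{sec4.3.2}. The only difference is organizational: the paper anchors the case analysis on the half containing the unique entry $1$ (which leaves just the five ordered combinations of Figs.~\ref{fig401}--\ref{fig405}, with $\big(\frac 13, 1, \frac 13\big)$ deflected to its complementary half $\big\{\frac 12, \frac 12, \frac 13\big\}$), whereas you enumerate admissible splits and isolate ``forcing'' halves; the same figures and the same internal parallel pair $e_{j'+1}, e_{j'+2}$ in Figs.~\ref{fig404} and \ref{fig407a} carry the argument in both versions.
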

\begin{proof}
From Lemma~\ref{lemma404}, for each of the $8$ possible value combinations of $\tau(e_{i, j} \gets C^*(e_{i, j}))$,
there is an entry $1$ in the ordered value combination of $\tau(e_{i, j} \gets C^*(e_{i, \Cdot}))$ or $\tau(e_{i, j} \gets C^*(e_{\Cdot, j}))$.
There are only $5$ possible such ordered value combinations, which are
$\big(\frac 12, 1, \frac 12\big)$, 
$\big(\frac 12, 1, \frac 13\big)$, $\big(1, \frac 12, \frac 13\big)$, 
$\big(1, \frac 12, \frac 14\big)$, and
$\big(\frac 13, 1, \frac 13\big)$.
The above Figs.~\ref{fig401}--\ref{fig404} show that for the first $4$ ordered value combinations, there is at least one parallel edge of $M$ in $C(C^*(e_{i, j}))$.

If $\big(\frac 13, 1, \frac 13\big)$ is the ordered value combination for $\tau(e_{i, j} \gets C^*(e_{i, \Cdot}))$,
then $\tau(e_{i, j} \gets C^*(e_{\Cdot, j}))$ has an ordered value combination
either $\big(\frac 12, \frac 12, \frac 13\big)$ or $\big(\frac 12, \frac 13, \frac 12\big)$.
The above Figs.~\ref{fig406} and \ref{fig409} show that there is at least one parallel edge of $M$ in $C(C^*(e_{\Cdot, j}))$.
This proves the lemma.
\end{proof}

\subsection{An upper bound on $\omega(e)$ for $e \in C(C^*(e_{i, j})) - \{e_{i, j}\}$}
\label{sec4.4}
From Lemma~\ref{lemma410}, in the sequel we always consider the case $e_{i,j}$ is a singleton edge of $M$ with $\omega(e_{i,j}) \ge 3$.

We walk through all the $27$ configurations of $\mcN_p[C(C^*(e_{i, \Cdot}))]$ to
determine an upper bound on $\omega(e)$, for any $e \in C(C^*(e_{i, \Cdot})) - \{e_{i, j}\}$.

\begin{lemma}
\label{lemma412}
For any edge $e \in C(C^*(e_{i, \Cdot})) - \{e_{i, j}\}$, $|C(e^*_{h, \ell})| \ge 2$ for all edges $e^*_{h, \ell} \in C^*(e)$,
if any one of the following five conditions holds:
\begin{enumerate}
\item
	$|C(C^*(e_{i, \Cdot}))| = |C^*(e_{i, \Cdot})| = 3$.
\item
	$|C(C^*(e_{i, \Cdot}))| = |C^*(e_{i, \Cdot})| + 1 = 4$ and
	there is an edge $e^*_{i_1, j_1} \in C^*(e_{\Cdot, j})$ such that $|C(e^*_{i_1, j_1})| = 1$.
\item
	$e \in C(e^*_{i_2, j_2})$ for some $e^*_{i_2, j_2} \in C^*(e_{i, \Cdot})$ with $|C(e^*_{i_2, j_2})| = 2$,
	and there is an edge $e^*_{i_1, j_1} \in C^*(e_{\Cdot, j})$ such that $|C(e^*_{i_1, j_1})| = 1$.
\item
	$e \in C(e^*_{i_1, j_1}) \cup C(e^*_{i_2, j_2})$ for some $e^*_{i_1, j_1}, e^*_{i_2, j_2} \in C^*(e_{i, \Cdot})$
	with $|C(e^*_{i_1, j_1}) \cup C(e^*_{i_2, j_2})| = 2$.
\item
	$e \in C(e^*_{i_2, j_2}) \cup C(e^*_{i_3, j_3})$ for some $e^*_{i_2, j_2}, e^*_{i_3, j_3} \in C^*(e_{i, \Cdot})$
	with $|C(e^*_{i_2, j_2}) \cup C(e^*_{i_3, j_3})| = 3$,
	and there is an edge $e^*_{i_1, j_1} \in C^*(e_{\Cdot, j})$ such that $|C(e^*_{i_1, j_1})| = 1$.
\end{enumerate}
And consequently, $\omega(e) \le \frac {17}6$.
\end{lemma}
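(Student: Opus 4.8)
The plan is to prove the statement in two stages: first the main assertion that every $e^*_{h,\ell}\in C^*(e)$ has $|C(e^*_{h,\ell})|\ge 2$, and then the numerical consequence $\omega(e)\le\frac{17}{6}$. I would prove the main assertion by contradiction, exploiting that $M$ admits no {\sc Replace-5-by-6} improvement (recall this operation performs every $p$-out-$(p{+}1)$-in swap for $p=1,\dots,5$), and I would obtain the consequence by a short counting argument that couples the two halves of $C^*(e)$ through Lemmas~\ref{lemma403} and~\ref{lemma406}. To set up the contradiction, suppose some $e^*_{h,\ell}\in C^*(e)$ has $|C(e^*_{h,\ell})|=1$. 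Since $e^*_{h,\ell}$ conflicts with $e$, this forces $C(e^*_{h,\ell})=\{e\}$; in particular $e^*_{h,\ell}\notin C^*(e_{i,j})$, for otherwise it would also conflict with $e_{i,j}\ne e$ and have $|C(\cdot)|\ge 2$. Thus $e^*_{h,\ell}$ is distinct from all the $M^*$-edges named in the five conditions. Throughout I use that $M\cap M^*=\emptyset$ and that all edges of $M^*$ are pairwise compatible, so the only thing to verify for each swap is that the $M$-conflicts of each incoming $M^*$-edge lie entirely inside the outgoing set.

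With that, each condition yields an expanding swap. Under Condition~1, where $|C(C^*(e_{i,\Cdot}))|=|C^*(e_{i,\Cdot})|=3$ and $e$ lies in this set, I swap the three edges of $C(C^*(e_{i,\Cdot}))$ out for the three edges of $C^*(e_{i,\Cdot})$ together with $e^*_{h,\ell}$, a $3$-out-$4$-in gain. Under Condition~2, the three edges of $C^*(e_{i,\Cdot})$ together with $e^*_{i_1,j_1}$ and $e^*_{h,\ell}$ replace the four edges of $C(C^*(e_{i,\Cdot}))$, a $4$-out-$5$-in gain (note $e_{i,j},e\in C(C^*(e_{i,\Cdot}))$). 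Under Conditions~3 and~4, where $C(e^*_{i_2,j_2})=\{e_{i,j},e\}$, respectively $C(e^*_{i_1,j_1})\cup C(e^*_{i_2,j_2})=\{e_{i,j},e\}$, I swap $\{e_{i,j},e\}$ out for the two witnessing $M^*$-edges together with $e^*_{h,\ell}$, a $2$-out-$3$-in gain. Under Condition~5, where $C(e^*_{i_2,j_2})\cup C(e^*_{i_3,j_3})$ consists of three $M$-edges including $e$, I swap these three out for $e^*_{i_2,j_2},e^*_{i_3,j_3},e^*_{i_1,j_1},e^*_{h,\ell}$, a $3$-out-$4$-in gain. In every case the conflicts of each incoming edge are among the outgoing edges and the net size increases by one, contradicting the local optimality of $M$; hence no $e^*_{h,\ell}$ with $|C(e^*_{h,\ell})|=1$ exists and the main assertion holds.

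For the consequence, the main assertion makes every summand $1/|C(e^*)|$ in $\omega(e)=\sum_{e^*\in C^*(e)}1/|C(e^*)|$ at most $\frac12$. If $|C^*(e)|\le 5$ then $\omega(e)\le\frac52<\frac{17}{6}$, so assume $|C^*(e)|=6$, in which case both $C^*(e_{h,\Cdot})$ and $C^*(e_{\Cdot,\ell})$ are triples of consecutive parallel $M^*$-edges. I would show they cannot all have $|C(\cdot)|=2$: if $C^*(e_{\Cdot,\ell})$ did, then inequality~(\ref{eq7}) of Lemma~\ref{lemma406}, applied to $C^*(e_{\Cdot,\ell})$ by the row--column symmetry, gives $|C(C^*(e_{\Cdot,\ell}))|\le 2+2-1=3=|C^*(e_{\Cdot,\ell})|$, so Condition~2 of Lemma~\ref{lemma403} holds for $e$ and forbids $C^*(e_{h,\Cdot})$ from having all three values equal to $2$. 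Hence at least one of the six $e^*$ has $|C(e^*)|\ge 3$, contributing at most $\frac13$, so $\omega(e)\le 5\cdot\frac12+\frac13=\frac{17}{6}$.

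The routine part of the argument is the case-by-case verification that each of the five swaps produces a compatible matching, which in each case reduces to the single observation that the removed set contains all $M$-conflicts of the added $M^*$-edges. The one genuinely non-obvious step, and the reason the bound is $\frac{17}{6}$ rather than the naive $6\cdot\frac12=3$, is the coupling in the last paragraph: ruling out all six values being $\frac12$ requires feeding the consequence of Lemma~\ref{lemma406} on one side of $C^*(e)$ into Condition~2 of Lemma~\ref{lemma403} on the other side.
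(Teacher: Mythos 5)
Your proof is correct and follows essentially the same route as the paper: the same five expanding swaps ($3$-out-$4$-in, $4$-out-$5$-in, $2$-out-$3$-in, $2$-out-$3$-in, $3$-out-$4$-in) contradict the local optimality of $M$ under {\sc Replace-5-by-6}, and the bound $\omega(e) \le 5 \cdot \frac 12 + \frac 13 = \frac{17}{6}$ is then deduced via Lemma~\ref{lemma403}. Your explicit coupling of inequality~(\ref{eq7}) of Lemma~\ref{lemma406} on one side of $C^*(e)$ with condition~2 of Lemma~\ref{lemma403} on the other side simply spells out the final step that the paper compresses into ``it then follows from Lemma~\ref{lemma403}.''
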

\begin{proof}
We prove by contradiction, and thus assume that there is an edge $e^*_{h, \ell} \in C^*(e)$ such that $|C(e^*_{h, \ell})| = 1$.

If the first condition holds,
then the algorithm $\mcL\mcS$ would replace the three edges of $C(C^*(e_{i, \Cdot}))$ by
$e^*_{h, \ell}$ and the three parallel edges of $C^*(e_{i, \Cdot})$ to expand $M$, a contradiction.

If the second condition holds,
then the algorithm $\mcL\mcS$ would replace the four edges of $C(C^*(e_{i, \Cdot}))$ by
$e^*_{i_1, j_1}$, $e^*_{h, \ell}$, and the three parallel edges in $C^*(e_{i, \Cdot})$ to expand $M$, a contradiction.

If the third condition holds,
then the algorithm $\mcL\mcS$ would replace the two edges of $C(e^*_{i_2, j_2})$ by $e^*_{i_2, j_2}$, $e^*_{i_1, j_1}$, $e^*_{h, \ell}$ to expand $M$,
a contradiction.

If the fourth condition holds,
then the algorithm $\mcL\mcS$ would replace the two edges of $C(e^*_{i_1, j_1}) \cup C(e^*_{i_2, j_2})$ by $e^*_{i_1, j_1}, e^*_{i_2, j_2}$, $e^*_{h, \ell}$
to expand $M$, a contradiction.

If the fifth condition holds,
then the algorithm $\mcL\mcS$ would replace the three edges of $C(e^*_{i_2, j_2}) \cup C(e^*_{i_3, j_3})$ by $e^*_{i_2, j_2}, e^*_{i_3, j_3}$, $e^*_{i_1, j_1}$,
$e^*_{h, \ell}$ to expand $M$, a contradiction.

Therefore, we proved that $|C(e^*_{h, \ell})| \ge 2$ for all edges $e^*_{h, \ell} \in C^*(e)$.
It then follows from Lemma~\ref{lemma403} that $\omega(e) \le 5 \times \frac 12 + \frac 13 = \frac {17}6$.
\end{proof}

\begin{lemma}
\label{lemma413}
For each edge $e \in C(C^*(e_{i, \Cdot})) - \{e_{i, j}\}$ in
Figs.~\ref{fig401}, \ref{fig403}, \ref{fig406a}, \ref{fig406b}, \ref{fig408a}, \ref{fig409}, \ref{fig411a}, \ref{fig411b},
\ref{fig414a}, \ref{fig415a}, \ref{fig415b}, \ref{fig416a} and \ref{fig416b},
$e_{j'-2}$ in Fig.~\ref{fig407a},
$e_{j'-1}$ in Fig.~\ref{fig408b},
$e_{j'}$ in Fig.~\ref{fig410a},
$e_{j'-1}$ in Fig.~\ref{fig411c}, and
$e_{j'}$ in Fig.~\ref{fig413},
its total amount of tokens is $\omega(e) \le \frac {17}6$.
\end{lemma}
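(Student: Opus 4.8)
The plan is to obtain every bound in the statement as a direct corollary of Lemma~\ref{lemma412}: for each edge $e$ named in the lemma I would verify that the configuration recorded in the corresponding figure forces $e$ to satisfy at least one of the five conditions of Lemma~\ref{lemma412}, whereupon the promised bound $\omega(e) \le \frac{17}{6}$ follows at once. The whole argument thus reduces to a figure-by-figure bookkeeping check of which condition is triggered; no new estimate of $\omega(e)$ is required, so I would not re-derive the chain $5 \times \frac 12 + \frac 13 = \frac{17}{6}$ already established in Lemma~\ref{lemma412}.

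I would organize the figures by the values $|C(C^*(e_{i, \Cdot}))|$ and $|C^*(e_{i, \Cdot})|$ read from their captions. For Figs.~\ref{fig401} and \ref{fig403} one has $|C(C^*(e_{i, \Cdot}))| = |C^*(e_{i, \Cdot})| = 3$, so the first condition of Lemma~\ref{lemma412} holds for \emph{every} $e \in C(C^*(e_{i, \Cdot})) - \{e_{i,j}\}$. For Figs.~\ref{fig406a}, \ref{fig406b}, \ref{fig408a}, \ref{fig409}, \ref{fig411a}, \ref{fig411b} and \ref{fig414a} one has $|C(C^*(e_{i, \Cdot}))| = 4$ with $|C^*(e_{i, \Cdot})| = 3$; since the ordered combination of $\tau(e_{i, j} \gets C^*(e_{i, \Cdot}))$ carries no entry $1$ in any of these cases, the reminder preceding the enumeration guarantees a degree-one edge $e^*_{i_1, j_1} \in C^*(e_{\Cdot, j})$, and the second condition applies to every such $e$.

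The remaining entries single out one edge each and are handled through the third condition. In Figs.~\ref{fig415a}, \ref{fig415b}, \ref{fig416a} and \ref{fig416b} the combination is $\big(\frac 12, \frac 12, 0\big)$ or $\big(\frac 12, 0, \frac 12\big)$, so $|C^*(e_{i, \Cdot})| = 2$ and both surviving edges of $C^*(e_{i, \Cdot})$ have conflict-degree exactly $2$; by Lemma~\ref{lemma404} the complementary triple $\tau(e_{i, j} \gets C^*(e_{\Cdot, j}))$ must contain a $1$, i.e. a degree-one edge in $C^*(e_{\Cdot, j})$, and since any $e \in C(C^*(e_{i, \Cdot}))$ conflicts with one of the two degree-two edges, the third condition holds. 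For $e_{j'-2}$ in Fig.~\ref{fig407a}, $e_{j'-1}$ in Fig.~\ref{fig408b}, $e_{j'}$ in Fig.~\ref{fig410a}, $e_{j'-1}$ in Fig.~\ref{fig411c} and $e_{j'}$ in Fig.~\ref{fig413}, the ordered combination again lacks an entry $1$, so $C^*(e_{\Cdot, j})$ supplies a degree-one edge, while the single $\frac 12$ entry marks a degree-two edge of $C^*(e_{i, \Cdot})$. The step I expect to be the most delicate is simply confirming, from the column positions drawn in each figure, that the named edge genuinely lies in the conflict set of that degree-two edge (for instance $e_{j'-2}$ conflicting with $e^*_{i-1, j'-1}$, and $e_{j'}$ with $e^*_{i, j'}$); once this incidence is read off correctly, the third condition applies and Lemma~\ref{lemma412} closes each case.
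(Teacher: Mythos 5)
Your proposal is correct and follows essentially the same route as the paper: the paper's proof of this lemma is precisely the observation that at least one of the five conditions of Lemma~\ref{lemma412} applies to each listed edge (illustrated there with only one worked example, Fig.~\ref{fig401}), and your figure-by-figure assignment of conditions 1, 2 and 3 carries out that bookkeeping correctly, including the use of the ``no entry $1$ in $\tau(e_{i,j} \gets C^*(e_{i,\Cdot}))$ forces a degree-one edge in $C^*(e_{\Cdot,j})$'' reminder. The only cosmetic slip is the phrase ``the single $\frac 12$ entry'' for Fig.~\ref{fig407a}, whose combination $\big(\frac 12,\frac 12,\frac 14\big)$ has two such entries, but the relevant degree-two edge $e^*_{i-1,j'-1}$ containing $e_{j'-2}$ in its conflict set is still identified correctly, so the argument stands.
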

\begin{proof}
At least one of the five conditions in Lemma~\ref{lemma412} applies to each of these edges.
For example, in Fig.~\ref{fig401},
for the edge $e_{j'-2}$, the fourth condition of Lemma~\ref{lemma412} holds by setting $(i_1, j_1) := (i-1, j'-1)$ and $(i_2, j_2) := (i, j')$;
for the edge $e_{i+2}$, the fourth condition of Lemma~\ref{lemma412} holds by setting $(i_1, j_1) := (i, j')$ and $(i_2, j_2) := (i+1, j'+1)$.
\end{proof}

\begin{lemma}
\label{lemma414}
For both the edges $e_{i+q}, e_{j'+q} \in C(C^*(e_{i, \Cdot}))$ shown in
Figs.~\ref{fig402}, \ref{fig404}, \ref{fig405}, \ref{fig407a}, \ref{fig408b}, \ref{fig408c}, \ref{fig410b} and \ref{fig411c},
for some $q = 2$ or $-2$,
the total amount of tokens for each of them is at most $\frac {35}{12}$.
\end{lemma}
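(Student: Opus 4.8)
The plan is to prove the bound for $e_{i+q}$ and to deduce it for $e_{j'+q}$ by the reflection transposing $D^A$ and $D^B$, under which the consecutive-parallel triple $C^*(e_{i, \Cdot})$ and its outer edge $e^*_{i+1, j'+1}$ keep their roles while $e_{i+2}$ and $e_{j'+2}$ are exchanged; so I would fix $q = 2$ and bound $\omega(e_{i+2})$. The common feature of all eight figures is that $|C^*(e_{i, \Cdot})| = 3$, so $C^*(e_{i, \Cdot}) = \{e^*_{i-1, j'-1}, e^*_{i, j'}, e^*_{i+1, j'+1}\}$ is a full consecutive-parallel triple, and its outer edge $e^*_{i+1, j'+1}$ satisfies $|C(e^*_{i+1, j'+1})| \ge 3$, since it conflicts with the three distinct edges $e_{i,j}$, $e_{i+2}$, $e_{j'+2}$. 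As $d^A_{i+2}$ is adjacent to $d^A_{i+1}$ and $e_{i+2}$ is neighboring, not parallel, to $e^*_{i+1, j'+1}$, this edge lies in $C^*(e_{i+2})$ and contributes a token of at most $\frac 13$ to $\omega(e_{i+2})$.

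I would then split on whether $e_{i+2}$ receives a token equal to $1$. If it does not, every value in $\tau(e_{i+2} \gets C^*(e_{i+2}))$ is at most $\frac 12$ and, by the previous paragraph, at least one of them is at most $\frac 13$; hence $\omega(e_{i+2}) \le 5 \cdot \frac 12 + \frac 13 = \frac{17}{6} < \frac{35}{12}$, and there is nothing more to do. All the difficulty is thus confined to the case in which some $e^*_1 \in C^*(e_{i+2})$ has $|C(e^*_1)| = 1$, that is, $e^*_1$ conflicts with $e_{i+2}$ alone; by Lemma~\ref{lemma401} this edge is unique. Here the crude estimate $1 + \frac 13 + 4 \cdot \frac 12 = \frac{10}{3}$ overshoots $\frac{35}{12}$, so the four tokens other than the $1$ and the far-edge $\frac 13$ must be forced to total at most $\frac{19}{12}$.

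The mechanism for that is local optimality fed by the free parallel triple. The three edges of $C^*(e_{i, \Cdot})$ are pairwise compatible and, together with $e^*_1$, compatible with all of $M - C(C^*(e_{i, \Cdot}))$, because the only $M$-edge conflicting with $e^*_1$ is $e_{i+2}$, which already lies in $C(C^*(e_{i, \Cdot}))$. Thus $\{e^*_{i-1, j'-1}, e^*_{i, j'}, e^*_{i+1, j'+1}, e^*_1\}$ is a candidate replacement for $C(C^*(e_{i, \Cdot}))$: when $|C(C^*(e_{i, \Cdot}))| \le 3$ this strictly enlarges $M$ and {\sc Replace-5-by-6} (in its $p \le 5$ sub-case) would have fired, and when $|C(C^*(e_{i, \Cdot}))|$ equals the size of the replacement it is singleton-reducing---the removed set contains the singleton $e_{i,j}$ whereas the three chain edges added are parallel---so {\sc Reduce-5-by-5} would have fired. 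Either outcome contradicts local optimality of $M$. Augmenting this swap with a second small-conflict edge of $C^*(e_{i+2})$, exactly in the style of Lemma~\ref{lemma412}, rules out one by one the configurations in which too many of the remaining four tokens equal $\frac 12$; what survives forces at least one of them down to $\frac 14$ or $\frac 16$ and yields $\omega(e_{i+2}) \le \frac{35}{12}$.

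I expect the genuine labor, and the main obstacle, to be that the location of the unique conflict edge $e^*_1$ is not determined a priori: it may be incident at $d^A_{i+3}$, at $d^B_{\ell-1}$ or $d^B_{\ell+1}$ (where $d^B_\ell$ is the $D^B$-endpoint of $e_{i+2}$), or at $d^A_{i+2}$ or $d^B_\ell$ themselves, and both which operation produces the contradiction and how many $\frac 12$-tokens can remain depend on this location together with the singleton/parallel pattern already recorded in each of Figs.~\ref{fig402}, \ref{fig404}, \ref{fig405}, \ref{fig407a}, \ref{fig408b}, \ref{fig408c}, \ref{fig410b} and \ref{fig411c}. I would therefore carry the argument out figure by figure, in each reading off $|C(C^*(e_{i, \Cdot}))|$ and the incidences around $e_{i+2}$ and verifying that the swap the algorithm would have performed is genuinely improving; the careful bookkeeping, rather than any single idea, is the crux.
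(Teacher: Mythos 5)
Your easy case is sound: since $e^*_{i+1,j'+1}$ (or $e^*_{i-1,j'-1}$ for $q=-2$) conflicts with the three distinct edges $e_{i,j}$, $e_{i+q}$, $e_{j'+q}$, it contributes at most $\frac13$ to $\omega(e_{i+q})$, and if no edge of $C^*(e_{i+q})$ conflicts with $e_{i+q}$ alone then $\omega(e_{i+q})\le 5\cdot\frac12+\frac13=\frac{17}{6}$. But the hard case, where some $e^*_1$ has $C(e^*_1)=\{e_{i+q}\}$, is where the lemma actually lives, and your mechanism for it fails on three counts. First, in six of the eight figures (Figs.~\ref{fig405}, \ref{fig407a}, \ref{fig408b}, \ref{fig408c}, \ref{fig410b}, \ref{fig411c}) we have $|C(C^*(e_{i,\Cdot}))|\in\{5,6\}$, so your four-edge replacement $C^*(e_{i,\Cdot})\cup\{e^*_1\}$ is smaller than the removed set and neither operation applies; the fallback of ``augmenting in the style of Lemma~\ref{lemma412}'' is exactly what is unavailable here --- these paired edges are excluded from Lemma~\ref{lemma413} precisely because none of the five conditions of Lemma~\ref{lemma412} holds for them. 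Second, in the two figures where the swap is $4$-for-$4$ (Figs.~\ref{fig402} and \ref{fig404}), the claim that it is singleton-reducing is unjustified: removing $e_{j'-2}$, $e_{i+2}$, $e_{j'+2}$ can turn their parallel partners in $M-C(C^*(e_{i,\Cdot}))$ into new singletons, and $e^*_1$ may enter as a singleton, so the net singleton count can increase (the paper tracks $\mcN_p(\cdot)$ in every configuration precisely to control this). Indeed the token-of-$1$ case cannot be contradicted away at all: $\frac{35}{12}$ is the value of the surviving combination $\big\{1,\frac12,\frac12,\frac13,\frac13,\frac14\big\}$, and Lemma~\ref{lemma420} later works under the explicit hypothesis that such an $e^*_1$ exists. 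Third, your endgame arithmetic does not close: with a $1$ and a $\frac13$ already present, forcing ``one of the remaining tokens down to $\frac14$ or $\frac16$'' gives $1+\frac13+3\cdot\frac12+\frac14=\frac{37}{12}$ or $1+\frac13+3\cdot\frac12+\frac16=3$, both above $\frac{35}{12}$; you would need the remaining four tokens to sum to at most $\frac{19}{12}$, i.e., at least two of them strictly below $\frac12$.

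The paper's proof takes a route you have not reproduced. It shows $\omega(e_{i+q})<3$ not by bounding individual tokens but by showing that $e_{i+q}$ cannot itself play the role of $e_{i,j}$ in any of the $27$ classified configurations: if $e_{i+q}$ is parallel this is immediate from Lemma~\ref{lemma410}, and if it is a singleton the presence of its partner $e_{j'+q}$ narrows the candidates to $7$ configurations, each of which is then killed by the vacancies and forced incidences inherited from the host configuration around $e_{i,j}$. A discreteness argument finishes: the only sums of at most six values from $\big\{1,\frac12,\frac13,\frac14,\frac15,\frac16\big\}$ lying in $\big[\frac{35}{12},3\big)$ are $\frac{59}{20}$ and $\frac{35}{12}$, and $\frac{59}{20}$ is eliminated via the ordered-combination constraints of Lemmas~\ref{lemma402} and \ref{lemma406}. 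To repair your proposal you would essentially have to import that classification argument; the direct swap-based token forcing, as sketched, does not reach $\frac{35}{12}$.
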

\begin{proof}
Consider the edge $e_{i+q}$.
If it is a parallel edge of $M$, then it simply cannot fit into any of the $27$ configurations shown in Figs. \ref{fig401}--\ref{fig416},
in which the edge $e_{i,j}$ is a singleton edge of $M$.
(By ``fitting into'' it means the edge $e_{i+q}$ takes up the role of the edge $e_{i,j}$ in the configuration.)
If $e_{i+q}$ is a singleton edge of $M$, we show next that due to the existence of the paired edge $e_{j'+q} \in M$, 
$e_{i+q}$ cannot fit into any of the $27$ configurations shown in Figs. \ref{fig401}--\ref{fig416} either.
This is done by using the edge combinations of $C(C^*(i, \Cdot))$ and the existence of certain edges in $\mcN_p(C(C^*(e_{i, \Cdot}))$.

In more details, we first see that $e_{i+q}$ can only possibly fit into $7$ of the $27$ configurations shown in
Figs.~\ref{fig408a}, \ref{fig410a}, \ref{fig411a}, \ref{fig411b}, \ref{fig412}, \ref{fig413} and \ref{fig414a}, due to the existence of the edge $e_{j'+q} \in M$.
Next, if it were fit in any of them, then in the fitted configuration there is an edge $e_{i-2} \in M$ but
none of the five edges $e_{i-3}, e_{j'-3}, e_{j'-2}, e_{i-1}, e_{j'-1}$ can be in $M$.
This last requirement rules out Fig.~\ref{fig408a} due to $e_{j'-3}, e_{i+3} \in \mcN_p(C(C^*(e_{i, \Cdot})))$;
it rules out Fig.~\ref{fig410a} due to $e_{j'-3} \in \mcN_p(C(C^*(e_{i, \Cdot})))$;
it rules out Figs.~\ref{fig411a}, \ref{fig411b} and \ref{fig412} due to $e_{j'+1} \in C(C^*(e_{i, \Cdot}))$ but
none of $e_{i-2}, e_{j'-2}$ is in $C(C^*(e_{i, \Cdot}))$;
it rules out Fig.~\ref{fig413} due to $e_{i-2}, e_{j'-2}, e_{i+2}, e_{j'+2} \in C(C^*(e_{i, \Cdot}))$;
and it rules out Fig.~\ref{fig414a} due to $e_{j'-1} \in C(C^*(e_{i, \Cdot}))$ and $e_{i+3} \in \mcN_p(C(C^*(e_{i, \Cdot})))$.

Therefore, $\omega(e_{i+q}) < 3$.

Using at most six values from $\{1, \frac 12, \frac 13, \frac 14, \frac 15, \frac 16\}$,
the sum closest but less than $3$ is $1 + \frac 12 + \frac 12 + \frac 12 + \frac 14 + \frac 15 = \frac {59}{20}$.
In order for $\tau(e_{i, j} \gets C^*(e_{i, j}))$ to have a value combination $\big\{1, \frac 12, \frac 12, \frac 12, \frac 14, \frac 15\big\}$,
Lemma \ref{lemma402} says that the value combinations for $\tau(e_{i, j} \gets C^*(e_{i, \Cdot}))$ and $\tau(e_{i, j} \gets C^*(e_{\Cdot, j}))$ are
$\big\{1, \frac 12, \frac 12\big\}$ and $\big\{\frac 12, \frac 14, \frac 15\big\}$.
Furthermore, Lemmas \ref{lemma402} and \ref{lemma406} together state that
the subsequent ordered value combinations are $\big(\frac 12, 1, \frac 12\big)$ and $\big(\frac 12, \frac 14, \frac 15\big)$.
However, $\big(\frac 12, 1, \frac 12\big)$ requires $e_{i, j}$ to be a singleton edge of $M$,
while $\big(\frac 12, \frac 14, \frac 15\big)$ implies $e_{i, j}$ is a parallel edge of $M$,
a contradiction.

The second closest sum to $3$ is $\frac {35}{12}$,
that is the sum of the value combinations $\big\{1, \frac 12, \frac 12, \frac 12, \frac 14, \frac 16\big\}$
(which can be ruled out similarly as in the last paragraph) and $\big\{1, \frac 12, \frac 12, \frac 13, \frac 13, \frac 14\big\}$.
Therefore, $\omega(e_{i+q}) \le \frac{35}{12}$.
\end{proof}

\begin{lemma}
\label{lemma415}
For each edge $e \in C(C^*(e_{i, j})) - \{e_{i, j}\}$ with $\omega(e_{i, j}) \ge 3$,
we have $\omega(e) \le \frac{35}{12}$, except for the following two cases where we have $\omega(e_{i, j}) = 3$:
\begin{enumerate}
\item
	in the configuration shown in Fig.~\ref{fig414b},
	it is possible to have either $\omega(e_{j'-1}) = 3$ (when $|\mcN_p(e_{i-2})| \ge 1$) or $\omega(e_{j'+1}) = 3$ (when $|\mcN_p(e_{i+2})| \ge 1$), but not both;
\item
	in the configuration shown in Fig.~\ref{fig414c},
	it is possible to have either $\omega(e_{j'-1}) = 3$ or $\omega(e_{i-2}) = 3$ (when $|\mcN_p(e_{j'-3})| \ge 1$), but not both.
\end{enumerate}
\end{lemma}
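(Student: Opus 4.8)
The plan is to funnel every edge into the \emph{fitting argument} already developed in the proof of Lemma~\ref{lemma414}, reserving a separate treatment only for the two genuinely exceptional edges. Throughout I would use that, by Lemma~\ref{lemma410}, the central edge $e_{i,j}$ is a singleton of $M$, so that the $27$ configurations of Figs.~\ref{fig401}--\ref{fig416} are exhaustive, and that by the $A$/$B$ symmetry (interchanging the roles of the $i$- and $j$-subscripts) it suffices to bound $\omega(e)$ for $e \in C(C^*(e_{i, \Cdot})) - \{e_{i, j}\}$; the edges of $C(C^*(e_{\Cdot, j}))$ are then covered by the mirror-image analysis.

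First I would clear away the edges already handled: for every edge listed in Lemma~\ref{lemma413} I quote $\omega(e) \le \frac{17}{6} < \frac{35}{12}$, and for the distance-two pairs $e_{i+q}, e_{j'+q}$ of Lemma~\ref{lemma414} I quote $\omega(e) \le \frac{35}{12}$ directly. This leaves a finite list of remaining edges, scattered across Figs.~\ref{fig402}, \ref{fig404}, \ref{fig405}, \ref{fig407a}, \ref{fig408b}, \ref{fig408c}, \ref{fig410a}, \ref{fig410b}, \ref{fig411c}, \ref{fig412}, \ref{fig413} and \ref{fig414b}--\ref{fig414d}, which I would process one configuration at a time.

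The engine for the remaining edges is the following reformulation of Lemma~\ref{lemma414}. If a remaining edge $e$ had $\omega(e) \ge 3$, then $e$ would itself occupy the role of the central singleton edge of one of the $27$ configurations, with the edges of $M$ around it matching that pattern. But the current configuration around $e_{i,j}$ already fixes the incidence of $M$ at the relevant vertices---both through the composition of $C(C^*(e_{i, \Cdot}))$ and through the forced members of $\mcN_p(C(C^*(e_{i, \Cdot})))$---and I would check that this forced structure is incompatible with every one of the $27$ patterns. That gives $\omega(e) < 3$; ruling out the sole intermediate value $\frac{59}{20} = 1 + \frac12 + \frac12 + \frac12 + \frac14 + \frac15$ exactly as in Lemma~\ref{lemma414} (its ordered split forces $e$ to be simultaneously singleton and parallel), I conclude $\omega(e) \le \frac{35}{12}$.

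The two exceptions are precisely where this reformulation breaks down, and they sit inside the value combination $\big\{1, \frac12, \frac12, \frac13, \frac13, \frac13\big\}$, so $\omega(e_{i,j}) = 3$ there. In Fig.~\ref{fig414b} each of the symmetric singletons $e_{j'-1}$ and $e_{j'+1}$ genuinely \emph{does} fit a $\big(\frac13, 1, \frac13\big)$ central pattern---the former once $e_{i-2}$ has a parallel neighbour, i.e.\ $|\mcN_p(e_{i-2})| \ge 1$, and the latter once $|\mcN_p(e_{i+2})| \ge 1$---yielding $\omega(\cdot) = 3$; the analogous pair in Fig.~\ref{fig414c} is $e_{j'-1}$ and $e_{i-2}$, each under its own parallel-neighbour condition. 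For the ``but not both'' claim I would argue by the termination of $\mcL\mcS$: were both witnesses to attain $\omega(\cdot) = 3$, the two required parallel neighbours would have to coexist, and the enlarged edge set would then admit an improving {\sc Replace-5-by-6} swap, or a singleton-reducing {\sc Reduce-5-by-5} swap, contradicting that $M$ is a local optimum. I expect this exclusivity argument---identifying the exact shared structure of the two exceptions and proving that the two $\omega = 3$ witnesses cannot occur together---to be the main obstacle, since it is the one place where no uniform $\frac{35}{12}$ bound is available and the surplus must be carried into the later amortization.
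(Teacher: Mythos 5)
Your proposal follows essentially the same route as the paper's proof: discharge the edges already settled by Lemmas~\ref{lemma413} and \ref{lemma414}, then run the ``can this edge itself fit one of the $27$ central-singleton configurations'' argument on each remaining edge (including the same elimination of the intermediate value $\frac{59}{20}$ to drop from $<3$ to $\le \frac{35}{12}$), and finally isolate the two exceptions in Figs.~\ref{fig414b} and \ref{fig414c}, whose mutual exclusivity is obtained exactly as in the paper by exhibiting an improving swap if both witnesses had $\omega(\cdot)=3$. One minor inaccuracy: the exceptional edges attain $\omega(\cdot)=3$ by fitting the $\big(\frac 13, \frac 13, \frac 13\big)$ configuration of Fig.~\ref{fig414c} (paired with $\big(\frac 12, 1, \frac 12\big)$ on the other side of the value combination $\big\{1, \frac 12, \frac 12, \frac 13, \frac 13, \frac 13\big\}$), not a $\big(\frac 13, 1, \frac 13\big)$ central pattern, but this does not change the structure of the argument.
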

\begin{proof}
Recall that Lemma~\ref{lemma413} settled all the edges of $C(C^*(e_{i, j})) - \{e_{i, j}\}$ in
Figs.~\ref{fig401}, \ref{fig403}, \ref{fig406a}, \ref{fig406b}, \ref{fig408a}, \ref{fig409}, \ref{fig411a}, \ref{fig411b},
\ref{fig414a}, \ref{fig415a}, \ref{fig415b}, \ref{fig416a} and \ref{fig416b},
$e_{j'-2}$ in Fig.~\ref{fig407a},
$e_{j'-1}$ in Fig.~\ref{fig408b},
$e_{j'}$ in Fig.~\ref{fig410a},
$e_{j'-1}$ in Fig.~\ref{fig411c}, and
$e_{j'}$ in Fig.~\ref{fig413};
Lemma~\ref{lemma414} settled all the paired edges $e_{i+q}, e_{j'+q} \in C(C^*(e_{i, \Cdot}))$ in
Figs.~\ref{fig402}, \ref{fig404}, \ref{fig405}, \ref{fig407a}, \ref{fig408b}, \ref{fig408c}, \ref{fig410b} and \ref{fig411c},
for some $q = 2$ or $-2$, and all the edges known to be parallel, including
$e_{j'-2}$ in Fig.~\ref{fig402},
$e_{j'+1}$ in Fig.~\ref{fig404},
$e_{j'+1}$ in Fig.~\ref{fig407a},
$e_{i-2}$ in Fig.~\ref{fig408b},
$e_{j'-1}$ in Fig.~\ref{fig408c},
$e_{j'-2}$ in Fig.~\ref{fig410a},
$e_{j'+1}$ in Fig.~\ref{fig410b},
$e_{j'+1}$ in Fig.~\ref{fig411c},
$e_{j'}, e_{j'+1}, e_{j'+2}$ in Fig.~\ref{fig412},
$e_{j'+1}, e_{j'+2}$ in Fig.~\ref{fig414c}, and
$e_{j'-2}, e_{j'-1}, e_{j'+1}, e_{j'+2}$ in Fig.~\ref{fig414d}.

We therefore are left to prove the lemma for the edges not known to be parallel in Figs.~\ref{fig410a}, \ref{fig412}, \ref{fig413}, \ref{fig414b} and \ref{fig414c}.
We deal with them separately in the following.

\begin{enumerate}
\item
	The edges $e_{i+2}, e_{j'+2}$ in Fig.~\ref{fig410a} and the edges $e_{i-2}, e_{j'-2}, e_{i+2}, e_{j'+2}$ in Fig.~\ref{fig413}, which can be settled the same.

Consider the edge $e_{i+2}$, which can potentially fit into the configuration in Fig.~\ref{fig410a} or Fig.~\ref{fig413}.
In either case, there is an edge $e^*_{i_1, j_1} \in C^*(e_{\Cdot,j})$ such that $|C(e^*_{i_1, j_1})| = 1$ and
there is an edge $e^*_{h_1, \ell_1} \in C^*(e_{i+2})$ such that $|C(e^*_{h_1, \ell_1})| = 1$.
Then the algorithm $\mcL\mcS$ would replace the four edges $e_{i, j}$, $e_{j'}$, $e_{i+2}$, $e_{j'+2}$ by
the five edges $e^*_{i, j'}$, $e^*_{i+1, j'+1}$, $e^*_{i+2, j'+2}$, $e^*_{i_1, j_1}$, $e^*_{h_1, \ell_1}$ to expand $M$, a contradiction.
In summary, $e_{i+2}$ cannot fit into any of the $27$ configurations shown in Figs. \ref{fig401}--\ref{fig416} and thus $\omega(e_{i+2}) \le \frac {35}{12}$.

\item
	The edge $e_{i+2}$ in Fig.~\ref{fig412}.

If $e_{i+2}$ is to fit in, then it can fit only into the configuration in Fig.~\ref{fig412}.
This suggests that $C(C^*(e_{i+2, \Cdot})) = C(C^*(e_{i, \Cdot}))$.
Since there is an edge $e^*_{i_1, j_1} \in C^*(e_{\Cdot, j})$ such that $|C(e^*_{i_1, j_1})| = 1$ and
there is an edge $e^*_{h_1, \ell_1} \in C^*(e_{i+2})$ such that $|C(e^*_{h_1, \ell_1})| = 1$,
the algorithm $\mcL\mcS$ would replace the five edges of $C(C^*(e_{i, \Cdot}))$ by
any six edges from $\big\{e^*_{i-1, j'-1}, e^*_{i, j'}, e^*_{i+1, j'+1}, e^*_{i+2, j'+2}, e^*_{i+3, j'+3}, e^*_{i_1, j_1}, e^*_{h_1, \ell_1}\big\}$
to expand $M$, a contradiction.
In summary, $e_{i+2}$ cannot fit into any of the $27$ configurations shown in Figs. \ref{fig401}--\ref{fig416} and thus $\omega(e_{i+2}) \le \frac {35}{12}$.

\item
	The edges $e_{i-2}$ and $e_{i+2}$ in Fig.~\ref{fig414b}, which can be settled the same.

Consider the edge $e_{i-2}$, which can potentially fit into the configuration in Fig.~\ref{fig414b} or Fig.~\ref{fig414c}.
In either case, all the four edges $e_{i-2}$, $e_{j'-1}$, $e_{i, j}$, $e_{j'+1}$ are singleton edges of $M$,
and there is an edge $e^*_{i_1, j_1} \in C^*(e_{\Cdot, j})$ such that $|C(e^*_{i_1, j_1})| = 1$ and
there is an edge $e^*_{h_1, \ell_1} \in C^*(e_{i-2})$ such that $|C(e^*_{h_1, \ell_1})| = 1$.
Then the algorithm $\mcL\mcS$ would replace these four singleton edges of $M$ by
the edges $e^*_{i_1, j_1}, e^*_{h_1, \ell_1}$ and the two parallel edges $e^*_{i-1, j'-1}, e^*_{i, j'}$ to reduce the singleton edges of $M$, a contradiction.
In summary, $e_{i-2}$ cannot fit into any of the $27$ configurations shown in Figs. \ref{fig401}--\ref{fig416} and thus $\omega(e_{i-2}) \le \frac {35}{12}$.

\item
	The edges $e_{j'-1}$ and $e_{j'+1}$ in Fig.~\ref{fig414b}, which can be settled the same.

Consider the edge $e_{j'-1}$, which can potentially fit into the configuration in Fig.~\ref{fig414b} or Fig.~\ref{fig414c}.
If $e_{j'-1}$ fits into the configuration in Fig.~\ref{fig414b},
then the same as in the last case the algorithm $\mcL\mcS$ would be able to reduce the singleton edges of $M$, a contradiction.
If $e_{j'-1}$ fits into the configuration in Fig.~\ref{fig414c}, then the edge $e_{i-2}$ is a parallel edge of $M$.
From $\tau(e_{i, j} \gets C^*(e_{i, \Cdot})) = \big(\frac 13, \frac 13, \frac 13\big)$,
we conclude that $\omega(e_{i, j}) \le 3$, and consequently $\omega(e_{i, j}) = 3$ and $\omega(e_{j'-1}) = 3$.

It is easy to see that we cannot have both $\omega(e_{j'-1}) = \omega(e_{j'+1}) = 3$,
since otherwise the algorithm $\mcL\mcS$ would be able to expand $M$ by swapping out the five edges of $C(C^*(e_{i, \Cdot}))$, a contradiction.

In summary, we have either $\omega(e_{j'-1}) \le \frac {35}{12}$ or $\omega(e_{j'-1}) = 3$,
the latter of which implies $|\mcN_p(e_{i-2})| \ge 1$ and it is the first case stated in the lemma.

\item
	The edge $e_{i-2}$ in Fig.~\ref{fig414c}.

If $e_{i-2}$ is to fit in, then it can fit only into the configuration in Fig.~\ref{fig414b} or Fig.~\ref{fig414c}.
If $e_{i-2}$ fits into the configuration in Fig.~\ref{fig414b},
then all the four edges $e_{i, j}$, $e_{j'-1}$, $e_{i-2}$, $e_{j'-3}$ are singleton edges of $M$.
Since there is an edge $e^*_{i_1, j_1} \in C^*(e_{\Cdot, j})$ such that $|C(e^*_{i_1, j_1})| = 1$ and
there is an edge $e^*_{h_1, \ell_1} \in C^*(e_{i-2})$ such that $|C(e^*_{h_1, \ell_1})| = 1$,
the algorithm $\mcL\mcS$ would replace these four singleton edges of $M$ by
the edges $e^*_{i_1, j_1}, e^*_{h_1, \ell_1}$ and the two parallel edges $e^*_{i-1, j'-1}, e^*_{i-2, j'-2}$ to reduce the singleton edges, a contradiction.
If $e_{i-2}$ fits into the configuration in Fig.~\ref{fig414c}, then the edge $e_{j'-3}$ is a parallel edge of $M$.
From $\tau(e_{i, j} \gets C^*(e_{i, \Cdot})) = \big(\frac 13, \frac 13, \frac 13\big)$,
we conclude that $\omega(e_{i, j}) \le 3$, and consequently $\omega(e_{i, j}) = 3$ and $\omega(e_{i-2}) = 3$.
In summary, we have either $\omega(e_{i-2}) \le \frac {35}{12}$ or $\omega(e_{i-2}) = 3$, the latter of which implies $|\mcN_p(e_{j'-3})| \ge 1$.

\item
	The edge $e_{j'-1}$ in Fig.~\ref{fig414c}.

If $e_{j'-1}$ is to fit in, then it can fit only into the configuration in Fig.~\ref{fig414b} or Fig.~\ref{fig414c}.
If $e_{j'-1}$ fits into the configuration in Fig.~\ref{fig414b}, then the edge $e_{i-2}$ is a singleton edge of $M$.
If $e_{j'-1}$ fits into the configuration in Fig.~\ref{fig414c}, then the edge $e_{i-2}$ is a parallel edge of $M$.
From $\tau(e_{i, j} \gets C^*(e_{i, \Cdot})) = \big(\frac 13, \frac 13, \frac 13\big)$,
we conclude that $\omega(e_{i, j}) \le 3$, and consequently $\omega(e_{i, j}) = 3$ and $\omega(e_{j'-1}) = 3$.
In summary, we have either $\omega(e_{j'-1}) \le \frac {35}{12}$ or $\omega(e_{j'-1}) = 3$.
\end{enumerate}
It is also easy to see that we cannot have both $\omega(e_{j'-1}) = \omega(e_{i-2}) = 3$ in the last two items,
since otherwise the algorithm $\mcL\mcS$ would be able to expand $M$ by swapping out the five edges of $C(C^*(e_{i, \Cdot}))$, a contradiction.
This is the second case stated in the lemma.
We have proved the lemma.
\end{proof}

\subsection{An upper bound on $\omega(e)$ for $e \in C(C^*(e_{i, j}))$ known to be parallel}
\label{sec4.5}
In this section, we provide a better upper bound on the total amount of tokens received by an edge of $C(C^*(e_{i, j}))$
that is {\em known} to be parallel, for example, in Fig.~\ref{fig402} the edge $e_{j'-2}$ is known parallel but the edge $e_{i+2}$ is not.
Also, from Lemma~\ref{lemma415}, in Fig.~\ref{fig414b} it is possible to have $\omega(e_{j'-1}) = 3$ when $|\mcN_p(e_{i-2})| \ge 1$;
we therefore consider the edge $e_{i-2}$ to be parallel too.
For the same reason, we consider the edge $e_{i+2}$ in Fig.~\ref{fig414b} to be parallel.

\begin{lemma}
\label{lemma416}
For each parallel edge $e \in C(C^*(e_{i, \Cdot}))$ with $\omega(e_{i,j}) \ge 3$,
we have $|C(e^*_{h,\ell})| \ge 2$ for all $e^*_{h,\ell} \in C^*(e)$, except for the following edges:
\begin{enumerate}
\item
	the edge $e_{j'+2}$ in Figs.~\ref{fig404}, \ref{fig407a}, \ref{fig410b}, \ref{fig411c} and \ref{fig412},
\item
	the edges $e_{i-2}, e_{i+2}$ in Fig.~\ref{fig414b},
\item
	the edges $e_{j'+1}, e_{j'+2}$ in Fig.~\ref{fig414c}, and
\item
	the edges $e_{j'-1}, e_{j'-2}, e_{j'+1}, e_{j'+2}$ in Fig.~\ref{fig414d}.
\end{enumerate}
\end{lemma}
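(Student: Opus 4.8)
The plan is to argue by contradiction. Suppose $e \in C(C^*(e_{i, \Cdot}))$ is a parallel edge of $M$ (with $\omega(e_{i, j}) \ge 3$) and suppose, contrary to the claim, that some $e^*_{h, \ell} \in C^*(e)$ has $|C(e^*_{h, \ell})| = 1$, that is $C(e^*_{h, \ell}) = \{e\}$. The first step is to pin down which of the (at most) six edges of $C^*(e)$ this can possibly be. Write $e = e_{a, b}$ and, without loss of generality, let its parallel partner be $e_{a+1, b+1} \in M$ (the symmetric case $e_{a-1, b-1} \in M$ is handled identically). A short check against the conflict structure of Observation~\ref{obs2.1} shows that the edge of $M^*$ at each of the four ``inner'' slots, incident at $d^A_a$, $d^A_{a+1}$, $d^B_b$, $d^B_{b+1}$, must also conflict with the partner $e_{a+1, b+1}$ (being adjacent, parallel, or neighboring to it); hence each such edge already has $|C| \ge 2$ and cannot be $e^*_{h, \ell}$. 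Therefore $e^*_{h, \ell}$ is forced to sit at one of the two ``outer'' slots incident at $d^A_{a-1}$ or $d^B_{b-1}$, and in particular $e$ must be an \emph{endpoint} of its maximal run of consecutive parallel edges, with the run extending toward the $(a+1, b+1)$ side.

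The second step converts $|C(e^*_{h, \ell})| = 1$ into a local-search contradiction, in the same spirit as the proof of Lemma~\ref{lemma412} but now exploiting that $e$ is parallel. Since $e \in C(C^*(e_{i, \Cdot}))$ and $C^*(e_{i, \Cdot}) = \{e^*_{i-1, j'-1}, e^*_{i, j'}, e^*_{i+1, j'+1}\}$ is a triple of consecutive parallel edges, I would assemble an improving swap from the parallel edges of $C^*(e_{i, \Cdot})$ together with the outer edge $e^*_{h, \ell}$, and — whenever the ordered combination of $\tau(e_{i, j} \gets C^*(e_{i, \Cdot}))$ carries no entry $1$ — the edge $e^*_{i_1, j_1} \in C^*(e_{\Cdot, j})$ with $|C(e^*_{i_1, j_1})| = 1$ whose existence is guaranteed by Lemma~\ref{lemma404}. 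Using the explicit edge combinations of $C(C^*(e_{i, \Cdot}))$ and the presence or absence of the $\mcN_p(C(C^*(e_{i, \Cdot})))$ edges recorded in each figure, the number of edges of $M$ removed is at most one less than the number of compatible $M^*$ edges inserted, triggering {\sc Replace-5-by-6}; in the borderline counts the same assembly strictly lowers the singleton count via {\sc Reduce-5-by-5}. Either outcome contradicts the local optimality of $M$, establishing $|C(e^*_{h, \ell})| \ge 2$ for that parallel edge.

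The third step is to isolate the genuine exceptions. The swap above is blocked exactly when the outer slot pointing away from the partner is itself already occupied so that the run continues outward (so $e$ is not a free endpoint in the required direction), or when the filled vertices of the configuration leave no compatible position for $e^*_{h, \ell}$ alongside the reconstructed parallel block. Walking through the $27$ configurations and applying the run-endpoint restriction from the first step, these blocked situations are precisely the edge $e_{j'+2}$ in Figs.~\ref{fig404}, \ref{fig407a}, \ref{fig410b}, \ref{fig411c} and \ref{fig412}, the edges $e_{i-2}, e_{i+2}$ in Fig.~\ref{fig414b}, the edges $e_{j'+1}, e_{j'+2}$ in Fig.~\ref{fig414c}, and the edges $e_{j'-1}, e_{j'-2}, e_{j'+1}, e_{j'+2}$ in Fig.~\ref{fig414d}; for every other parallel edge of $C(C^*(e_{i, \Cdot}))$ the swap goes through, which is exactly the claimed conclusion.

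The main obstacle I anticipate is the per-figure bookkeeping in the second and third steps: for each parallel edge in each configuration one must verify that the single relevant outer slot truly admits (or is provably blocked from) an improving swap, and that the resulting list of exceptions matches the statement exactly. I would manage this by grouping the figures according to the ordered value combination of $\tau(e_{i, j} \gets C^*(e_{i, \Cdot}))$ and reusing the swap template of Lemma~\ref{lemma412}, so that the structural reduction of the first step leaves only the obstruction at the one outer slot to be checked case by case.
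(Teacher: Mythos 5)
Your proposal is correct and follows essentially the same route as the paper: the paper's own proof of this lemma is a one-line appeal to the five swap conditions of Lemma~\ref{lemma412}, verified configuration by configuration, which is exactly the ``swap template'' you invoke in your second and third steps. Your additional first step --- that for a parallel edge $e$ the offending weight-$1$ edge of $C^*(e)$ can only occupy an outer slot pointing away from the parallel partner, so $e$ must be an endpoint of its run --- is a correct refinement (it is essentially Observation~\ref{obs4.2} in the paper) but does not change the underlying argument.
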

\begin{proof}
At least one of the five conditions in Lemma~\ref{lemma412} applies to each of these edges.
For example, in Fig.~\ref{fig402},
for the edge $e_{j'-2}$, the fourth condition of Lemma~\ref{lemma412} holds by setting $(i_1, j_1) := (i-1, j'-1)$ and $(i_2, j_2) := (i, j')$.
\end{proof}

Among all the $27$ configurations in Figs.~\ref{fig401}--\ref{fig416}, we have the following two observations.

\begin{obs}
\label{obs4.1}
If the edge $e_{i-1}$ (or the edge $e_{j'-1}$) is a known parallel edge of $M$ in $C(C^*(e_{i, \Cdot}))$,
and $e^*_{i-1, j'-1} \in M^*$, then $|C(e^*_{i-1, j'-1})| \ge 3$;
if the edge $e_{j'}$ is a known parallel edge of $M$ in $C(C^*(e_{i, \Cdot}))$,
and $e^*_{i, j'} \in M^*$, then $|C(e^*_{i, j'})| \ge 3$;
if the edge $e_{i+1}$ (or the edge $e_{j'+1}$) is a known parallel edge of $M$ in $C(C^*(e_{i, \Cdot}))$,
and $e^*_{i+1, j'+1} \in M^*$, then $|C(e^*_{i+1, j'+1})| \ge 3$.
\end{obs}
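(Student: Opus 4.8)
The plan is to prove each assertion by exhibiting three pairwise distinct edges of $M$ that all conflict with the relevant optimal edge, so that the corresponding set $C(\cdot)$ has size at least $3$. Since the three assertions are tied together by the $i\leftrightarrow j$ and $-\leftrightarrow+$ symmetries used throughout Section~\ref{sec4.3.2}, I would prove one representative sub-case in full detail — say $e_{i-1}$ is a known parallel edge in $C(C^*(e_{i, \Cdot}))$ with $e^*_{i-1, j'-1}\in M^*$, for which the claim is $|C(e^*_{i-1, j'-1})|\ge 3$ — and then note that the remaining sub-cases ($e_{j'-1}$ against $e^*_{i-1,j'-1}$, $e_{j'}$ against $e^*_{i,j'}$, and $e_{i+1},e_{j'+1}$ against $e^*_{i+1,j'+1}$) go through identically. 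A useful feature of this observation is that, unlike the earlier lemmas, the argument is purely structural and does not invoke the local optimality of the heuristics: it rests only on $M\cap M^*=\emptyset$, the parallel structure, and $e_{i,j}$ being a singleton.

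For the representative sub-case I would produce the three conflicting edges as follows. The first is the central edge $e_{i, j}$ itself: $e^*_{i-1, j'-1}$ is incident at $d^A_{i-1}$ while $e_{i, j}$ is at $d^A_i$, so they have consecutive $A$-indices and, being non-parallel (parallel would force $j=j'$, but $j\ne j'$), they conflict. The second is $e_{i-1}$: it is incident at $d^A_{i-1}$, exactly where $e^*_{i-1, j'-1}$ is, so the two share a vertex and conflict; moreover $M\cap M^*=\emptyset$ forces $e_{i-1}=e_{i-1, \ell}$ with $\ell\ne j'-1$. The third is the parallel partner of $e_{i-1}$, which exists because $e_{i-1}$ is known parallel. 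This partner is incident at $d^A_{i-2}$ or $d^A_i$; it cannot be at $d^A_i$, since the only $M$-edge there is the singleton $e_{i, j}$, which has no parallel partner. Hence the partner is $e_{i-2, \ell-1}$; its $A$-index $i-2$ is consecutive with that of $e^*_{i-1, j'-1}$ and, as $\ell-1\ne j'-2$, it is not parallel to $e^*_{i-1, j'-1}$, so the two conflict (and $d^A_{i-2}$ indeed lies in the conflict range of $e^*_{i-1, j'-1}$ supplied by Observation~\ref{obs2.1}). The three edges $e_{i, j}$, $e_{i-1}$, $e_{i-2, \ell-1}$ are distinct because $e_{i, j}$ is a singleton whereas $e_{i-1}$ and $e_{i-2, \ell-1}$ form a parallel pair; therefore $|C(e^*_{i-1, j'-1})|\ge 3$.

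I expect the third edge to be the main obstacle, since three things must be checked at once: that the parallel partner exists (supplied by ``known parallel''), that it genuinely conflicts with rather than parallels the optimal edge (supplied by $M\cap M^*=\emptyset$, which forces $\ell\ne j'-1$ and thereby excludes the single parallel configuration $e_{i-2,j'-2}$), and that it is distinct from the other two conflicting edges (supplied by $e_{i, j}$ being a singleton). For the $B$-side sub-cases $e_{j'-1}$ and $e_{j'}$ the only new remark I would add is that a $B$-side parallel partner necessarily lands at one of the two $d^B$-vertices adjacent to the shared vertex, both of which lie in the conflict range of the relevant optimal edge, and that it again conflicts with rather than parallels that edge by $M\cap M^*=\emptyset$; these $B$-side checks are where I would be most careful, though each still reduces to the same three-edge count.
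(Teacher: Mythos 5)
Your argument is correct, but it takes a different route from the paper: the paper offers no explicit proof of Observation~\ref{obs4.1}, presenting it instead as a fact read off by inspecting the $27$ configurations of Figs.~\ref{fig401}--\ref{fig416} one by one, whereas you give a single uniform argument that exhibits three distinct edges of $M$ conflicting with the relevant optimal edge. Your three witnesses check out in the representative sub-case: $e_{i,j}$ conflicts with $e^*_{i-1,j'-1}$ because the indices are consecutive and $j \ne j'$ rules out parallelism; $e_{i-1}$ conflicts because it shares $d^A_{i-1}$ and $M \cap M^* = \emptyset$ makes it a distinct edge; and the parallel partner of $e_{i-1}$ is forced to sit at $d^A_{i-2}$ (it cannot sit at $d^A_i$, where the only $M$-edge is the singleton $e_{i,j}$) and is neighboring rather than parallel to $e^*_{i-1,j'-1}$ precisely because $M \cap M^* = \emptyset$ forces $\ell \ne j'-1$. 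The symmetric sub-cases do go through, though the $B$-side ones need one extra check you only gesture at: the partner of, say, $e_{j'-1}$ could a priori land at $d^B_j$ and coincide with $e_{i,j}$, but that would make $e_{i,j}$ parallel to $e_{j'-1}$, contradicting its singleton status, so distinctness of the three witnesses again reduces to the same ingredient. What your approach buys is a proof that is independent of the case catalogue (and, as you note, of local optimality beyond the standing assumption that $e_{i,j}$ is a singleton), so it would remain valid even if the configuration list were revised; what the paper's implicit inspection buys is that the observation is verified exactly in the settings where it is later applied, at no extra argumentative cost.
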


\begin{obs}
\label{obs4.2}
When $e^*_{i,j'} \in C^*(e_{i, \Cdot})$,
if the edge $e_{i+p}$ (or $e_{j'+p}$, respectively) is a known parallel edge of $M$ in $C(C^*(e_{i, \Cdot}))$ for some $p = -2, 2$,
then $|C(e^*_{i+p})| \ge 2$ (or $|C(e^*_{j'+p})| \ge 2$, respectively).
\end{obs}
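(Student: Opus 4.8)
The plan is to prove each inequality by exhibiting two \emph{distinct} edges of $M$ that both conflict with the named edge of $M^*$; by the definition of $C(\cdot)$ this is exactly what $|C(e^*_{i+p})| \ge 2$ (respectively $|C(e^*_{j'+p})| \ge 2$) asserts. I would carry out the $D^A$-side statement for $e_{i+p}$ in full and obtain the $D^B$-side statement for $e_{j'+p}$ from the left--right symmetry of the construction, interchanging the roles of $d^A$ and $d^B$. It is worth recording up front why the bound here is only $2$, whereas the analogous Observation~\ref{obs4.1} gives $3$: in Observation~\ref{obs4.1} the relevant $M^*$-edge lies in $C^*(e_{i, \Cdot})$ and hence conflicts with $e_{i,j}$ itself, supplying a third conflicting edge, while here $e^*_{i+p}$ is incident at $d^A_{i+p}$, which lies outside the span $\{d^A_{i-1}, d^A_i, d^A_{i+1}\}$ of $C^*(e_{i, \Cdot})$, so $e_{i,j}$ need not conflict with it.

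First I would isolate the single structural fact that does all the work, read directly off Equation~(\ref{eq1}) and the definitions in Section~\ref{sec2}: two edges incident at consecutive vertices on the same side, say $e_{a,b}$ and $e_{a+1,b'}$, are compatible \emph{only} when $b' = b+1$ (the parallel case); in every remaining case they conflict, either through adjacency when $b' = b$ or through the neighboring relation when $b' \notin \{b, b+1\}$. With this in hand, fix $p \in \{-2, +2\}$ and assume $e^*_{i+p} \in M^*$ (otherwise the edge is void and the claim is vacuous). Since $e_{i+p}$ is a known parallel edge of $M$, it has a parallel partner $e' \in M$ incident at $d^A_{i+p \mp 1}$. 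The two conflicting edges are then $e_{i+p}$ and $e'$: the edge $e_{i+p}$ shares the vertex $d^A_{i+p}$ with $e^*_{i+p}$ and, as $M \cap M^* = \emptyset$, differs from it, so the two are adjacent and conflicting; the edge $e'$ is incident at a vertex consecutive to $d^A_{i+p}$ and cannot be parallel to $e^*_{i+p}$, because $e'$ is already parallel to $e_{i+p}$ and a second parallelism would force $e^*_{i+p}$ and $e_{i+p}$ to share the same $D^B$-endpoint, i.e.\ to coincide. By the structural fact, $e'$ therefore conflicts with $e^*_{i+p}$. Finally $e_{i+p} \ne e'$, as they sit at different $D^A$-vertices, so $C(e^*_{i+p})$ contains at least these two edges.

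The convenient feature of this argument is that the step concerning $e'$ is insensitive to the orientation of the parallel partner: whether $e'$ lies at $d^A_{i+p-1}$ or $d^A_{i+p+1}$, it is incident at a vertex consecutive to $d^A_{i+p}$ and non-parallel to $e^*_{i+p}$, hence conflicting either way. This is precisely what lets the identical reasoning transcribe to $e_{j'+p}$ on the $D^B$ side, where the singleton property of $e_{i,j}$ does not by itself determine the orientation of the partner. The hypothesis $e^*_{i,j'} \in C^*(e_{i, \Cdot})$ plays the role of situating the edges $e_{i \pm 2}$ and $e_{j' \pm 2}$ within the configurations in which the central $M^*$-edge $e^*_{i,j'}$ is present, and together with $e_{i,j}$ being a singleton (so that $d^A_{i \pm 1}$ carries no edge of $M$) it guarantees that the partner $e'$ indeed exists for these $\pm 2$ edges.

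I expect the only genuine work to be the case bookkeeping hidden in the two conflict verifications. One must check, against the precise adjacency/neighboring definitions, that no parallel (hence compatible) coincidence is overlooked for either orientation of $e'$ and for both $p = +2$ and $p = -2$, and then repeat the check verbatim on the $D^B$ side with $d^B_{j'+p}$ in place of $d^A_{i+p}$. None of these cases is deep, but a careless reading of Equation~(\ref{eq1}) could easily mask a missing conflict, so this verification is where I would be most careful.
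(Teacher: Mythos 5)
Your proof is correct. The two witnesses you exhibit — the $M$-edge $e_{i+p}$ itself (adjacent to $e^*_{i+p}$ at $d^A_{i+p}$ and distinct from it by the standing assumption $M \cap M^* = \emptyset$) and its parallel partner $e'$ (which cannot be parallel to $e^*_{i+p}$ without forcing $e^*_{i+p} = e_{i+p}$) — are exactly what is needed, and your reduction of all compatibility checks to the single fact that edges at consecutive same-side vertices are compatible only when parallel is a faithful reading of the conflict definitions in Section~\ref{sec2}. The paper offers no written proof of this observation; it is presented as something to be read off the $27$ configurations in Figs.~\ref{fig401}--\ref{fig416}, so your contribution is a configuration-free argument that establishes the bound in general (indeed, your proof never actually uses the hypothesis $e^*_{i,j'} \in C^*(e_{i,\Cdot})$, only that $e_{i+p}$ is a parallel edge of $M$ and that $e^*_{i+p}$ exists, which makes the statement slightly more general than as stated). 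Your side remark explaining why Observation~\ref{obs4.1} yields the stronger bound $3$ — the extra conflicting edge being $e_{i,j}$ itself — is also accurate and is a useful consistency check.
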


Based on Lemmas \ref{lemma410}, \ref{lemma412}, and Observations \ref{obs4.1} and \ref{obs4.2}, we can prove the following two lemmas.

\begin{lemma}
\label{lemma417}
For any pair of parallel edges $e_{h, \ell}, e_{h+1, \ell+1} \in C(C^*(e_{i, \Cdot}))$,
and an edge $e^* \in C^*(e_{h, \ell}) \cap C^*(e_{h+1, \ell+1})$,
we have $|C(e^*)| \ge 3$ if one of the following three conditions holds.
\begin{enumerate}
\item
	$|C(C^*(e_{i, \Cdot}))| = |C^*(e_{i, \Cdot})| = 3$.
\item
	$|C(C^*(e_{i, \Cdot}))| = |C^*(e_{i, \Cdot})| + 1 = 4$ and there is an edge $e^*_{i_1, j_1} \in C^*(e_{\Cdot, j})$ such that $|C(e^*_{i_1, j_1})| = 1$.
\item
	$e_{h, \ell}, e_{h+1, \ell+1} \in C(e^*_{i_2, j_2}) \cup C(e^*_{i_3, j_3})$
	for some $e^*_{i_2, j_2}, e^*_{i_3, j_3} \in C^*(e_{i, \Cdot})$ with $|C(e^*_{i_2, j_2}) \cup C(e^*_{i_3, j_3})| = 3$,
	and there is an edge $e^*_{i_1, j_1} \in C^*(e_{\Cdot, j})$ such that $|C(e^*_{i_1, j_1})| = 1$.
\end{enumerate}
\end{lemma}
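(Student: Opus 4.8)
The plan is to prove the statement by a local-search exchange argument in the exact spirit of Lemma~\ref{lemma412}: assuming $|C(e^*)| \le 2$, I would exhibit in each of the three cases an improving {\sc Replace-5-by-6} swap, contradicting the local optimality of $M$. First observe that since $e^* \in C^*(e_{h, \ell}) \cap C^*(e_{h+1, \ell+1})$, the two distinct (parallel) edges $e_{h, \ell}, e_{h+1, \ell+1}$ already lie in $C(e^*)$, so $|C(e^*)| \ge 2$ automatically; it therefore suffices to rule out $|C(e^*)| = 2$, i.e.\ $C(e^*) = \{e_{h, \ell}, e_{h+1, \ell+1}\}$. I would also dispose of the subcase $e^* \in C^*(e_{i, \Cdot})$ at once: then $e^*$ conflicts with $e_{i, j}$, and since $e_{i, j}$ is a singleton edge (Lemma~\ref{lemma410}) it differs from both $e_{h, \ell}$ and $e_{h+1, \ell+1}$, whence $\{e_{i, j}, e_{h, \ell}, e_{h+1, \ell+1}\} \subseteq C(e^*)$ gives $|C(e^*)| \ge 3$ directly. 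Thus in the main argument I may assume $e^* \notin C^*(e_{i, \Cdot})$; combined with the fact that the decomposition $C^*(e_{i, j}) = C^*(e_{i, \Cdot}) \cup C^*(e_{\Cdot, j})$ is a partition, this keeps the new edges introduced below pairwise distinct.

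For each condition I would remove a set $S \subseteq M$ and insert a set $T$ of edges of $M^*$ with $|T| = |S| + 1$. Under the first condition take $S = C(C^*(e_{i, \Cdot}))$ (of size $3$) and $T = C^*(e_{i, \Cdot}) \cup \{e^*\}$ (of size $4$). Under the second take $S = C(C^*(e_{i, \Cdot}))$ (of size $4$) and $T = C^*(e_{i, \Cdot}) \cup \{e^*, e^*_{i_1, j_1}\}$ (of size $5$). Under the third take the two distinct edges $e^*_{i_2, j_2}, e^*_{i_3, j_3} \in C^*(e_{i, \Cdot})$, set $S = C(e^*_{i_2, j_2}) \cup C(e^*_{i_3, j_3})$ (of size $3$), and $T = \{e^*_{i_2, j_2}, e^*_{i_3, j_3}, e^*, e^*_{i_1, j_1}\}$ (of size $4$). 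In every case $T$ is pairwise compatible because all of its edges lie in $M^*$, and the equality $|T| = |S| + 1$ follows from the partition property together with $e^* \notin C^*(e_{i, \Cdot})$ and $|C(e^*)| = 2 \ne 1 = |C(e^*_{i_1, j_1})|$ (so $e^* \ne e^*_{i_1, j_1}$).

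The crux is to verify that $T$ is compatible with $M - S$, which is where the hypotheses of each condition enter. The edges drawn from $C^*(e_{i, \Cdot})$ (respectively $e^*_{i_2, j_2}, e^*_{i_3, j_3}$) are compatible with $M - S$ by the definition of $C(\cdot)$, since everything in $M$ that they conflict with has been placed into $S$; the edge $e^*$ conflicts only with $\{e_{h, \ell}, e_{h+1, \ell+1}\} \subseteq S$ under our standing assumption; and $e^*_{i_1, j_1}$ conflicts only with $e_{i, j}$ (recall $|C(e^*_{i_1, j_1})| = 1$ forces $C(e^*_{i_1, j_1}) = \{e_{i, j}\}$, as $e^*_{i_1, j_1}$ always conflicts with $e_{i, j}$), so I must check $e_{i, j} \in S$. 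This holds in all three cases, because every edge of $C^*(e_{i, \Cdot})$ conflicts with $e_{i, j}$ and so places $e_{i, j}$ into $C(C^*(e_{i, \Cdot}))$ and into $C(e^*_{i_2, j_2}) \cup C(e^*_{i_3, j_3})$, respectively. Consequently $\mcL\mcS$ could replace the edges of $S$ by those of $T$ and expand $M$, the desired contradiction, whence $|C(e^*)| \ge 3$.

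I expect the main difficulty to be bookkeeping rather than a new idea: confirming in each case that $|S|$ and $|T|$ are exactly as claimed (using the disjointness of $C^*(e_{i, \Cdot})$ and $C^*(e_{\Cdot, j})$, the distinctness of $e^*_{i_2, j_2}$ and $e^*_{i_3, j_3}$, and that $e^*$ differs from both the $C^*(e_{i, \Cdot})$-edges and $e^*_{i_1, j_1}$), and verifying the membership $e_{i, j} \in S$ so that $e^*_{i_1, j_1}$ may legitimately be admitted into $T$. Once these are in place the contradiction with {\sc Replace-5-by-6} is immediate, mirroring the exchange arguments already used in Lemmas~\ref{lemma403}, \ref{lemma406}, and \ref{lemma412}.
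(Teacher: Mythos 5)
Your proposal is correct and follows essentially the same route as the paper: assume $C(e^*) = \{e_{h,\ell}, e_{h+1,\ell+1}\}$ and derive a contradiction via exactly the three {\sc Replace}-swaps the paper uses (three-for-four, four-for-five, and three-for-four, respectively). The only cosmetic difference is that the paper dismisses the subcase $e^* \in C^*(e_{i,\Cdot})$ by citing Observation~\ref{obs4.1}, whereas you do it directly from the fact that the singleton edge $e_{i,j}$ would then be a third member of $C(e^*)$; both are valid.
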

\begin{proof}
We prove by contradiction, and thus assume that there is an edge $e^* \in C^*(e_{h, \ell}) \cap C^*(e_{h+1, \ell+1})$ such that
$C(e^*) = \{e_{h, \ell}, e_{h+1, \ell+1}\}$.

If the first condition holds,
then it follows from Observation \ref{obs4.1} that $e^* \notin C^*(e_{i, \Cdot})$.
In this case, the algorithm $\mcL\mcS$ would replace the three edges of $C(C^*(e_{i, \Cdot}))$ by $e^*$ and the three edges of $C^*(e_{i, \Cdot})$ to expand $M$,
a contradiction.

If the second condition holds,
then again it follows from Observation \ref{obs4.1} that $e^* \notin C^*(e_{i, \Cdot})$.
Also, the edge $e^*_{i_1, j_1}$ is distinct from $e^*$.
In this case, the algorithm $\mcL\mcS$ would replace the four edges in $C(C^*(e_{i, \Cdot}))$ by
$e^*_{i_1, j_1}, e^*$, and the three edges of $C^*(e_{i, \Cdot})$ to expand $M$, a contradiction.

If the third condition holds,
then by Observation \ref{obs4.1} the edge $e^*$ is distinct from $e^*_{i_2, j_2}, e^*_{i_3, j_3}$,
and the edge $e^*_{i_1, j_1}$ is distinct from $e^*$.
In this case, the algorithm $\mcL\mcS$ would replace the three edges of $C(e^*_{i_2, j_2}) \cup C(e^*_{i_3, j_3})$ by
$e^*_{i_2, j_2}, e^*_{i_3, j_3}$, $e^*_{i_1, j_1}$, $e^*$ to expand $M$, a contradiction.
\end{proof}

\begin{lemma}
\label{lemma418}
For any pair of parallel edges $e_{h, \ell}, e_{h+1, \ell+1}$ where $e_{h, \ell} \in C(C^*(e_{i, \Cdot}))$,
there is at most one edge $e^* \in C^*(e_{h, \ell}) \cap C^*(e_{h+1, \ell+1})$ such that $|C(e^*)| = 2$,
if one of the following six conditions holds.
\begin{enumerate}
\item
	$e_{h+1, \ell+1} \notin C(C^*(e_{i, \Cdot}))$ and $|C(C^*(e_{i, \Cdot}))| = |C^*(e_{i, \Cdot})| = 3$.
\item
	$e_{h, \ell} \in C(e^*_{i_1, j_1}) \cup C(e^*_{i_2, j_2})$, $e_{h+1, \ell+1} \notin C(e^*_{i_1, j_1}) \cup C(e^*_{i_2, j_2})$
	for some $e^*_{i_1, j_1}, e^*_{i_2, j_2} \in C^*(e_{i, \Cdot})$ with $|C(e^*_{i_1, j_1}) \cup C(e^*_{i_2, j_2})| = 2$.
\item
	$e_{h+1, \ell+1} \notin C(C^*(e_{i, \Cdot}))$, $|C(C^*(e_{i, \Cdot}))| = |C^*(e_{i, \Cdot})| + 1 = 4$,
	and there is an edge $e^*_{i_1, j_1} \in C^*(e_{\Cdot, j})$ such that $|C(e^*_{i_1, j_1})| = 1$.
\item
	$e_{h, \ell} \in C(e^*_{i_2, j_2}) \cup C(e^*_{i_3, j_3})$, $e_{h+1, \ell+1} \notin C(e^*_{i_2, j_2}) \cup C(e^*_{i_3, j_3})$
	for some $e^*_{i_2, j_2}, e^*_{i_3, j_3} \in C^*(e_{i, \Cdot})$, with $|C(e^*_{i_2, j_2}) \cup C(e^*_{i_3, j_3})| = 3$,
	and there is an edge $e^*_{i_1, j_1} \in C^*(e_{\Cdot, j})$ such that $|C(e^*_{i_1, j_1})| = 1$.
\item
	$e_{h, \ell} \in C(e^*_{i_2, j_2})$, $e_{h+1, \ell+1} \notin C(e^*_{i_2, j_2})$
	for some $e^*_{i_2, j_2} \in C^*(e_{i, \Cdot})$ with $|C(e^*_{i_2, j_2})| = 2$,
	and there is an edge $e^*_{i_1, j_1} \in C^*(e_{\Cdot, j})$ such that $|C(e^*_{i_1, j_1})| = 1$.
\item
	$e_{h+1, \ell+1} \in C(C^*(e_{i, \Cdot}))$, $|C(C^*(e_{i, \Cdot}))| = |C^*(e_{i, \Cdot})| + 2 = 5$,
	and there is an edge $e^*_{i_1, j_1} \in C^*(e_{\Cdot, j})$ such that $|C(e^*_{i_1, j_1})| = 1$.
\end{enumerate}
\end{lemma}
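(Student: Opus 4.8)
The plan is to argue by contradiction in the same spirit as Lemma~\ref{lemma417}, exploiting one extra unit of slack in each of the six hypotheses. Suppose, contrary to the claim, that there are two distinct edges $e^*_a, e^*_b \in C^*(e_{h, \ell}) \cap C^*(e_{h+1, \ell+1})$ with $|C(e^*_a)| = |C(e^*_b)| = 2$. Since each of $e^*_a, e^*_b$ conflicts with both $e_{h, \ell}$ and $e_{h+1, \ell+1}$ and has only two conflicts in total, we must have $C(e^*_a) = C(e^*_b) = \{e_{h, \ell}, e_{h+1, \ell+1}\}$. The key consequence is that deleting the single pair of parallel edges $e_{h, \ell}, e_{h+1, \ell+1}$ from $M$ simultaneously frees \emph{two} new edges of $M^*$ for insertion, whereas Lemma~\ref{lemma417} only needed to free one; this is precisely why each present hypothesis is weaker, by one unit in the relevant cardinality, than its counterpart in Lemma~\ref{lemma417}.

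First I would record the facts that make the freed edges genuinely useful. Because $e_{i, j}$ is a singleton edge of $M$ (Lemma~\ref{lemma410}) while $e_{h, \ell}, e_{h+1, \ell+1}$ form a parallel pair, we have $e_{i, j} \notin \{e_{h, \ell}, e_{h+1, \ell+1}\}$; since every edge of $C^*(e_{i, \Cdot})$ conflicts with $e_{i, j}$, neither $e^*_a$ nor $e^*_b$ can lie in $C^*(e_{i, \Cdot})$ (alternatively one invokes Observation~\ref{obs4.1} on the known-parallel edges $e_{h, \ell}, e_{h+1, \ell+1}$). Likewise, whenever a hypothesis supplies an edge $e^*_{i_1, j_1} \in C^*(e_{\Cdot, j})$ with $|C(e^*_{i_1, j_1})| = 1$, the mismatch in conflict count forces $e^*_a, e^*_b \neq e^*_{i_1, j_1}$. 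Thus $e^*_a$ and $e^*_b$ are always distinct from the structural edges of $M^*$ that the swap re-inserts.

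The core of the proof is then a uniform swap template applied case by case. In each case I would delete a set $Y \subseteq M$ containing $e_{h, \ell}$ and $e_{h+1, \ell+1}$ together with the conflict edges named in the hypothesis, and insert the corresponding edges of $C^*(e_{i, \Cdot})$ (and the singleton-conflict edge of $C^*(e_{\Cdot, j})$ when the hypothesis supplies one) together with $e^*_a$ and $e^*_b$, so that exactly $|Y| + 1$ edges are inserted. For instance, condition~1 deletes the three edges of $C(C^*(e_{i, \Cdot}))$ and the extra edge $e_{h+1, \ell+1}$ and inserts the three edges of $C^*(e_{i, \Cdot})$ with $e^*_a, e^*_b$; condition~3 deletes the four edges of $C(C^*(e_{i, \Cdot}))$ and $e_{h+1, \ell+1}$ and inserts $C^*(e_{i, \Cdot})$, the singleton-conflict edge $e^*_{i_1, j_1}$, and $e^*_a, e^*_b$, a genuine {\sc Replace-5-by-6}; and condition~6 deletes the five edges of $C(C^*(e_{i, \Cdot}))$, which already contain both $e_{h, \ell}$ and $e_{h+1, \ell+1}$, and inserts the same six edges. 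The remaining conditions~2, 4 and 5 delete only the smaller conflict-union prescribed there (of size two or three, containing $e_{h, \ell}$) together with $e_{h+1, \ell+1}$, and re-insert the one or two edges of $C^*(e_{i, \Cdot})$ that generated that union, the supplied singleton-conflict edge when present, and $e^*_a, e^*_b$. In every case the inserted set exceeds the deleted set by exactly one edge.

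To close each case I would verify that the swap is legal. All re-inserted edges lie in $M^*$ and are therefore pairwise compatible. Each re-inserted edge of $C^*(e_{i, \Cdot})$ has all of its $M$-conflicts inside $C(C^*(e_{i, \Cdot}))$, and by construction $Y$ deletes exactly the conflicts that matter; the singleton-conflict edge $e^*_{i_1, j_1}$ conflicts only with $e_{i, j}$, which belongs to $Y$ because every edge of $C^*(e_{i, \Cdot})$ conflicts with $e_{i, j}$; and $e^*_a, e^*_b$ conflict only with $e_{h, \ell}, e_{h+1, \ell+1} \in Y$. Hence $(M \setminus Y)$ together with the inserted edges is a compatible matching of size $|M| + 1$, contradicting that {\sc Replace-5-by-6} cannot expand $M$. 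The main obstacle I anticipate is the bookkeeping of distinctness and disjointness across all six cases --- above all, guaranteeing that $e^*_a$ and $e^*_b$ are neither among the re-inserted edges of $C^*(e_{i, \Cdot})$ nor equal to $e^*_{i_1, j_1}$, and that $|Y|$ and the insertion count come out correct once overlaps such as $e_{i, j}, e_{h, \ell} \in C(C^*(e_{i, \Cdot}))$ are tallied.
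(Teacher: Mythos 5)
Your proposal is correct and follows essentially the same route as the paper: assume two edges $e^*_a, e^*_b$ with $C(e^*_a)=C(e^*_b)=\{e_{h,\ell},e_{h+1,\ell+1}\}$, then in each condition swap out the named conflict set together with $e_{h+1,\ell+1}$ (when not already included) for the corresponding $M^*$-edges plus $e^*_a,e^*_b$, gaining one edge and contradicting the failure of {\sc Replace-5-by-6}. The paper only writes out condition~1 and declares the rest similar, so your uniform template with the distinctness and compatibility bookkeeping is, if anything, slightly more complete than the published argument.
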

\begin{proof}
We prove by contradiction, and thus assume that there are two edges $e^*_{h_1,\ell_1}, e^*_{h_2,\ell_2} \in C^*(e_{h, \ell}) \cap C^*(e_{h+1, \ell+1})$ such that
$C(e^*_{h_1,\ell_1}) = C(e^*_{h_2,\ell_2}) = \{e_{h, \ell}, e_{h+1, \ell+1}\}$.

If the first condition holds,
then due to $e_{h+1, \ell+1} \notin C(C^*(e_{i, \Cdot}))$, none of $e^*_{h_1,\ell_1}, e^*_{h_2,\ell_2}$ is in $C^*(e_{i, \Cdot})$.
In this case, the algorithm $\mcL\mcS$ would replace the edge $e_{h+1, \ell+1}$ and the three edges of $C(C^*(e_{i, \Cdot}))$ by
$e^*_{h_1,\ell_1}, e^*_{h_2,\ell_2}$ and the three edges of $C^*(e_{i, \Cdot})$ to expand $M$,
a contradiction.

The other five conditions can be similarly proved by this kind of contradiction.
\end{proof}

Using Lemmas \ref{lemma417} and \ref{lemma418}, we can prove a better upper bound on $\omega(e)$ for those edges stated in Lemma~\ref{lemma416}.
This better bound is $\frac 52$, a reduction from $\frac {17}6$ stated in Lemma~\ref{lemma412}.

\begin{lemma}
\label{lemma419}
For any parallel edge $e \in C(C^*(e_{i,\Cdot}))$ discussed in Lemma~\ref{lemma416},
its total amount of tokens received $\omega(e)$ can be better bounded, in particular, $\omega(e) \le \frac 52$.
\end{lemma}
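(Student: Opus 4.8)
The plan is to use the parallel partner of $e$ in $M$ to upgrade, for as many of the conflicting $M^*$ edges as possible, the generic bound $|C(e^*)| \ge 2$ of Lemma~\ref{lemma412} to $|C(e^*)| \ge 3$, so that their donations drop from $\frac 12$ to $\frac 13$ and the total falls from $\frac{17}6$ to $\frac 52$.

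Write $e = e_{h, \ell}$ and let its parallel partner in $M$ be $e_{h+1, \ell+1}$ (the partner $e_{h-1, \ell-1}$ is symmetric). First I would partition $C^*(e_{h, \ell})$ according to whether an edge of $M^*$ also conflicts with $e_{h+1, \ell+1}$. The edges in $C^*(e_{h, \ell}) \cap C^*(e_{h+1, \ell+1})$ are precisely the (at most four) edges of $M^*$ incident at $d^A_h$, $d^A_{h+1}$, $d^B_\ell$, $d^B_{\ell+1}$; each conflicts with both $e_{h, \ell}$ and $e_{h+1, \ell+1}$, so automatically $|C(e^*)| \ge 2$, and because $e_{h, \ell} \in C(C^*(e_{i, \Cdot}))$ Lemma~\ref{lemma418} applies and forces at most one of the four to attain $|C(e^*)| = 2$ while the rest have $|C(e^*)| \ge 3$. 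Their total donation is thus at most $\frac 12 + 3 \cdot \frac 13 = \frac 32$. The only other edges of $C^*(e_{h, \ell})$ are the (at most two) edges of $M^*$ incident at $d^A_{h-1}$ and $d^B_{\ell-1}$; for a parallel edge $e$ not on the exceptional list of Lemma~\ref{lemma416} these have $|C(e^*)| \ge 2$ and so donate at most $2 \cdot \frac 12 = 1$. Summing, $\omega(e) \le \frac 32 + 1 = \frac 52$, which settles the generic case.

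The real difficulty is the short exceptional list of Lemma~\ref{lemma416}, where one of the two edges incident at $d^A_{h-1}$, $d^B_{\ell-1}$ may have $|C(e^*)| = 1$ and donate a full token, so that a naive count gives only $1 + \frac 12 + \frac 32 = 3$. For each such edge I would descend into its specific figure. By Lemma~\ref{lemma401} that token-$1$ edge is the unique member of $C^*(e)$ with $|C| = 1$, so every partner-shared edge still has $|C| \ge 2$; its presence, together with the edge-combination of $C(C^*(e_{i, \Cdot}))$ and the guaranteed edge of $C^*(e_{\Cdot, j})$ with $|C| = 1$ recorded in the figure, lets me verify a hypothesis of Lemma~\ref{lemma417} and push every shared edge to $|C(e^*)| \ge 3$, while Observations~\ref{obs4.1} and~\ref{obs4.2} and the filled-circle vertices of the figure leave the remaining slots of $C^*(e)$ either void or at $|C(e^*)| \ge 3$. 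The delicate part is checking, configuration by configuration, that after these reductions the surviving donations sum to at most $\frac 52$ (typically $1 + 4 \cdot \frac 13 = \frac 73$ once the second non-shared slot is shown void); this figure-by-figure bookkeeping, not any single inequality, is where the work lies.
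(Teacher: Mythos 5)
Your generic-case decomposition is exactly the paper's: split $C^*(e_{h,\ell})$ into the edges shared with the parallel partner $e_{h+1,\ell+1}$ and the at most two edges incident at $d^A_{h-1},d^B_{\ell-1}$, bound the latter by $\frac 12$ each via Lemma~\ref{lemma416}, and use Lemmas~\ref{lemma417}/\ref{lemma418} to force at least three of the shared edges down to $\frac 13$, giving $3\cdot\frac 12+3\cdot\frac 13=\frac 52$. But the step ``because $e_{h,\ell}\in C(C^*(e_{i,\Cdot}))$ Lemma~\ref{lemma418} applies'' is not justified: Lemma~\ref{lemma418} (and Lemma~\ref{lemma417}) is conditional on one of several specific structural hypotheses --- e.g.\ $e_{h+1,\ell+1}\notin C(C^*(e_{i,\Cdot}))$ together with $|C(C^*(e_{i,\Cdot}))|=3$, or membership in a union $C(e^*_{i_1,j_1})\cup C(e^*_{i_2,j_2})$ of prescribed size, etc. --- and none of these holds automatically for a parallel edge of $C(C^*(e_{i,\Cdot}))$. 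Verifying, configuration by configuration over Figs.~\ref{fig401}--\ref{fig416}, which condition applies (and whether one gets $\frac 12+3\cdot\frac 13$ or the stronger $4\cdot\frac 13$ for the shared edges) is the entire content of the paper's proof; your proposal asserts the conclusion of that enumeration without performing it.

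The second problem is scope. The lemma covers only the parallel edges for which Lemma~\ref{lemma416} actually establishes $|C(e^*)|\ge 2$ throughout $C^*(e)$; the edges on Lemma~\ref{lemma416}'s exceptional list are explicitly deferred to Lemma~\ref{lemma420}, where the bounds obtained are $\frac{29}{12}$, $\frac 73$, $\frac{13}6$, and --- for $e_{j'+2}$ in Figs.~\ref{fig411c} and \ref{fig412}, the four edges of Fig.~\ref{fig414d}, and $e_{i\pm 2}$ of Fig.~\ref{fig414b} in one subcase --- only $\frac{35}{12}>\frac 52$. So the program you describe as ``the real difficulty,'' namely pushing every exceptional edge below $\frac 52$ by figure-by-figure bookkeeping, aims at a statement the paper does not (and apparently cannot) establish; those are precisely the configurations in which the swap structures needed for Lemmas~\ref{lemma417} and \ref{lemma418} are unavailable, which is why they were exempted from Lemma~\ref{lemma416} in the first place. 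The subsequent averaging argument is built around the fact that some accompanying parallel edges are only bounded by $\frac{35}{12}$, so this is not a removable technicality.
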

\begin{proof}
We enumerate through all these edges in the following:
\begin{enumerate}
\item
	In Fig.~\ref{fig401}, we have $\tau(e_{i, j} \gets C^*(e_{i, \Cdot})) = \big(\frac 12, 1, \frac 12\big)$.
	For the edge $e_{j'-2}$, it is parallel to $e_{j'-3} \notin C(C^*(e_{i, \Cdot}))$.
	By the condition 1 of Lemma \ref{lemma418} and Lemma~\ref{lemma416},
	$\omega(e_{j'-2}) \le 3 \big( \frac 12 + \frac 13 \big) = \frac 52 = 2.5$.
	The same argument applies to the edge $e_{i+2}$.

	In the rest of the proof, we point out only the condition used in the argument to avoid repetition.
	
\item
	In Fig.~\ref{fig402}, $\omega(e_{j'-2}) \le 3 \big( \frac 12 + \frac 13 \big) = \frac 52 = 2.5$,
	due to the condition 2 of Lemma \ref{lemma418}.

\item
	In Fig.~\ref{fig403}, $\omega(e_{j'+1}), \omega(e_{j'+2}) \le 2 \big( \frac 12 + 2 \times \frac 13 \big) = \frac 73 \approx 2.333$,
	due to the condition 1 of Lemma \ref{lemma417}.

\item
	In Fig.~\ref{fig404}, $\omega(e_{j'+1}) \le \big( \frac 12 + \frac 14 + \frac 13 \big) + \big( 2 \times \frac 12 + \frac 13 \big) = \frac{29}{12} \approx 2.417$,
	due to the condition 2 of Lemma \ref{lemma417}.

\item
	In Fig.~\ref{fig406a}, $\omega(e_{j'+1}), \omega(e_{j'+2}) \le 2 \big( \frac 12 + 2 \times \frac 13 \big) = \frac 73 \approx 2.333$,
	due to the condition 2 of Lemma \ref{lemma417}.
\item
	In Fig.~\ref{fig406b}, $\omega(e_{j'-2}) \le 3 \big( \frac 12 + \frac 13 \big) = \frac 52 = 2.5$,
	due to the condition 3 of Lemma \ref{lemma418};

	$\omega(e_{i+2}) \le 4 \times \frac 13 + 2 \times \frac 12 = \frac 73 \approx 2.333$,
	due to the condition 3 of Lemma \ref{lemma418}.

\item
	In Fig.~\ref{fig407a}, $\omega(e_{j'+1}) \le \big( \frac 12 + \frac 14 + \frac 13 \big) + \big( 2 \times \frac 12 + \frac 13 \big) = \frac{29}{12} \approx 2.417$,
	due to the condition 4 of Lemma \ref{lemma418}.

\item
	In Fig.~\ref{fig408a}, $\omega(e_{j'-2}), \omega(e_{i+2}) \le 4 \times \frac 13 + 2 \times \frac 12 = \frac 73 \approx 2.333$,
	due to the condition 3 of Lemma \ref{lemma418}.
\item
	In Fig.~\ref{fig408b}, $\omega(e_{j'-1}), \omega(e_{j'-2}) \le 2 \big( \frac 12 + 2 \times \frac 13 \big) = \frac 73 \approx 2.333$,
	due to the condition 3 of Lemma \ref{lemma417}.
\item
	In Fig.~\ref{fig408c}, $\omega(e_{i-2}) \le 2 \big( \frac 12 + 2 \times \frac 13 \big) = \frac 73 \approx 2.333$,
	due to the condition 4 of Lemma \ref{lemma418}.

\item
	In Fig.~\ref{fig409}, $\omega(e_{j'+1}), \omega(e_{j'+2}) \le 5 \times \frac 13 + \frac 12 = \frac{13}{6} \approx 2.167$,
	due to the condition 2 of Lemma \ref{lemma417}.

\item
	In Fig.~\ref{fig410a}, $\omega(e_{j'-2}) \le 4 \times \frac 13 + 2 \times \frac 12 = \frac 73 \approx 2.333$,
	due to the condition 4 of Lemma \ref{lemma418}.
\item
	In Fig.~\ref{fig410b}, $\omega(e_{j'+1}) \le \big( \frac 12 + \frac 14 + \frac 13 \big) + \big( 2 \times \frac 12 + \frac 13 \big) = \frac{29}{12} \approx 2.417$,
	due to the condition 5 of Lemma \ref{lemma418}.

\item
	In Fig.~\ref{fig411a}, $\omega(e_{j'}), \omega(e_{j'+2}) \le \big( \frac 12 + \frac 13 + \frac 14 \big) + \big( \frac 12 + 2 \times \frac 13 \big) = \frac 94 = 2.25$,
	due to the condition 2 of Lemma \ref{lemma417};

	$\omega(e_{j'+1}) \le \big( \frac 13 + \frac 14 + \frac 13 \big) + \big( \frac 12 + 2 \times \frac 13 \big) = \frac{25}{12} \approx 2.083$,
	due to the condition 2 of Lemma \ref{lemma417}.
\item
	In Fig.~\ref{fig411b}, $\omega(e_{j'}), \omega(e_{i+2}) \le \big( \frac 12 + \frac 13 + \frac 14 \big) + \big( \frac 12 + 2 \times \frac 13 \big) = \frac 94 = 2.25$,
	due to the condition 2 of Lemma \ref{lemma417};

	$\omega(e_{j'+1}) \le \big( \frac 13 + \frac 14 + \frac 13 \big) + \big( \frac 12 + 2 \times \frac 13 \big) = \frac{25}{12} \approx 2.083$,
	due to the condition 2 of Lemma \ref{lemma417}.
\item
	In Fig.~\ref{fig411c}, $\omega(e_{j'+1}) \le \big( \frac 12 + \frac 13 + \frac 14 \big) + \big( \frac 12 + 2 \times \frac 13 \big) = \frac 94 = 2.25$,
	due to the condition 4 of Lemma \ref{lemma418}.

\item
	In Fig.~\ref{fig412}, $\omega(e_{j'}) \le \big( \frac 12 + \frac 13 + \frac 15 \big) + \big( \frac 12 + 2 \times \frac 13 \big) = \frac{11}{5} = 2.2$,
	due to the condition 3 of Lemma \ref{lemma417};

	$\omega(e_{j'+1}) \le \big( 2 \times \frac 13 + \frac 15 \big) + \big( \frac 12 + 2 \times \frac 13 \big) = \frac{61}{30} \approx 2.033$,
	due to the condition 3 of Lemma \ref{lemma417}.

\item
	In Fig.~\ref{fig414a}, $\omega(e_{j'}) \le 5 \times \frac 13 + \frac 12 = \frac{13}{6} \approx 2.167$,
	due to the condition 2 of Lemma \ref{lemma417};

	$\omega(e_{j'-1}) \le 2 \big( \frac 12 + 2 \times \frac 13 \big) = \frac 73 \approx 2.333$,
	due to the condition 2 of Lemma \ref{lemma417};

	$\omega(e_{i+2}) \le 4 \times \frac 13 + 2 \times \frac 12 = \frac 73 \approx 2.333$,
	due to the condition 3 of Lemma \ref{lemma418}.

\item
	In Fig.~\ref{fig415a}, $\omega(e_{j'-2}) \le 3 \times \frac 12 + 3 \times \frac 13 = \frac 52 = 2.5$,
	due to the condition 4 of Lemma \ref{lemma418}.

\item
	In Fig.~\ref{fig415b}, $\omega(e_{j'+1}) \le 2 \times \frac 12 + 3 \times \frac 13 = 2$,
	due to the condition 4 of Lemma \ref{lemma418} and no edge of $M^*$ incident at ${j'+1}$.

\item
	In Fig.~\ref{fig416a}, $\omega(e_{j''}), \omega(e_{j'''}) \le 1 + 3 \times \frac 12 = \frac 52 = 2.5$,
	simply due to no edge of $M^*$ incident at ${j''}$ and $e_{j'''}$ where $j''' = j'' + 1$.
\end{enumerate}
Note the maximum value among the above is $\frac 52 = 2.5$.
The lemma is proved.
\end{proof}

The next lemma is on the parallel edges excluded from Lemma~\ref{lemma419}.

\begin{lemma}
\label{lemma420}
For each of following parallel edge $e \in C(C^*(e_{i, \Cdot}))$ with $\omega(e_{i,j}) \ge 3$,
we have 
\begin{enumerate}
\item
	for the edge $e_{j'+2}$ in Figs.~\ref{fig404}, \ref{fig407a} and \ref{fig410b}, $\omega(e_{j'+2}) \le \frac {29}{12}$;
\item
	for the edge $e_{j'+2}$ in Figs.~\ref{fig411c} and \ref{fig412}, $\omega(e_{j'+2}) \le \frac {35}{12}$;
\item
	for the edges $e_{i-2}, e_{i+2}$ in Fig.~\ref{fig414b}, either $\omega(e_{i-2}), \omega(e_{i+2}) \le \frac {35}{12}$,
	or $\omega(e_{i-2}) \le \frac {13}6$ when $\omega(e_{j'-1}) = 3$,
	or $\omega(e_{i+2}) \le \frac {13}6$ when $\omega(e_{j'+1}) = 3$;
\item
	for the edges $e_{j'+1}, e_{j'+2}$ in Fig.~\ref{fig414c}, $\omega(e_{j'+1}) \le \frac {13}6$ and $\omega(e_{j'+2}) \le \frac 73$;
\item
	for the edges $e_{j'-1}, e_{j'-2}, e_{j'+1}, e_{j'+2}$ in Fig.~\ref{fig414d}, $\omega(e) \le \frac {35}{12}$.
\end{enumerate}
\end{lemma}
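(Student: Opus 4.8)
The plan is to bound $\omega(e) = \sum_{e^* \in C^*(e)} 1/|C(e^*)|$ individually for each of the listed parallel edges, since these are precisely the edges excluded from Lemma~\ref{lemma416}, for which some $e^* \in C^*(e)$ may have $|C(e^*)| = 1$ and thus contribute a full token of $1$; consequently the clean estimate $\omega(e) \le \frac 52$ of Lemma~\ref{lemma419} is unavailable and a configuration-by-configuration argument is needed. Throughout I would exploit that each such $e$ is a \emph{known} parallel edge, paired with a known parallel partner $e'$ in $C(C^*(e_{i, \Cdot}))$: this lets me apply Lemmas~\ref{lemma417} and~\ref{lemma418} to the pair $\{e, e'\}$ to force the edges of $M^*$ conflicting with both members to have $|C| \ge 3$ (or at most one of them to equal $2$), while Lemma~\ref{lemma402} and Observations~\ref{obs4.1}, \ref{obs4.2} control the side of $e$ not shared with $e'$.

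For items~1 and~2 (the edge $e_{j'+2}$) I would read $C^*(e_{j'+2})$ off each configuration and bound its token multiset: the partner $e_{j'+1}$ together with Lemmas~\ref{lemma417}/\ref{lemma418} limits the shared contributions to at most one $\frac12$ with the rest $\le \frac13$, while on the far side of $e_{j'+2}$ the optimality of $\mcL\mcS$ limits the remaining contributions, giving $\omega(e_{j'+2}) \le \frac{29}{12}$ in Figs.~\ref{fig404}, \ref{fig407a} and~\ref{fig410b}. In Figs.~\ref{fig411c} and~\ref{fig412}, where those constraints are weaker, I would instead argue that $e_{j'+2}$ cannot serve as the center of any of the $27$ configurations — any such fitting would expose a valid {\sc Replace-5-by-6} or {\sc Reduce-5-by-5} move contradicting the termination of $\mcL\mcS$ — so $\omega(e_{j'+2}) < 3$, whence $\omega(e_{j'+2}) \le \frac{35}{12}$ by the arithmetic of Lemma~\ref{lemma414}. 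Item~5 (Fig.~\ref{fig414d}) follows the same ``cannot fit into any configuration'' argument for each of $e_{j'\pm1}, e_{j'\pm2}$, and the direct token count handles the unconditional $\frac{13}6$ and $\frac 73$ bounds for $e_{j'+1}, e_{j'+2}$ of item~4 in Fig.~\ref{fig414c}.

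The delicate part, and the main obstacle, is the conditional bounds for $e_{i\pm2}$ in item~3 (Fig.~\ref{fig414b}) and the corresponding edges in Fig.~\ref{fig414c}. Here Lemma~\ref{lemma415} already records that $e_{j'-1}$ or $e_{j'+1}$ attains $\omega = 3$ only when the respective $\mcN_p(e_{i\mp2})$ is nonempty — that is, $e_{i\mp2}$ is then forced to be a parallel edge — and that $\omega(e_{j'-1}) = \omega(e_{j'+1}) = 3$ cannot both hold. The plan is to couple these facts: I would case-split on whether $e_{i-2}$ (resp. $e_{i+2}$) is parallel, using Observations~\ref{obs4.1} and~\ref{obs4.2} to push the relevant $|C(e^*)|$ up to $\ge 3$ in the parallel branch, which collapses its token multiset to the $\frac{13}6$ combination exactly in the branch where $\omega(e_{j'\mp1}) = 3$. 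The subtlety, and what the subsequent amortization in Section~\ref{sec4} will rely on, is the bookkeeping that it is the \emph{complementary} edge that receives the reduced bound and that the two reductions are never claimed simultaneously; getting this matching consistent, rather than any single arithmetic estimate, is the crux of the argument.
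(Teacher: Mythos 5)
Your overall architecture matches the paper's: items 2 and 5 are indeed just quoted from Lemmas~\ref{lemma414} and \ref{lemma415} (the ``cannot fit into any of the $27$ configurations'' argument), item 1 is a configuration-specific token count for $e_{h,j'+2}$, and items 3 and 4 hinge on the conditional coupling with $\omega(e_{j'\mp 1})=3$ together with the fact that the two reductions are never claimed simultaneously. You also correctly identify, via Lemma~\ref{lemma415}, that $\omega(e_{j'-1})=3$ forces $e_{i-2}$ to be parallel.

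The gap is in the mechanism you propose for the sharp bounds $\frac{13}{6}$ and $\frac 73$ in items 3 and 4. Observations~\ref{obs4.1} and \ref{obs4.2} only control the edges of $M^*$ that meet $e_{i-2}$ (say $e_{i-2,\ell}$) through the vertex $d^A_{i-2}$, i.e.\ the side shared with the configuration around $e_{i,j}$; they say nothing about the up to three edges $e^*_{h-1,\ell-1}, e^*_{h,\ell}, e^*_{h+1,\ell+1}$ of $M^*$ incident at the opposite endpoint $d^B_\ell$. If any of those had $|C(\cdot)|\le 2$, the token multiset of $e_{i-2,\ell}$ could still contain a $1$ or several $\frac12$'s and the bound $\frac{13}{6}=5\times\frac13+\frac12$ would fail. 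The paper closes this by explicit {\sc Replace-5-by-6} contradictions: precisely because $\omega(e_{j'-1})=3$ there is a second degree-one edge $e^*_{h'_1,j'_1}\in C^*(e_{h',\Cdot})$ in addition to $e^*_{i_1,j_1}\in C^*(e_{\Cdot,j})$, and these two free optimal edges let one swap the five edges of $C(C^*(e_{\Cdot,j'-1}))$ for six whenever some far-side $e^*$ conflicts too few edges of $M$, forcing all of $|C(e^*_{i-3,j'-3})|, |C(e^*_{h-1,\ell-1})|, |C(e^*_{h,\ell})|\ge 3$. The same machinery, not a ``direct token count,'' is what yields $\omega(e_{j'+1})\le\frac{13}{6}$ and $\omega(e_{j'+2})\le\frac 73$ in Fig.~\ref{fig414c}; your proposal underestimates that step. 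Item 1 likewise needs one of these two-degree-one-edge swaps (four edges of $C(C^*(e_{i,\Cdot}))$ out for five) to pin $|C(e^*_h)|, |C(e^*_{h-1})|\ge 3$ before the $\frac{29}{12}$ arithmetic goes through, so you should make that swap explicit rather than appealing generically to ``the optimality of $\mcL\mcS$.''
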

\begin{proof}
We first note that in items 2) and 5) we do not succeed in getting a better bound, and thus quote the existing bounds.
More specifically, for the edge $e_{j'+2}$ in Fig.~\ref{fig411c}, $\omega(e_{j'+2}) \le \frac {35}{12}$ is from Lemma \ref{lemma414};
for the others, $\omega(e) \le \frac {35}{12}$ is from Lemma \ref{lemma415}.

In the rest of the proof, we let $e^*_{i_1, j_1}$ denote the edge of $C^*(e_{i, j})$ such that $|C(e^*_{i_1, j_1})| = 1$.

\begin{enumerate}
\item
	The edge $e_{j'+2}$ in Figs.~\ref{fig404}, \ref{fig407a} and \ref{fig410b}.


	One sees that for the edge $e_{j'+2}$ in Fig.~\ref{fig410b},
	its $\omega(e_{j'+2})$ is larger (or worst) when $\mcN_p(e_{j'+2}) = \emptyset$ than when $\mcN_p(e_{j'+2}) \ne \emptyset$.
	We therefore consider the worse case when $\mcN_p(e_{j'+2}) = \emptyset$;
	this way, all three edges can be discussed exactly the same (ignoring the incidence information of $i-2$ and $j'-2$ in $M$).

	Assume the edge $e_{j'+2}$ is incident at $h$, i.e., $e_{h, j'+2} := e_{j'+2}$.
	We consider the case where $|C^*(e_{h, j'+2})| \ge 5$, as otherwise $\omega(e_{j'+2}) \le 1 + \frac 12 + \frac 12 + \frac 14 < \frac {29}{12}$.
	When there is an edge $e^*_{h_1, \ell_1} \in C^*(e_{h, j'+2})$ such that $|C(e^*_{h_1, \ell_1})| = 1$,
	then either $h_1 = h+1$ or $\ell_1 = j'+3$.
	If there is an edge $e^*_{h}$ of $M^*$ incident at $h$, then $|C(e^*_{h})| \ge 3$,
since otherwise the algorithm $\mcL\mcS$ would be able to expand $M$ by swapping the four edges $e_{i,j}, e_{h-1,j'+1}, e_{h,j'+2}, e_{i+2}$ of $C(C^*(e_{i, \Cdot}))$
by the five edges $e^*_{h}, e^*_{h_1,\ell_1}, e^*_{i_1,j_1}, e^*_{i, j'}, e^*_{i+1,j'+1}$;
for the same reason, if there is an edge $e^*_{h-1}$ of $M^*$ incident at $h-1$, then $|C(e^*_{h-1})| \ge 3$.
These together say that the value combination of $\tau(e_{h, j'+2} \gets C^*(e_{h, \Cdot}))$ is impossible to have two values $\ge \frac 12$.
Therefore, if $|C^*(e_{h, j'+2})| = 5$, we have $\omega(e_{h, j'+2}) \le 1 + \frac 12 + \frac 13 + \frac 13 + \frac 14 = \frac {29}{12}$,
due to Lemmas \ref{lemma401}, \ref{lemma403}, and $|C(e^*_{i+1, j'+1})| = 4$ and $|C(e^*_{i+2, j'+2})| \ge 3$. 
	When there is no edge $e^* \in C^*(e_{h, j'+2})$ such that $|C(e^*)| = 1$,
if $|C^*(e_{h, j'+2})| = 5$, then we have $\omega(e_{h, j'+2}) \le 4 \times \frac 12 + \frac 14 < \frac {29}{12}$,
due to $|C(e^*_{i+1, j'+1})| = 4$. 

	We next consider $|C^*(e_{h, j'+2})| = 6$ and
$C^*(e_{h, j'+2}) = \{e^*_{i+1, j'+1}$, $e^*_{i+2, j'+2}$, $e^*_{i+3, j'+3}$, $e^*_{h-1, \ell-1}$, $e^*_{h, \ell}$, $e^*_{h+1, \ell+1}\}$, for some $\ell$.
If there is an edge of $C^*(e_{h, j'+2})$ conflicts only one edge of $M$, then this edge has to be $e^*_{h+1,\ell+1}$.
Then we have $|C(e^*_{i+2, j'+2})| \ge 4$, since otherwise the algorithm $\mcL\mcS$ would replace
the four edges $e_{i, j}$, $e_{i+2}$, $e_{h-1, j'+1}$, $e_{h, j'+2}$ by
the five edges $e^*_{i_1, j_1}, e^*_{i, j'}, e^*_{i+1, j'+1}, e^*_{i+2, j'+2}, e^*_{h+1, \ell+1}$ to expand $M$, a contradiction.
For a similar reason, we have $|C(e^*_{h, \ell})| \ge 3$ and then $|C(e^*_{h-1, \ell-1})| \ge 4$.
It follows that $|C(e^*_{i+3, j'+3})| \ge 3$ and $|C(e^*_{h, \ell})| = 3$.
Therefore, we have $\omega(e_{h, j'+2}) \le \big( 2 \times \frac 14 + \frac 13 \big) + \big( 1 + \frac 13 + \frac 14 \big) = \frac{29}{12}$.
When there is no edge of $C^*(e_{h, j'+2})$ conflicts only one edge of $M$,
since we cannot have both $|C(e^*_{h, \ell})| = |C(e^*_{h-1, \ell-1})| = 2$,
we have $\omega(e_{h, j'+2}) \le \big( \frac 14 + \frac 13 + \frac 12 \big) + \big( 2 \times \frac 12 + \frac 13 \big) = \frac{29}{12}$.

Note that in the above proof we did not use the incidence information of $i-2$ and $j'-2$ in $M$.
In summary, we have $\omega(e_{j'+2}) \le \frac{29}{12} \approx 2.417$ in Figs.~\ref{fig404}, \ref{fig407a} and \ref{fig410b}.

\item[{\textsf{3.}}]
	The edges $e_{i-2}$ and $e_{i+2}$ in Fig.~\ref{fig414b}, which can be discussed exactly the same.

Recall that there is an edge $e^*_{i_1, j_1} \in C^*(e_{\Cdot, j})$ such that $|C(e^*_{i_1, j_1})| = 1$.

From Lemma \ref{lemma415}, we know that when $\omega(e_{j'-1}) = 3$, $e_{i-2}$ has to be a parallel edge of $M$;
if $e_{i-2}$ is a singleton then $\omega(e_{i-2}) \le \frac{35}{12}$,
and if $\omega(e_{j'-1}) < 3$, then $\omega(e_{j'-1}) \le \frac{35}{12}$.
We consider in the following $\omega(e_{j'-1}) = 3$.

Assume the edge $e_{j'-1}$ is incident at $h'$, i.e., $e_{h', j'-1} := e_{j'-1}$.
Thus we have
$|C(e^*_{i-2, j'-2})| = 3$ (that is, no edge of $M$ incident at $j'-3$),
$|C(C^*(e_{\Cdot, j'-1}))| = 5$, and
there is an edge $e^*_{h'_1, j'_1} \in C^*(e_{h', \Cdot})$ such that $|C(e^*_{h'_1, j'_1})| = 1$.

Assume the edge $e_{i-2}$ is incident at $\ell$, i.e., $e_{i-2,\ell} := e_{i-2}$.
We observe first that if there is an edge $e^*_{\ell}$ of $M^*$ incident at $\ell$, then $|C(e^*_{\ell})| \ge 3$,
since otherwise the algorithm $\mcL\mcS$ would be able to expand $M$ by swapping the five edges of $C(C^*(e_{\Cdot, j'-1}))$ by six edges including $e^*_{\ell}$;
for the same reason, if there is an edge $e^*_{\ell-1}$ of $M^*$ incident at $\ell-1$, then $|C(e^*_{\ell-1})| \ge 3$;
if there is an edge $e^*_{i-3} = e^*_{i-3, j'-3}$ of $M^*$ incident at $i-3$, then $|C(e^*_{i-3})| \ge 3$.
This says that the value combination of $\tau(e_{i-2, \ell} \gets C^*(e_{i-2, \ell}))$ is impossible to have two values $\ge \frac 12$.

If there is an edge of $C^*(e_{i-2, \ell})$ conflicting only one edge of $M$, then this edge has to be $e^*_{\ell+1}$.
In this case, the algorithm $\mcL\mcS$ would replace the five edges of $C(C^*(e_{\Cdot, j'-1}))$ by
the edges $e^*_{\ell+1}, e^*_{h'_1, j'_1}, e^*_{i_1, j_1}$ and the three edges of $C^*(e_{\Cdot, j'-1})$ to expand $M$, a contradiction.
Therefore, there is no edge of $C^*(e_{i-2, \ell})$ conflicting only one edge of $M$.
It follows that if $|C^*(e_{i-2,\ell})| \le 5$, we have $\omega(e_{i-2,\ell}) \le \frac 12 + 4 \times \frac 13 = \frac {11}6$.

We next assume $C^*(e_{i-2, \ell}) = \big\{e^*_{i-1, j'-1}, e^*_{i-2, j'-2}, e^*_{i-3, j'-3}, e^*_{h-1, \ell-1}, e^*_{h, \ell}, e^*_{h+1, \ell+1}\}$, for some $h$.
Note that every one of $e^*_{i-3,j'-3}, e^*_{h-1,\ell-1}, e^*_{h,\ell}$ conflicts both edges $e_{i-3,\ell-1}$ and $e_{i-2,\ell}$.
If there is one of them conflicting only these two edges of $M$, then the five edges of $C(C^*(e_{\Cdot, j'-1}))$ can be replaced by six edges to expand $M$.
It thus follows that all three $|C(e^*_{i-3, j'-3})|$, $|C(e^*_{h-1, \ell-1})|$, $|C(e^*_{h, \ell})| \ge 3$;
and subsequently $\omega(e_{i-2, \ell}) \le 5 \times \frac 13 + \frac 12 = \frac {13}6 \approx 2.167$.

In summary, we have for $e_{i-2}$ in Fig.~\ref{fig414b}:
if $\omega(e_{j'-1}) \le \frac{35}{12}$ then $\omega(e_{i-2}) \le \frac{35}{12}$;
if $\omega(e_{j'-1}) = 3$ then $\omega(e_{i-2}) \le \frac {13}6$.
Similarly, we have for $e_{i+2}$ in Fig.~\ref{fig414b}:
if $\omega(e_{j'+1}) \le \frac{35}{12}$ then $\omega(e_{i+2}) \le \frac{35}{12}$;
if $\omega(e_{j'+1}) = 3$ then $\omega(e_{i+2}) \le \frac {13}6$.

\item[\textsf{4.1.}]
	The edge $e_{j'+1}$ in Fig.~\ref{fig414c}.

Recall that there is an edge $e^*_{i_1, j_1} \in C^*(e_{\Cdot, j})$ such that $|C(e^*_{i_1, j_1})| = 1$.

From Lemma \ref{lemma415}, we know that if $\omega(e_{j'-1}) < 3$, then $\omega(e_{j'-1}) \le \frac{35}{12}$.
We consider in the following $\omega(e_{j'-1}) = 3$.
Assume the edge $e_{j'-1}$ is incident at $h'$, i.e., $e_{h', j'-1} := e_{j'-1}$.
From $\omega(e_{j'-1}) = 3$,
there is an edge $e^*_{h'_1, j'_1} \in C^*(e_{h', \Cdot})$ such that $|C(e^*_{h'_1, j'_1})| = 1$.

Assume the edge $e_{j'+1}$ is incident at $h$, i.e., $e_{h, j'+1} := e_{j'+1}$.
We observe first that if there is an edge $e^*_{h}$ of $M^*$ incident at $h$, then $|C(e^*_{h})| \ge 3$,
since otherwise the algorithm $\mcL\mcS$ would be able to expand $M$ by swapping the five edges of $C(C^*(e_{i, \Cdot}))$ by six edges including $e^*_{h}$;
for the same reason, if there is an edge $e^*_{h+1}$ of $M^*$ incident at $h+1$, then $|C(e^*_{h+1})| \ge 3$.
This says that the value combination of $\tau(e_{h, j'+2} \gets C^*(e_{h, \Cdot}))$ is impossible to have two values $\ge \frac 12$.

If there is an edge of $C^*(e_{h, j'+1})$ conflicting only one edge of $M$, then this edge has to be $e^*_{h-1}$.
In this case, the algorithm $\mcL\mcS$ would replace the five edges of $C(C^*(e_{i, \Cdot}))$ by
the edges $e^*_{h-1}, e^*_{h'_1, j'_1}, e^*_{i_1, j_1}$ and the three edges of $C^*(e_{i, \Cdot})$ to expand $M$, a contradiction.
Therefore, there is no edge of $C^*(e_{h, j'+1})$ conflicting only one edge of $M$.
It follows that if $|C^*(e_{h, j'+1})| \le 5$, we have $\omega(e_{h, j'+1}) \le \frac 12 + \frac 13 + \frac 12 + \frac 13 + \frac 13 = 2$.

We next assume $C^*(e_{h, j'+1}) = \big\{e^*_{h-1, \ell-1}, e^*_{h, \ell}, e^*_{h+1, \ell+1}, e^*_{i, j'}, e^*_{i+1, j'+1}, e^*_{i+2, j'+2}\}$, for some $\ell$.
Note that every one of $e^*_{i+2,j'+2}, e^*_{h,\ell}, e^*_{h+1,\ell+1}$ conflicts both edges $e_{h,j'+1}$ and $e_{h+1,j'+2}$.
If there is one of them conflicting only these two edges of $M$, then the five edges of $C(C^*(e_{i, \Cdot}))$ can be replaced by six edges to expand $M$.
It thus follows that all three $|C(e^*_{i+2, j'+2})|$, $|C(e^*_{h, \ell})|$, $|C(e^*_{h+1, \ell+1})| \ge 3$;
and subsequently $\omega(e_{h, j'+1}) \le 5 \times \frac 13 + \frac 12 = \frac{13}6 \approx 2.167$.

From Lemma \ref{lemma415}, we also know that if $\omega(e_{i-2}) < 3$, then $\omega(e_{i-2}) \le \frac{35}{12}$.
We consider $\omega(e_{i-2}) = 3$,
which implies that there is an edge $e^*_{h'_1, j'_1} \in C^*(e_{i-2})$ such that $|C(e^*_{h'_1, j'_1})| = 1$.
It follows by the same argument as in the above that $\omega(e_{j'+1}) \le \frac {13}6$.

In summary, we have for $e_{j'+1}$ in Fig.~\ref{fig414c}:
if both $\omega(e_{j'-1}), \omega(e_{i-2}) \le \frac{35}{12}$ then $\omega(e_{j'+1}) \le \frac{35}{12}$ too;
otherwise, $\omega(e_{j'+1}) \le \frac {13}6$.

\item[\textsf{4.2.}]
	The edge $e_{j'+2}$ in Fig.~\ref{fig414c}.

	The argument here is the same as in the last item (4.1) to consider $\omega(e_{j'-1}) = 3$.

Assume the edge $e_{j'+2}$ is incident at $h$, i.e., $e_{h, j'+2} := e_{j'+2}$.

We observe first, for the same reasons, that if there is an edge $e^*_{h}$ of $M^*$ incident at $h$, then $|C(e^*_{h})| \ge 3$;
if there is an edge $e^*_{h+1}$ of $M^*$ incident at $h+1$, then $|C(e^*_{h+1})| \ge 3$;
there is no edge of $C^*(e_{h, j'+2})$ conflicting only one edge of $M$;
if there is an edge $e^*_{j'+2}$ of $M^*$ incident at $j'+2$, then $|C(e^*_{j'+2})| \ge 2$;
if there is an edge $e^*_{j'+3}$ of $M^*$ incident at $j'+3$, then $|C(e^*_{j'+3})| \ge 2$.
These together say that the value combination of $\tau(e_{h, j'+2} \gets C^*(e_{h, j'+2}))$ is impossible to have a value $1$,
and it is impossible to have three values $\ge \frac 12$.
It follows that $\omega(e_{h, j'+2}) \le 2 \times \frac 12 + 4 \times \frac 13 = \frac 73 \approx 2.333$.

In summary, we have for $e_{j'+2}$ in Fig.~\ref{fig414c}:
if both $\omega(e_{j'-1}), \omega(e_{i-2}) \le \frac{35}{12}$ then $\omega(e_{j'+2}) \le \frac{35}{12}$ too;
otherwise, $\omega(e_{j'+1}) \le \frac 73$.
\end{enumerate}
This finishes the proof of the lemma.
\end{proof}

Lemma~\ref{lemma405} states the $12$ possible value combinations of $\tau(e_{i, j} \gets C^*(e_{i, \Cdot}))$ with $\omega(e_{i, j}) \ge 3$,
which are
$\big\{1, \frac 12, \frac 12\big\}$,
$\big\{1, \frac 12, \frac 13\big\}$,
$\big\{1, \frac 12, \frac 14\big\}$,
$\big\{1, \frac 13, \frac 13\big\}$,
$\big\{\frac 12, \frac 12, \frac 13\big\}$,
$\big\{\frac 12, \frac 12, \frac 14\big\}$,
$\big\{\frac 12, \frac 13, \frac 13\big\}$,
$\big\{\frac 12, \frac 13, \frac 14\big\}$,
$\big\{\frac 12, \frac 13, \frac 15\big\}$,
$\big\{\frac 12, \frac 14, \frac 14\big\}$,
$\big\{\frac 13, \frac 13, \frac 13\big\}$, and
$\big\{\frac 12, \frac 12, 0\big\}$.
We next count the minimum number of known-to-be parallel edges of $M$ in $C(C^*(e_{i, \Cdot}))$ and
use Lemmas~\ref{lemma419} and \ref{lemma420} to upper bound their $\omega(\cdot)$ values respectively, for each combination.

\begin{enumerate}
\item
	$\big\{1, \frac 12, \frac 12\big\}$:
	there are at least $2$ parallel edges of $M$, each with $\omega(\cdot) \le \frac 52 = 2.5$ (by Lemma~\ref{lemma419});
\item
	$\big\{1, \frac 12, \frac 13\big\}$:
	there is at least $1$ parallel edge of $M$, with $\omega(\cdot) \le \frac 52 = 2.5$ (by Lemma~\ref{lemma419});
\item
	$\big\{1, \frac 12, \frac 14\big\}$:
	there are at least $2$ parallel edges of $M$, each with $\omega(\cdot) \le \frac{29}{12} \approx 2.417$ (by Lemmas~\ref{lemma419}, \ref{lemma420});
\item
	$\big\{1, \frac 13, \frac 13\big\}$:
	no parallel edge;
\item
	$\big\{\frac 12, \frac 12, \frac 13\big\}$:
	there are at least $2$ parallel edges of $M$, each with $\omega(\cdot) \le \frac 52 = 2.5$ (by Lemma~\ref{lemma419});
\item
	$\big\{\frac 12, \frac 12, \frac 14\big\}$:
	there are at least $2$ parallel edges of $M$, each with $\omega(\cdot) \le \frac{29}{12} \approx 2.417$ (by Lemmas~\ref{lemma419}, \ref{lemma420});
\item
	$\big\{\frac 12, \frac 13, \frac 13\big\}$:
	there is at least $1$ parallel edge of $M$, with $\omega(\cdot) \le \frac 73 \approx 2.333$ (by Lemma~\ref{lemma419});
\item
	$\big\{\frac 12, \frac 13, \frac 14\big\}$: 
	there are three possible cases,
	\begin{enumerate}
	\item
		there is at least $1$ parallel edge of $M$ with $\omega(\cdot) \le \frac 73 \approx 2.333$ (by Lemma~\ref{lemma419}), or
	\item
		there are at least $2$ parallel edges of $M$, each with $\omega(\cdot) \le \frac{29}{12} \approx 2.417$ (by Lemmas~\ref{lemma419}, \ref{lemma420}), or
	\item
		there are at least $2$ parallel edges of $M$, one with $\omega(\cdot) \le \frac 94 = 2.25$ and the other with $\omega(\cdot) \le \frac {35}{12} \approx 2.917$
		(by Lemmas~\ref{lemma419}, \ref{lemma420});
	\end{enumerate}
\item
	$\big\{\frac 12, \frac 13, \frac 15\big\}$:
	there are at least $2$ parallel edges of $M$, each with $\omega(\cdot) \le \frac{11}{5} = 2.2$ (by Lemma~\ref{lemma419});
\item
	$\big\{\frac 12, \frac 14, \frac 14\big\}$:
	no parallel edge;
\item
	$\big\{\frac 13, \frac 13, \frac 13\big\}$: 
	there are five possible cases,
	\begin{enumerate}
	\item
		there is no singleton edge other than $e_{i,j}$ with $\omega(\cdot) \ge 3$, no parallel edge;
	\item
		there is no singleton edge other than $e_{i,j}$ with $\omega(\cdot) \ge 3$,
		but there is at least $1$ parallel edge of $M$ with $\omega(\cdot) \le \frac {35}{12} \approx 2.917$ (by Lemmas~\ref{lemma419}, \ref{lemma420}),
	\item
		there is no singleton edge other than $e_{i,j}$ with $\omega(\cdot) \ge 3$,
		but there are at least $2$ parallel edges of $M$, each with $\omega(\cdot) \le \frac {35}{12} \approx 2.917$ (by Lemmas~\ref{lemma419}, \ref{lemma420}),
	\item
		there is one singleton edge other than $e_{i,j}$ with $\omega(\cdot) = 3$,
		accompanied by at least $1$ parallel edge of $M$ with $\omega(\cdot) \le \frac{13}{6} \approx 2.167$ (by Lemma~\ref{lemma420}),
	\item
		there is one singleton edge other than $e_{i,j}$ with $\omega(\cdot) = 3$,
		accompanied by at least $2$ parallel edges of $M$,
		one with $\omega(\cdot) \le \frac {13}6 \approx 2.167$ and the other with $\omega(\cdot) \le \frac 73 \approx 2.333$ (by Lemma~\ref{lemma420});
	\end{enumerate}
\item
	$\big\{\frac 12, \frac 12, 0\big\}$:
	no parallel edge.
\end{enumerate}

Lemma~\ref{lemma404} states the $8$ possible value combinations of $\tau(e_{i, j} \gets C^*(e_{i, j}))$ with $\omega(e_{i, j}) \ge 3$,
which are
$\big\{1, \frac 12, \frac 12, \frac 12, \frac 12, \frac 13\big\}$, 
$\big\{1, \frac 12, \frac 12, \frac 12, \frac 12, \frac 14\big\}$,
$\big\{1, \frac 12, \frac 12, \frac 12, \frac 13, \frac 13\big\}$,
$\big\{1, \frac 12, \frac 12, \frac 12, \frac 13, \frac 14\big\}$,
$\big\{1, \frac 12, \frac 12, \frac 12, \frac 13, \frac 15\big\}$, 
$\big\{1, \frac 12, \frac 12, \frac 12, \frac 14, \frac 14\big\}$,
$\big\{1, \frac 12, \frac 12, \frac 13, \frac 13, \frac 13\big\}$, and
$\big\{1, \frac 12, \frac 12, \frac 12, \frac 12, 0\big\}$.
These combinations give rise to $\omega(e_{i, j}) = \frac{10}{3}$, $\frac{13}{4}$, $\frac{19}{6}$, $\frac{37}{12}$, $\frac{91}{30}$, $3$, $3$ and $3$ respectively.
Based on the above list, we conclude the minimum number of known-to-be parallel edges of $M$ in $C(C^*(e_{i, j}))$ for each combination,
using the cut-off upper bound $2.5$ on their $\omega(\cdot)$ values, as follows.

\begin{enumerate}
\item
	$\big\{1, \frac 12, \frac 12, \frac 12, \frac 12, \frac 13\big\}$ ($\omega(e_{i, j}) = \frac{10}{3}$):
	there are at least $4$ parallel edges of $M$; 
\item
	$\big\{1, \frac 12, \frac 12, \frac 12, \frac 12, \frac 14\big\}$ ($\omega(e_{i, j}) = \frac{13}{4}$):
	there are at least $4$ parallel edges of $M$; 
\item
	$\big\{1, \frac 12, \frac 12, \frac 12, \frac 13, \frac 13\big\}$ ($\omega(e_{i, j}) = \frac{19}{6}$):
	there are at least $3$ parallel edges of $M$; 
\item
	$\big\{1, \frac 12, \frac 12, \frac 12, \frac 13, \frac 14\big\}$ ($\omega(e_{i, j}) = \frac{37}{12}$):
	there are at least $3$ parallel edges of $M$; 
\item
	$\big\{1, \frac 12, \frac 12, \frac 12, \frac 13, \frac 15\big\}$ ($\omega(e_{i, j}) = \frac{91}{30}$):
	there are at least $4$ parallel edges of $M$; 
\item
	$\big\{1, \frac 12, \frac 12, \frac 12, \frac 14, \frac 14\big\}$ ($\omega(e_{i, j}) = 3$):
	there are at least $2$ parallel edges of $M$; 
\item
	$\big\{1, \frac 12, \frac 12, \frac 13, \frac 13, \frac 13\big\}$ ($\omega(e_{i, j}) = 3$):
	there are two possible cases, 
	\begin{enumerate}
	\item
		there is no singleton edge other than $e_{i,j}$ with $\omega(\cdot) \ge 3$,
		but there are at least $2$ parallel edges of $M$; 
	\item
		there is one singleton edge other than $e_{i,j}$ with $\omega(\cdot) = 3$,
		accompanied by at least $3$ parallel edges of $M$; 
	\end{enumerate}
\item
	$\big\{1, \frac 12, \frac 12, \frac 12, \frac 12, 0\big\}$ ($\omega(e_{i, j}) = 3$):
	there are at least $2$ parallel edges of $M$. 
\end{enumerate}

We conclude this section with the following lemma.

\begin{lemma}
\label{lemma421}
Every edge $e_{i, j} \in M$ with $\omega(e_{i, j}) \ge 3$ must be a singleton,
and $\omega(e_{i,j}) \in \big\{\frac{10}{3}, \frac{13}{4}, \frac{19}{6}, \frac{37}{12}, \frac{91}{30}, 3\big\}$.
Furthermore,
\begin{enumerate}
\item
	the existence of an edge $e_{i, j} \in M$ with $\omega(e_{i, j}) = \frac{10}{3}$ or $\frac{13}{4}$ or $\frac{91}{30}$
	is accompanied with at least $4$ parallel edges of $M$ each with $\omega(\cdot) \le 2.5$;
\item
	the existence of an edge $e_{i, j} \in M$ with $\omega(e_{i, j}) = \frac{19}{6}$ or $\frac{37}{12}$
	is accompanied with at least $3$ parallel edges of $M$ each with $\omega(\cdot) \le 2.5$;
\item
	the existence of an edge $e_{i, j} \in M$ with $\omega(e_{i, j}) = 3$
	is accompanied with at least $1.5$ parallel edges of $M$ each with $\omega(\cdot) \le 2.5$.
\end{enumerate}
Each of these accompanying parallel edges must belong to either $C(C^*(e_{i, \Cdot}))$ with $|C^*(e_{i, \Cdot})| = 3$,
or $C(C^*(e_{\Cdot, j}))$ with $|C^*(e_{\Cdot, j})| = 3$, for some $e_{i,j}$.
\end{lemma}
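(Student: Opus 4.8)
The plan is to assemble this lemma from Lemmas~\ref{lemma404}, \ref{lemma410}, \ref{lemma419} and \ref{lemma420} together with the two case lists compiled immediately above it; since the configuration analysis and the individual $\omega(\cdot)$ bounds have already been carried out, the proof is essentially an aggregation step. I would first settle the opening sentence. That every edge with $\omega(e_{i,j}) \ge 3$ must be a singleton is precisely Lemma~\ref{lemma410}, and that $\omega(e_{i,j})$ lies in $\{10/3, 13/4, 19/6, 37/12, 91/30, 3\}$ is read directly off Lemma~\ref{lemma404}, whose eight admissible value combinations of $\tau(e_{i,j} \gets C^*(e_{i,j}))$ evaluate to exactly these six numbers.

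For the accompanying conditions I would process each of the eight value combinations of $\tau(e_{i,j} \gets C^*(e_{i,j}))$ by splitting $C^*(e_{i,j})$ into its two halves $C^*(e_{i,\Cdot})$ and $C^*(e_{\Cdot,j})$. For each half I read off, from its forced configuration among Figs.~\ref{fig401}--\ref{fig416}, how many parallel edges of $M$ in the corresponding $C(C^*(e_{i,\Cdot}))$ or $C(C^*(e_{\Cdot,j}))$ are \emph{known} to be present, and then add the two per-half counts. The $\omega$-values of these parallel edges are bounded by Lemma~\ref{lemma419} (giving $\omega(\cdot) \le \frac 52$) and, for the edges excluded there, by Lemma~\ref{lemma420}. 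Imposing the cut-off $\frac 52$ — i.e.\ counting only parallel edges we can prove carry $\omega(\cdot) \le \frac 52$ — yields, combination by combination, the tallies $4,4,3,3,4,2,\{2\text{ or }3\},2$ attached respectively to $\omega(e_{i,j}) = \frac{10}3, \frac{13}4, \frac{19}6, \frac{37}{12}, \frac{91}{30}, 3, 3, 3$. Regrouping by the value of $\omega(e_{i,j})$ then gives items~(1) and~(2) directly. The final membership assertion follows because every parallel edge that has been counted is drawn from a half whose $M^*$-side consists of three consecutive parallel edges, i.e.\ has size $3$; the halves of size $2$ (such as the combination $\big\{\frac 12, \frac 12, 0\big\}$) contribute no counted parallel edge.

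The delicate point, and the main obstacle, is item~(3), specifically the fractional count $1.5$ arising for the combination $\big\{1, \frac 12, \frac 12, \frac 13, \frac 13, \frac 13\big\}$ with $\omega(e_{i,j}) = 3$. Here I would separate the two subcases already isolated above. If no other singleton of $M$ in $C(C^*(e_{i,j}))$ has $\omega(\cdot) \ge 3$, the configuration still forces at least two parallel edges, so the bound $1.5$ holds comfortably. But if a \emph{second} singleton with $\omega(\cdot) = 3$ is present — the situation coming from the $\big\{\frac 13, \frac 13, \frac 13\big\}$ half governed by Lemma~\ref{lemma415} and Figs.~\ref{fig414b}/\ref{fig414c} — then the two $\omega=3$ singletons jointly command at least three parallel edges, whose $\omega(\cdot)$ values Lemma~\ref{lemma420} bounds by $\frac{13}6$ and $\frac 73$, both below $\frac 52$. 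Amortizing these three edges evenly over the two singletons produces exactly $1.5$ parallel edges per $\omega=3$ singleton, which is where the half-integer originates. The care required is to verify that these parallel edges are genuinely distinct — so the two singletons are not crediting the same edge — and that each provably carries $\omega(\cdot) \le \frac 52$; I expect this disjointness bookkeeping to be the only nonroutine part of the argument.
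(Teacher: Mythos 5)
Your proposal is correct and follows essentially the same route as the paper, which presents no separate proof for this lemma but instead derives it from the two preceding case enumerations (per-half counts from Lemmas~\ref{lemma419} and \ref{lemma420}, then the combined tallies $4,4,3,3,4,2,\{2\text{ or }3\},2$ with the $2.5$ cut-off), together with Lemmas~\ref{lemma410} and \ref{lemma404} for the opening sentence. Your treatment of the $1.5$ in item~(3) — amortizing the at least $3$ parallel edges of case $\big\{1,\frac 12,\frac 12,\frac 13,\frac 13,\frac 13\big\}$ over the two $\omega=3$ singletons — matches the paper's intent, and the cross-singleton double-counting you flag is deferred by the paper to Lemma~\ref{lemma422} in the subsequent averaging step rather than resolved inside this lemma.
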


\subsection{An upper bound on the average value of $\omega(e)$}
\label{sec4.6}
Let $M_{\ge 3}$ be the subset of all the edges of $M$ with $\omega(\cdot) \ge 3$, and let $n_s = |M_{\ge 3}|$.
Let $P$ denote the subset of all the accompanying parallel edges of $M$ determined in Lemma~\ref{lemma421}, and let $n_p = |P|$.
From Lemma \ref{lemma410}, every edge of $M_{\ge 3}$ is a singleton,
and thus $M_{\ge 3} \cap P = \emptyset$.

\begin{lemma}
\label{lemma422}
Each edge of $P$ belongs to $C(C^*(e_{i, j}))$ for at most four distinct edges $e_{i,j} \in M_{\ge 3}$.
\end{lemma}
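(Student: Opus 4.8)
The plan is to fix an arbitrary parallel edge $e = e_{h,\ell} \in P$ and bound the number of $e_{i,j} \in M_{\ge 3}$ with $e \in C(C^*(e_{i,j}))$. First I would reduce the count to the consecutive parallel triples supplied by Lemma~\ref{lemma421}: since $e \in C(C^*(e_{i,j}))$ means some $e^* \in M^*$ conflicts with both $e$ and $e_{i,j}$, and since Lemma~\ref{lemma421} guarantees that $e$ accompanies $e_{i,j}$ through $C(C^*(e_{i, \Cdot}))$ with $|C^*(e_{i, \Cdot})| = 3$ or through $C(C^*(e_{\Cdot, j}))$ with $|C^*(e_{\Cdot, j})| = 3$, the connecting edge $e^*$ must be one of the three consecutive parallel $M^*$-edges forming the relevant triple. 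The key leverage is that both $M$ and $M^*$ are matchings: the $A$-side triple of $e_{i,j}$ is exactly the set of $M^*$-edges at the $A$-positions $i-1, i, i+1$, so it is determined by $i$ alone, and its middle edge $e^*_{i,j'}$ is simultaneously the unique $M^*$-edge at the $B$-position $j'$. Hence knowing the $A$-position or the $B$-position of a single triple edge pins down $i$, and therefore the bad edge $e_{i,j}$ itself (the unique $M$-edge at $A$-position $i$).

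Next I would count by \emph{role}. Classify each bad edge counted according to whether $e$ lies in its $A$-side configuration $C(C^*(e_{i, \Cdot}))$ or its $B$-side configuration $C(C^*(e_{\Cdot, j}))$; by symmetry it suffices to treat the $A$-side and then double. Consulting the catalogue of the $27$ configurations in Figs.~\ref{fig401}--\ref{fig416} (equivalently the position lists compiled in Lemmas~\ref{lemma419} and \ref{lemma420}), a parallel edge can occupy only the positions $e_{i\pm 2}$ (an \emph{$A$-flavoured} role, conflicting with the triple through the $A$-side) or $e_{j'+p}$ for $p \in \{-2,-1,0,1,2\}$ (a \emph{$B$-flavoured} role, conflicting through the $B$-side). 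For an $A$-flavoured role, $h = i \pm 2$ forces $i \in \{h-2, h+2\}$; for a $B$-flavoured role, the connecting triple edge is the unique $M^*$-edge at the $B$-position $\ell + s$ (with $s = -p$), whose $A$-position determines $i$. In every case the role fixes $i$, hence the bad edge, uniquely.

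The heart of the argument is then to show that $e$ can realise at most two distinct $A$-side roles, and symmetrically at most two $B$-side roles, so that at most four distinct bad edges arise. Here I would exploit two facts: that $e$ is parallel, so its parallel partner is forced into $\mcN_p[C(C^*(e_{i, \Cdot}))]$ and cannot itself serve as a triple edge; and that each configuration is certified against the operations {\sc Replace-5-by-6} and {\sc Reduce-5-by-5}. Assuming $e$ played three or more incompatible roles would place several consecutive parallel $M^*$-triples in overlapping windows around $e$, and I would derive from the coexistence of their end edges a swap that either enlarges $M$ or strictly reduces its singletons --- exactly the kind of contradiction used in Lemmas~\ref{lemma414}, \ref{lemma415} and \ref{lemma420}, only now read from the viewpoint of the shared edge $e$ rather than of the central bad edge.

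The main obstacle I anticipate is the $B$-flavoured case. A crude position count is harmless for $A$-flavoured roles, since those force the bad edge to lie within $A$-distance two of $e$; but a $B$-flavoured connecting edge $\mu_B(\ell+s)$ may have an $A$-position arbitrarily far from $h$, so neither coordinate of the resulting bad edge is a priori near $e$. Controlling these requires the full local-optimality case analysis over the admissible configurations, tracking which of the vertices near $\ell-2, \ldots, \ell+2$ actually carry $M^*$-edges belonging to consecutive parallel runs of length at least three, and ruling out the simultaneous presence of more than two such runs by an explicit augmenting or singleton-reducing swap. Once that is established, summing the at-most-two $A$-side and at-most-two $B$-side roles yields the bound of four.
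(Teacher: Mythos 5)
Your overall strategy --- localize around the fixed edge $e_{h,\ell}\in P$, use the fact that $M^*$ is a matching and that each accompanying configuration is built on a consecutive parallel triple to pin each ``role'' to a unique bad edge, and then rule out excess roles by a swap contradiction --- is the right shape, and your reduction to the triples via Lemma~\ref{lemma421} matches the paper. But the proof has a genuine gap exactly where you flag it: the claim that $e$ realises at most two $A$-side roles and at most two $B$-side roles is never established, and it is the entire content of the lemma. By your own accounting, an $A$-side role is determined either by $i\in\{h-2,h+2\}$ (two candidates) or by $j'\in\{\ell-2,\ldots,\ell+2\}$ (five candidates), so you start with up to seven candidate bad edges per side and must eliminate all but two; you defer this to ``the full local-optimality case analysis'' and ``an explicit augmenting or singleton-reducing swap'' without constructing either. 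Moreover it is not clear that your partition even supports a $2+2$ split: the true bound of four could in principle be realised as three $A$-side roles plus one $B$-side role, which would be consistent with the lemma but fatal to your route.

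The paper avoids this by partitioning differently: it counts bad edges according to which endpoint of $e_{h,\ell}$ the conflict passes through ($d^B_\ell$ or $d^A_h$), not according to which side of the bad edge the configuration sits on. Through $d^B_\ell$, the relevant $M^*$-edge must lie at a $B$-position in $\{\ell-2,\ldots,\ell+2\}$; the presence of the parallel partner $e_{h+1,\ell+1}$ rules out $e^*_\ell$ and $e^*_{\ell+1}$ as anchors for the vertex $d^A_i$, leaving only $e^*_{\ell-2}$, $e^*_{\ell-1}$, $e^*_{\ell+2}$ and hence at most three bad edges per endpoint; and if all three occurred, seven consecutive parallel $M^*$-edges would be forced, six of which conflict with only five edges of $M$, so {\sc Replace-5-by-6} would fire --- one concrete contradiction reducing three to two. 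You would need to supply an argument of comparable precision for your decomposition before the ``$2+2=4$'' conclusion can be drawn.
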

\begin{proof}
Consider an edge $e_{h, \ell} \in P$ and assume that the edge $e_{h+1, \ell+1}$ is also in $M$.

Consider the vertex $d^B_\ell$ at which $e_{h, \ell}$ is incident;
let $e^*_{\ell-2}, e^*_{\ell-1}, e^*_{\ell}, e^*_{\ell+1}, e^*_{\ell+2}$ be the edge of $M^*$ incident at the vertex
$d^B_{\ell-2}$, $d^B_{\ell-1}$, $d^B_{\ell}$, $d^B_{\ell+1}$, $d^B_{\ell+2}$, respectively, if such an edge exists.
Clearly, for any edge $e_{i, j} \in M_{\ge 3}$,
if $C^*(e_{i, j})$ does not contain any of the five edges $e^*_{\ell-2}, e^*_{\ell-1}, e^*_{\ell}, e^*_{\ell+1}, e^*_{\ell+2}$,
then $e_{h, \ell} \notin C(C^*(e_{i, j}))$ (unless $e_{h, \ell} \in C(C^*(e_{i, j}))$ through the symmetric discussion using the vertex $d^A_h$).
We distinguish two cases where $C^*(e_{i, \Cdot})$ contains one of the five edges and $C^*(e_{\Cdot, j})$ contains one of the five edges, respectively.

When $C^*(e_{i, \Cdot})$ contains one of the five edges $e^*_{\ell-2}, e^*_{\ell-1}, e^*_{\ell}, e^*_{\ell+1}, e^*_{\ell+2}$,
we see from all the $27$ configurations of $C(C^*(e_{i, \Cdot}))$ and Lemma~\ref{lemma415} that
neither of the edges $e^*_{\ell}, e^*_{\ell+1}$, if exists, can be incident at the vertex $d^A_i$.
It follows that the vertex $d^A_i$ is an end of one of the three edges $e^*_{\ell-2}, e^*_{\ell-1}, e^*_{\ell+2}$.

When $C^*(e_{\Cdot, j})$ contains one of the five edges $e^*_{\ell-2}, e^*_{\ell-1}, e^*_{\ell}, e^*_{\ell+1}, e^*_{\ell+2}$,
we know that $j = \ell -2$ due to the fact that the edge $e_{i, j}$ is a singleton edge of $M$.

Since no two edges of $M_{\ge 3}$ are adjacent to a common edge of $M^*$, we conclude that there are at most three distinct edges
$e_{i, j} \in M_{\ge 3}$ such that $e_{h, \ell} \in C(C^*(e_{i, j}))$ through the vertex $d^B_\ell$.
Furthermore, if there are such three distinct edges, then one is incident at $v^B_{\ell-2}$,
one is adjacent to $e^*_{\ell-1}$ (but not incident at $v^B_{\ell-1}$),
and the other is adjacent to $e^*_{\ell+2}$ (but not incident at $v^B_{\ell+2}$);
the five edges $e^*_{\ell-2}, e^*_{\ell-1}, e^*_{\ell}, e^*_{\ell+1}, e^*_{\ell+2}$ all exist, so do the extra two edges $e^*_{\ell-3}$ and $e^*_{\ell+3}$,
and these seven edges of $M^*$ are consecutively parallel.

The three edges $e^*_{\ell-1}, e^*_{\ell}, e^*_{\ell+1}$ of $M^*$ are conflicting with
only the three edges of $M_{\ge}$ and the two parallel edges $e_{h, \ell}, e_{h+1, \ell+1}$ of $M$;
and for each of these three edges of $M_{\ge}$, there is another distinct edge of $M^*$ conflicting with only this edge of $M_{\ge}$.
In other words, there are six edges of $M^*$ conflicting with only the three edges of $M_{\ge}$ and the two parallel edges $e_{h, \ell}, e_{h+1, \ell+1}$ of $M$,
a contradiction as the algorithm $\mcL\mcS$ would swap them to expand $M$.

This proves that there are at most two distinct edges
$e_{i, j} \in M_{\ge 3}$ such that $e_{h, \ell} \in C(C^*(e_{i, j}))$ through the vertex $d^B_\ell$.
Symmetrically, we can prove that there are at most two distinct edges
$e_{i, j} \in M_{\ge 3}$ such that $e_{h, \ell} \in C(C^*(e_{i, j}))$ through the vertex $d^A_h$.
Therefore, there are at most four distinct edges
$e_{i, j} \in M_{\ge 3}$ such that $e_{h, \ell} \in C(C^*(e_{i, j}))$.
\end{proof}

Using Lemma~\ref{lemma421}, assume there is a fraction of $x n_s$ edges of $M_{\ge 3}$ each accompanied with $4$ parallel edges of $P$;
there is a fraction of $y n_s$ edges of $M_{\ge 3}$ each accompanied with $3$ parallel edges of $P$;
and there is a fraction of $(1 - x - y) n_s$ edges of $M_{\ge 3}$ each accompanied with $1.5$ parallel edges of $P$,
where $x \ge 0, y \ge 0, 1-x-y \ge 0$.
From Lemma~\ref{lemma422}, we have
\[
4 n_p \ge 4 x n_s + 3 y n_s + 1.5 (1 - x - y) n_s = (1.5 + 2.5 x + 1.5 y) n_s,
\]
which gives
\begin{equation}
\label{eq9}
\frac {n_p}{n_s} \ge \frac {1.5 + 2.5 x + 1.5 y}4,
\end{equation}
and the average amount of tokens for all the edges of $M_{\ge 3} \cup P$ is, using Equation~\ref{eq9},
\begin{equation}
\label{eq10}
\overline{\omega(e)} \le \frac {2.5 n_p + \frac {10}3 x n_s + \frac {19}6 y n_s + 3 (1 - x - y) n_s}{n_p + n_s}
	\le \frac 52 + \frac {12 + 8x}{33 + 15x}
	\le \frac {35}{12}.
\end{equation}

Lemma \ref{lemma415} tells that every other edge of $M$ has its $\omega(\cdot) \le \frac{35}{12}$ too.
Therefore, the average amount of tokens for all the edges of $M$ is no greater than $\frac{35}{12}$.
We have thus proved the following theorem.

\begin{theorem}
\label{thm423}
The algorithm $\mcL\mcS$ is an $O(n^{13})$-time $\frac{35}{12}$-approximation for both the {\sc MCBM} and the {\sc Max-Duo} problems.
\end{theorem}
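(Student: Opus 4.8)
The plan is to read the running time directly off Theorem~\ref{thm301} and to obtain the approximation ratio by completing the amortized averaging begun in Section~\ref{sec4.1}. Recall that $\OPT = \sum_{e\in M}\omega(e)$, so it suffices to prove that the average $\overline{\omega(e)}$ over all $e \in M$ is at most $\frac{35}{12}$; this yields $\OPT \le \frac{35}{12}\,\SOL$, and since {\sc Max-Duo} is the special case of {\sc MCBM} in which the cardinality of a compatible matching equals the number of preserved duos, the same ratio transfers verbatim to {\sc Max-Duo}.

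First I would fix the partition of $M$ into the three groups used in Section~\ref{sec4.6}: the set $M_{\ge 3}$ of edges with $\omega(\cdot)\ge 3$, the set $P$ of accompanying parallel edges guaranteed by Lemma~\ref{lemma421}, and all remaining edges. By Lemma~\ref{lemma410} every edge of $M_{\ge 3}$ is a singleton while every edge of $P$ is parallel, so $M_{\ge 3}\cap P=\emptyset$ and the three groups are genuinely disjoint; this disjointness is what makes the subsequent averaging legitimate.

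Next I would bound the average over $M_{\ge 3}\cup P$. Lemma~\ref{lemma421} records, for each admissible value of $\omega(e_{i,j})$, how many parallel edges of $P$ (each with $\omega(\cdot)\le \frac{5}{2}$ by Lemma~\ref{lemma419}) must accompany it, and Lemma~\ref{lemma422} says each such parallel edge can be charged to at most four distinct edges of $M_{\ge 3}$. Writing $n_s=|M_{\ge 3}|$ and $n_p=|P|$ and splitting $M_{\ge 3}$ by accompaniment multiplicity into fractions $x$, $y$, $1-x-y$, the counting inequality~(\ref{eq9}) supplies a lower bound on $n_p/n_s$, which substituted into the weighted average~(\ref{eq10}) caps $\overline{\omega(e)}$ on $M_{\ge 3}\cup P$ by $\frac{35}{12}$. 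For every remaining edge $e$ I would invoke Lemma~\ref{lemma415} together with the fact (established inside the proof of Lemma~\ref{lemma414}) that no achievable value of $\omega(\cdot)$ lies strictly between $\frac{35}{12}$ and $3$; hence every such $e$ also satisfies $\omega(e)\le\frac{35}{12}$. Combining the two pieces gives the global bound $\overline{\omega(e)}\le\frac{35}{12}$.

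The main obstacle is the verification of inequality~(\ref{eq10}): I must check that maximizing $\frac{5}{2}+\frac{12+8x}{33+15x}$ over the feasible region $x\ge 0$, $y\ge 0$, $x+y\le 1$ (after eliminating $y$ through its least-favorable choice) never exceeds $\frac{35}{12}$, which reduces to a short monotonicity computation on a single-variable rational function. The true delicacy, however, is the bookkeeping behind that inequality, namely making sure every parallel edge of $P$ is counted with the correct multiplicity bound of $4$ and that the cut-off $\frac{5}{2}$ on $P$ is uniformly valid; this rests entirely on Lemmas~\ref{lemma419}--\ref{lemma422}, so once those are in hand the present theorem is essentially an assembly step.
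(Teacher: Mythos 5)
Your proposal is correct and takes essentially the same route as the paper: the running time is read off Theorem~\ref{thm301}, and the ratio follows from the same averaging over $M_{\ge 3}\cup P$ using Lemmas~\ref{lemma419}, \ref{lemma421} and \ref{lemma422} to derive inequalities~(\ref{eq9}) and~(\ref{eq10}), with Lemma~\ref{lemma415} handling the remaining edges. Your explicit appeal to the value-gap observation inside the proof of Lemma~\ref{lemma414} (no achievable $\omega(\cdot)$ strictly between $\frac{35}{12}$ and $3$) makes the final step for the edges outside $M_{\ge 3}\cup P$ slightly more careful than the paper's one-line citation of Lemma~\ref{lemma415}.
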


\section{Lower bounds on the locality gap for the algorithm $\mcL\mcS$}
\label{sec5}
In this section, we present two instances of the {\sc MCBM} and {\sc Max-Duo} problems, respectively,
to show that the approximation ratio of the algorithm $\mcL\mcS$ has a lower bound of $\frac{13}{6} > 2.166$ for {\sc MCBM}
and a lower bound of $\frac 53 > 1.666$ for {\sc Max-Duo}.

\subsection{An instance of {\sc MCBM}}
Consider the bipartite graph $G = (V^A, V^B, E)$ shown in Fig.~\ref{fig501},
where $V^A = \{1, 2, \ldots, 26\}$, $V^B = \{1', 2', \ldots, 26'\}$,
and $E$ is the set of all the edges in solid and dashed lines.
One can see that the set of $26$ consecutive parallel edges (in dashed lines) is an optimal solution $M^*$ to the {\sc MCBM} problem on $G$.
Let $M$ be the maximal compatible matching shown as solid lines in Fig.~\ref{fig501}, and assume it is the starting matching for the algorithm $\mcL\mcS$ on $G$.

\begin{figure}[H]
\centering
\includegraphics[width=0.9\linewidth]{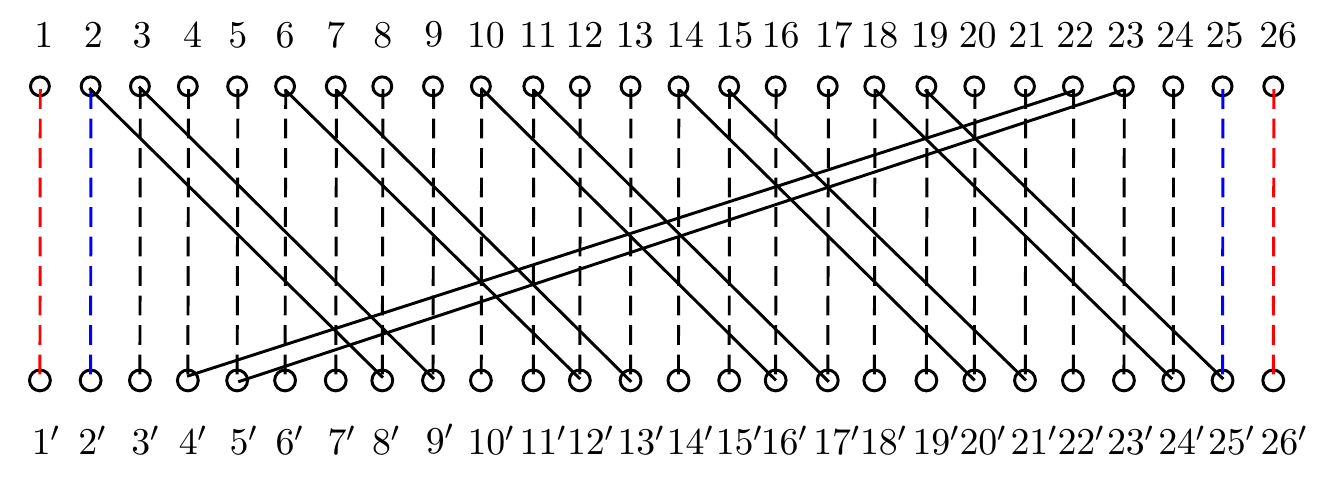} 
\caption{A bipartite graph $G = (V^A, V^B, E)$,
where $V^A = \{1, 2, \ldots, 26\}$, $V^B = \{1', 2', \ldots, 26'\}$, and $E = M \cup M^*$.
$M$ consists of the $12$ edges in solid lines; it is a maximal compatible matching in $G$;
$M^*$ consists of the $26$ edges in dashed lines; it is an optimal compatible matching to the {\sc MCBM} problem on $G$.\label{fig501}}
\end{figure}

\begin{lemma}
\label{lemma501}
$M$ cannot be further improved by the algorithm $\mcL\mcS$ due to the following reasons:
\begin{enumerate}
\item $M$ is a maximal compatible matching in $G$;
\item all the edges of $M$ are parallel edges;
\item for any $e \in M$, there is at most one edge of $M^*$ compatible with all the edges in $M - \{e\}$;
\item for any $2$ edges $e_1, e_2 \in M$, there are at most two edges of $M^*$ compatible with all the edges in $M - \{e_1, e_2\}$;
\item for any $3$ edges $e_1, e_2, e_3 \in M$, there are at most three edges in $M^*$ compatible with all the edges in $M - \{e_1, e_2, e_3\}$;
\item for any $4$ edges $e_1, \ldots, e_4 \in M$, there are at most four edges in $M^*$ compatible with all the edges in $M - \{e_1, \ldots, e_4\}$;
\item for any $5$ edges $e_1, \ldots, e_5 \in M$, there are at most five edges in $M^*$ compatible with all the edges in $M - \{e_1, \ldots, e_5\}$.
\end{enumerate}
\end{lemma}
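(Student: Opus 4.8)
The plan is to read the seven enumerated items as a complete certificate for local optimality and to verify each against the explicit graph in Figure~\ref{fig501}. Items~1 and~2 are settled by direct inspection: for maximality I would confirm that every one of the $26$ dashed edges of $M^*$ conflicts with at least one of the $12$ solid edges of $M$, so no single edge can be greedily added; and I would exhibit the $12$ edges of $M$ as six parallel pairs, so that $s(M) = \emptyset$ and every edge of $M$ lies in $p(M)$. Both facts are purely local and can be read off the adjacency pattern drawn in the figure.

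The conceptual core is to connect items~3--7 to the two swapping operations. First I would observe that since $E = M \cup M^*$ and $M$ is a compatible matching (its edges are pairwise compatible), for any $X \subseteq M$ the set $C(X)$ of edges conflicting with some edge of $X$ but compatible with $M - X$ contains no edge of $M$; moreover, by the maximality asserted in item~1, every edge of $M^*$ conflicts with some edge of $M$, so an $M^*$-edge compatible with $M - X$ must conflict with $X$ and hence lie in $C(X)$. Therefore $C(X)$ is \emph{exactly} the set of $M^*$-edges compatible with $M - X$, and items~3--7 assert precisely that $|C(X)| \le |X|$ for every $X \subseteq M$ with $1 \le |X| \le 5$.

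Given this identification, ruling out {\sc Replace-5-by-6} is a short counting argument that I would carry out next, consistent with the design of the operation in Observation~\ref{obs3.1}. Fix a size-$5$ subset $X$ and any candidate $X' \subseteq X \cup C(X)$ for which $(M - X) \cup X'$ is compatible. Writing $X_{\mathrm{keep}} = X \cap X'$ and letting $Y = X \setminus X_{\mathrm{keep}}$ be the effectively removed edges (so $|Y| = p \le 5$), each new edge of $X' \setminus X \subseteq M^*$ is compatible with $X_{\mathrm{keep}}$ and with $M - X$, hence with $M - Y$, and (being in $C(X)$ but not conflicting with $X_{\mathrm{keep}}$) conflicts with $Y$; thus $X' \setminus X \subseteq C(Y)$. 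By items~3--7, $|C(Y)| \le |Y| = p$, so $|X'| \le (5 - p) + p = 5 = |X|$ and the matching cannot grow. Since {\sc Replace-5-by-6} realizes every $p$-for-$(p+1)$ swap for $p \le 5$, no expansion is possible. Finally, {\sc Reduce-5-by-5} requires strictly decreasing the number of singleton edges, but by item~2 that number is already $0$, so this operation also fails and $\mcL\mcS$ halts on $M$.

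The main obstacle is the verification of items~3--7 themselves, i.e.\ the combinatorial claim $|C(Y)| \le |Y|$ for every $Y \subseteq M$ of size up to five. Because $M^*$ is a single run of $26$ consecutive parallel edges while the six parallel pairs of $M$ are deliberately spread along this run, removing the two endpoints of one pair of $M$ should free only the $M^*$-edges spanned by that pair, so each removed pair contributes a bounded surplus. The delicate accounting is to show that these contributions never let $p$ removed edges free more than $p$ edges of $M^*$, with the five-edge case (item~7) being the tightest and requiring a check that no combination of partial removals across several pairs overshoots the bound. I would organize this case analysis by the number of distinct parallel pairs of $M$ that $Y$ touches, appealing to the local incidence structure around each touched pair read directly off Figure~\ref{fig501}.
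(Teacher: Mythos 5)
Your logical skeleton is sound and in one respect cleaner than the paper's: you correctly observe that, because $E = M \cup M^*$ and $M$ is maximal, the set $C(X)$ used by {\sc Replace-5-by-6} is exactly the set of $M^*$-edges compatible with $M - X$, so items 3--7 amount to the single claim $|C(Y)| \le |Y|$ for all $Y \subseteq M$ with $1 \le |Y| \le 5$; your counting argument that this claim blocks every $p$-for-$(p+1)$ swap (via $X' \setminus X \subseteq C(Y)$ for $Y = X \setminus X'$) is correct, as is the dismissal of {\sc Reduce-5-by-5} from item 2. The problem is that this reduction is the easy half. The entire substance of the paper's proof is the verification of items 3--7 on the concrete graph of Fig.~\ref{fig501}, and you explicitly defer that verification, calling it ``the main obstacle'' and offering only an organizing principle. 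A plan to ``organize the case analysis by the number of distinct parallel pairs of $M$ that $Y$ touches'' is not a proof of the tight cases.

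Concretely, what is missing is the structural fact the paper extracts before counting: it partitions $M^*$ into $M^{*1}$, $M^{*2}$, $M^{*3}$ and shows (Observations~\ref{obs5.1}--\ref{obs5.3}) that each edge of $M^{*1}$ is freed by removing one \emph{unique} edge of $M$, each edge of $M^{*2}$ by a \emph{unique} pair of parallel edges, and each edge of $M^{*3}$ by a \emph{unique} triple. Items 3--5 then follow immediately, and items 6 and 7 follow from a case analysis over which of these unique removal sets a 4- or 5-subset $Y$ can contain. Your heuristic that ``removing one pair of $M$ frees only the $M^*$-edges spanned by that pair, contributing a bounded surplus'' is both vaguer than this and not obviously sufficient: it does not by itself rule out a $Y$ that partially touches three different pairs of $M$, nor does it account for the edges of $M^{*1} \cup M^{*2}$ at the two ends of the run, which are freed by removal sets of size 1 and 2 and are precisely what makes item 7 tight (five removed edges can indeed free five $M^*$-edges). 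Until you establish the uniqueness statements of Observations~\ref{obs5.1}--\ref{obs5.3}, or some equivalent, and carry out the counting for $|Y| = 4, 5$, the lemma is not proved.
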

\begin{proof}
The first two items are trivial.
We note that the second item implies that $M$ cannot be further improved by the algorithm using the operation {\sc Reduce-5-By-5}.
We next show that $M$ cannot be further improved by the algorithm using the operation {\sc Replace-5-By-6}.

We examine whether some edges of $M$ can be swapped out for more edges from $M^*$ by {\sc Replace-5-By-6}.
For ease of presentation, we partition $M^*$ into 3 subsets
$M^{*1} = \{(1, 1'), (26, 26')\}$,
$M^{*2} = \{(2, 2'), (25, 25')\}$, and
$M^{*3} = \{(3, 3'), (4, 4'), \ldots, (24, 24')\}$ (in Fig.~\ref{fig501}, their edges are colored red, blue, black, respectively).
We have the following observations.

\begin{obs}
\label{obs5.1}
To swap for an edge of $M^{*1}$, a unique edge of $M$ has to be swap out.
In details, $(2, 8')$ needs to be swapped out for $(1, 1')$,
and $(19, 25')$ needs to be swapped out for $(26, 26')$.
\end{obs}

\begin{obs}
\label{obs5.2}
To swap for an edge of $M^{*2}$, a unique pair of parallel edges of $M$ has to be swap out.
In details, $(2, 8')$ and $(3, 9')$ need to be swapped out for $(2, 2')$,
and $(19, 25')$ and $(18, 24')$ need to be swapped out for $(25, 25')$.
\end{obs}

\begin{obs}
\label{obs5.3}
To swap for an edge of $M^{*3}$, a unique triplet of edges of $M$ has to be swap out.
In details, when $i = 3 \ (\mbox{mod} \ 4)$, $e_{i-1}, e_i, e_{{i+1}'}$ need to be swapped out for $(i, i')$;
when $i = 0 \ (\mbox{mod} \ 4)$, $e_{i-1}, e_{i'}, e_{{i+1}'}$ need to be swapped out for $(i, i')$;
when $i = 1 \ (\mbox{mod} \ 4)$, $e_{{i-1}'}, e_{i'}, e_{{i+1}}$ need to be swapped out for $(i, i')$;
when $i = 2 \ (\mbox{mod} \ 4)$, $e_{{i-1}'}, e_i, e_{{i+1}}$ need to be swapped out for $(i, i')$.
\end{obs}

The Observations \ref{obs5.1}, \ref{obs5.2}, \ref{obs5.3} prove trivially the items 3--5 of the lemma.

To prove the item 6, we next see what a subset of four edges of $M$ can do.
If these four edges are able to swap in one edge of $M^{*3}$,
then by Observation~\ref{obs5.3} this edge of $M^{*3}$ actually requires three out of the four edges.
Note that these particular three edges are not able to swap for any other edge of $M^{*3}$.
If they contain a unique pair of parallel edges of $M$ in Observation~\ref{obs5.2},
then they can swap in three edges one from each of $M^{*1}, M^{*2}, M^{*3}$,
and the fourth edge either forms with two of them to form another triplet to swap in an other edge of $M^{*3}$,
or it is able to swap in the other edge of $M^{*1}$.
If these particular three edges do not contain a unique pair of parallel edges of $M$ in Observation~\ref{obs5.2},
then they can swap in only the edge of $M^{*3}$,
and the fourth edge can 
either form with one of them to form a unique pair of parallel edges of $M$ in Observation~\ref{obs5.2} to swap in two edges one from each of $M^{*1}, M^{*2}$,
and/or form with two of them to form another triplet to swap in an other edge of $M^{*3}$,
or it is able to swap in the other edge of $M^{*1}$.
Therefore, these four edges can swap in the best case four edges $(1, 1'), (2, 2'), (3, 3')$ (or $(26, 26'), (25, 25'), (24, 24')$, respectively)
and another edge of $M^{*1} \cup M^{*3}$.
If these four edges are not able to swap in any edge of $M^{*3}$, then they can swap in the best case four edges of $M^{*1} \cup M^{*2}$.

To prove the item 7, we next see what a subset of five edges of $M$ can do.
If these five edges are able to swap in two edges of $M^{*3}$,
then by Observation~\ref{obs5.3} these two edges of $M^{*3}$ actually require at least four out of the five edges.
Note that these particular four edges are only able to swap for four edges of $M^*$, including the above two edges of $M^{*3}$
and the other two edges must be either $(1, 1'), (2, 2')$ or $(26, 26'), (25, 25')$.
Then the fifth edge either forms with two of these particular four edges to form another triplet to swap in an other edge of $M^{*3}$,
or it is able to swap in the other edge of $M^{*1}$.
If these five edges are not able to swap in at least two edges of $M^{*3}$,
then they can swap in the best case one edge of $M^{*3}$ and four edges of $M^{*1} \cup M^{*2}$.
\end{proof}

\begin{theorem}
\label{thm502}
There is a lower bound of $\frac{13}{6} > 2.166$ on the locality gap of the algorithm $\mcL\mcS$ for the {\sc MCBM} problem.
\end{theorem}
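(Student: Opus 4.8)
The plan is to derive the stated lower bound as a short consequence of Lemma~\ref{lemma501}, which has already verified that the matching $M$ in Fig.~\ref{fig501} is a local optimum for $\mcL\mcS$. First I would record the two cardinalities that drive the bound: the $26$ consecutive parallel edges drawn as dashed lines form a compatible matching, so $\OPT = |M^*| = 26$; and the solid edges form a compatible matching of size $|M| = 12$. Since $\mcL\mcS$ is assumed to start from $M$, it suffices to show that no iteration can modify $M$, whence the algorithm terminates immediately and returns $\SOL = |M| = 12$.

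Next I would argue that $M$ is genuinely a fixed point of $\mcL\mcS$, invoking the seven items of Lemma~\ref{lemma501} against the three ways an iteration can change the matching. The greedy re-expansion at the start of an iteration cannot add any edge because $M$ is maximal (item~1). The operation {\sc Reduce-5-by-5} cannot apply because every edge of $M$ is parallel (item~2), so $s(M) = \emptyset$ and there are no singleton edges to reduce. For the operation {\sc Replace-5-by-6} --- which in fact attempts every swap of $p$ edges out for $p+1$ edges, $p = 1, 2, 3, 4, 5$ --- I would observe that in this instance $E = M \cup M^*$, so by Observation~\ref{obs3.1} the only candidate edges for any swap lie in $M^*$. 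Items~3--7 then state exactly that removing any $p$ edges of $M$ leaves at most $p$ edges of $M^*$ compatible with the remainder, for $p = 1, \ldots, 5$; hence no swap can replace $p$ edges by $p+1$ compatible edges, and {\sc Replace-5-by-6} fails at every size. Therefore $\mcL\mcS$ makes no change to $M$ and halts.

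Combining these facts gives the locality gap lower bound directly:
\[
\frac{\OPT}{\SOL} = \frac{|M^*|}{|M|} = \frac{26}{12} = \frac{13}{6} > 2.166,
\]
which is the claimed bound. The only real work is Lemma~\ref{lemma501}; the theorem itself is a bookkeeping deduction. The one point I would take care with is matching each item of Lemma~\ref{lemma501} to the precise operation it rules out, and confirming via Observation~\ref{obs3.1} that confining the candidate edges to $M^*$ is legitimate --- once that is in place, the cardinality bounds of items~3--7 leave no room for any size-increasing swap, and the ratio $26/12$ follows.
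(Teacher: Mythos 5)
Your proposal is correct and follows essentially the same route as the paper: both invoke Lemma~\ref{lemma501} to conclude that $M$ is a fixed point of $\mcL\mcS$ and then read off the ratio $|M^*|/|M| = 26/12 = 13/6$. The extra bookkeeping you supply --- matching items~1--7 of the lemma to the greedy expansion, {\sc Reduce-5-by-5}, and the $p$-for-$(p+1)$ swaps, and using Observation~\ref{obs3.1} to confine candidates to $M^*$ --- is a slightly more explicit rendering of what the paper leaves implicit in the lemma statement, not a different argument.
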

\begin{proof}
By Lemma \ref{lemma501}, if the matching $M$ is fed as the starting matching to the algorithm $\mcL\mcS$,
then the algorithm terminates without modifying it.
Note that we have $|M| = 12$ and $|M^*| = 26$, we conclude that the algorithm can not do better than $\frac {13}6$ in the worst case.
\end{proof}

\begin{remark}
Using our amortized analysis, let $C(e^*)$ be the subset of edges of $M$ conflicting with the edge $e^* \in M^*$.
Then in the above instance, we have $|C(e^*)| = 1, 2, 3$ for $e^* \in M^{*1}, M^{*2}, M^{*3}$, respectively.
The maximum total amount of tokens received by the edges of $M$ is achieved at $(2, 8')$,
where $\omega((2, 8')) = 1 + \frac 12 + 4 \times \frac 13 = \frac{17}{6} \approx 2.833$.
This maximum is quite close to the approximation ratio $\frac {35}{12} \approx 2.917$ of the algorithm $\mcL\mcS$,
which is also the maximum possible $\omega(\cdot)$ value for the parallel edges of $M$.
(Recall that $\frac {35}{12} = 1 + 2 \times \frac 12 + 2 \times \frac 13 + \frac 14$.)
\end{remark}

\begin{remark}
We may have variants of the algorithm $\mcL\mcS$ by substituting the operation {\sc Replace-5-By-6}
with the similarly defined operation {\sc Replace-$\rho$-By-($\rho+1$)}, for any $\rho \ge 1$
(with or without the operation {\sc Reduce-5-By-6}).

For $\rho = 1, 2, 3, 4$, we can construct similar instances to show the corresponding lower bounds on the locality gap for the variants on the {\sc MCBM} problem.
\begin{itemize}
\item $\rho = 4$.

We can construct a similar instance $G = (V^A, V^B, E)$, except that $|V^A| = |V^B| = 22$ and $|M| = 10$.
The performance ratio of the algorithm is $\frac{11}{5} = 2.2$.
 
\item $\rho = 3$.

We can construct two different instances for $G = (V^A, V^B, E)$.
One is similar to the previous two, except that $|V^A| = |V^B| = 18$ and $|M| = 8$.
In the other we have $|V^A| = |V^B| = 9$ and $|M| = 4$, such that the four edges of $M$ are consecutively parallel.
The performance ratio of the algorithm on both instances is $\frac{9}{4} = 2.25$.

\item $\rho = 2$.

We can construct an instance similar to the second instance when $\rho = 3$, which is a graph $G = (V^A, V^B, E)$ with $|V^A| = |V^B| = 8$ and $|M| = 3$,
such that the three edges of $M$ are consecutively parallel.
The performance ratio of the algorithm is $\frac 83 \approx 2.667$.

\item $\rho = 1$.

We can construct a similar graph $G = (V^A, V^B, E)$ with $|V^A| = |V^B| = 7$ and $|M| = 2$,
such that the two edges of $M$ are parallel.
The performance ratio of the algorithm is $\frac 72 = 3.5$.
This is essentially the $3.5$-approximation by Boria {\it et al.}~\cite{BCC16}, and the instance shows that the performance ratio is tight.
\end{itemize}
\end{remark}

\subsection{An instance of {\sc Max-Duo}}
In the instance of {\sc Max-Duo}, we have two identical length-$11$ strings $A = (a, b, c, d, e, f, b, c, d$, $e, g) = B$.
We construct the corresponding bipartite graph $G = (V^A, V^B, E)$ (shown in Fig.~\ref{fig502}),
where $V^A = \{1, 2, \ldots, 10\}$ and $V^B = \{1', 2', \ldots, 10'\}$.
Since $A = B$, each pair of duos represented by the vertices $i$ and $i'$ are the same, for $i = 1, 2, \ldots, 10$.
Thus it is easy to see that there is an optimal solution $M^*$ to {\sc MCBM} on $G$,
which consists of all the $10$ edges in dashed lines shown in Fig.~\ref{fig502}.
Let $M$ be the compatible matching consisting of the six edges in solid lines in Fig.~\ref{fig502}.

\begin{figure}[H]
\centering
\includegraphics[width=0.5\linewidth]{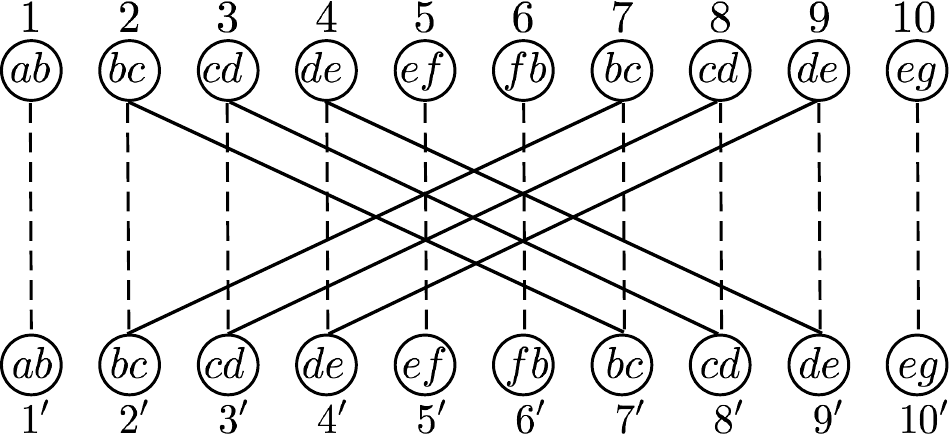}
\caption{The corresponding bipartite graph $G = (V^A, V^B, E)$ constructed from two identical strings $A = B = (a, b, c, d, e, f, b, c, d, e, g)$,
where $V^A = \{1, 2, \ldots, 10\}$, $V^B = \{1', 2', \ldots, 10'\}$, and $E = M \cup M^*$.
$M$ consists of the six edges in solid lines and it is a compatible matching;
$M^*$ consists of the ten edges in dashed lines and it is an optimal compatible matching to the {\sc MCBM} problem on $G$.\label{fig502}}
\end{figure}

\begin{lemma}
\label{lemma503}
$M$ cannot be further improved by the algorithm $\mcL\mcS$ due to the following reasons:
\begin{enumerate}
\item
	$M$ is a local maximal compatible matching;
\item
	all edges of $M$ are parallel edges;
\item
	for any $e \in M$, there is no edge in $M^*$ compatible with all the $5$ edges in $M - \{e\}$;
\item
	for any $2$ edges $e_1, e_2 \in M$, there are at most $2$ edges in $M^*$ compatible with all the $4$ edges in $M - \{e_1, e_2\}$;
\item
	for any $3$ edges $e_1, e_2, e_3 \in M$, there are at most $2$ edges in $M^*$ compatible with all the $3$ edges in $M - \{e_1, e_2, e_3\}$;
\item
	for any $4$ edges $e_1, \ldots, e_4 \in M$, there are at most $4$ edges in $M^*$ compatible with both of the $2$ edges in $M - \{e_1, \ldots, e_4\}$;
\item
	for any $5$ edges $e_1, \ldots, e_5 \in M$, there are $4$ edges in $M^*$ compatible with the only edge in $M - \{e_1, \ldots, e_5\}$.
\end{enumerate}
\end{lemma}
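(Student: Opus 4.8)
The plan is to follow the template of the proof of Lemma~\ref{lemma501}. Items~1 and~2 are direct. The six edges of $M$ split into the two consecutive-parallel triples $\{(2,7'),(3,8'),(4,9')\}$ and $\{(7,2'),(8,3'),(9,4')\}$, so $s(M)=\emptyset$ and the operation {\sc Reduce-5-By-5} can never apply; this gives item~2 and disposes of one of the two operations at once. For maximality (item~1), the four $A$-side vertices $1,5,6,10$ and the four $B$-side vertices $1',5',6',10'$ left uncovered by $M$ carry only the identity edges $(1,1'),(5,5'),(6,6'),(10,10')$, because the duos $(a,b),(e,f),(f,b),(e,g)$ are each unique; I would check that each of these four edges is \emph{neighboring} (hence conflicting) with an edge of $M$, so none can be added. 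This reduces the whole lemma to ruling out {\sc Replace-$p$-By-$(p+1)$} for $p=1,\dots,5$, which is exactly items~3--7.

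The key reduction is that, since $E=M\cup M^*$ in this instance, an edge $e^*\in M^*$ is compatible with every edge of $M-X$ \emph{iff} all $M$-edges conflicting with $e^*$ lie in $X$, i.e.\ iff $C(e^*)\subseteq X$. Thus each of items~3--7 becomes a purely combinatorial count of the edges $e^*\in M^*$ whose conflict set is contained in the removed subset $X$. The first step is therefore to compute $C(e^*)\subseteq M$ for all ten edges of $M^*$ --- the analogue of Observations~\ref{obs5.1}--\ref{obs5.3} --- using the adjacency/neighboring rules. Writing $m_1,\dots,m_6$ for $(2,7'),(3,8'),(4,9'),(7,2'),(8,3'),(9,4')$, the computation yields two nested chains: $\{m_1,m_4\}=C((1,1'))=C((6,6'))\subset\{m_1,m_2,m_4,m_5\}=C((2,2'))=C((7,7'))\subset M=C((3,3'))=C((8,8'))$, and symmetrically $\{m_3,m_6\}=C((5,5'))=C((10,10'))\subset\{m_2,m_3,m_5,m_6\}=C((4,4'))=C((9,9'))\subset M$. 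The two size-$2$ sets are disjoint, and the two size-$4$ sets have union all of $M$.

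Given this structure the counts fall out. Every $|C(e^*)|\ge 2$, so no singleton $X$ contains any conflict set, giving item~3. A $2$-subset contains at most one of the two disjoint size-$2$ sets, and a $3$-subset still cannot contain two size-$2$ sets (four elements are needed) nor any size-$4$ set, so each yields only the two $M^*$-edges attached to a single size-$2$ set --- items~4 and~5. A $4$-subset can contain one size-$4$ set, which drags in its nested size-$2$ set for a total of $4$ edges, or the two disjoint size-$2$ sets for $4$ edges, but never both size-$4$ sets (their union is $M$) nor the size-$6$ set, capping the count at $4$ --- item~6. For item~7, removing five edges leaves a single $m_k$, and $X=M-\{m_k\}$ contains $C(e^*)$ iff $m_k\notin C(e^*)$, i.e.\ iff $e^*$ is compatible with $m_k$; since each $m_k$ conflicts with exactly six edges of $M^*$, exactly four remain. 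These bounds jointly preclude expansion: after removing any $p\le 5$ edges the pool of admissible $M^*$-edges has size at most $p$ for $p\le 4$ and size $4<6$ for $p=5$, while re-using removed $M$-edges only rebuilds a sub-matching of $M$, so no compatible matching containing $M-X$ can exceed size $6$, and $\mcL\mcS$ leaves $M$ unchanged.

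The main obstacle is the bookkeeping in the very first step: correctly enumerating $C(e^*)$ for all ten $e^*$, being careful with the neighboring conflicts generated by the repeated block $(b,c,d,e)$ and with the boundary duos $(a,b),(e,f),(f,b),(e,g)$, since a single mis-assigned conflict would destroy the tight chain structure. Once the two-chain picture is verified, the subset-containment arguments establishing items~3--7 are routine and follow the nesting pattern above.
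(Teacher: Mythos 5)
Your proposal is correct and follows essentially the same route as the paper: both verify maximality and the all-parallel property to rule out {\sc Reduce-5-By-5}, and then reduce items 3--7 to counting, for each small subset $X\subseteq M$, how many edges $e^*\in M^*$ have their conflict set contained in $X$. The only difference is bookkeeping --- you tabulate the conflict sets $C(e^*)$ of the ten $M^*$-edges and exploit their nested two-chain structure, whereas the paper partitions $M$ into the three pairs $M^2,M^3,M^4$ with equal $C^*(\cdot)$ sets --- and your computed conflict sets (two disjoint $2$-sets, two $4$-sets whose union is $M$, and $M$ itself, with every $m_k$ lying in exactly six of them) all check out.
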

\begin{proof}
The first three items are trivial.
We note that the second item implies that $M$ cannot be further improved by the algorithm using the operation {\sc Reduce-5-By-5}.
We partition $M$ into three subsets $M^2 = \{(2, 7'), (7, 2')\}$, $M^3 = \{(3, 8'), (8, 3')\}$, and $M^4 = \{(4, 9'), (9, 4')\}$.

For any $e \in M$, we see that $|C^*(e)| = 6$, implying there are only $4$ edges of $M^*$ compatible with $e$;
this proves the 7th observation.

For the two edges $e_1, e_2$ in the same part of the partition, we see that $C^*(e_1) = C^*(e_2)$,
implying that if one of them is in $M$, then none of the six edges of $C^*(e_1)$ can be compatible with it.
(This proves again the item 3.)
Also, we see that the edges $(1, 1'), (6, 6')$ are not compatible with the edges and only these edges in $M^2$;
and the edges $(10, 10'), (5, 5')$ are not compatible with the edges and only these edges in $M^4$.

To prove the item 4, we see that when the two edges $e_1, e_2 \in M$ are not in the same part,
then from the last paragraph there is no edge in $M^*$ compatible with all the $4$ edges in $M - \{e_1, e_2\}$;
when the two edges $e_1, e_2 \in M$ are in the same part,
then either there are two edges in $M^*$ compatible with all the $4$ edges in $M - \{e_1, e_2\}$, if this part is $M^2$ or $M^4$,
or otherwise there is no edge in $M^*$ compatible with all the $4$ edges in $M - \{e_1, e_2\}$.

For the item 5, any three edges $e_1, e_2, e_3 \in M$ cannot take up two separate parts,
and therefore there are at most $2$ edges in $M^*$ compatible with all the $3$ edges in $M - \{e_1, e_2, e_3\}$.

For any $4$ edges $e_1, \ldots, e_4 \in M$, if they take up two parts,
then there are exactly $4$ edges of $M^*$ compatible with the $2$ edges in $M - \{e_1, \ldots, e_4\}$, which belong to the same part;
if they do not take up two parts, then there are at most $2$ edges of $M^*$ compatible with the $2$ edges in $M - \{e_1, \ldots, e_4\}$.
This proves the item 6, and completes the proof of the lemma.
\end{proof}

\begin{theorem}
\label{thm504}
There is a lower bound of $\frac 53 > 1.666$ on the locality gap of the algorithm $\mcL\mcS$ for the {\sc Max-Duo} problem.
\end{theorem}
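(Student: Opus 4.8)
The plan is to mirror the proof of Theorem~\ref{thm502}: exhibit the matching $M$ of Fig.~\ref{fig502} as a starting solution that the algorithm $\mcL\mcS$ cannot escape, and then read off the ratio $|M^*|/|M|$ as a lower bound on the locality gap. Since the locality gap is the worst-case ratio between an optimal solution and a solution at a fixed point of the local search, it suffices to feed $M$ (the six solid edges) to $\mcL\mcS$ as its initial compatible matching and argue that the algorithm halts immediately.

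First I would dispose of the greedy and reduction phases. At the top of each iteration $\mcL\mcS$ greedily extends the current matching to a maximal one; since item~1 of Lemma~\ref{lemma503} asserts that $M$ is already maximal and the instance has $E = M \cup M^*$, this phase adds nothing. The operation {\sc Reduce-5-By-5} is vacuous because, by item~2 of Lemma~\ref{lemma503}, every edge of $M$ is parallel, so $s(M) = \emptyset$ and the number of singleton edges cannot be decreased. The heart of the argument is therefore to rule out {\sc Replace-5-By-6}. Recall from Section~\ref{sec4.2} that this operation effectively attempts, for each $p \in \{1,2,3,4,5\}$, to swap $p$ edges of $M$ out for $p+1$ compatible edges; because $E = M \cup M^*$ and $M$ is a matching, any edge conflicting with the removed set $X$ but compatible with $M - X$ must lie in $M^*$. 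Items~3--7 of Lemma~\ref{lemma503} bound, for every choice of $1 \le p \le 5$ removed edges, the number of edges of $M^*$ that stay compatible with the untouched part $M - X$, and in each case this bound is at most $p$. Hence no expansion to $p+1$ compatible edges exists, and $\mcL\mcS$ terminates without modifying $M$.

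It then remains only to count. Reading Fig.~\ref{fig502}, $M^*$ consists of the ten dashed edges $(i,i')$, $i = 1,\ldots,10$, while $M$ consists of the six solid edges, so $\OPT = |M^*| = 10$ and $\SOL = |M| = 6$. This instance therefore certifies a lower bound of $|M^*|/|M| = 10/6 = 5/3 > 1.666$ on the locality gap, which is exactly the claim.

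The conceptual content has already been discharged by Lemma~\ref{lemma503}, so within the theorem itself the only point demanding attention is the bookkeeping that converts its items~3--7 into the impossibility of {\sc Replace-5-By-6}. Concretely, if the operation retains $a$ of the five removed edges and brings in $b = 6 - a$ new edges of $M^*$, it is effectively removing $p = 5 - a$ edges of $M$ and requesting $b = p+1$ pairwise-compatible edges of $M^*$ that are compatible with $M - X$; the corresponding item of the lemma (with maximality covering the case $p = 0$) caps the available count at $p$, one short of the $p+1$ needed. Matching the cases $p = 0,\ldots,5$ against the items of the lemma is the only remaining step, and I expect it to be entirely routine rather than a genuine obstacle.
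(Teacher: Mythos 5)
Your proposal is correct and follows essentially the same route as the paper: both feed the matching $M$ of Fig.~\ref{fig502} to $\mcL\mcS$, invoke Lemma~\ref{lemma503} to conclude the algorithm terminates without modifying $M$, and read off the ratio $|M^*|/|M| = 10/6 = 5/3$. Your extra bookkeeping matching the cases $p = 1,\ldots,5$ of {\sc Replace-5-By-6} against items 3--7 of the lemma is a correct elaboration of what the paper leaves implicit.
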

\begin{proof}
By Lemma \ref{lemma503}, if the matching $M$ is fed as the starting matching to the algorithm $\mcL\mcS$,
then the algorithm terminates without modifying it.
Note that we have $|M| = 6$ and $|M^*| = 10$, we conclude that the algorithm can not do better than $\frac 53$ in the worst case.
\end{proof}


\section{Conclusions}
\label{sec6}
We studied the {\sc Max-Duo} problem, the complement of the well studied {\sc MCSP} problem.
Motivated by an earlier local search algorithm,
we presented an improved heuristics $\mcL\mcS$ for a more general {\sc MCBM} problem,
that uses one operation to increase the cardinality of the solution
and another novel operation to reduce the singleton edges in the solution.
The heuristics is iterative and has a time complexity $O(n^{13})$, where $n$ is the length of the input strings.
Through an amortized analysis,
we are able to show that the proposed algorithm $\mcL\mcS$ has an approximation ratio of at most ${35}/{12} < 2.917$.
This improves the current best $3.25$-approximation for both problems, and breaks the barrier of $3$.
In a companion paper, we are able to design a $(1.4 + \epsilon)$-approximation for $2$-{\sc Max-Duo},
a restricted version in which every letter of the alphabet occurs at most twice in each input string.
Together, we improved all current best approximability results for the {\sc Max-Duo} problem.

We also showed that there is a lower bound of ${13}/{6} > 2.166$ and $5/3 > 1.666$ on the locality gap of the algorithm $\mcL\mcS$
for the {\sc MCBM} and the {\sc Max-Duo} problems, respectively.

We remark that the time complexity of the algorithm $\mcL\mcS$ can possibly be reduced using appropriate data structures.
For the performance ratio, one would likely do a better analysis by examining more cases with large $\omega(\cdot)$ values,
which we are looking into.
On the other hand, it is interesting to investigate whether or not swapping more edges can lead to a better approximation.

\subparagraph*{Acknowledgements.}
All authors are supported by NSERC Canada.
Additionally, 
YC is supported by the NSFC Grants No. 11401149, 11201105, 11571252 and 11571087, and the China Scholarship Council Grant No. 201508330054;
TL is supported by the NSFC Grant No. 71371129 and the PSF China Grant No. 2016M592680;
GL is supported by the NSFC Grant No. 61672323.


\end{document}